\newtheorem{thm}{Theorem}[section]
\newtheorem{prp}[thm]{Proposition}
\newtheorem{cor}[thm]{Corollary}
\newtheorem{lm}[thm]{Lemma}
\newtheorem{df}[thm]{Definition}
\newtheorem{definition}[thm]{Definition}
\newtheorem{rk}[thm]{Remark}
\newtheorem{remark}[thm]{Remark}
\newtheorem{proposition}[thm]{Proposition}
\newtheorem{lemma}[thm]{Lemma}
\newtheorem{theorem}[thm]{Theorem}
\newcommand{\m}[1]{\mathbb{#1}}
\newcommand{\q}[1]{\mathcal{#1}}
\newcommand{\wht}[1]{\widetilde{#1}}
\newcommand{\gra}[1]{\mathbf{#1}}
\newcommand{\ep}{\varepsilon}
\newcommand{\f}{\frac}
\newcommand{\rd}{\partial}
\newcommand{\nab}{\nabla}
\newcommand{\alp}{\alpha}
\newcommand{\bt}{\beta}
\newcommand{\bA}{{\bf A}}
\newcommand{\bB}{{\bf B}}
\newcommand{\bC}{{\bf C}}
\newcommand{\bD}{{\bf D}}
\newcommand{\gi}{(g^{-1})}
\newcommand{\mfg}{\mathfrak g}
\newcommand{\bfG}{{\bf \Gamma}}
\newcommand{\ls}{\lesssim}
\newcommand{\de}{\delta}
\newcommand{\om}{\omega}
\newcommand{\omg}{\omega}
\newcommand{\Omg}{\Omega}
\newcommand{\vp}{\varpi}
\newcommand{\srd}{\slashed{\rd}}
\def \i {\infty}
\newcommand{\ud}{\mathrm{d}}
\newcommand{\brk}{\@ifstar{\brkb}{\brki}}
\newcommand{\brki}[1]{\langle{#1}\rangle}
\newcommand{\brkb}[1]{\left\langle{#1}\right\rangle}
\numberwithin{equation}{section}
 \newcommand{\pfstep}[1]{\vspace{.5em} {\it \noindent #1.}}
\begin{document}
	
	\title[High-frequency backreaction for the Einstein equations: from dust to Vlasov ]
	{High-frequency backreaction for the Einstein equations under $\mathbb U(1)$ symmetry: from Einstein--dust to Einstein--Vlasov}

	\begin{abstract}
	Given suitable small, localized, $\mathbb U(1)$-symmetric solutions to the Einstein--massless Vlasov system in an elliptic gauge, we prove that they can be approximated by high-frequency vacuum spacetimes. This extends previous constructions where the limiting spacetime solves the Einstein--(multiple) null dust system (i.e., where the limiting massless Vlasov field can be written as a finite sum of delta measures). The proof proceeds by first approximating solutions to the Einstein--massless Vlasov system by solutions to the Einstein--(multiple) null dust system, then approximating solutions to the Einstein--null dust system by vacuum solutions. In the process, we take the number of families of dusts to infinity. 
	\end{abstract}
	
	\author{C\'ecile Huneau}
	\address{CNRS and DMA, Ecole Normale Supérieure PSL, 45 rue d'Ulm, 75005 Paris, France}
	\email{cecile.huneau@polytechnique.edu}
	\author{Jonathan Luk}
	\address{Department of Mathematics, Stanford University, CA 94304, USA}
	\email{jluk@stanford.edu}
	
	\maketitle

\section{Introduction}

It is now known that high-frequency limits of solutions to the Einstein vacuum equations need not be vacuum, but that an ``effective matter field'' may arise in the limit. The study of this phenomenon, including the construction of examples of approximate solutions, goes back to the works of Isaacson \cite{Isaacson1, Isaacson2} and Choquet-Bruhat \cite{CB.HF}. The first examples were given in plane symmetry, where there were explicit solutions \cite{Burnett, GW2, pHtF93, SGWK, SC.standing} and analytic constructions \cite{LeFLeF2020, Lott1, Lott3, Lott2}. More recently, there are constructions with weaker or no symmetry assumptions; see \cite{HL.HF, LR.HF, Touati2, touati2024reverse}. (See Section~\ref{sec:related.HF} for further discussions.) In all these examples, the limiting spacetime metric solves the Einstein--null dust system, and effective matter field corresponds to a finite number of families of null dusts propagating in different directions. 

However, Burnett \cite{Burnett} suggested that the class of possible limits is much larger than those solving the Einstein--null dust system. In fact, he conjectured that a spacetime can arise as a high-frequency limit of vacuum spacetimes if and only if it solves the Einstein--\emph{massless Vlasov} system. This includes as a special case solutions to the Einstein--null dust system, where the Vlasov field can be thought of as a linear combination of delta measures, but is more general. One direction of Burnett's conjectures was proven in recent works \cite{HL.Burnett, GuerraTeixeira, HL.wave} (see also \cite{LR.HF}), which show that under suitable gauge conditions, the limiting effective matter field could only be massless Vlasov matter. The question remains as to which solutions to the Einstein--massless Vlasov system can be achieved as limits of vacuum spacetimes. 

In this paper, \textbf{we provide the first construction of high-frequency limits of vacuum spacetimes, where the effective limiting matter field is not given by null dusts}. The main result of this paper is to approximate a general class of small and regular solutions to Einstein-massless Vlasov equations with a $\m U(1)$ symmetry on $\m R^{3+1}$ by solutions to Einstein vacuum equation, also with a $\m U(1)$ symmetry. 

Under a $\m U(1)$ symmetry, the Einstein vacuum equations in $3+1$ dimensions reduce to the Einstein equation in $2+1$ dimension coupled to a wave map field. Our goal is thus to construct a sequence of solutions $(g_i,\phi_i,\varpi_i)$ in the $(2+1)$-dimensional space $[0,1]\times \mathbb R^2$ solving the following Einstein--wave map system 
\begin{subequations}
\label{sys}
\begin{empheq}[left=\empheqlbrace]{align}
&R_{\mu \nu}(g)= 2\partial_\mu \phi \partial_\nu \phi + \frac{1}{2}e^{-4\phi}\partial_\mu \varpi \partial_\nu \varpi, \\
&\Box_g \phi + \frac{1}{2}e^{-4\phi}g^{-1}(\ud\varpi,\ud\varpi)  = 0,\\
&\Box_g \varpi -4g^{-1}(\ud\varpi,\ud\phi)=0,
\end{empheq}
\end{subequations}
which converge to a given solution $(g_0,\phi_0, \vp_0, f(\om),u(\om))$ that solves the\footnote{For simplicity we have chosen $u(\om)$ to be initially exactly a linear function for every $\om$. This can be slightly relaxed to requiring that the level sets of $u(\om)$ to be close to planes.} Einstein--wave map--massless Vlasov system:
\begin{subequations}\label{vlasov}
\begin{empheq}[left=\empheqlbrace]{align}
&R_{\mu \nu}(g)= 2\partial_\mu \phi \partial_\nu \phi +\frac{1}{2}e^{-4\phi}\partial_\mu \varpi \partial_\nu \varpi+ \int_{\m S^1}f^2(t,x,\omega)\partial_\mu u(t,x,\omega) \partial_\nu u(t,x,\omega)\,\ud m(\omega),\label{eq:vlasov.1}\\
&\Box_g \phi + \frac{1}{2}e^{-4\phi}g^{-1}(\ud\varpi,\ud\varpi)  = 0,\label{eq:vlasov.2}\\
&\Box_g \varpi -4g^{-1}(\ud\varpi,\ud\phi)=0,\label{eq:vlasov.3}\\
&2 L f + (\Box_{{g}} u) f = 0 \quad\forall \om \in \mathbb S^1,\label{eq:vlasov.4}\\
&g^{-1}(\ud u,\ud u)=0,\quad u \restriction_{\Sigma_0} = x^1 \cos \om + x^2 \sin \om,\, \rd_t u \restriction_{\Sigma_0} >0,\quad \forall \om \in \mathbb S^1.\label{eq:vlasov.5}
\end{empheq}
\end{subequations}
In \eqref{vlasov}, $\ud m(\omega)$ is a probability measure on $\m S^1$ and $L = (g^{-1})^{\alpha \beta}\partial_{\alpha} u \partial_{\beta}$. From now on, we take $f \geq 0$.
In the following, we will denote $U=(\phi,\varpi)$ and 
\begin{equation}\label{eq:lara.def}
\langle Y_1 U,Y_2 U \rangle = Y_1 \phi Y_2 \phi + \frac{1}{4}e^{-4\phi} Y_1 \varpi Y_2 \varpi \quad \hbox{for any smooth vector fields $Y_1$, $Y_2$}.
\end{equation} 

Here, in \eqref{vlasov}, we have chosen a specific (and somewhat non-standard) parametrization of the cotangent bundle. It will become clear later (see Section~\ref{sec:ideas}) how this is convenient for the proof. Notice that the system \eqref{vlasov} in $(2+1)$-dimensions arise as a reduction of the Einstein--massless Vlasov system\footnote{Note that while $m(\omega)$ in \eqref{vlasov} could be absolutely continuous with respect to the Lebesgue measure on $\mathbb S^1$, when viewed in $(3+1)$ dimensions, the Vlasov measure is not absolutely continuous with respect to the Lebesgue measure. However, according to \cite{HL.Burnett}, this is the expected class of Vlasov matter that arises in the limit of $\mathbb U(1)$ symmetric spacetimes.} in $(3+1)$ dimensions under $\mathbb U(1)$ symmetry.

The following is our main theorem. (We refer the reader to Theorem~\ref{main.thm.2} for a more precise version.)

\begin{thm}\label{main.intro}
	Consider initial data to \eqref{vlasov} on $\Sigma_0$ such that the following holds:
	\begin{itemize}
		\item The initial data for $(\phi_0, \varpi_0, f)$ are compactly supported, sufficiently small and regular.
		\item The initial hypersurface is maximal.
		\item $\phi_0 \restriction_{\Sigma_0}$ is not identically $0$.
	\end{itemize}
	Then the following holds in a suitable system of coordinates:
	\begin{enumerate}
	\item There exists a local-in-time solution to \eqref{vlasov} on $[0,1]\times \mathbb R^2$.
	\item There exists a sequence of solutions $\{(g_{i},\phi_{i},\varpi_{i})\}_{i=1}^\infty$  to \eqref{sys} on $[0,1]\times \mathbb R^2$ such that $(g_{i},\phi_{i},\varpi_{i}) \xrightarrow{i\to \infty} (g_{0},\phi_{0},\varpi_0)$ locally uniformly and the derivatives $(\rd g_{i},\rd \phi_{i}, \rd \varpi_i)\xrightharpoonup{i\to \infty} (\rd g_{0},\rd \phi_{0},\rd \varpi_0)$ weakly in $L^2_{\mathrm{loc}}$.
	\end{enumerate}
\end{thm}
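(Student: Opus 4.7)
\bigskip
\noindent\textbf{Proof proposal.}

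The plan is to follow the two-stage strategy outlined in the abstract: first approximate the Einstein--massless Vlasov solution by a sequence of Einstein--null-dust solutions whose number $N$ of dust families grows, and then approximate each Einstein--null-dust solution by vacuum solutions via the existing high-frequency technology of \cite{HL.HF}. The final sequence is obtained by a diagonal extraction, taking $N_i\to\infty$ together with a high-frequency parameter $\ep_i\to 0$.

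\textbf{Step 1: Local existence for \eqref{vlasov}.} Working in the elliptic (maximal--isothermal) gauge on $[0,1]\times\m R^2$ (so that the reduced system is a mildly coupled elliptic--hyperbolic--transport system for the lapse/shift, the wave-map field $U=(\phi,\vp)$, and the pair $(f,u)$), I would run a standard bootstrap: energy estimates for $\phi,\vp$ and for the Vlasov moment $\int f^2\,\rd u\otimes\rd u\,\ud m(\om)$, with $u(\om)$ estimated through the eikonal equation \eqref{eq:vlasov.5} using the fact that the initial level sets of $u(\om)$ are planes. Smallness of data and the chosen parametrization of the cotangent bundle make the Vlasov source behave like a sum (integral) of null-dust stress--energies that one already knows how to handle. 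This yields a smooth solution $(g_0,\phi_0,\vp_0,f,u)$ on the slab.

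\textbf{Step 2: Discretization of the Vlasov measure.} Partition $\m S^1$ into $N$ arcs of equal length with centers $\om_k^{(N)}$ and weights $c_k^{(N)}=m(\text{arc}_k)$. Define initial data for an $N$-family null-dust system by setting
\[
F_k^{(N)}\restriction_{\Sigma_0}=c_k^{(N)}\,f(0,\cdot,\om_k^{(N)})^2, \qquad u_k^{(N)}\restriction_{\Sigma_0}=x^1\cos\om_k^{(N)}+x^2\sin\om_k^{(N)},
\]
and prescribe the same data for $(\phi,\vp)$ and the induced metric as for the Vlasov problem. Solve the Einstein--wave-map--null-dust system (existence and uniform estimates are by now standard, e.g.\ in the line of \cite{HL.HF}) to produce $(g^{(N)},\phi^{(N)},\vp^{(N)},\{F_k^{(N)},u_k^{(N)}\}_{k=1}^N)$ on $[0,1]\times\m R^2$, with bounds uniform in $N$. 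Compare this $N$-dust solution with the Vlasov solution of Step 1: the difference satisfies a linear wave-map--transport system whose source is the Riemann-sum error of the $\om$-integral in \eqref{eq:vlasov.1}, of size $O(N^{-1})$ in high Sobolev norms thanks to the regularity of $(f,u)$ in $\om$. This yields convergence of the $N$-dust solutions to the Vlasov solution at a quantitative rate $\ls N^{-1}$.

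\textbf{Step 3: High-frequency approximation of each $N$-dust solution by vacuum.} For fixed $N$, the $N$-dust solution of Step 2 lies in the class treated by \cite{HL.HF,LR.HF}: a $\m U(1)$-symmetric solution of the Einstein--null-dust system with finitely many families, small and regular. Apply the construction there with a high-frequency parameter $\ep>0$ to obtain vacuum solutions $(g^{(N,\ep)},\phi^{(N,\ep)},\vp^{(N,\ep)})$ of \eqref{sys} which converge, as $\ep\to 0$, locally uniformly to $(g^{(N)},\phi^{(N)},\vp^{(N)})$ with weak $L^2$ convergence of derivatives. Crucially, one needs the error in this approximation to be quantified in terms of $\ep$ and $N$ in a way compatible with the diagonal limit; this requires inspecting the $N$-dependence of the constants in \cite{HL.HF}. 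The oscillatory ansatz is schematically
\[
\phi^{(N,\ep)}=\phi^{(N)}+\ep\sum_{k=1}^N A_k^{(N,\ep)}\cos\!\left(\tfrac{u_k^{(N)}}{\ep}\right)+\cdots,
\]
with amplitudes $A_k^{(N,\ep)}$ tuned (through the polarization condition and a transport equation along $L_k$) so that the backreaction on the Ricci tensor reproduces $c_k^{(N)}(\rd u_k^{(N)})^{\otimes 2}$ to leading order; analogously for $\vp$ and the metric perturbation.

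\textbf{Step 4: Diagonal extraction.} Combine Steps 2 and 3: for each $i$, choose $N_i\to\infty$ and $\ep_i=\ep(N_i)\to 0$ fast enough that the total error (dust-to-Vlasov plus vacuum-to-dust) tends to $0$ in the relevant locally uniform topology, while derivatives stay bounded in $L^2_{\mathrm{loc}}$ and converge weakly. Setting $(g_i,\phi_i,\vp_i)=(g^{(N_i,\ep_i)},\phi^{(N_i,\ep_i)},\vp^{(N_i,\ep_i)})$ gives the sequence claimed in the theorem.

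\textbf{Main obstacle.} The delicate step is Step 3: obtaining high-frequency vacuum approximations of the $N$-dust solution with error estimates that are quantitative and tractable as $N\to\infty$. The usual constructions \cite{HL.HF} are tailored for a \emph{fixed} finite number of dust families, and the beat frequencies $|\om_k^{(N)}-\om_l^{(N)}|$, non-resonance conditions between the phases $u_k^{(N)}$, and the amplitudes of transport coefficients can all degenerate as $N\to\infty$. Controlling this degeneration---by choosing the phases so that resonances are avoided, organizing the sum into angular packets, and carefully tracking the $N$-dependence of every constant in the high-frequency expansion---is the core technical difficulty, and it is what dictates the admissible pairing $(N_i,\ep_i)$ in the diagonal argument.
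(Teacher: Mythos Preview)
Your two-stage strategy (discretize Vlasov to $N$ dusts, then high-frequency vacuum approximation, then diagonal extraction) is exactly the paper's approach, and you correctly flag the $N$-dependence of the constants in Step~3 as the central obstacle. But your proposal does not resolve that obstacle, and it also misses a second one: the construction in \cite{HL.HF} that you invoke is restricted to the \emph{polarized} case $\vp\equiv 0$. Handling $\vp\not\equiv 0$ is not automatic: the paper splits each dust amplitude $F_\bA$ into a pair $(F_\bA^\phi,F_\bA^\varpi)$ obeying a carefully chosen coupled transport system (see \eqref{back}), so that when these are used as amplitudes in the parametrices for $\phi$ and $\varpi$ the semilinear interaction of the high-frequency $\phi$- and $\varpi$-waves is captured at leading order.

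For the $N$-dependence, your suggested fixes (``avoid resonances, angular packets'') are not what works. The paper's actual mechanisms are: (a) almost-orthogonality of the phases gives $\|\rd\phi\|_{L^4}\ls\ep$ \emph{uniformly in $N$} (even though $\|\rd\phi\|_{L^\infty}$ is not uniformly small), and Touati's local-existence theorem \cite{Touati.local}, which replaces \cite{HL.elliptic}, needs only $L^4$ smallness; (b) the bootstrap allows the error terms $\wht\phi,\wht\varpi,\wht\mfg,\wht u_\bA,\wht\chi_\bA$ to grow like $C_b(N)e^{A(N)t}$ with a hierarchy $C(N)\ll C_b(N)\ll A(N)$, so that the time-integration in the energy estimates yields $\int_0^t C(N)C_b(N)e^{A(N)s}\,\ud s\leq \tfrac{C(N)C_b(N)}{A(N)}e^{A(N)t}\ll C_b(N)e^{A(N)t}$ and closes; (c) for the metric components (elliptic, so no time integral to exploit), the $N$-independent smallness comes instead from the mapping $\Delta^{-1}:L^{4/3}\to W^{1,4}$ combined with the $L^4$ bound in (a); (d) the parametrix is built with the \emph{nonlinear} eikonal functions $u_\bA$ rather than the background ones, and the null expansion $\chi_\bA=\Box_g u_\bA$ is controlled separately via the Raychaudhuri equation, which gains a full power of $\lambda$ over generic second derivatives of $u_\bA$. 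Without these specific ingredients Step~3 cannot be made quantitative in $N$, and the diagonal argument in Step~4 has nothing to feed on.
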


Theorem~\ref{main.intro} constructs the first examples of where the limits go beyond solutions to the Einstein--null dust system. We note that the proof still considers in a crucial way an approximation of solutions to the Einstein--null dust system by vacuum solutions. The main challenge and novelty in the paper is to take the number of families of dust to infinity to create more general limiting Vlasov fields (see Section~\ref{sec:ideas}).


The remainder of the introduction is structured as follows. In \textbf{Section~\ref{sec:ideas}}, we will briefly explain the ideas of the proof and in \textbf{Section~\ref{sec:related}}, we will describe some related works. Finally, we outline the remainder of the paper in \textbf{Section~\ref{sec:outline}}.

\subsection{Ideas of the proof}\label{sec:ideas}

The starting point of our work is the following theorem that we established previously:
\begin{theorem}[H.--L. \cite{HL.HF}]\label{thm:old}
Fix $N \in \mathbb Z_{>0}$. Then Theorem~\ref{main.intro} is true in the special case where 
\begin{enumerate}
\item the limiting spacetime is in addition polarized, i.e., $\vp \equiv 0$, and
\item the probability measure is given by $m(\om) = \f 1 N \sum_{j=1}^N \de_{\om_j}$, where $\om_j \in \mathbb S^1$ and $\de_{\om_j}$ denotes the delta measure at $\om_j$. 
\end{enumerate}
\end{theorem}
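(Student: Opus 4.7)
My plan is to prove Theorem~\ref{thm:old} by a Burnett-type high-frequency WKB construction, adapted to the $(2+1)$-dimensional $\mathbb U(1)$-reduced Einstein--scalar-field system (which is the polarized specialization of \eqref{sys}). Fix a sequence of small wavelength parameters $\ep_i \to 0$, work in an elliptic gauge for the $(2+1)$-dimensional metric, and for each $j=1,\dots,N$ introduce a null phase $u_j^{(i)}$ solving the eikonal equation with respect to the to-be-constructed metric $g_i$, with initial data $u_j^{(i)}|_{\Sigma_0} = x^1\cos\omg_j + x^2\sin\omg_j$. The ansatz is
\begin{equation*}
\phi_i = \phi_0 + \ep_i \sum_{j=1}^N a_j^{(i)}(x)\, F\!\left(\tfrac{u_j^{(i)}(x)}{\ep_i}\right) + \ep_i^2\, \phi_{\mathrm{corr}}^{(i)},\qquad g_i = g_0 + \ep_i^2 g_{\mathrm{corr}}^{(i)},
\end{equation*}
where $F$ is a fixed smooth $1$-periodic mean-zero profile, $a_j^{(i)}$ are slowly varying amplitudes, and the correctors absorb lower-order residuals. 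In the polarized case the oscillating matter sits entirely in $\phi$, and no first-order oscillating metric corrector is needed because the elliptic gauge algebraically solves away the traceless part of the $O(\ep_i)$ Ricci contribution.

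The key computation is that $2\partial_\mu\phi_i\partial_\nu\phi_i$ contains the $O(1)$ term $2\sum_j (a_j^{(i)})^2 (F')^2(u_j^{(i)}/\ep_i)\,\partial_\mu u_j^{(i)}\partial_\nu u_j^{(i)}$, which, upon averaging in $\ep_i$, converges weakly to $\sum_j (a_j^{(i)})^2 \partial_\mu u_j \partial_\nu u_j$. Choosing the normalization $(a_j^{(i)})^2 = \tfrac{1}{N} f_j^2$ therefore reproduces the discrete dust right-hand side of \eqref{eq:vlasov.1} coming from $m(\omg)=\tfrac{1}{N}\sum_j \delta_{\omg_j}$. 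Cancelling the $O(\ep_i^{-1})$ terms in $\Box_{g_i}\phi_i$ then forces the transport equation $2 L_j^{(i)} a_j^{(i)} + (\Box_{g_i} u_j^{(i)}) a_j^{(i)} = 0$ along the null congruence generated by $u_j^{(i)}$, which is precisely \eqref{eq:vlasov.4}. Cross-terms $F'(u_j^{(i)}/\ep_i) F'(u_k^{(i)}/\ep_i)$ with $j\ne k$ are oscillatory with non-zero effective phase $(u_j^{(i)} - u_k^{(i)})/\ep_i$ when $\omg_j \ne \omg_k$, and a non-stationary-phase integration-by-parts shows that they only contribute $O(\ep_i)$ to the averaged stress-energy.

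With $(u_j^{(i)}, a_j^{(i)})$ so determined, the remaining Einstein--scalar-field equations reduce to a coupled elliptic--hyperbolic system for the correctors $(g_{\mathrm{corr}}^{(i)},\phi_{\mathrm{corr}}^{(i)})$ with forcing of $\ep_i$-independent size in an appropriate high-regularity norm. I would close the system on $[0,1]$ by a bootstrap combining: (i) elliptic estimates on each time-slice for the $(2+1)$-dimensional metric in the elliptic gauge; (ii) energy estimates for $\phi_{\mathrm{corr}}^{(i)}$ that exploit integration by parts against the oscillating phases to convert powers of $\ep_i^{-1}$ into derivatives falling on smooth objects; and (iii) simultaneous transport/Hamilton--Jacobi estimates for $(u_j^{(i)}, a_j^{(i)})$ at the same regularity. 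Passing $i\to\infty$ gives locally uniform convergence $(g_i,\phi_i,0)\to (g_0,\phi_0,0)$ and weak $L^2_{\mathrm{loc}}$ convergence of their first derivatives.

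The hardest part is establishing these uniform-in-$\ep_i$ estimates in the presence of $N$ distinct phases that themselves depend quasilinearly on the constructed metric. The two principal dangers are a loss of derivatives in coupling the eikonal equations for $u_j^{(i)}$ back to $g_i$, and \emph{resonant} contributions from almost-aligned phases. The elliptic gauge, specific to $\mathbb U(1)$ symmetry, is what resolves the first: wave-like loss of derivatives in the metric equations is replaced by an elliptic gain, so $u_j^{(i)}$ and $a_j^{(i)}$ can be estimated at the same regularity as $g_i$. Non-resonance is handled by keeping $N$ fixed and exploiting pairwise separation $|\omg_j-\omg_k|>0$, so all constants may depend on $N$ and on $\min_{j\ne k}|\omg_j-\omg_k|$; this is exactly the structural restriction that forces the present paper to undertake the additional work of sending $N\to\infty$ along a carefully chosen diagonal.
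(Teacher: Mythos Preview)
Your proposal is essentially the correct strategy, and it coincides with the approach of \cite{HL.HF} (which the present paper cites for this theorem rather than reproving it). The ingredients you list---elliptic gauge for the metric, nonlinear eikonal phases, transport equations for the amplitudes, correctors absorbing the $O(\ep_i^2)$ residuals, and a bootstrap combining elliptic estimates with wave energy estimates and integration by parts against oscillating phases---are exactly those of \cite{HL.HF} and of the extension carried out in Sections~\ref{sec:parametrix}--\ref{sec:together} of this paper.

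Two refinements are worth flagging. First, in \cite{HL.HF} (and here) the second-order corrector $\phi_{\mathrm{corr}}$ is not treated as a black box: one must explicitly isolate the $O(\ep_i^2)$ oscillating terms with phases $u_j/\ep_i$, $2u_j/\ep_i$, and $(u_j\pm u_k)/\ep_i$ (cf.\ \eqref{eq:wave.para.2}), prescribing their amplitudes via further transport equations, before the remainder can be controlled at $O(\ep_i^2)$ in $H^k$. Second, your statement that constants may depend on $N$ and on $\min_{j\neq k}|\omega_j-\omega_k|$ is correct and is precisely the limitation of \cite{HL.HF} that the present paper works to remove; to make the cross-term nonresonance quantitative one rescales the phases by carefully chosen constants $a_j$ so that the family $\{a_j u_j\}$ is spatially and null adapted (Definitions~\ref{def.spatial}--\ref{def.null}, Lemma~\ref{lemma.adapt}). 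With these two points made explicit, your sketch is the proof.
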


The main contribution in the present paper is thus exactly to remove the conditions (1) and (2) in Theorem~\ref{thm:old}. Removing the $\varpi \equiv 0$ condition introduces new semilinear terms, which do not cause much difficulty because these terms verify the null condition. As we will explain in Section~\ref{sec:ideas.approx.by.vacuum}, the main modification is a new decomposition of the Vlasov field for the parametrix of $\phi$ and $\varpi$, which captures the interaction of the high-frequency parts of $\phi$ and $\varpi$.

On the other hand, removing condition (2), i.e., allowing for more general probability measures, is more difficult and is the main contribution of the paper. As we mentioned above, we will still reduce to the case of dusts. Thus our strategy is divided into two steps:
\begin{enumerate}
	\item First, we approximate the solution to Einstein-Vlasov equation by solutions to Einstein equations coupled to $N$ dust (Proposition~\ref{prop:construct.dust}).
	\item Then, we approximate the solution to Einstein-null dust equations by high frequency solutions to Einstein vacuum equations, with a frequency $\frac{1}{\lambda}$ very large compared to the number of dust $N$ (Theorem~\ref{thm:dust.by.vacuum}).
\end{enumerate}

%
%

The second step can be thought of as analogous to Theorem~\ref{thm:old} (other than having $\varpi \not \equiv 0$). However, in \cite{HL.HF}, the smallness parameter $\ep$ for the amplitude of $f$ is allowed to depend on $N$, but now the dependence on $N$ becomes important as we pass to the $N\to \infty$ limit.

\subsubsection{Approximation by multiple dusts}\label{sec:intro.approx.dust}
In the first step we approximate the probability measure $m(\om)$ by delta functions so that (for constants $\alp_\bA^{(N)} \geq 0$, $\sum_{\bA =0}^{N-1} \alp_\bA^{(N)} = 1$ and points $\omega_\bA^{(N)} \in \mathbb S^1$)
\begin{equation}\label{eq:intro.approx.m}
\sum_{\bA=0}^{N-1} \alp_\bA^{(N)} \delta_{\omega_{\bA}^{(N)}} \overset{\ast}{\rightharpoonup} m
\end{equation}
in the weak-* topology as $N\to \infty$.

For every fixed large $N \in \mathbb Z_{>0}$, we then consider an auxiliary system (recall \eqref{eq:lara.def})
\begin{equation}\label{eq:intro.back}
\left\{\begin{array}{l}
R_{\mu \nu}(g)= 2\langle \partial_\mu U, \partial_\nu U\rangle  + \sum_{{\bA}} \sqrt{\alp_\bA^{(N)}} f_{\bA}^2 \partial_\mu u_{{\bA}} \partial_\nu u_{{\bA}},\\
\Box_g \phi + \frac{1}{2}e^{-4\phi}g^{-1}(\ud\varpi,\ud\varpi)  = 0,\\
\Box_g \varpi -4g^{-1}(\ud\varpi,\ud\phi)=0,\\
2(g^{-1})^{\alpha \beta}\partial_{\alpha} u \partial_{\beta} f + (\Box_{{g}} u) f = 0, \\
(g^{-1})^{\alpha \beta}\partial_\alpha u \partial_\beta u=0,
\end{array}
\right.
\end{equation}
where $f_{\bA}(t,x) =  f(t,x,\om_{\bA}^{(N)})$, $u_{\bA}(t,x) = u(t,x,\om_{\bA}^{(N)})$. The system \eqref{eq:intro.back} can be solved in a high regularity norm, and one obtains by compactness and \eqref{eq:intro.approx.m} that as $N\to \infty$, the solutions to \eqref{eq:intro.back} converge to those of \eqref{vlasov}.

\subsubsection{Approximation by vacuum}\label{sec:ideas.approx.by.vacuum}
For the second step, which is the heart of the paper, we construct solutions to Einstein vacuum equations \eqref{sys} which approximate solutions to \eqref{eq:intro.back}. For a fixed $N$, we take the frequency parameter $\lambda^{-1}$ to be much larger and construct vacuum solutions of frequency $\lambda^{-1}$. Taking $N\to \infty$ and using Section~\ref{sec:intro.approx.dust}, we then obtain an approximation of the original solution to the Einstein--massless Vlasov system by vacuum solutions. 

We first rewrite \eqref{eq:intro.back} as
\begin{equation}\label{eq:intro.back.2}
\left\{\begin{array}{l}
R_{\mu \nu}(g)= 2\langle \partial_\mu U, \partial_\nu U\rangle  + \sum_{{\bA}}  F_{\bA}^2 \partial_\mu u_{{\bA}} \partial_\nu u_{{\bA}},\\
\Box_g \phi + \frac{1}{2}e^{-4\phi}g^{-1}(\ud\varpi,\ud\varpi)  = 0,\\
\Box_g \varpi -4g^{-1}(\ud\varpi,\ud\phi)=0,\\
2(g^{-1})^{\alpha \beta}\partial_{\alpha} u_\bA \partial_{\beta} F_\bA + (\Box_{{g}} u_\bA) F_\bA = 0, \\
(g^{-1})^{\alpha \beta}\partial_\alpha u_\bA \partial_\beta u_\bA =0,
\end{array}
\right.
\end{equation}
where we wrote $F_{\bA} = \sqrt{\alp_\bA^{(N)}} f_{\bA}$ (for notational convenience) and notices that it suffices to only consider the transport equations for $F_\bA$ and $u_\bA$ (instead of all $\om\in \mathbb S^1$).

In order to handle the terms in \eqref{sys} coming from $\varpi \not \equiv 0$ (which did not exist in \cite{HL.HF}), we further introduce a decomposition of $F_\bA$ and consider the following system: 
\begin{equation}\label{eq:intro.back.split}
\left\{\begin{array}{l}
R_{\mu \nu}(g)= 2\langle \partial_\mu U, \partial_\nu U\rangle  + \sum_{{\bA}} ((F^\phi_{{\bA}})^2+\frac{e^{-4\phi}}{4}(F^\varpi_\bA)^2)\partial_\mu u_{{\bA}} \partial_\nu u_{{\bA}},\\
\Box_g \phi + \frac{1}{2}e^{-4\phi}g^{-1}(\ud\varpi,\ud\varpi)  = 0,\\
\Box_g \varpi -4g^{-1}(\ud\varpi,\ud\phi)=0,\\
2(g^{-1})^{\alpha \beta}\partial_{\alpha} u_\bA \partial_{\beta} F^{\phi}_\bA + (\Box_{{g}} u_\bA) F^\phi_\bA + e^{-4\phi}(g^{-1})^{\alpha \beta}\partial_\alpha \varpi \partial_\beta u_\bA F^\varpi_\bA = 0, \\
2(g^{-1})^{\alpha \beta}\partial_{\alpha} u_\bA \partial_{\beta} F^{\varpi}_\bA + (\Box_{{g}} u_\bA) F^\varpi_\bA -4(g^{-1})^{\alpha \beta}\partial_\alpha \varpi \partial_\beta u_\bA F^\phi_\bA -4(g^{-1})^{\alpha \beta}\partial_\alpha \phi \partial_\beta u_\bA F^\varpi_\bA=0, \\
(g^{-1})^{\alpha \beta}\partial_\alpha u_{{\bA}} \partial_\beta u_{{\bA}}=0.
\end{array}
\right.
\end{equation}
Here, the key here is to split up the transport equation for $F_{\bA}$ into carefully chosen transport equation for $F_{\bA}^\phi$ and $F_{\bA}^\varpi$, where $F_{\bA}^2 = (F^\phi_{{\bA}})^2+\frac{e^{-4\phi}}{4}(F^\varpi_\bA)^2$. The choice is so that $F_{\bA}$ satisfies the original equation and also that when $F^\phi_{{\bA}}$ and $F^\varpi_\bA$ are used in the parametrices below, they capture the interaction of the high-frequency waves.

%
%
We will construct solutions to \eqref{sys} which admit the following ansatz for the wave map fields
\begin{equation}\label{eq:parametrix.main.term.only}
\phi = \phi_0 + \sum_\bA\lambda a_\bA^{-1} F^\phi_{\bA}\cos(\tfrac{a_\bA u_{\bA}}{\lambda})+ \mathcal E^\phi,\quad \vp = \vp_0 + \sum_\bA\lambda a_\bA^{-1} F^\vp_{\bA}\cos(\tfrac{a_\bA u_{\bA}}{\lambda})+ \mathcal E^\vp.
\end{equation}
The constants $a_{\bA}$ here are chosen to ensure that the interaction terms such that $\cos (\tfrac{a_\bA u_\bA \pm a_\bB u_\bB}{\lambda})$ are also of high frequency when $\bA \neq \bB$ (see Section~\ref{sec.prepare}).

As in \cite{HL.HF}, the parametrix \eqref{eq:parametrix.main.term.only} is not sufficiently precise. In particular, we construct the parametrix up to second order, capturing all the high-frequency terms of order $\lambda^2$. The parametrix takes the form 
\begin{equation}\label{eq:parametrix.intro.phi}
\begin{split}
\phi = &\: \lambda \sum_\bA  a_\bA^{-1}F^\phi_\bA \cos(\tfrac{a_\bA u_\bA}{\lambda})+  \lambda^2 \Big( \sum_\bA a_\bA^{-1} F^{1,\phi}_\bA \sin(\tfrac{a_\bA u_\bA}{\lambda})\\ 
&\: +\sum_\bA a_\bA^{-1} F_\bA^{2,\phi}\cos(\tfrac{2a_\bA u_\bA}{\lambda})+\sum_{\pm,\bA,\bB:\bA\neq \bB} F^\phi_{\bA,\bB} \cos (\tfrac{a_\bA u_\bA \pm a_\bB u_\bB}{\lambda }) \Big) + \wht \phi,
\end{split}
\end{equation}
and similarly for $\varpi$, and the error term $\wht \phi$ is better in $L^2$-based norm so that $\|\rd \wht \phi\|_{L^2}$ is of order $\lambda^2$ (with constants depending on $N$, see Section~\ref{sec:intro.N.dependence}).

\subsubsection{Almost orthogonality and local existence}\label{sec:intro.almost.orthogonality}
Up to this point, we have not discussed the dependence on $N$, which is in fact the main difficulty for passing to the $N \to \infty$ limit. The issue at stake is that we only know $\Big(\sum_\bA  |F^\phi_{\bA}|^2 + |F^\varpi_{\bA}|^2\Big)^{\f 12}\ls \ep$, while the $\ell^1$ sum $\sum_\bA \Big( |F^\phi_{\bA}|^2 + |F^\varpi_{\bA}| \Big)$ could be $N$-dependent and large. This already creates difficulty even for understanding the main terms $\sum_\bA\lambda a_\bA^{-1} F^\phi_{\bA}\cos(\tfrac{a_\bA u_{\bA}}{\lambda})$ and $\sum_\bA\lambda a_\bA^{-1} F^\vp_{\bA}\cos(\tfrac{a_\bA u_{\bA}}{\lambda})$  in \eqref{eq:parametrix.main.term.only}.

One easy observation is that due to the almost orthogonality of the high-frequency phases, we have $\|\partial \phi\|_{L^4},\,\|\partial \varpi\|_{L^4} \ls \ep$ (independently of $N$!) after choosing $\lambda^{-1}$ to be sufficiently large with respect to $N$.

This already comes into play for qualitative local existence (i.e., local existence even when the time of existence is allowed to depend on $\lambda$). Due to the globality (in space) of the elliptic estimates for the metric components, local existence in the elliptic gauge requires a \emph{smallness} assumption. A straightforward extension of the results in \cite{HL.elliptic} would require smallness for $\|\partial \phi\|_{L^\infty}$ and $\|\partial \varpi\|_{L^\infty}$. However, we use the more recent result of Touati \cite{Touati.local} (see Theorem~\ref{thm:Touati}) which only requires the less stringent smallness assumption on $\|\partial \phi\|_{L^4}$ and $\|\partial \varpi\|_{L^4}$.

\subsubsection{Parametrix for the metric, the eikonal function and the null expansion}

In addition to \eqref{eq:parametrix.intro.phi}, we need to control the metric components which couple to $\phi$ and $\varpi$ via the Einstein equation in \eqref{sys}. We choose an elliptic gauge so that the metric components satisfy elliptic equations. Due to the ellipticity, they are better behaved in the high-frequency regime. In fact, we could close a parametrix for each metric component $\mfg$ so that 
\begin{equation}\label{eq:intro.metric.expansion}
\mfg = \mfg_0 + \mfg_2 + \wht{\mfg},
\end{equation}
where $\mfg_0$ corresponds to the background, $\mfg_2$ consists of $O(\lambda^2)$ oscillatory terms like $\lambda^2 \sum_{\bA} \mathcal G^1_{1,A}(\mfg) \sin(\tfrac{a_\bA u_\bA}\lambda)$ and $\widetilde{\underline{\mfg}}$ is to be handled with nonlinear estimates. Notice that because $\mfg_2$ has $O(\lambda^2)$ terms, it is already better than the wave part of the system.

We should point out that in \cite{HL.HF}, we needed an extra $O(\lambda^3)$ term in the analogue of \eqref{eq:intro.metric.expansion}, which was then used to exploit a subtle cancellation in the wave equation. To implement that, however, creates various issues in the $N \to \infty$ limit. Instead, we show that the parametrix \eqref{eq:intro.metric.expansion} is already sufficient if we use the \emph{actual nonlinear} eikonal function $u_{\bA}$ which satisfies the nonlinear eikonal equation $g^{\alp\bt} \rd_\alp u_\bA \rd_\bt u_\bA = 0$ (instead of the background eikonal function $u_{\bA}^0$ which satisfies $g_0^{\alp\bt} \rd_\alp u^0_\bA \rd_\bt u^0_\bA = 0$).\footnote{We remark that using the parametrix introduced in this paper would also give a simplification of the proof in the setting of \cite{HL.HF}.} 

Because we now estimate the nonlinear $u_\bA$ as an independent unknown, we also need a parametrix for it:
\begin{equation}
u_\bA = u_\bA^0 + u_\bA^2 + \wht u_\bA,
\end{equation}
where the main oscillatory term $u_\bA^2$ is also $O(\lambda^2)$ (with $N$ dependence), essentially inheriting the size of $\mfg_2$. As before $\wht u_\bA$ is better and to be controlled with nonlinear estimates.

In order for this scheme to work, we separately control the null expansion $\chi_{\bA} = \Box_g u_{\bA}$. The quantity $\Box_g u_{\bA}$ naturally arises when we control the error from the wave map system. By considering $\chi_{\bA}$ separately, we show that while general second derivatives of $u_{\bA} - u_{\bA}^0$ is only $O(\lambda^0)$ (with dependence on $N$), the special combination $\chi_{\bA} - \chi_{\bA}^0$ is better and is $O(\lambda^1)$ (again with dependence on $N$). More precisely, 
\begin{equation}
\chi_\bA = \chi_\bA^0 + \chi_\bA^1 + \wht \chi_\bA,
\end{equation}
where $\chi_\bA^1$ are $O(\lambda)$ oscillatory terms (with $N$ dependence) and the nonlinear error $\wht \chi_\bA$ obeys better estimates than what one would get from the estimates for $\wht u_\bA$. To obtain this improvement, we use the Raychaudhuri equation that $\chi_\bA$ satisfies. In this equation, the main oscillatory terms are $O_N(\lambda)$, as opposed to $O_N(1)$ in the transport equations for the general second derivatives of $u_\bA$. In order to use the Raychaudhuri equation to obtain improved estimates for $\chi_{\bA}$, we would also need to carefully initiating the data for $u_\bA$ to have consistent bounds (see Section~\ref{sec:data.u}).

\subsubsection{$N$ dependence of the norms}\label{sec:intro.N.dependence}
In order to deal with the $N\to \infty$ difficulties, we introduce a norm which allows for exponential growth in $N$. Introduce a hierarchy of three\footnote{In the actual proof, it is convenient instead to introduce four constants so as to all for a different bootstrap constant for the time-differentiated quantity. We omit the technical considerations here in the introduction.} large constants depending on $N$: $C(N) \ll C_b(N) \ll A(N)$. Here, $C(N)$ is a universal function\footnote{Going through the proof gives that $C(N) = c^N N^N$ works for some large $c\geq 1$, though this is probably highly sub-optimal.} of $N$ and $C_b(N)$ can be thought of as a bootstrap constant. At the same time $A(N)$ is chosen to be much larger and in the bootstrap assumption, we allow the  error term quantities ($\wht \phi$, $\wht \varpi$, $\wht \mfg$, $\wht u_\bA$, $\wht \chi_\bA$) to grow like $e^{A(N)t}$. For instance, for $\wht \phi$ we make the bootstrap assumption that
\begin{equation}
\sum_{k\leq 3} \lambda^{k}\| \rd \wht \phi \|_{H^k}(t) \leq 2 \ep C_b(N) \lambda^2 e^{A(N)t}.
\end{equation}

The key point here is that when we carry out energy estimates for $\wht \phi$, there is a time integral, and so even if the nonlinear terms have extra large factors depending on $N$ (captured by $C(N)$ below), we will need to bound an integral 
$$\ep \lambda^2 \int_0^t C(N) C_b(N) e^{A(N) s} \, \ud s \leq \ep \lambda^2  \f{C_0(N) C_b(N)}{A(N)} e^{A(N) t},$$
which is much better than the bootstrap assumptions since $C_0(N) C_b(N) \ll A(N)$. (Notice also that since we have expanded the parametrix to sufficiently many terms, the main contribution would never come from quadratic terms in $\wht \phi$, $\wht \varpi$, $\wht \mfg$, $\wht u_{\bA}$, $\wht \chi_{\bA}$, as these terms would have extra smallness coming from $\lambda$. In particular, in terms on $N$ dependence for the borderline terms in $\lambda$, we never need to deal with e.g., $\int_0^t C(N) (C_b(N))^2 e^{2A(N) s} \, \ud s$, which would be worse than the $N$-dependence of the bootstrap assumptions.)

This almost handles all the issues with the $N$ dependence, except for the metric components, which are controlled using elliptic equations and thus the estimates do not exhibit an extra time integration. In order to deal with this, we use that $\Delta^{-1}: L^{\f 43} \to W^{1,4}$ (with weights that we suppress here), which is stronger than what we have for inverting the wave equation. This extra gain in the integrability in the Sobolev spaces allow us to use the smallness of $\|\partial \phi\|_{L^4}$ and $\|\partial \varpi\|_{L^4}$ (which comes from almost orthogonality, see Section~\ref{sec:intro.almost.orthogonality}) to obtain $N$-independent smallness and close the bootstrap argument.

\subsection{Related works}\label{sec:related}

We give a quick overview of some related results, but refer the reader to our recent survey \cite{HL.survey} for a more detailed discussion of other related works.

\subsubsection{Constructions of high-frequency spacetimes} \label{sec:related.HF}

High-frequency spacetimes are first constructed explicitly in symmetry classes \cite{Burnett, GW2, pHtF93, SGWK, SC.standing}. In plane symmetry, there is also a good well-posedness theory at low regularity consistent with high-frequency limits so that it could be used to construct examples of high-frequency limits \cite{LeFLeF2020, Lott1, Lott3, Lott2}. Using the local well-posedness theory in \cite{LR, LR2}, this was generalized in \cite{LR.HF} so that exact symmetries are not needed, but only high angular regularity is required. In all these examples, since the oscillations are limited to two null directions, the resulting limiting effective matter field can only have at most two families of null dusts.
 
Beyond the setting of two families of dust, examples were first constructed under $\mathbb U(1)$ symmetry \cite{HL.HF}. The work \cite{HL.HF} concerns the small data regime and allows for any finite number of families of null dusts (see Theorem~\ref{thm:old}).  

Finally, in more recent works, Touati constructed examples in generalized wave coordinates, where no symmetries are needed \cite{Touati2}. The use of generalized wave coordinates makes it possible to superpose different families of null dusts and obtain limiting effective matter field which corresponds to an arbitrary but finite number of families of null dusts \cite{touati2024reverse}.

As mentioned above, all these constructions only give null dust in the limit, where the number of families of dusts could be arbitrary but finite. The present paper gives the first construction of limiting Vlasov field beyond the setting of dusts. It is an interesting open problem whether this could also be carried out without symmetry, for instance by extending the work \cite{touati2024reverse}.

\subsubsection{Low-regularity results in $\mathbb U(1)$ symmetry} The construction high-frequency spacetimes can be viewed in the larger context of low-regularity solutions to the Einstein equations. 

In $(3+1)$ dimensions, the best known general low-regularity requires the initial data to be in $H^2$ \cite{L21}; see also \cite{bHjyC99b,bHjyC99a,sKiR2003,sKiR2005d,SmTa,dT2002}. There are also other special regimes of low-regularity results, for instance concerning impulsive gravitational waves \cite{Ang20, LR, LR2}.

Under $\mathbb U(1)$ symmetry, there are various works which obtain better low-regularity results than the case without symmetry. We refer the reader to \cite{sAntC2024, Huneau.mastersthesis, LVdM1, LVdM2, SmTa} for some results.

\subsection{Outline of the paper}\label{sec:outline} The remainder of the paper is structured as follows. In \textbf{Section~\ref{sec:setup}}, we introduce the geometric setup and the gauge that we use and recall the local existence results in this gauge. With the geometric background, we then state the precise version of our main theorem (Theorem~\ref{main.thm.2}) in \textbf{Section~\ref{sec:statement}}. In \textbf{Section~\ref{sec:dust}}, we approximate the solution to the Einstein--massless Vlasov system by solutions to the Einstein--null dust system with a large number of families of dust, and analyze the solutions to the Einstein--null dust system. In \textbf{Section~\ref{sec:main.theorem}}, we give the main result (Theorem~\ref{thm:dust.by.vacuum}) on approximating the solutions to the Einstein--null dust system by vacuum solutions with high frequency waves propagating in different directions.

The remainder of the paper is devoted to the proof of Theorem~\ref{thm:dust.by.vacuum} In \textbf{Section~\ref{sec:parametrix}}, we set up the parametrices for the wave and geometric quantities. In \textbf{Section~\ref{sec.id}}, we set up initial data which obey bounds consistent with the parametrices. In \textbf{Section~\ref{sec:bootstrap}}, we set up a bootstrap argument to control the error terms in the parametrices. Finally, the main a priori estimates are proved in \textbf{Section~\ref{sec:wave}} for the wave map fields $\phi$ and $\varpi$, in \textbf{Section~\ref{sec:eikonal}} for the eikonal functions $u_\bA$, in \textbf{Section~\ref{sec:metric}} for the metric, and in \textbf{Section~\ref{sec:chi}} for the null expansions $\chi_\bA$. Finally, we conclude the proof in \textbf{Section~\ref{sec:together}}.

\subsection*{Acknowledgements} J.~Luk is partially supported by a Terman fellowship and the NSF grant DMS-2304445. 

\section{Geometric setup, norms, and local well-posedness}\label{sec:setup}

\subsection{Elliptic gauge}\label{sec.elliptic.gauge}
Throughout the paper, we use an elliptic gauge that we now describe in this subsection. Under the symmetry reduction we work with, the wave dynamical quantities $\phi$, $\varpi$ can be separated from the geometric quantities. After introducing an elliptic gauge, we see that the geometric quantities satisfy elliptic equations, and thus obey stronger estimates as compared to $\phi$ and $\varpi$.

Given a $(2+1)$-dimensional metric $g$ on $\q M:=I\times \mathbb R^2$ in the form
\begin{equation}\label{g.form.0}
g=-n^2\ud t^2 + \bar{g}_{ij}(\ud x^i + \beta^i \ud t)(\ud x^j + \beta^j\ud t),
\end{equation}
we define $\Sigma_t:=\{(s,x^1,x^2): s=t\}$, $e_0:= \partial_t -\beta^i\rd_i$ (which is a future-directed normal to $\Sigma_t$), and
the second fundamental form of the embedding $\Sigma_t \subset \q M$
\begin{equation}\label{K}k_{ij}=-\frac{1}{2n}\q L_{e_0} \bar{g}_{ij}.
\end{equation}
We decompose $k$ into its trace and traceless parts.
\begin{equation}\label{K.tr.trfree} 
k_{ij}= H_{ij}+\frac{1}{2}\bar{g}_{ij}\tau.
\end{equation}
Here, $\tau:=\mbox{tr}_{\bar{g}} {k}$ and $H_{ij}$ is therefore traceless with respect to $\bar{g}$.

We introduce the elliptic gauge that we will use:
\begin{definition}\label{def:elliptic.gauge}
We say that a metric of the form \eqref{g.form.0} is in \textbf{elliptic gauge} if
\begin{enumerate}
	\item $\bar{g}$ is conformally flat, i.e., for some function $\gamma$,
	\begin{equation}\label{uniformized.g}
	\bar{g}_{ij}=e^{2\gamma}\delta_{ij};
	\end{equation}
	\item The constant $t$-hypersurfaces $\Sigma_t$ are maximal\footnote{The maximality of $\Sigma_t$ is technically not just a gauge condition, as in the context of an initial value problem, it requires the geometric condition that $\Sigma_0$ is maximal. Nonetheless, there is nothing fundamental about maximality per se; our argument works equally well when $\tau$ is a sufficiently regular and localized function.}
	$$\tau=0.$$
	We will later also fix the normalization of the function $t$ by imposing a condition on  $- n^{-2} = g^{-1}(\ud t, \ud t)$ as $|x|\to \infty$; see \eqref{eq:gamma.n.decomposition}.
\end{enumerate}
In particular, (1) implies that the metric takes the form 
\begin{equation}\label{g.form}
	g=-n^2 \ud t^2 + e^{2\gamma}\delta_{ij}(\ud x^i + \beta^i \ud t)(\ud x^j + \beta^j\ud t).
	\end{equation}
\end{definition}

	The following can be easily checked:
\begin{lemma}
	Given $g$ of the form \eqref{g.form}, the determinant of $g$ is given by
	\begin{equation}\label{g.det}
	\det(g)=-e^{4\gamma}n^2,
	\end{equation}
	and the inverse $g^{-1}$ is given by
	\begin{equation}\label{g.inverse}
	g^{-1}=\frac{1}{n^2}\left(\begin{array}{ccc}-1 & \beta^1 & \beta^2\\
	\beta^1 & n^2e^{-2\gamma}-\beta^1\beta^1 & -\beta^1 \beta^2\\
	\beta^2 & -\beta^1 \beta^2 & n^2e^{-2\gamma}-\beta^2\beta^2
	\end{array}
	\right).
	\end{equation}
\end{lemma}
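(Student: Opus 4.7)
The plan is a direct computation, since the metric \eqref{g.form} is in standard ADM (lapse-shift) form with a conformally flat $2$-dimensional spatial slice, and both claims are purely algebraic.

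First I would write out the $3\times 3$ matrix of $g_{\mu\nu}$ explicitly in coordinates $(t,x^1,x^2)$. Using \eqref{g.form} and expanding the shift-cross terms, one gets $g_{00}=-n^2+e^{2\gamma}|\beta|^2$, $g_{0i}=e^{2\gamma}\beta^i$, and $g_{ij}=e^{2\gamma}\delta_{ij}$. For the determinant \eqref{g.det}, the cleanest route is the ADM block identity: for a metric of lapse-shift form with spatial part $\bar g$, one has $\det g=-n^2\det\bar g$. Since $\bar g_{ij}=e^{2\gamma}\delta_{ij}$ is $2\times 2$, $\det\bar g=e^{4\gamma}$, giving $\det g=-n^2 e^{4\gamma}$. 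Alternatively, I would verify this by a single cofactor expansion along the first row of the explicit matrix and observe that the $|\beta|^2$ contributions cancel.

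For the inverse \eqref{g.inverse}, the quickest path is to note the standard ADM formulas $g^{00}=-1/n^2$, $g^{0i}=\beta^i/n^2$, $g^{ij}=\bar g^{ij}-\beta^i\beta^j/n^2$, then specialize using $\bar g^{ij}=e^{-2\gamma}\delta^{ij}$ so that $g^{ij}=(n^2 e^{-2\gamma}\delta^{ij}-\beta^i\beta^j)/n^2$; reassembling gives exactly the matrix displayed on the right-hand side of \eqref{g.inverse}. If one prefers a self-contained check, I would instead multiply the claimed $g^{-1}$ by the matrix of $g_{\mu\nu}$ computed above and verify column by column that the product is the identity; the only nontrivial cancellations are the $\beta^i$-linear and $\beta^i\beta^j$-quadratic terms, and these cancel algebraically regardless of the values of $n$, $\gamma$, $\beta^i$.

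There is no real obstacle here: the lemma is a recording of a standard ADM computation in the special case of a conformally flat $2$-dimensional slice. The only thing to watch is sign conventions for the extrinsic-geometric sector, but since the statement involves only $\det g$ and $g^{-1}$, signs enter only through $g_{00}=-n^2+e^{2\gamma}|\beta|^2$, which is unambiguous from \eqref{g.form}.
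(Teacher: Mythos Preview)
Your proposal is correct and matches the paper's approach: the paper states the lemma as something that ``can be easily checked'' and gives no further proof, so your direct ADM-form computation (or equivalently the explicit matrix multiplication) is exactly what is intended.
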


In this gauge, $H$, $\gamma$, $n$ and $\beta$ satisfy elliptic equations. The following equations can be derived using the computations in \cite[Appendix~B]{HL.elliptic}.
\begin{lemma}
Define the following notation to describe the different cases simultaneously.
\begin{equation}
\mathcal R(Y_1,Y_2) = 
\begin{cases}
2\langle Y_1 U, Y_2 U\rangle & \hbox{in vacuum}, \\
2\langle Y_1 U, Y_2 U\rangle +  \sum_{\bA} F_{\bA}^2 (Y_1 u_\bA)(Y_2 u_\bA) & \hbox{for dust}, \\
2\langle Y_1 U, Y_2 U\rangle + \int_{\mathbb S^1} f^2(\om) (Y_1 u)(\om)(Y_2 u)(\om) \, \ud m(\om) & \hbox{for Vlasov}.
\end{cases}
\end{equation}
In each of these cases, the metric components in the elliptic gauge obey the following equations:
\begin{align}
{\de^{ik}}\partial_{{k}} H_{ij}=&\: -\frac{e^{2\gamma}}{n} \mathcal R(e_0, \partial_j), \label{elliptic.1}\\
\Delta \gamma = &\: -\f 12 \delta^{ij}\mathcal R(\rd_i, \partial_j) -\frac{e^{2\gamma}}{2 n^2}\mathcal R(e_0, e_0)- \frac{1}{2}e^{-2\gamma}|H|^2,\label{elliptic.2}\\
\Delta n =&\: ne^{-2\gamma}|H|^2+ \frac{e^{2\gamma}}{n}\mathcal R(e_0, e_0),\label{elliptic.3}\\
(\mathfrak L\beta)_{ij}=&\: 2Ne^{-2\gamma}H_{ij},\label{elliptic.4}
\end{align}
where $\Delta$ is the flat Laplacian $\Delta = \sum_{i=1}^2 \rd_{ii}^2$ and $\mathfrak L$ is the conformal Killing operator given by 
\begin{equation}\label{L.def}
(\mathfrak L\beta)_{ij}:=\delta_{j\ell}\rd_i\beta^\ell+\delta_{i\ell}\rd_j\beta^\ell-\delta_{ij}\rd_k\beta^k.
\end{equation}
\end{lemma}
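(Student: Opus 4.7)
My plan is to treat the four equations as a uniform consequence of the standard ADM-type decomposition of the Einstein equations in the elliptic gauge, exactly as in the computation of \cite[Appendix~B]{HL.elliptic}. The key unifying observation is that in all three cases (vacuum, dust, Vlasov) the Einstein equations take the form $R_{\mu\nu} = (\text{matter})$ with the right-hand side satisfying
\begin{equation*}
R_{\mu\nu} Y_1^\mu Y_2^\nu = \mathcal R(Y_1, Y_2)
\end{equation*}
for any vector fields $Y_1, Y_2$, as can be read off directly from \eqref{vlasov} (and its dust and vacuum analogues) together with the definition \eqref{eq:lara.def} of $\langle\cdot,\cdot\rangle$. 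Consequently the matter content enters the elliptic system only through the bilinear form $\mathcal R$, so it suffices to perform a single derivation parametrized by $\mathcal R$.

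Next I would derive each equation in turn. For \eqref{elliptic.1}, I would invoke the Codazzi (momentum) constraint, which expresses $\bar\nabla^i k_{ij} - \bar\nabla_j \tau$ in terms of the Ricci projection along the unit timelike normal $\mathbf T = n^{-1}e_0$ and the spatial direction $\rd_j$, then insert $\tau = 0$ and $k_{ij} = H_{ij}$, and observe that in two dimensions with $\bar g = e^{2\gamma}\delta$ the divergence of a traceless symmetric tensor collapses to $\bar\nabla^i H_{ij} = e^{-2\gamma}\delta^{ik}\rd_k H_{ij}$ (all Christoffel contributions cancel pairwise, using the trace-freeness of $H$ and $\delta^{ij}\delta_{ij}=2$); multiplying through by $e^{2\gamma}$ produces the prefactor $e^{2\gamma}/n$ on the right-hand side. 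For \eqref{elliptic.2}, I would combine the Gauss (Hamiltonian) constraint with the two-dimensional identity $\bar R = -2 e^{-2\gamma}\Delta\gamma$ and convert the Einstein tensor to $R_{\mu\nu}$ via $G_{\mu\nu} = R_{\mu\nu} - \tfrac12 g_{\mu\nu} R$ together with the $(2+1)$-dimensional trace relation $R = -2 g^{\mu\nu}G_{\mu\nu}$. For \eqref{elliptic.3}, I would propagate the maximal slicing $\tau \equiv 0$ in time: differentiating and using the evolution equation for $\mathrm{tr}\,k$ produces an elliptic equation for the lapse $n$ which, upon applying the unification above, reduces to \eqref{elliptic.3}. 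Finally for \eqref{elliptic.4}, I would unpack \eqref{K} as $k_{ij} = -\tfrac{1}{2n}\mathcal L_{e_0}\bar g_{ij}$, substitute $\bar g_{ij} = e^{2\gamma}\delta_{ij}$, and extract the trace-free part; the $\gamma$- and $\beta^\ell\partial_\ell\gamma$-terms cancel, leaving the conformal Killing operator $\mathfrak L$ applied to $\beta$ on one side and $2ne^{-2\gamma}H_{ij}$ on the other.

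I do not anticipate any substantive obstacle, since the bulk of the computation is already recorded in \cite[Appendix~B]{HL.elliptic}; the work here is bookkeeping of signs, of the $(2+1)$-dimensional trace relations, and of the factors of $n$ and $e^{2\gamma}$ introduced by the conformal ansatz and by the normalization of $e_0$. The most delicate step will be \eqref{elliptic.2}, because both the normal-normal and the spatial-trace projections of $R_{\mu\nu}$ contribute to $\Delta\gamma$; one must carefully combine the scalar-curvature term from the Gauss equation with the Ricci-to-Einstein conversion in $(2+1)$ dimensions to reproduce the two separate right-hand side terms $-\tfrac12\delta^{ij}\mathcal R(\rd_i,\rd_j)$ and $-\tfrac{e^{2\gamma}}{2n^2}\mathcal R(e_0,e_0)$ simultaneously.
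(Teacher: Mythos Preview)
Your proposal is correct and follows exactly the approach the paper takes: the paper simply states that these equations ``can be derived using the computations in \cite[Appendix~B]{HL.elliptic}'' without giving any details, and your outline (Codazzi constraint for \eqref{elliptic.1}, Gauss constraint plus the two-dimensional identity $\bar R = -2e^{-2\gamma}\Delta\gamma$ for \eqref{elliptic.2}, propagation of $\tau\equiv 0$ for \eqref{elliptic.3}, and unpacking the definition of $k_{ij}$ for \eqref{elliptic.4}) is precisely what that appendix computation amounts to. Your unifying observation that the matter content enters only through $\mathcal R$ is the right way to handle all three cases at once.
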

The following simple computation will also be useful.
\begin{lemma}\label{lem:e0gamma}
$$2e_0\gamma = \rd_i \bt^i.$$
\end{lemma}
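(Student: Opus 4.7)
\medskip

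\noindent\textbf{Proof plan.} The identity is purely computational: I would expand the definition of $k$ under the conformally flat spatial metric $\bar g_{ij}=e^{2\gamma}\delta_{ij}$, take the trace, and impose the maximal gauge condition $\tau=0$. There is no real obstacle; the only thing to be careful about is bookkeeping of the Lie derivative along $e_0$.

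\medskip

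\noindent\textbf{Step 1: expand the Lie derivative.} From $e_0=\partial_t-\beta^i\partial_i$ and the standard formula for the Lie derivative of a covariant $2$-tensor on $\Sigma_t$,
\begin{equation*}
\mathcal L_{e_0}\bar g_{ij}
= \partial_t \bar g_{ij} - \beta^k \partial_k \bar g_{ij} - \bar g_{kj}\partial_i\beta^k - \bar g_{ik}\partial_j\beta^k.
\end{equation*}
Substituting $\bar g_{ij}=e^{2\gamma}\delta_{ij}$ from \eqref{uniformized.g} gives
\begin{equation*}
\mathcal L_{e_0}\bar g_{ij}
= 2 e^{2\gamma}(\partial_t\gamma-\beta^k\partial_k\gamma)\delta_{ij}
  - e^{2\gamma}\bigl(\delta_{kj}\partial_i\beta^k+\delta_{ik}\partial_j\beta^k\bigr)
= 2 e^{2\gamma}(e_0\gamma)\delta_{ij}-e^{2\gamma}\bigl(\delta_{kj}\partial_i\beta^k+\delta_{ik}\partial_j\beta^k\bigr).
\end{equation*}

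\medskip

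\noindent\textbf{Step 2: take the trace.} Contract with $\bar g^{ij}=e^{-2\gamma}\delta^{ij}$. The first term contributes $e^{-2\gamma}\delta^{ij}\cdot 2e^{2\gamma}(e_0\gamma)\delta_{ij}=4\,e_0\gamma$, while the second contributes $-2\,\partial_k\beta^k$ after relabeling. Hence
\begin{equation*}
\bar g^{ij}\mathcal L_{e_0}\bar g_{ij}=4\,e_0\gamma-2\,\partial_k\beta^k.
\end{equation*}

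\medskip

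\noindent\textbf{Step 3: use the definition of $k$ and the maximal condition.} From \eqref{K} and \eqref{K.tr.trfree},
\begin{equation*}
\tau=\bar g^{ij}k_{ij}=-\frac{1}{2n}\,\bar g^{ij}\mathcal L_{e_0}\bar g_{ij}=-\frac{1}{n}\bigl(2\,e_0\gamma-\partial_k\beta^k\bigr).
\end{equation*}
The elliptic gauge condition $\tau=0$ from Definition~\ref{def:elliptic.gauge}(2) forces the parenthesis to vanish, giving $2\,e_0\gamma=\partial_i\beta^i$, as claimed. \qed

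\medskip

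The entire argument is a half-page computation; no further ingredients are needed beyond the definitions already recorded in the excerpt. The only subtlety worth noting in writing the proof is that the Lie derivative $\mathcal L_{e_0}\bar g_{ij}$ should be interpreted as acting only on the spatial tensor $\bar g_{ij}$ (with $e_0$ decomposed into the $\partial_t$ part giving the time derivative of the components, and the $-\beta^i\partial_i$ part giving the spatial Lie derivative along $-\beta$), which is the standard ADM convention consistent with \eqref{K}.
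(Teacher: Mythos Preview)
Your proof is correct and is precisely the computation the paper has in mind; the paper states the lemma without proof, calling it a ``simple computation,'' and your Steps~1--3 (expand $\mathcal L_{e_0}\bar g_{ij}$ with $\bar g_{ij}=e^{2\gamma}\delta_{ij}$, trace with $\bar g^{ij}=e^{-2\gamma}\delta^{ij}$, impose $\tau=0$) are exactly the intended derivation.
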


We also record the following useful computation. See for instance \cite[Lemma~2.6]{LVdM1}.
\begin{lemma}\label{lem:wave.operator}
For any function $h$, the wave operator is given as follows in coordinates:
$$\Box_g h  = -\f{e_0^2 h}{n^2} + e^{-2\gamma} \de^{ij}\rd^2_{ij} h + \f{e_0 n}{n^3} e_0 h + \f{e^{-2\gamma}}{n} \de^{ij} \rd_i n \rd_j h,$$
where $e_0 = \rd_t - \bt^i \rd_i$ as before.
\end{lemma}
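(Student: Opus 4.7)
The plan is to reduce to the standard coordinate formula
$$\Box_g h = \f{1}{\sqrt{-\det g}} \rd_\mu\bigl( \sqrt{-\det g}\, g^{\mu\nu} \rd_\nu h\bigr),$$
and then simply compute using \eqref{g.det} and \eqref{g.inverse}. From \eqref{g.det} we have $\sqrt{-\det g} = e^{2\gamma} n$. From \eqref{g.inverse}, the natural regrouping is
$$g^{t\nu}\rd_\nu h = -\f{1}{n^2}(\rd_t h - \bt^i \rd_i h) = -\f{e_0 h}{n^2}, \qquad g^{i\nu}\rd_\nu h = \f{\bt^i}{n^2}\, e_0 h + e^{-2\gamma} \de^{ij} \rd_j h,$$
so the contraction absorbs the shift cleanly in terms of $e_0$.

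Next I would expand
$$\Box_g h = \f{1}{e^{2\gamma} n}\Bigl[ -\rd_t\!\Bigl(\f{e^{2\gamma}}{n} e_0 h\Bigr) + \rd_i\!\Bigl(\f{e^{2\gamma}\bt^i}{n} e_0 h\Bigr) + \rd_i(n\, \de^{ij} \rd_j h)\Bigr].$$
The second-order terms coming from differentiating $e_0 h$ collect as $-\f{e^{2\gamma}}{n}(\rd_t - \bt^i\rd_i) e_0 h = -\f{e^{2\gamma}}{n} e_0^2 h$, while $\rd_i(n\,\de^{ij}\rd_j h)$ produces the spatial Laplacian contribution $n\de^{ij}\rd_{ij}^2 h$ together with a $(\rd_i n)(\rd_j h)$ term. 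After dividing by the prefactor $\f{1}{e^{2\gamma} n}$, these already match three of the four terms claimed in the statement.

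It remains to identify the coefficient of $e_0 h$. Differentiating $e^{2\gamma}/n$ and $e^{2\gamma}\bt^i/n$ by the product rule and collecting, the coefficient is
$$\f{1}{e^{2\gamma} n}\Bigl( -\f{2 e^{2\gamma}}{n} \rd_t\gamma + \f{e^{2\gamma} \rd_t n}{n^2} + \f{2 e^{2\gamma}\bt^i \rd_i\gamma}{n} + \f{e^{2\gamma}\rd_i \bt^i}{n} - \f{e^{2\gamma}\bt^i \rd_i n}{n^2}\Bigr) = \f{1}{n^2}\Bigl(-2 e_0\gamma + \rd_i\bt^i + \f{e_0 n}{n}\Bigr).$$
The key cancellation is that, by Lemma~\ref{lem:e0gamma} (i.e.\ maximality), $2 e_0\gamma = \rd_i\bt^i$, so the first two terms drop out and we are left with $\f{e_0 n}{n^3}$, as claimed. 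There is no real obstacle here: the only substantive input is the maximal-slicing identity, which is exactly what makes the $\gamma$- and $\bt$-derivatives cancel and leaves the clean formula stated in the lemma.
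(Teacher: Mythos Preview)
Your proof is correct and complete. The paper does not actually give a proof of this lemma---it simply refers to \cite[Lemma~2.6]{LVdM1}---so your direct computation from the coordinate formula $\Box_g h = (-\det g)^{-1/2}\rd_\mu((-\det g)^{1/2} g^{\mu\nu}\rd_\nu h)$, together with the identity $2e_0\gamma = \rd_i\bt^i$ from Lemma~\ref{lem:e0gamma}, is exactly the expected argument and fills in what the paper outsources.
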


\subsection{Norms}

\begin{definition}
Given a scalar function $u$, define
$$|\rd_x u|^2 = \sum_{i=1}^2 (\rd_i u)^2,\quad |\rd u|^2 = \sum_{\alp=0}^2 (\rd_\alp f)^2.$$
We make a similar definition for higher order tensor fields and for higher derivatives.
\end{definition}

We will use the following norms in this paper:
\begin{definition}
Let $V_t \subset \Sigma_t$ be an open connected subset with smooth boundary.
\begin{enumerate}
\item  For $k \in \mathbb Z_{\geq 0}$ and $p \in [1,\infty]$, define the $W^{k,p}(V_t)$ to be the standard Sobolev norms involving only spatial derivatives, i.e.,
$$ \| u \|_{W^{k,p}(V_t)} = \sum_{|\alp|\leq k} \| \rd_x^\alp u \|_{L^p(V_t)},$$
where $L^p(V_t)$ is taken with respect to the measure $\ud x = \ud x^1 \ud x^2$.
\item For $k \in \mathbb Z_{\geq 0}$, $p \in [1,\infty]$, $r \in \mathbb R$, define the weighted Sobolev norms by
$$\| u\|_{W^{k,p}_r(V_t)} =  \sum_{|\alp|\leq k} \| \langle x \rangle^{r+|\alp|} \rd_x^\alp u \|_{L^p(V_t)},$$
where $\langle x \rangle = (1+|x|^2)^{\f 12}$.
\item When $p = 2$, we denote $H^k(V_t) = W^{k,2}(V_t)$ and $H^k_r(V_t) = W^{k,2}_r(V_t)$.
\item When $k=0$, we denote the weighted $L^p$ spaces by $L^p_r(V_t) = W^{0,p}_r(V_t)$.
\end{enumerate}
\end{definition}

\subsection{Admissible free initial data}

Following \cite{Huneau.constraint, HL.elliptic}, we define a notion of admissible free initial data, which consist of (rescaled versions of) data for the matter field. At least in the small data regime, this is then sufficient to reconstruct the metric on the initial hypersurface. For simplicity, we restrict to the case where the data are smooth compactly supported, and where the level sets of the initial $u_{\bA}$ are given by coordinate planes.

\begin{df}[Admissible free initial data]\label{def.free.data}
	Define $\dot{U} = (\dot{\phi}, \dot{\varpi})$, $\nabla U = (\nabla \phi, \nabla \varpi)$, where $\dot{\phi}$, $\dot{\varpi}$ are defined as
	\begin{equation}\label{data.rescaled}
	\dot{\phi}=\frac{e^{2\gamma}}{n} (e_0 \phi),\quad \dot{\varpi}=\frac{e^{2\gamma}}{n} (e_0 \varpi),
	\end{equation}
	with $\gamma$ and $n$ as in \eqref{g.form}.
	
	
	For $R>0$, an \textbf{admissible free initial data set} with respect to the elliptic gauge consists of
	\begin{enumerate}
		\item $(\dot{U},\nabla U) \restriction_{\Sigma_0}\in C^\infty_c(B(0,R))$,
		\item $\begin{cases}
		\hbox{A finite set $\mathcal A$ in the null dust cases} \\
		\hbox{A probability measure $m$ on $\mathbb S^1$ in the Vlasov case},
		\end{cases}
		$
		\item $\begin{cases} 
		\breve{F}_{\bA} \restriction_{\Sigma_0} := F_\bA e^{\frac{\gamma}{2}}\restriction_{\Sigma_0}\in C^\infty_c(B(0,R)),\, \forall \bA\in \mathcal A \quad \hbox{in the null dust case}\\
		\breve{f}\restriction_{\Sigma_0} := f e^{\frac{\gamma}{2}}\restriction_{\Sigma_0} \in C^\infty_c(B(0,R)\times \mathbb S^1)\quad \hbox{in the Vlasov case},
		\end{cases}
		$
		\item $\begin{cases} 
		u_{\bA} \restriction_{\Sigma_0} = x^1 \cos \om_{\bA} + x^2 \sin \om_{\bA},\quad \om_{\bA} \in \mathbb S^1 ,\, \forall \bA\in \mathcal A\quad \hbox{in the null dust case}\\ 
		u(x,\om)\restriction_{\Sigma_0} = x^1 \cos \om + x^2 \sin \om\quad \hbox{in the Vlasov case},
		\end{cases}$
	\end{enumerate}
	where the following compatibility condition is satisfied:
	\begin{equation}\label{main.data.cond}
	\begin{cases}
	\int_{\mathbb R^2} 2\langle \dot{U},\rd_j U \rangle \, \ud x =0 \quad \hbox{in vacuum}, \\
	\int_{\mathbb R^2} \left(2\langle \dot{U},\rd_j U \rangle+ \sum_{\bA \in \mathcal A}\breve{F}_{\bA}^2|\nab u_{\bA} |\rd_j u_{\bA}\right) \, \ud x =0 \quad \hbox{in the null dust case}, \\
	\int_{\mathbb R^2} \left(2\langle \dot{U},\rd_j U \rangle+ \int_{\mathbb S^1} \breve{f}^2 |\nab u| \rd_j u \, \ud m(\om) \right) \, \ud x =0 \quad \hbox{in the Vlasov case}.
	\end{cases}
	\end{equation}
\end{df}

\subsection{Local existence and uniqueness results}\label{sec:easy.existence}

We now state the two main local existence and uniqueness results that we will use. These are results for regular initial data for the vacuum and/or the null dust case. In particular, when applied to high-frequency initial data, the time of existence is very small and is dependent on the frequency parameter. (Let us remark that we do not explicitly have or need a local existence and uniqueness result in the Einstein--massless Vlasov case, though in principle it could be proven along the lines of \cite{HL.elliptic, Touati.local}.)

To describe our local existence results, we first fix a cutoff function as follows.
\begin{definition}\label{def:cutoff}
Fix a cutoff function $\zeta:[0,\infty) \to [0,1]$ such that $\zeta(s) = 0$ for $s \leq 1$ and $\zeta(s) = 1$ for $s \geq 2$.
\end{definition}

The first result was established in our previous work \cite{HL.elliptic}, which applies for the Einstein--null dust system \eqref{eq:intro.back.2}. 

\begin{thm}[H.--L., Theorem~5.4 in \cite{HL.elliptic}]\label{lwp}
	Let $\alp_0 \in (0,\f 12)$, $k \geq 3$, $R>0$ and $\mathcal A$ be a finite set. Given an admissible free initial data set for the null dust case as in Definition \ref{def.free.data} such that
	\begin{equation}\label{smallness.fd}
	\|\dot{U}\|_{L^\infty(\Sigma_0)}+\|\nabla U\|_{L^\infty(\Sigma_0)} + \Big( \sum_{\bA \in \mathcal A} \|\breve{F}_{\bA}\|_{L^\infty(\Sigma_0)}^2 \Big)^{\f 12}\leq \ep,
	\end{equation}
	and 
	$$C_{high}:=\|\dot{U}\|_{H^k(\Sigma_0)}+\|\nabla U\|_{H^k(\Sigma_0)}+ \Big( \sum_{\bA \in \mathcal A} \|\breve{F}_{\bA}\|_{H^k(\Sigma_0)}^2 \Big)^{\f 12} <\infty.$$
	Then, there exists a constant $\ep_{low}=\ep_{low}(\alp_0,k,R)>0$ (independent of $C_{high}$ and $|\mathcal A|$) and a constant $T=T(C_{high}, \alp_0, k, R)>0$ (independent of $|\mathcal A|$) such that if $\ep<\ep_{low}$, there exists a unique solution to \eqref{eq:intro.back.2} in elliptic gauge on $[0,T]\times \m R^2 $.
	Moreover, {the following holds for some constant $C_h = C_h(C_{high},\alp_0, k,R)>0$:} 
	\begin{enumerate}
		\item The following estimates hold for $U$ and $F_{\bA}$ for all $t\in [0,T]$:
		\begin{subequations}
		\begin{empheq}{align}
		\|\nabla U\|_{H^k(\Sigma_t)} +\|\partial_t U \|_{H^k(\Sigma_t)} +\|\partial^2_t U\|_{H^{k-1}(\Sigma_t)}\leq &\:C_h,\label{eq:lwp.1} \\
		\sum_{\bA \in \mathcal A} \left(\|F_\bA \|_{H^k(\Sigma_t)}^2 + \|\partial_t F_{\bA}\|_{H^{k-1}(\Sigma_t)}^2+\|{\partial^2_t} F_{\bA}\|_{H^{k-2}(\Sigma_t)}^2\right)^{\f 12} \leq &\:C_h. \label{eq:lwp.2}
		\end{empheq}
		\end{subequations}
		\item The following estimates hold for the eikonal functions $u_{\bA}$ for all $\bA\in \mathcal A$ and for all $t\in [0,T]$:
		\begin{subequations}
		\begin{empheq}{align}
		\left(\min_{\bA} \inf_{x\in \mathbb R^2} |\nab u_{\bA}|(t,x)\right)^{-1}+\max_\bA\left(\|\nabla u_{\bA}-\overrightarrow{c_{\bA}} \|_{ H^{k}_{-\alp_0}(\Sigma_t)} + \Big\|\f{e^{\gamma}}n (e_0 u_{\bA}) - 1\Big\|_{H^{k}_{-\alp_0}(\Sigma_t)}\right)\leq &\: C_h, \label{eq:u.est.bkgd.1}\\
		\max_\bA \left(\|\partial_t\nabla u_{\bA} \|_{ H^{k-1}_{-\alp_0}(\Sigma_t)}+\Big\| \partial_t \Big(\f{{e^{\gamma}}}{n} e_0 u_\bA \Big) \Big\|_{ H^{k-1}_{-\alp_0}(\Sigma_t)}
		\right)\leq &\:C_h, \label{eq:u.est.bkgd.2}\\
		\max_\bA \left( \|\partial_t^2\nabla u_{\bA} \|_{ H^{k-2}_{-\alp_0}(\Sigma_t)}
		+ \Big\| \partial_t^2 \Big(\f{{e^{\gamma}}}{n} e_0 u_\bA \Big) \Big\|_{H^{k-2}_{-\alp_0}(\Sigma_t)}\right)\leq &\:C_h, \label{eq:u.est.bkgd.3}
		\end{empheq}
		\end{subequations}
		where $\overrightarrow{c_{\bA}} = (\cos \om_{\bA}, \sin \om_{\bA})$.
		\item The metric components $\gamma$ and $n$ can be decomposed as
		\begin{equation}\label{eq:gamma.n.decomposition}
		\gamma = \gamma_{a}\zeta(|x|)\log(|x|) + \gamma_r,\quad n = 1 +n_{a}(t)\zeta(|x|)\log(|x|) + n_r,
		\end{equation}
		with $\gamma_{a}\leq 0$ a constant, $n_{a}(t)\geq 0$ a function of $t$ alone and $\zeta(|x|)$ a cutoff function as in Definition~\ref{def:cutoff}.
		\item $\gamma$, $n$ and $\beta$ obey the following estimates for $t\in [0,T]$:
		\begin{subequations}
		\begin{empheq}{align}
		|\gamma_{a}| +\|\gamma_r\|_{ H^{k+2}_{-\alp_0}(\Sigma_t)} + \|\partial_t \gamma_r\|_{H^{k+1}_{-\alp_0}(\Sigma_t)} + \|\partial_t^2 \gamma_r \|_{H^k_{-\alp_0}(\Sigma_t)}\leq &\:C_h, \label{eq:g.est.bkgd.1} \\
		|n_{a}|(t) +|\partial_t n_{a}|(t)+|\partial^2_t n_{a}|(t)\leq &\:C_h,\\
		\|n_r\|_{H^{k+2}_{-\alp_0}(\Sigma_t)}+ \|\partial_t n_r\|_{H^{k+1}_{-\alp_0}(\Sigma_t)}+\|\partial^2_t n_r\|_{H^{k}_{-\alp_0}(\Sigma_t)}\leq &\:C_h,\\
		\|\beta \|_{H^{k+2}_{-\alp_0}} + \|\partial_t \beta\|_{H^{k+1}_{-\alp_0}(\Sigma_t)}+\|\partial^2_t \beta\|_{H^{k}_{-\alp_0}(\Sigma_t)} \leq &\:C_h.\label{eq:g.est.bkgd.4}
		\end{empheq}
		\end{subequations}
		\item The support of $U$ and $F_{\bA}$ satisfies 
		$$\mathrm{supp}(U,F_{\bA}) \subset J^+(\Sigma_0 \cap B(0, R)),$$
		where $J^{+}$ denotes the causal future.
	\end{enumerate}	
\end{thm}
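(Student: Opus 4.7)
The plan is to establish Theorem~\ref{lwp} via an iteration scheme adapted to the elliptic gauge, simultaneously handling the hyperbolic--transport subsystem for $(\phi,\varpi,F_\bA,u_\bA)$ and the elliptic subsystem for the metric components $(\gamma,n,\beta,H)$. First I would reconstruct the full initial data on $\Sigma_0$ from the admissible free data of Definition~\ref{def.free.data} by solving the constraint equations \eqref{elliptic.1}--\eqref{elliptic.4} at time $t=0$; solvability of the momentum constraint \eqref{elliptic.1} requires precisely the compatibility condition \eqref{main.data.cond}. I would then set up a Picard-type iteration: given the $k$-th iterate $(\gamma^{(k)},n^{(k)},\beta^{(k)},H^{(k)})$ of the metric, evolve $\phi^{(k+1)},\varpi^{(k+1)}$ via \eqref{eq:vlasov.2}--\eqref{eq:vlasov.3}, each $F_\bA^{(k+1)}$ via its transport equation along the null generators of $u_\bA^{(k)}$, and each $u_\bA^{(k+1)}$ as a Hamilton--Jacobi equation; then recover $(\gamma^{(k+1)},n^{(k+1)},\beta^{(k+1)},H^{(k+1)})$ by solving \eqref{elliptic.1}--\eqref{elliptic.4} at each $t\in[0,T]$.

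The heart of the argument is the 2D elliptic theory in weighted Sobolev spaces. On $\mathbb R^2$, $\Delta^{-1}$ of a compactly supported source produces logarithmic behavior at infinity, so the solutions to \eqref{elliptic.2}--\eqref{elliptic.3} split as in \eqref{eq:gamma.n.decomposition}. The logarithmic coefficients are determined by integration against $1$, giving $2\pi\gamma_a=\int_{\mathbb R^2}\Delta\gamma\,\mathrm{d} x$ and analogously for $n_a(t)$. The sign $\gamma_a\le 0$ follows because each term on the right-hand side of \eqref{elliptic.2} is manifestly non-positive, using the positivity of $\langle\cdot,\cdot\rangle$ from \eqref{eq:lara.def} and the non-negativity of the dust contribution; similarly $n_a(t)\ge 0$ from \eqref{elliptic.3}. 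After extracting the logarithmic modes, the remainders $\gamma_r,n_r$ live in $H^{k+2}_{-\alpha_0}$ and satisfy standard weighted elliptic estimates; $\beta$ is recovered by inverting the conformal Killing operator $\mathfrak L$, and $H$ by using that its divergence \eqref{elliptic.1} determines it uniquely among symmetric traceless two-tensors decaying at infinity.

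For the hyperbolic--transport part, I would perform energy estimates for $\phi$ and $\varpi$ using the wave operator expression from Lemma~\ref{lem:wave.operator}, treating the semilinear terms $e^{-4\phi}g^{-1}(\mathrm{d}\varpi,\mathrm{d}\varpi)$ and $g^{-1}(\mathrm{d}\varpi,\mathrm{d}\phi)$ by Sobolev embedding; the near-Minkowskian control on the metric ensures the associated energy current is coercive. The transport equations for $F_\bA$ and the first derivatives of $u_\bA$ are integrated after commutation with spatial derivatives, with $|\nabla u_\bA|$ bounded below via a bootstrap perturbing off the explicit linear data $x^1\cos\omega_\bA+x^2\sin\omega_\bA$. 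The crucial structural observation for uniformity in $|\mathcal A|$ is that the dust enters the elliptic right-hand sides and all energy identities only through sums $\sum_\bA F_\bA^2(\cdots)$, so the $\ell^2$ smallness $\bigl(\sum_\bA\|\breve F_\bA\|_{L^\infty}^2\bigr)^{1/2}\le\varepsilon$ translates into an $\ell^1$ bound on $\sum_\bA F_\bA^2$ without any $|\mathcal A|$-dependent constant, and the same persists at higher Sobolev levels once the sum is kept outside the norm.

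The main obstacle is to close the bootstrap simultaneously for all four subsystems while propagating enough regularity of the eikonal functions $u_\bA$ to justify the transport estimates: the metric bounds in weighted $H^{k+2}$ feed wave energies at level $H^k$, which in turn source the elliptic equations, and the eikonal functions must retain $H^k$ regularity despite being governed by a first-order nonlinear PDE whose coefficients sit only at $H^{k+1}$. This is handled by the elliptic-gauge gain of one extra spatial derivative on the metric, which compensates for the derivative loss in the eikonal equation. Contraction of the iteration is established in a low-regularity norm, while the top-order bounds \eqref{eq:lwp.1}--\eqref{eq:g.est.bkgd.4} are obtained by persistence of regularity. Finally, the support statement in item (5) follows from finite speed of propagation for the coupled wave and transport equations, applied to the supports of the free data contained in $B(0,R)$.
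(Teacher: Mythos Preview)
The paper does not prove this theorem; it is quoted from the authors' earlier work \cite{HL.elliptic} (Theorem~5.4 there), with only a few clarifying remarks following the statement. There is therefore no proof in the present paper to compare your proposal against. Your outline---solving the constraints from the free data, running a Picard iteration alternating between the hyperbolic/transport subsystem and the weighted elliptic system, extracting the logarithmic modes $\gamma_a$, $n_a$ via the McOwen-type theory, and closing a bootstrap using the elliptic gain of one derivative---is a plausible reconstruction of the strategy in \cite{HL.elliptic}, and your observation that the dust enters only through $\sum_\bA F_\bA^2$ (hence the $|\mathcal A|$-independence) matches the remark the paper makes after the statement. If you want to verify the details, you would need to consult \cite{HL.elliptic} directly.
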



Some remarks concerning the theorem are in order.
\begin{rk}
Strictly speaking, \cite[Theorem~5.4]{HL.elliptic} only deals with the polarized case, where $\varpi \equiv 0$. However, it is easy to check that the proof applies to the more general case in an identical manner.
\end{rk}

\begin{rk}
Note that as stated in \cite[Theorem~5.4]{HL.elliptic}, the constants are allowed to depend on $|\mathcal A|$, but in fact the proof shows that they only depend on $|\mathcal A|$ through   $\Big( \sum_{\bA} \|\breve{F}_{\bA}\|_{L^\infty}^2 \Big)^{\f 12}$ and $\Big( \sum_{\bA \in \mathcal A} \|\breve{F}_{\bA}\|_{H^k(\Sigma_0)}^2 \Big)^{\f 12}$; see \cite[Remark~5.5]{HL.elliptic}.
\end{rk}

\begin{rk}
We remark that in \cite{HL.elliptic}, there is an additional constant $C_{eik}$ capturing the deviation of the initial level sets of $u_{\bA}$ from planes. This is removed in the statement of Theorem~\ref{lwp} since we have prescribed $u_{\bA} \restriction_{\Sigma_0} = x^1 \cos \om_{\bA} + x^2 \sin \om_{\bA}$ in Definition~\ref{def.free.data}. 

In our case, we will later derive and use stronger estimates for $u_{\bA}(t,x)$; see Section~\ref{sec:angular.separation}.
\end{rk}

We will be using the following corollary of Theorem~\ref{lwp} for the null dust solutions, where $C_{high}$ is also small. In particular, in this case, it can be guaranteed that the solution at least exists up to $t = 1$.
\begin{cor}[H.--L., Corollary~5.7 in \cite{HL.elliptic}]\label{lwp.small}
	Suppose the assumptions of Theorem~\ref{lwp} hold. There exists $\ep_{small}=\ep_{small}(\alp_0,k,R)$ such that if $C_{high}$ and $\ep$ in Theorem \ref{lwp} both satisfy
	$$C_{high},\ep\leq \ep_{small},$$
	then the unique solution exists in $[0,1]\times \m R^2$. Moreover, there exists $C_0= C_0(\alp_0,k,R)$ such that all the estimates in Theorem~\ref{lwp} hold with $C_h$ replaced by $C_0\ep$.
	
\end{cor}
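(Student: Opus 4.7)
The plan is to derive the corollary from Theorem~\ref{lwp} by a standard small-data bootstrap/continuity argument. Since $C_{high} \leq \ep_{small}$, Theorem~\ref{lwp} directly furnishes a solution on some interval $[0,T_0]$ with $T_0 = T_0(\ep_{small},\alp_0,k,R) > 0$. What must be gained is (i) the extension of the time of existence up to $t=1$, independently of $\ep_{small}$, and (ii) the replacement of the constant $C_h$, which depends on $C_{high}$, by a constant of the form $C_0\ep$ whose prefactor $C_0$ depends only on $\alp_0,k,R$.

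For this I would fix a large bootstrap constant $B = B(\alp_0,k,R)$ and let $T^\star$ be the supremum of $t\in [0,1]$ for which the solution exists on $[0,t]$ and all of the estimates \eqref{eq:lwp.1}--\eqref{eq:g.est.bkgd.4} hold on $[0,t]$ with $C_h$ replaced by $B\ep$. The bootstrap is closed by revisiting the a priori estimates that underpin Theorem~\ref{lwp} while tracking the $\ep$-dependence: in the wave equations for $(\phi,\varpi)$, in the transport equations for the dusts $F_\bA$, and in the transport equations for the eikonal quantities $\nabla u_\bA$ and $\tfrac{e^\gamma}{n} e_0 u_\bA$, every source term is at least quadratic in the dynamical quantities that are bounded by the bootstrap, hence of size $O(B^2\ep^2)$. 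Integrating over $[0,t]\subset [0,1]$ and adding the initial-data contribution of size $\lesssim \ep$ yields bounds of the form $\ep + C B^2 \ep^2 \leq \tfrac{1}{2}B\ep$ provided $B$ is chosen large compared to the implicit constant $C = C(\alp_0,k,R)$ and $\ep_{small}$ is chosen small compared to $B^{-1}$.

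For the metric quantities $H,\gamma,n,\beta$, the same gain is obtained directly from the elliptic equations \eqref{elliptic.1}--\eqref{elliptic.4}: each right-hand side is quadratic in the dynamical small quantities, and the inverse Laplacian together with the conformal Killing operator (with the asymptotic normalization \eqref{eq:gamma.n.decomposition}) send $O(B^2\ep^2)$ sources to $O(B^2\ep^2)$ outputs in the weighted Sobolev norms appearing in \eqref{eq:g.est.bkgd.1}--\eqref{eq:g.est.bkgd.4}, with the log coefficients $\gamma_a$ and $n_a$ of size $O(\ep^2)$ inherited from the integral of the source. This improves the bootstrap for the metric as well, and hence closes the bootstrap on the full system once $B\ep \ll 1$.

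Once the bootstrap is closed on $[0,T^\star]$, the low-regularity quantities appearing in \eqref{smallness.fd} are in particular bounded by $CB\ep \ll \ep_{low}$, so applying Theorem~\ref{lwp} again at time $T^\star$ (with initial data rescaled so that the high-regularity norms remain controlled by $CB\ep$) extends the solution past $T^\star$. A standard continuity argument then forces $T^\star = 1$, and the corollary follows with $C_0 := B$. The main obstacle is the elliptic/asymptotic step: one must verify that both the asymptotic coefficients $\gamma_a, n_a$ and the remainders $\gamma_r, n_r, \bt$ in the decomposition \eqref{eq:gamma.n.decomposition} inherit the $\ep^2$ improvement from the quadratic sources, which requires pairing the normalization of the log terms against the integral of the matter source. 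Once this is arranged, the rest of the argument is bookkeeping of weighted Sobolev norms along the lines of \cite{HL.elliptic}.
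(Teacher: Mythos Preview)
The paper does not contain a proof of this corollary: it is stated as a citation of Corollary~5.7 in \cite{HL.elliptic}, with no argument given in the present paper. There is therefore nothing to compare your proposal against here.

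That said, your sketch is the standard small-data bootstrap/continuity argument and is the expected approach for deriving such a corollary from Theorem~\ref{lwp}. The key structural points you identify --- quadratic sources in the wave and transport equations yielding $O(B^2\ep^2)$ improvements, the elliptic equations \eqref{elliptic.1}--\eqref{elliptic.4} giving the same gain for the metric components, and reapplication of Theorem~\ref{lwp} at time $T^\star$ to extend --- are correct in spirit and consistent with how \cite{HL.elliptic} proceeds. One small caveat: when you reapply Theorem~\ref{lwp} at $T^\star$, the admissible-free-data condition \eqref{main.data.cond} and the specific form of $u_\bA\restriction_{\Sigma_0}$ in Definition~\ref{def.free.data} are not literally satisfied at the new initial slice, so the continuation step really relies on the propagation of the full a~priori estimates rather than a black-box reapplication of the theorem; this is handled in \cite{HL.elliptic} but is worth flagging as a point where the argument is not purely formal.
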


As we pointed out in Section~\ref{sec:intro.almost.orthogonality}, we need a strengthening of Theorem~\ref{lwp} in the vacuum case. This is because Theorem~\ref{lwp} requires an initial smallness condition on $L^\i(\Sigma_0)$ norms of $\dot{U}$ and $\nabla U$, and when constructing the highly oscillatory solutions, it is unclear that this need condition holds. Instead we will use the following result\footnote{Technically, \cite{Touati.local} only stated the version of the theorem with $k = 2$, but the propagation of higher norms are straightforward following the ideas in \cite{Touati.local}.} of Touati where only smallness in $L^4(\Sigma_0)$ is assumed.

\begin{thm}[Touati \cite{Touati.local}]\label{thm:Touati}
Let $\alp_0 \in (0, \f 12)$, $k \geq 2$ and $R>0$. 

Define $\dot{U} = (\dot{\phi},\dot{\varpi})$ as in \eqref{data.rescaled}.
\begin{equation}
\|\dot{U}\|_{L^4(\Sigma_0)}+\|\nabla U\|_{L^4(\Sigma_0)} \leq \ep,
\end{equation}
and
\begin{equation}
C_{high}:=\|\dot{U}\|_{H^k(\Sigma_0)}+\|\nabla U\|_{H^k(\Sigma_0)} <\infty.
\end{equation}
Then, there exists $\ep_{Touati} = \ep_{Touati}(\alp_0, k,R)>0$, $T = T(C_{high},\alp_0,k,R)>0$ and $C_h = C_h(C_{high},\alp_0, k,R) > 0$ such that if $\ep< \ep_{Touati}$, then the conclusions of Theorem~\ref{lwp} hold.
\end{thm}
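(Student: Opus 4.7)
The proof follows the overall strategy of Theorem~\ref{lwp} (a bootstrap in a low-regularity smallness norm coupled with propagation of the high-regularity norm), but with the smallness norm built on $L^4$ rather than $L^\infty$. The analytical backbone is the 2D Calder\'on--Zygmund bound $\Delta^{-1}: L^{4/3}_{2-\alp_0} \to W^{1,4}_{-\alp_0}$ composed with the Sobolev embedding $W^{1,4}(\mathbb R^2) \hookrightarrow L^\infty(\mathbb R^2)$; this single extra degree of integrability over the standard $L^2$-based estimate is precisely what allows a smallness hypothesis on $\|\rd U\|_{L^4}$ (rather than on the much stronger $L^\infty$ norm) to suffice for closing the elliptic estimates.

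The plan is to run a bootstrap on $[0,T]$ with the twin assumptions $\|\dot U\|_{L^4(\Sigma_t)} + \|\nabla U\|_{L^4(\Sigma_t)} \leq 2\ep$ and $\|\dot U\|_{H^k(\Sigma_t)} + \|\nabla U\|_{H^k(\Sigma_t)} \leq 2 C_{high}$. To improve the first, I use the elliptic equations \eqref{elliptic.1}--\eqref{elliptic.4}: their sources are quadratic in $\rd U$ (and in $H$), and by finite speed of propagation from a compactly supported datum, H\"older on the support gives
\[
\|\langle Y_1 U, Y_2 U\rangle\|_{L^{4/3}_{2-\alp_0}} \lesssim \|\rd U\|_{L^4} \cdot \|\rd U\|_{L^2(J^+(B(0,R)))} \lesssim \ep^2,
\]
where I trade the $L^2$-norm on the bounded support for the $L^4$-norm via a geometric H\"older factor depending only on $R$ and $T$. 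Feeding this into the $L^{4/3}\to W^{1,4}\hookrightarrow L^\infty$ chain yields $\|\nabla\gamma_r\|_{L^\infty}+\|\nabla n_r\|_{L^\infty}+\|\beta\|_{L^\infty}\lesssim \ep^2$; higher regularity of the metric follows from the standard $L^2$-based Calder\'on--Zygmund estimate combined with the $C_{high}$-bound on $\rd U$.

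With the metric $L^\infty$-close to Minkowski, the energy identities for $\phi$ and $\varpi$ are coercive with controlled coefficients, and a Gr\"onwall argument propagates $\|\dot U\|_{H^k} + \|\nabla U\|_{H^k} \leq 2 C_{high}$ on some interval $T = T(C_{high},\alp_0,k,R)$. To improve the $L^4$-smallness I use the fundamental theorem of calculus together with 2D Sobolev embedding:
\[
\|\nabla U(t)\|_{L^4} \leq \|\nabla U(0)\|_{L^4} + \int_0^t \|\nabla \rd_t U\|_{L^4}\,\ud s \leq \ep + CT \cdot \|\rd^2 U\|_{L^\infty_t H^1_x} \leq \ep + CT\,C_{high},
\]
which is $\leq 2\ep$ provided $T \leq \ep/(C\,C_{high})$; since $\ep \leq \ep_{Touati}(\alp_0,k,R)$ is fixed, this choice of $T$ indeed depends only on $(C_{high},\alp_0,k,R)$ as asserted.

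The main obstacle---and the point at which Touati's argument genuinely improves on Theorem~\ref{lwp}---is arranging for $\ep_{Touati}$ to be independent of $C_{high}$. A naive H\"older applied to $\|\rd U\|_{L^4}\cdot\|\rd U\|_{L^2}$ would couple the two scales through a factor of $C_{high}$, and one would only obtain $\ep_{Touati} \sim 1/C_{high}$; the compact-support argument is what decouples them, replacing $\|\rd U\|_{L^2(\mathrm{supp})}$ by $|J^+(B(0,R))\cap \Sigma_t|^{1/4}\|\rd U\|_{L^4}$ and thereby keeping both factors on the small scale. The remaining conclusions---the asymptotic decomposition \eqref{eq:gamma.n.decomposition} (with the logarithmic terms arising from the non-trivial kernel of the 2D Laplacian acting on integrable sources), the weighted $H^{k+1}_{-\alp_0}$-bounds for $\gamma_r,n_r,\beta$, the estimates on the eikonal functions, and finite speed of propagation---then follow verbatim as in \cite{HL.elliptic}. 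The extension from $k=2$ to general $k\geq 2$ is a routine commutator-and-induction argument applied to $\rd_x^\alp U$ and to $\rd_x^\alp$ of the metric components.
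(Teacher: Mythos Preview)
The paper does not prove Theorem~\ref{thm:Touati}; it is quoted as a black-box result from \cite{Touati.local}, with only a footnote remarking that the extension from $k=2$ to general $k\geq 2$ is routine. There is therefore no proof in the paper against which to compare your sketch.

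That said, your sketch does identify the correct analytical mechanism: the chain $\Delta^{-1}\colon L^{4/3}_{\mathrm{weighted}}\to W^{2,4/3}_{\mathrm{weighted}}\hookrightarrow W^{1,4}_{\mathrm{weighted}}\hookrightarrow L^\infty$ in two space dimensions is indeed what lets one close the elliptic estimates for the metric from $L^4$-smallness of $\rd U$ alone, and the compact-support H\"older step $\|(\rd U)^2\|_{L^{4/3}}\lesssim_R\|\rd U\|_{L^4}^2$ is the right way to keep both factors on the small scale.

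Your propagation of the $L^4$ bound, however, has a gap. You bootstrap on $\|\rd U(t)\|_{L^4}\leq 2\ep$ and then, via the fundamental theorem of calculus, require $T\leq \ep/(C\,C_{high})$ to improve it. But $\ep$ is the size of the \emph{actual} data, which may be arbitrarily small below the threshold $\ep_{Touati}$; your $T$ therefore depends on $\ep$, contradicting the asserted $T=T(C_{high},\alp_0,k,R)$. Your sentence ``since $\ep\leq\ep_{Touati}$ is fixed'' does not repair this: $\ep$ is not fixed, only bounded above. The natural fix is to bootstrap instead on $\|\rd U(t)\|_{L^4}\leq 2\ep_{Touati}$; this is still small enough to make the metric $O(\ep_{Touati}^2)$-close to Minkowski and hence to close the elliptic step, and the FTC argument then yields $T\leq \ep_{Touati}/(C\,C_{high})$, which is genuinely independent of the particular $\ep$.
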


To simplify the exposition, we fix the weight by defining
\begin{equation}\label{eq:alp0}
    \alp_0 = 10^{-10}.
\end{equation}

\section{Statement of main theorem}\label{sec:statement}

Having introduced the setup, we can now state a precise version of our main theorem: 
\begin{thm}[Main theorem, precise version]\label{main.thm.2}
	Let $R\geq 10$. Then there exists $\ep_0 = \ep_0(R)>0$ such that the following holds.
	
	Suppose $(\dot{U}(\cdot),\nabla U(\cdot), \breve f (\cdot,\omega), u(\cdot,\omega))\restriction_{\Sigma_0}$ is given and satisfies the following:
		\begin{enumerate}
	\item (Smoothness and support properties of $\dot{U}$, $\nabla U$ and $\breve{f}$) $(\dot{U},\nabla U) \restriction_{\Sigma_0}$ are smooth functions compactly supported in $B(0,R)$, and $\breve{f}(x,\omega) \restriction_{\Sigma_0}$ is smooth in $(x,\om)$ and the $x$-support is in $B(0,R)$.
	\item (Smallness condition) $$ \|\dot{U}\|_{H^{11}(\Sigma_0)}+\|\nabla U\|_{H^{11}(\Sigma_0)}+ \sup_{\om \in \mathbb S^1} \Big( \|\breve{f}\|_{H^{11}(\Sigma_0)}(\omg) +  \|\rd_\om \breve{f}\|_{H^{11}(\Sigma_0)}(\omg) \Big) \leq \ep.$$
	\item (Condition on the initial $u(x,\om)$) $$u\restriction_{\Sigma_0}(x,\om) = x^1\cos \om + x^2 \sin \om.$$
		\item (Orthogonality condition for the initial data)
		\begin{equation}\label{eq:ortho.main.thm}
		\int_{\m R^2} \left(-2\langle \dot{U},\partial_j U\rangle -\int_{\m S^1} \breve{f}^2|\nabla u|\partial_j u \,\ud m(\omega)\right) \ud x=0.
		\end{equation}
	\item (Genericity condition on initial data) $\phi$ is not identically zero.
	\end{enumerate}
	
	Then the following holds whenever $\ep \leq \ep_0$: 
	\begin{enumerate}
	\item There exists a (classical) solution $(g_0,\phi_0, f_0, u_0)$ to \eqref{vlasov} on $[0,1]\times \mathbb R^2$ in the gauge of Definition~\ref{def:elliptic.gauge} that arises from the given admissible free initial data set. 
	\item There exists a sequence of (classical) solutions $(g_{i},U_{i})$ to \eqref{sys}, which are all defined on $[0,1]\times \mathbb R^2$ in the gauge of Definition~\ref{def:elliptic.gauge}, such that for any $t\in [0,1]$, the following holds on $\Sigma_t$:
	$$(g_{i},U_{i}) \rightarrow (g_0,U_0)\quad\mbox{uniformly on compact sets}$$
	and 
	$$(\rd g_{i}, \rd U_{i})\rightharpoonup (\rd g_0,\rd U_0)\mbox{ { weakly in $L^2$}}$$ 
	with $\rd g_{i}, \rd U_{i}\in L^p_{loc}$ uniformly, for $2\leq p \leq 4$.
	\end{enumerate}
\end{thm}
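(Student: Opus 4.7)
The plan is to view Theorem~\ref{main.thm.2} as a two-parameter approximation, building on the earlier result of Theorem~\ref{thm:old} in two orthogonal directions: (i) allowing a nontrivial $\varpi$, and (ii) passing from a finite sum of delta measures to a general probability measure $m$ on $\mathbb S^1$. Accordingly I would organize the proof in two stages---Einstein--Vlasov approximated by Einstein--$N$-dust, and Einstein--$N$-dust approximated by vacuum---and conclude with a diagonal extraction.

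\textbf{Stage 1: Vlasov as a limit of $N$-dust.} Discretize $m$ as $m_N = \sum_{\bA=0}^{N-1} \alp_\bA^{(N)} \de_{\om_\bA^{(N)}}$ with $m_N \overset{\ast}{\rightharpoonup} m$, and define dust amplitudes $\breve F_\bA^{(N)} = \sqrt{\alp_\bA^{(N)}}\, \breve f(\cdot, \om_\bA^{(N)})$. Because $\sum_\bA \alp_\bA^{(N)} = 1$ and $\breve f$ is smooth in $\om$, this gives $N$-uniform smallness $\big(\sum_\bA \|\breve F_\bA^{(N)}\|_{H^{11}(\Sigma_0)}^2\big)^{1/2} \lesssim \ep$, and the orthogonality condition \eqref{eq:ortho.main.thm} is preserved up to an $o_N(1)$ adjustment of the weights, which I would absorb into a small modification of the data. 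Corollary~\ref{lwp.small} then produces solutions $(g_0^{(N)}, U_0^{(N)}, F_\bA^{(N)}, u_\bA^{(N)})$ on $[0,1]\times \mathbb R^2$ with $N$-uniform bounds \eqref{eq:lwp.1}--\eqref{eq:g.est.bkgd.4}. The discrete densities $\sum_\bA (F_\bA^{(N)})^2 \, \de_{\om_\bA^{(N)}}$, transported by the family of eikonal functions $\{u(\cdot,\om)\}$, converge in the sense of measures to $f^2(\cdot,\om)\, \ud m(\om)$; Aubin--Lions compactness together with the compatibility of the Vlasov transport equation \eqref{eq:vlasov.4} with this $\om$-wise limit then yields a classical solution $(g_0, U_0, f, u)$ of \eqref{vlasov}, which gives part (1).

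\textbf{Stage 2: Vacuum approximation of each $N$-dust solution.} For each fixed $N$, I would construct---for all sufficiently small $\lambda$---vacuum solutions $(g^{(N,\lambda)}, U^{(N,\lambda)})$ of \eqref{sys} approximating $(g_0^{(N)}, U_0^{(N)})$, using the parametrix \eqref{eq:parametrix.main.term.only}--\eqref{eq:parametrix.intro.phi} (expanded to second order) for the wave maps, the decomposition \eqref{eq:intro.metric.expansion} for the metric, analogous parametrices for the nonlinear eikonal $u_\bA$ and null expansion $\chi_\bA$, and the split transport system \eqref{eq:intro.back.split} designed so that the $\phi$- and $\varpi$-oscillations cancel in both Einstein equations simultaneously. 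Controlling the errors $(\wht\phi, \wht\vp, \wht{\mfg}, \wht u_\bA, \wht\chi_\bA)$ through a bootstrap on $[0,1]$ requires the $N$-dependent hierarchy $C(N) \ll C_b(N) \ll A(N)$ of Section~\ref{sec:intro.N.dependence}: hyperbolic quantities are allowed to grow like $e^{A(N)t}$, and the slack $1/A(N)$ from the time integration absorbs the borderline $C(N)C_b(N)$ losses coming from the $\ell^1_\bA$ structure of the nonlinearities (the $\ell^2_\bA$ norms being $N$-uniform by the original smallness). The elliptic (metric) estimates admit no such slack, so instead I would exploit the $L^4$ almost-orthogonality of the high-frequency waves together with the gain $\Delta^{-1} : L^{4/3} \to W^{1,4}$ to obtain $N$-independent smallness of $\|\rd U\|_{L^4}$, and then invoke Touati's $L^4$ local well-posedness (Theorem~\ref{thm:Touati}) to close the elliptic half of the bootstrap. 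A diagonal extraction $\lambda_i \le \lambda_0(N_i)$ with $N_i \to \infty$, combined with Stage~1, then yields part (2).

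\textbf{Main obstacle.} The real difficulty is the compatibility of the $N \to \infty$ and $\lambda \to 0$ limits while keeping the existence time equal to $1$: the parametrix naturally introduces $N$-dependent losses that must be tamed \emph{simultaneously} in the hyperbolic wave-map estimates, the elliptic metric estimates, and the transport estimates for $u_\bA$. In particular, the treatment of $u_\bA$ used in \cite{HL.HF} (expanding about the background $u_\bA^0$ to $O(\lambda^3)$) breaks down here; I would instead treat the nonlinear $u_\bA$ itself as an independent unknown with its own parametrix, and upgrade the a priori $O_N(1)$ second-derivative control to an $O_N(\lambda)$ bound on the null expansion $\chi_\bA$ via its Raychaudhuri equation. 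This improvement, together with the $L^4$-elliptic gain and the exponential-in-time bootstrap, is the crux that makes the diagonal limit possible.
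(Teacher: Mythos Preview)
Your proposal is correct and follows essentially the same two-stage strategy as the paper: Proposition~\ref{prop:construct.dust} (Vlasov $\leftarrow$ $N$-dust via weak-$*$ discretization of $m$, with the data adjustment handled precisely by Lemma~\ref{lem:constraint.lower.bound}) together with Theorem~\ref{thm:dust.by.vacuum} ($N$-dust $\leftarrow$ vacuum via the parametrix/bootstrap you describe), followed by the diagonal choice $\lambda_{N_i}\le\lambda_0^{(N_i)}$. Your identification of the key technical ingredients---the split system \eqref{eq:intro.back.split}, the $C(N)\ll C_b(N)\ll A(N)$ hierarchy with the $e^{A(N)t}$ norms, the nonlinear eikonal $u_\bA$ as an independent unknown with Raychaudhuri-improved $\chi_\bA$, and the $L^4$ almost-orthogonality feeding both Touati's local existence and the $L^{4/3}\to W^{1,4}$ elliptic gain---matches the paper's architecture; the only refinement worth noting is that Touati's theorem is invoked for the \emph{qualitative} local existence of the vacuum solution (Section~\ref{sec.id}), while the elliptic bootstrap itself is closed directly via the $L^{4/3}$ inhomogeneity bounds (Proposition~\ref{prop:g.est}).
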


The proof of Theorem~\ref{main.thm.2} will be given by a combination of Proposition~\ref{prop:construct.dust} (approximation of Vlasov by null dust) and Theorem~\ref{thm:dust.by.vacuum} (approximation of null dust by vacuum). We refer the reader to Section~\ref{sec:main.theorem}, where Theorem~\ref{main.thm.2} is proved assuming these results. 

\section{Einstein--null dust system}\label{sec:dust}

In this section, we construct solutions to the Einstein--null dust system with $N$ families. As $N\to \infty$, these solutions approximate the given solutions to the Einstein--massless Vlasov system. We then analyze the solutions to the Einstein--null dust system that we construct and prove properties that will be useful for the later parts of the paper.

\subsection{Approximation of Vlasov field by null dusts: general setup}

In this subsection, we give the setup for approximating solutions to the Einstein--massless Vlasov system by solutions to the Einstein--null dust system.


First, recall the well-known fact, which follows from the Krein--Milman theorem, that the set of convex combinations of Dirac measures is weak-* dense in the set of all probability measures. We will use a particular construction of a weak-* approximating sequence.
\begin{lm}\label{km}
	Let $m$ be a probability measure on $\mathbb S^1 := \mathbb R/(2\pi \mathbb Z)$. For all $N \in \mathbb Z_{>0}$, and $\bA = 0,1,\cdots,N-1$, we define $N$ separated points $\om_\bA^{(N)} = \f{2\pi \bA}{N}$ on $\m S^1$, and $N$ coefficients $\alp_\bA^{(N)} = m\Big([\f{2\pi \bA}{N}, \f{2\pi(\bA+1)}{N})\Big)$ (so that $\alp_\bA^{(N)} \geq 0$ and $\sum_{\bA=0}^{N-1} \alp_\bA^{(N)} =1$). It then holds that
	\begin{equation}\label{eq:m.limit}
	\sum_{\bA=0}^{N-1} \alp_\bA^{(N)} \delta_{\omega_{\bA}^N} \overset{\ast}{\rightharpoonup} m,
	\end{equation}
	in the weak-* topology as $N$ tends to infinity.
\end{lm}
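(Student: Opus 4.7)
The plan is to unwind the definition of weak-* convergence and reduce to a uniform continuity estimate. Weak-* convergence of $\mu_N := \sum_{\bA=0}^{N-1}\alp_{\bA}^{(N)} \delta_{\om_{\bA}^{(N)}}$ to $m$ means that $\int_{\mathbb S^1} \psi\,\ud\mu_N \to \int_{\mathbb S^1} \psi\,\ud m$ for every $\psi \in C(\mathbb S^1)$, so I would fix an arbitrary such $\psi$ and compare the two integrals term by term.

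Writing $I_\bA^{(N)} := [\tfrac{2\pi\bA}{N}, \tfrac{2\pi(\bA+1)}{N})$, the intervals $\{I_\bA^{(N)}\}_{\bA=0}^{N-1}$ partition $\mathbb S^1$. By definition of $\alp_\bA^{(N)}$,
\begin{equation*}
\int_{\mathbb S^1}\psi\,\ud\mu_N = \sum_{\bA=0}^{N-1} \psi(\om_\bA^{(N)})\, m(I_\bA^{(N)}) = \sum_{\bA=0}^{N-1}\int_{I_\bA^{(N)}} \psi(\om_\bA^{(N)})\,\ud m(\om),
\end{equation*}
while $\int_{\mathbb S^1}\psi\,\ud m = \sum_{\bA}\int_{I_\bA^{(N)}}\psi(\om)\,\ud m(\om)$. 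Subtracting gives
\begin{equation*}
\left|\int_{\mathbb S^1}\psi\,\ud\mu_N - \int_{\mathbb S^1}\psi\,\ud m\right| \leq \sum_{\bA=0}^{N-1}\int_{I_\bA^{(N)}}|\psi(\om_\bA^{(N)})-\psi(\om)|\,\ud m(\om).
\end{equation*}

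Since $\mathbb S^1$ is compact, $\psi$ is uniformly continuous. Given $\ep>0$, pick $N$ large enough that $\tfrac{2\pi}{N}$ is less than the modulus of uniform continuity at level $\ep$; then for each $\bA$ and every $\om \in I_\bA^{(N)}$ one has $|\psi(\om)-\psi(\om_\bA^{(N)})|<\ep$. Summing yields the bound $\ep \cdot m(\mathbb S^1)=\ep$, which establishes the desired convergence once $\ep$ is sent to zero.

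There is no real obstacle here: the statement is a standard Riemann-sum style approximation of a Borel probability measure on a compact metric space by atomic measures, and the only nontrivial ingredient is uniform continuity of $\psi$ on the compact circle. The construction via equally spaced partition points is chosen for later convenience (it cooperates with the phase/angle structure used in subsequent sections), but the convergence itself does not rely on the uniformity of the partition, only on its mesh tending to zero.
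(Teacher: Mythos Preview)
Your proof is correct and follows essentially the same approach as the paper's: both reduce weak-* convergence to testing against an arbitrary continuous function and then invoke uniform continuity on the compact circle. Your version simply spells out the Riemann-sum comparison that the paper leaves implicit.
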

\begin{proof}
	This is equivalent to checking that for any continuous function $h:\mathbb S^1\to \mathbb R$, $\sum_{\bA=0}^{N-1} \alp_\bA^{(N)} h(\omega_{\bA}^N) \to \int_{\mathbb S^1} h(\om) \, \ud m(\om)$, which is easy upon recalling that every continuous function on $\mathbb S^1$ is also uniformly continuous. \qedhere 
\end{proof}

From now on, we fix the parameters $\omega^{(N)}_{\bA}$ and $\alpha^{(N)}_{\bA}$ given by Lemma~\ref{km}.

We now approximate solutions to \eqref{vlasov} by $N$ dusts. We solve the following coupled system (see \eqref{eq:intro.back}):
\begin{subequations}\label{back.with.omega}
\begin{empheq}[left=\empheqlbrace]{align}
&R_{\mu \nu}(g^{(N)})= 2\langle \partial_\mu U^{(N)}, \partial_\nu U^{(N)}\rangle  + \sum_{{\bA}} \alpha_{\bA}^{(N)} (f^{(N)}_{\bA})^2 \partial_\mu u^{(N)}_{\bA} \partial_\nu u^{(N)}_{{\bA}}, \label{back.with.omega.1}\\
&\Box_{g^{(N)}} \phi^{(N)} + \frac{1}{2}e^{-4\phi^{(N)}}(g^{(N)})^{-1}(\ud\varpi^{(N)},\ud\varpi^{(N)})  = 0, \label{back.with.omega.2}\\
&\Box_{g^{(N)}} \varpi^{(N)} -4(g^{(N)})^{-1}(\ud\varpi^{(N)},\ud\phi^{(N)})=0, \label{back.with.omega.3}\\
&2 L^{(N)} f^{(N)} + (\Box_{g^{(N)}} u^{(N)}) f^{(N)}  = 0, \label{back.with.omega.4}\\
&(g^{(N)})^{-1} (\ud u^{(N)}, \ud u^{(N)})=0, \label{back.with.omega.6}
\end{empheq}
\end{subequations}
where $L^{(N)} = (g^{(N)})^{\alp\bt} \rd_\alp u^{(N)} \rd_\bt$, $f_{\bA}^{(N)} = f^{(N)}(\cdot,\om_{\bA}^{(N)})$ and $u_{\bA}^{(N)} = u^{(N)}(\cdot,\om_{\bA}^{(N)})$. Notice that $U^{(N)}= (\phi^{(N)},\varpi^{(N)})$, $g^{(N)}$, $f_{\bA}^{(N)}$, and $u_{\bA}^{(N)}$ are functions of $(t,x)$, while $f^{(N)}$ and $u^{(N)}$ are functions of $(t,x,\om)$.

Suppose we are given initial data to \eqref{vlasov} satisfying the assumptions of Theorem~\ref{main.thm.2}. We construct the approximate solutions by imposing the following initial conditions for the system \eqref{back.with.omega}:
\begin{equation}\label{eq:back.with.omega.data}
\begin{cases}
f^{(N)}\restriction_{\Sigma_0}(x,\om) = f\restriction_{\Sigma_0}(x,\omega), \\
u^{(N)}\restriction_{\Sigma_0}(x,\om) = x^1 \cos \om + x^2 \sin \om.
\end{cases}
\end{equation}
To specify the initial data for $U^{(N)} = (\phi^{(N)},\varpi^{(N)})$ and $\dot U^{(N)} = (\dot \phi^{(N)}, \dot \varpi^{(N)})$, we will need to deal with the constraint equations. This will be deferred to Section~\ref{sec:from.back.to.vlasov}. The data for $U^{(N)}$ will in particular be chosen to converge to the data for $U$ as $N\to \infty$.

%

After solving the constraint equations, our goal in this section will be to show that solutions to \eqref{back.with.omega} exist, and that as $N\to \infty$, solutions to \eqref{back.with.omega} with the data discussed above converge to those to \eqref{vlasov}.

\subsection{Rewriting the null dust system}

It will be convenient to rewrite \eqref{back.with.omega} in slightly different forms. We observe that even though in \eqref{back.with.omega}, $f^{(N)}$ and $u^{(N)}$ are defined as functions of $(t,x,\om)$, the equations for $g^{(N)}$ , $\phi^{(N)}$, $\varpi^{(N)}$ (i.e., \eqref{back.with.omega.1}--\eqref{back.with.omega.3}) depend only on $f^{(N)}$ and $u^{(N)}$ at $\om_\bA$. This motivates us to consider the following system (see \eqref{eq:intro.back.2}):
\begin{subequations}\label{preback}
\begin{empheq}[left=\empheqlbrace]{align}
&R_{\mu \nu}(g^{(N)})= 2\langle \partial_\mu U^{(N)}, \partial_\nu U^{(N)}\rangle  + \sum_{{\bA}} (F^{(N)}_{\bA})^2 \partial_\mu u^{(N)}_{\bA} \partial_\nu u^{(N)}_{{\bA}}, \label{preback.without.omega.1}\\
&\Box_{g^{(N)}} \phi^{(N)} + \frac{1}{2}e^{-4\phi^{(N)}}(g^{(N)})^{-1}(\ud\varpi^{(N)},\ud\varpi^{(N)})  = 0, \label{preback.without.omega.2}\\
&\Box_{g^{(N)}} \varpi^{(N)} -4(g^{(N)})^{-1}(\ud\varpi^{(N)},\ud\phi^{(N)})=0, \label{preback.without.omega.3}\\
&2 L^{(N)}_{\bA} F_{\bA}^{(N)} + (\Box_{g^{(N)}} u_{\bA}^{(N)}) F_{\bA}^{(N)} = 0 \quad \forall \bA, \label{preback.without.omega.4}\\
&(g^{(N)})^{-1} (\ud u_{\bA}^{(N)}, \ud u_{\bA}^{(N)})=0 \quad \forall \bA, \label{preback.without.omega.6}
\end{empheq}
\end{subequations}
where $L^{(N)}_\bA = (g^{(N)})^{\alp\bt} \rd_\alp u_\bA^{(N)} \rd_\bt$.

Notice in particular that after writing the equations in the form of \eqref{preback}, the local existence and uniqueness result in Theorem~\ref{lwp} immediately applies.

The following proposition clarifies the relation between \eqref{back.with.omega} and \eqref{preback}. The proof is largely straightforward and omitted. We just remark that in order to obtain 2b, we use the uniqueness of solutions for transport equations. 
\begin{proposition}\label{prop:dust.formulation.equivalent.1}
\begin{enumerate}
\item Suppose $(g^{(N)}, U^{(N)}= (\phi^{(N)},\varpi^{(N)}), f^{(N)},u^{(N)})$ is a smooth solution to \eqref{back.with.omega}. Then, setting 
\begin{equation}\label{eq:F.in.terms.of.f}
F_{\bA}^{(N)} = \sqrt{\alp_\bA^{(N)}} f^{(N)}(\cdot,\om_{\bA}^{(N)}),\quad u_{\bA}^{(N)} = u^{(N)}(\cdot,\om_{\bA}^{(N)}),
\end{equation}
it follows that $(U^{(N)}= (\phi^{(N)},\varpi^{(N)}), g^{(N)},\{ F_{\bA}^{(N)},u_{\bA}^{(N)}\}_{A=0}^{N-1})$ is a solution to \eqref{back.with.omega}.
\item Conversely, suppose $(g^{(N)}, U^{(N)}= (\phi^{(N)},\varpi^{(N)}),\{ F_{\bA}^{(N)},u_{\bA}^{(N)}\}_{A=0}^{N-1})$ is a smooth solution to \eqref{preback} satisfying
$$u_{\bA}^{(N)}\restriction_{\Sigma_0}(x) = x^1 \cos \om^{(N)}_{\bA} + x^2 \sin \om^{(N)}_{\bA}.$$ 
Solve for $(f^{(N)},u^{(N)})$ using the transport equations \eqref{back.with.omega.4}--\eqref{back.with.omega.6} (with the given $(U^{(N)}, g^{(N)})$) for all $\omg \in \mathbb S^1$ with initial data $f^{(N)}\restriction_{\Sigma_0}(x,\om)$ and $u^{(N)}\restriction_{\Sigma_0}(x,\om)$ satisfying
$$u^{(N)}\restriction_{\Sigma_0}(x,\om) = x^1 \cos \om + x^2 \sin \om,$$
and
$$f^{(N)}\restriction_{\Sigma_0}(x,\omega_{\bA}^{(N)}) = (\alp_\bA^{(N)})^{-\frac 12} F_{\bA}^{(N)}\restriction_{\Sigma_0}(x).$$ 
If a sufficiently regular solution exists, then the following holds: 
\begin{enumerate}
\item $(g^{(N)}, U^{(N)}= (\phi^{(N)},\varpi^{(N)}), f^{(N)},u^{(N)})$ solves \eqref{back.with.omega}.
\item $F_{\bA}^{(N)} = \sqrt{\alp_\bA^{(N)}} f^{(N)}(\cdot,\om_{\bA}^{(N)}),\quad u_{\bA}^{(N)} = u^{(N)}(\cdot,\om_{\bA}^{(N)}).$
\end{enumerate}
\end{enumerate}
\end{proposition}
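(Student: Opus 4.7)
My plan is to prove the two directions separately, treating Part (1) as straightforward substitution and reducing Part (2) to a uniqueness argument for first-order PDE.

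For \textbf{Part (1)}, starting from a smooth solution $(g^{(N)}, U^{(N)}, f^{(N)}, u^{(N)})$ of \eqref{back.with.omega}, I will simply check each equation of \eqref{preback} by restriction to $\om = \om_\bA^{(N)}$. The wave equations \eqref{preback.without.omega.2}--\eqref{preback.without.omega.3} coincide verbatim with \eqref{back.with.omega.2}--\eqref{back.with.omega.3}; evaluating \eqref{back.with.omega.6} at $\om_\bA^{(N)}$ yields \eqref{preback.without.omega.6}; and evaluating \eqref{back.with.omega.4} at $\om_\bA^{(N)}$, then multiplying by the spacetime-constant $\sqrt{\alp_\bA^{(N)}}$, yields \eqref{preback.without.omega.4} via \eqref{eq:F.in.terms.of.f}. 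The matter source in \eqref{back.with.omega.1} becomes $\sum_\bA \alp_\bA^{(N)} (f^{(N)}(\cdot,\om_\bA^{(N)}))^2 \rd_\mu u_\bA^{(N)} \rd_\nu u_\bA^{(N)} = \sum_\bA (F_\bA^{(N)})^2 \rd_\mu u_\bA^{(N)} \rd_\nu u_\bA^{(N)}$, matching \eqref{preback.without.omega.1}.

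For \textbf{Part (2)}, my strategy is to first prove (b), since (a) depends on it. I construct $u^{(N)}(\cdot, \om)$ by solving the eikonal equation \eqref{back.with.omega.6} for each $\om$ with the prescribed initial data $x^1\cos\om + x^2\sin\om$ and the sign convention $\rd_t u^{(N)} > 0$ (which is inherited from the initial surface and propagates by continuity), and then define $f^{(N)}(\cdot,\om)$ by solving the \emph{linear} transport equation \eqref{back.with.omega.4} along the null generators of $u^{(N)}(\cdot,\om)$ with data $(\alp_\bA^{(N)})^{-1/2} F_\bA^{(N)}\restriction_{\Sigma_0}$ at $\om = \om_\bA^{(N)}$. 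To establish (b), I observe that $u^{(N)}(\cdot,\om_\bA^{(N)})$ and $u_\bA^{(N)}$ both satisfy $(g^{(N)})^{-1}(\ud v,\ud v)=0$ with identical Cauchy data and sign convention in the same background metric $g^{(N)}$; the standard Hamilton--Jacobi uniqueness (via the method of characteristics) for sufficiently regular solutions forces $u^{(N)}(\cdot,\om_\bA^{(N)}) = u_\bA^{(N)}$. Once the eikonals agree, the operators $L^{(N)}\restriction_{\om = \om_\bA^{(N)}}$ and $L_\bA^{(N)}$ coincide, so that $\sqrt{\alp_\bA^{(N)}}\, f^{(N)}(\cdot,\om_\bA^{(N)})$ and $F_\bA^{(N)}$ solve the same \emph{linear} transport equation with identical initial data; linear transport uniqueness delivers the $F$-identity in (b).

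With (b) established, (a) is essentially automatic: the transport and eikonal equations \eqref{back.with.omega.4}--\eqref{back.with.omega.6} hold for every $\om$ by construction, the wave equations \eqref{back.with.omega.2}--\eqref{back.with.omega.3} are unchanged, and the matter source in \eqref{back.with.omega.1} reads $\sum_\bA \alp_\bA^{(N)}(f^{(N)}(\cdot,\om_\bA^{(N)}))^2 \rd_\mu u^{(N)}(\cdot,\om_\bA^{(N)}) \rd_\nu u^{(N)}(\cdot,\om_\bA^{(N)})$, which equals the source of \eqref{preback.without.omega.1} by (b). The only non-routine point in the entire argument is the uniqueness claim for the nonlinear eikonal equation, and this is the main obstacle; however, it is a standard consequence of the method of characteristics given the postulated regularity and the sign condition on $\rd_t u$, so the remainder is essentially bookkeeping.
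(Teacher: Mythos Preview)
Your proof is correct and follows essentially the same approach as the paper, which omits the details and only remarks that part (2)(b) relies on uniqueness for transport equations---precisely the Hamilton--Jacobi/linear transport uniqueness arguments you supply. You also implicitly correct the typo in the statement of Part~(1), where the conclusion should read ``is a solution to \eqref{preback}'' rather than \eqref{back.with.omega}.
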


We further rewrite \eqref{preback} into a different form by splitting $F_\bA^{(N)}$ into two parts. This choice is made in anticipation of capturing the interaction of the high-frequency $\phi$-waves and $\varpi$-waves for the vacuum approximation; see the discussions in Section~\ref{sec:ideas.approx.by.vacuum}. More precisely, consider the following system (see \eqref{eq:intro.back.split}):

\begin{subequations}\label{back}
\begin{empheq}[left=\empheqlbrace]{align}
&R_{\mu \nu}(g^{(N)})= 2\langle \partial_\mu U^{(N)}, \partial_\nu U^{(N)}\rangle  + \sum_{{\bA}} \Big( (F^{\phi,(N)}_{\bA})^2+\frac{e^{-4\phi^{(N)}}}{4}(F^{\varpi,(N)}_\bA)^2\Big)\partial_\mu u^{(N)}_{\bA} \partial_\nu u^{(N)}_{{\bA}}, \label{back.without.omega.1}\\
&\Box_{g^{(N)}} \phi^{(N)} + \frac{1}{2}e^{-4\phi^{(N)}}(g^{(N)})^{-1}(\ud\varpi^{(N)},\ud\varpi^{(N)})  = 0, \label{back.without.omega.2}\\
&\Box_{g^{(N)}} \varpi^{(N)} -4(g^{(N)})^{-1}(\ud\varpi^{(N)},\ud\phi^{(N)})=0, \label{back.without.omega.3}\\
&2 L^{(N)} F_{\bA}^{\phi,(N)} + (\Box_{g^{(N)}} u_{\bA}^{(N)}) F_{\bA}^{\phi,(N)} + e^{-4\phi} (g^{(N)})^{-1}( \ud \varpi^{(N)}, \ud u_{\bA}^{(N)}) F_{\bA}^{\varpi,(N)} = 0 \quad \forall \bA \label{back.without.omega.4}\\
\begin{split}
&2 L^{(N)} F_{\bA}^{\varpi,(N)} + (\Box_{g^{(N)}} u_{\bA}^{(N)}) F_{\bA}^{\varpi,(N)}  \\
&\qquad\qquad -4(g^{(N)})^{-1}(\ud \varpi^{(N)}, \ud u_{\bA}^{(N)}) F_{\bA}^{\phi,(N)}-4 (g^{(N)})^{-1}(\ud \phi^{(N)}, \ud u_{\bA}^{(N)}) F_{\bA}^{\varpi,(N)} = 0  \quad \forall \bA, \label{back.without.omega.5}
\end{split} \\
&(g^{(N)})^{-1} (\ud u_{\bA}^{(N)}, \ud u_{\bA}^{(N)})=0 \quad \forall \bA. \label{back.without.omega.6}
\end{empheq}
\end{subequations}

We will consider \eqref{back} with the initial condition
\begin{equation}\label{eq:F.split.initial}
    F_{\bA}^{\phi,(N)} \restriction_{\Sigma_0} = F_{\bA}^{(N)} \restriction_{\Sigma_0} = f \restriction_{\Sigma_0}(\cdot,\omg_\bA), \quad F_{\bA}^{\varpi,(N)} \restriction_{\Sigma_0} = 0.
\end{equation}


We will use the following relation between \eqref{preback} and \eqref{back}:
\begin{proposition}\label{prop:dust.formulation.equivalent.2}
Suppose $(g^{(N)}, U^{(N)} = (\phi^{(N)},\varpi^{(N)}), \{F_\bA^{(N)}, u_{\bA}^{(N)}\}_{A=0}^{N-1})$ solves \eqref{preback} on $[0,1]\times \mathbb R^2$.

Define $\{F_\bA^{\phi,(N)}, F_\bA^{\phi,(N)}\}_{\bA=0}^{N-1}$ by solving \eqref{back.without.omega.4} and \eqref{back.without.omega.5} (with the given $g^{(N)}$ and $\{u_{\bA}^{(N)}\}_{A=0}^{N-1}$) with the initial data given by \eqref{eq:F.split.initial}.

Then 
\begin{equation}\label{eq:reformulation.split}
    (F_\bA^{(N)})^2 = (F_\bA^{\phi,(N)})^2 + \f{e^{-4\phi^{(N)}}}4 (F_\bA^{\varpi,(N)})^2    
\end{equation}
and $(g^{(N)}, U^{(N)} = (\phi^{(N)},\varpi^{(N)}), \{F_\bA^{\phi,(N)},F_\bA^{\varpi,(N)}, u_{\bA}^{(N)}\}_{A=0}^{N-1})$ solves \eqref{back}.

\end{proposition}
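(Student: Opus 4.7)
The strategy is to show that $G_\bA := (F_\bA^{\phi,(N)})^2 + \tfrac{e^{-4\phi^{(N)}}}{4}(F_\bA^{\varpi,(N)})^2$ satisfies the same first-order linear transport equation along the null generator $L_\bA^{(N)}$ as $(F_\bA^{(N)})^2$, and has the same Cauchy data on $\Sigma_0$. Uniqueness then forces the pointwise identity \eqref{eq:reformulation.split}. Once \eqref{eq:reformulation.split} is established, equation \eqref{back.without.omega.1} follows from \eqref{preback.without.omega.1} by substitution; equations \eqref{back.without.omega.2}, \eqref{back.without.omega.3} and \eqref{back.without.omega.6} are literally the same in the two systems; and \eqref{back.without.omega.4}--\eqref{back.without.omega.5} hold by the very definition of $F_\bA^{\phi,(N)}$ and $F_\bA^{\varpi,(N)}$.

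The key step is the computation of $L_\bA^{(N)} G_\bA$. Differentiating and using $L_\bA^{(N)}\phi^{(N)} = (g^{(N)})^{-1}(\ud u_\bA^{(N)},\ud\phi^{(N)})$, I substitute \eqref{back.without.omega.4} for $2L_\bA^{(N)} F_\bA^{\phi,(N)}$ and \eqref{back.without.omega.5} for $2L_\bA^{(N)} F_\bA^{\varpi,(N)}$. Two cancellations then occur. First, the cross terms of the form $e^{-4\phi^{(N)}} (g^{(N)})^{-1}(\ud\varpi^{(N)},\ud u_\bA^{(N)}) F_\bA^{\phi,(N)} F_\bA^{\varpi,(N)}$ appear with opposite signs from the two transport equations and cancel. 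Second, the term $-e^{-4\phi^{(N)}}(L_\bA^{(N)}\phi^{(N)})(F_\bA^{\varpi,(N)})^2$ produced by differentiating the conformal weight $e^{-4\phi^{(N)}}$ cancels against the contribution $+e^{-4\phi^{(N)}}(g^{(N)})^{-1}(\ud\phi^{(N)},\ud u_\bA^{(N)})(F_\bA^{\varpi,(N)})^2$ from the last term of \eqref{back.without.omega.5}. (The split \eqref{back.without.omega.4}--\eqref{back.without.omega.5} was designed precisely so that these two cancellations take place.) What remains is
\begin{equation*}
L_\bA^{(N)} G_\bA = -(\Box_{g^{(N)}} u_\bA^{(N)})\,G_\bA,
\end{equation*}
which is exactly the transport equation obeyed by $(F_\bA^{(N)})^2$ as a consequence of \eqref{preback.without.omega.4}.

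At $t=0$, the prescribed initial data \eqref{eq:F.split.initial} give $G_\bA|_{\Sigma_0} = (F_\bA^{(N)}|_{\Sigma_0})^2$. The vector field $L_\bA^{(N)}$ is smooth and, by the eikonal equation \eqref{back.without.omega.6} together with the uniform lower bound on $|\nab u_\bA^{(N)}|$ guaranteed by Theorem~\ref{lwp}, nowhere vanishing on $[0,1]\times\mathbb R^2$; its integral curves therefore foliate the slab, and uniqueness for the associated scalar linear ODE along each curve forces $G_\bA \equiv (F_\bA^{(N)})^2$. This proves \eqref{eq:reformulation.split} and, as explained in the first paragraph, completes the verification that the tuple $(g^{(N)}, U^{(N)}, \{F_\bA^{\phi,(N)}, F_\bA^{\varpi,(N)}, u_\bA^{(N)}\})$ solves \eqref{back}.

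The only real obstacle here is the algebraic bookkeeping in the cancellation computation; there is no analytic difficulty, since everything reduces to standard uniqueness for a linear transport equation along a smooth non-vanishing null vector field. It is worth noting, however, that the particular coupling chosen in \eqref{back.without.omega.4}--\eqref{back.without.omega.5} — with the specific coefficients dictated by the wave-map inner product \eqref{eq:lara.def} — is essentially forced by the requirement that the quadratic invariant $G_\bA$ reproduce the scalar transport law satisfied by $(F_\bA^{(N)})^2$, which is precisely what allows the split formulation \eqref{back} to serve as the starting point for the vacuum parametrix described in Section~\ref{sec:ideas.approx.by.vacuum}.
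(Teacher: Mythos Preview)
Your proof is correct and follows essentially the same approach as the paper: both define the quadratic combination $G_\bA$, verify via the same cancellation computation that it satisfies the transport equation $L_\bA G_\bA + (\Box_g u_\bA) G_\bA = 0$ inherited from \eqref{preback.without.omega.4}, and conclude by uniqueness from matching initial data. Your write-up is in fact slightly more careful than the paper's in spelling out the two cancellations and the uniqueness mechanism along the integral curves of $L_\bA$.
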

\begin{proof}
    To simplify the notations, we omit the superscript $^{(N)}$. Let
    $$\overline{F}_\bA^2 = (F_\bA^\phi)^2 + \f{e^{-4\phi}}4 (F_\bA^\varpi)^2.$$
    Our goal is to show that $\overline{F}_\bA^2 = F_\bA^2$. Since they agree initially, it suffices to show that $\overline{F}_\bA$ obeys the same transport equation as $F_\bA$. For this, we compute
    \begin{equation}
\begin{split}
&\: 2 \overline{F}_\bA (g^{-1})^{\alpha \beta}\partial_{\alpha} u_\bA \partial_{\beta} \overline{F}_\bA + (\Box_g u_\bA) \overline{F}_\bA^2 \\
 = &\: 2 F_\bA^{\phi} (g^{-1})^{\alpha \beta}\partial_{\alpha} u_\bA \partial_{\beta} F_\bA^{\phi} + \frac{e^{-4\phi}}{4} F_\bA^{\varpi}  (g^{-1})^{\alpha \beta}\partial_{\alpha} u_\bA \partial_{\beta} F_\bA^{\varpi} - e^{-4\phi} F_\bA^{\varpi}(g^{-1})^{\alpha \beta}\partial_{\alpha} u_\bA \partial_{\beta} \phi F_\bA^{\varpi}  \\
&\: +  (\Box_g u_\bA) ((F_\bA^{\phi})^2 + \frac{e^{-4\phi}}{4} (F_\bA^\varpi)^2) = 0,
\end{split}
\end{equation}
which implies $2 (g^{-1})^{\alpha \beta}\partial_{\alpha} u_\bA \partial_{\beta} \overline{F}_\bA + (\Box_g u_\bA) \overline{F}_\bA = 0$ when $\overline{F}_\bA \neq 0$.\qedhere
\end{proof}

\subsection{The constraint equation for the initial data to the Einstein--null dust system}\label{sec:from.back.to.vlasov}

When solving the constraint equation, we will use the following result from \cite{LVdM1}.
\begin{lm}\label{lem:find.r1.r2}
Let $0 < R' < R''$. Suppose $\phi \not \equiv 0$ is a smooth function with $\mathrm{supp}(\phi) \subseteq B(0,R')$. Then there are smooth functions $r_1$, $r_2$ with $\mathrm{supp}(r_1),\, \mathrm{supp}(r_2) \subseteq B(0,R'')$ such that 
\begin{equation}\label{eq:def.r1.r2}
\det \Bigg[ \begin{array}{ll} \int_{\mathbb R^2} 2r_1 \rd_1 \phi  \,\ud x  & \int_{\mathbb R^2} 2r_2 \rd_1 \phi \, \ud x \\
 \int_{\mathbb R^2}2r_1 \rd_2 \phi \, \ud x &  \int_{\mathbb R^2} 2r_2 \rd_2 \phi \, \ud x \end{array}
\Bigg]   \neq 0.
\end{equation}
\end{lm}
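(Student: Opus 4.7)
The plan is to prove the lemma by a linearity-plus-duality argument. Consider the linear map
$$T \colon C_c^\infty(B(0, R'')) \to \mathbb R^2, \qquad T(r) := \left(\int_{\mathbb R^2} 2r\,\rd_1 \phi \,\ud x,\ \int_{\mathbb R^2} 2r\,\rd_2 \phi \,\ud x \right).$$
The existence of $r_1, r_2 \in C_c^\infty(B(0,R''))$ satisfying \eqref{eq:def.r1.r2} is equivalent to the image of $T$ being two-dimensional, since the columns of the matrix in \eqref{eq:def.r1.r2} are precisely $T(r_1)$ and $T(r_2)$. Thus it suffices to show that $\mathrm{rank}(T) = 2$.

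Suppose for contradiction that $\mathrm{rank}(T) \leq 1$. Then there exists $(a_1, a_2) \in \mathbb R^2 \setminus \{(0,0)\}$ orthogonal to the image of $T$, i.e.,
$\int_{\mathbb R^2} r\,(a_1 \rd_1 \phi + a_2 \rd_2 \phi) \,\ud x = 0$ for every $r \in C_c^\infty(B(0,R''))$. Since $a_1 \rd_1 \phi + a_2 \rd_2 \phi$ is smooth, the fundamental lemma of the calculus of variations yields $a_1 \rd_1 \phi + a_2 \rd_2 \phi \equiv 0$ on $B(0,R'')$; combined with $\mathrm{supp}(\phi) \subseteq B(0,R') \subset B(0,R'')$, which forces all derivatives of $\phi$ to vanish on $\mathbb R^2 \setminus B(0,R')$, this extends to the whole plane.

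The equation $a_1 \rd_1 \phi + a_2 \rd_2 \phi = 0$ forces $\phi$ to be constant along each affine line parallel to $(a_1, a_2)$. Every such line exits the compact set $B(0,R')$ and thus meets the region where $\phi$ vanishes, so $\phi \equiv 0$ on each such line, contradicting $\phi \not\equiv 0$. Hence $\mathrm{rank}(T) = 2$, and the desired $r_1, r_2$ exist. There is no real obstacle: the argument is a standard duality together with the observation that a nonzero compactly supported function cannot be annihilated by a nonzero constant-coefficient first-order homogeneous differential operator (one could alternatively produce $r_1, r_2$ explicitly as mollifications of $\rd_1 \phi$ and $\rd_2 \phi$, but this is unnecessary for a pure existence statement).
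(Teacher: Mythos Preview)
Your proof is correct. The paper does not give a self-contained argument here but simply cites \cite[Lemma~A.2, A.3]{LVdM1}; your duality argument (showing that if the linear map $r \mapsto (\int 2r\,\rd_1\phi,\int 2r\,\rd_2\phi)$ had rank $\leq 1$ then $\phi$ would be constant along a fixed direction and hence identically zero by compact support) is a clean, self-contained replacement for that citation.
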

\begin{proof}
This follows from \cite[Lemma~A.2, A.3]{LVdM1}. (The original result is stated only for $R'' = 2R'$, but the proof applies for more general $R' < R''$.) \qedhere
\end{proof}

We now solve the constraint equation for the initial data to \eqref{preback}.

\begin{lm}\label{lem:constraint.lower.bound}
	Suppose $(\dot{U}(\cdot),\nabla U(\cdot), \breve f (\cdot,\omega), u(\cdot,\omega))\restriction_{\Sigma_0}$ is given and satisfy the assumptions of Theorem~\ref{main.thm.2}. Define $r_1$, $r_2$ so that $\mathrm{supp}(r_1),\, \mathrm{supp}(r_2) \subseteq B(0,\f{3R}2)$ and \eqref{eq:def.r1.r2} holds, with the additional normalization conditions 
	\begin{equation}
	\|r_1 \|_{H^{10}(\Sigma_0)} = \|r_2 \|_{H^{10}(\Sigma_0)} = \ep
	\end{equation} 
	(which can be arranged after rescaling).

	Then for each $N \in \mathbb Z_{\geq 1}$, there exist parameters $\Omega^N_1, \Omega^N_2$ such that the following holds:
	\begin{enumerate}
	\item $\lim_{N\to \infty} (\Omega^N_1, \Omega^N_2) = (0,0)$.
	\item Consider the free initial data to \eqref{preback} given by 
	\begin{subequations}\label{eq:data.for.N.back}
        \begin{empheq}[left=\empheqlbrace]{align}
            &\phi^{(N)}\restriction_{\Sigma_0} = \phi\restriction_{\Sigma_0}, \quad  \varpi^{(N)} \restriction_{\Sigma_0}= \varpi\restriction_{\Sigma_0}, \quad \dot{\varpi}^{(N)} \restriction_{\Sigma_0} = \dot{\varpi}\restriction_{\Sigma_0}, \label{eq:data.for.N.back.1}\\
            &\dot{\phi}^{(N)}\restriction_{\Sigma_0} = \dot{\phi}\restriction_{\Sigma_0} + \Omega^N_1 r_1 + \Omg^N_2 r_2 , \label{eq:data.for.N.back.2}\\
            & \breve{F}_{\bA}^{(N)}\restriction_{\Sigma_0}(x) = \sqrt{ \alp_\bA^{(N)}}  \breve{f}\restriction_{\Sigma_0} (x,\om_{\bA}), \label{eq:data.for.N.back.3}\\
            & u_{\bA}^{(N)}\restriction_{\Sigma_0}(x) = u^{(N)} \restriction_{\Sigma_0} (x,\om_{\bA}) = x^1 \cos \om_\bA + x^2 \sin \om_\bA,\label{eq:data.for.N.back.5}
        \end{empheq}
    \end{subequations}    
	where $\omega_\bA \in \m S^1$ and $\alp_\bA^{(N)}$ given by Lemma \ref{km}. Then the free initial data set satisfies the compatibility condition \eqref{main.data.cond} for every $N \in \mathbb Z_{\geq 1}$:
	$$\int_{\mathbb R^2} \left(2\langle \dot{U}^{(N)},\rd_j U^{(N)} \rangle+ \sum_{\bA \in \mathcal A}(\breve{F}_{\bA}^{(N)})^2|\nab u_{\bA}^{(N)} |\rd_j u_{\bA}^{(N)}\right) \, \ud x =0 .$$
	\end{enumerate}
\end{lm}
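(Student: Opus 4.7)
The plan is to reduce the compatibility condition \eqref{main.data.cond} for the $N$-dust free data to a $2\times 2$ linear system for $(\Omg_1^N, \Omg_2^N)$ whose right-hand side vanishes as $N\to\infty$ by the weak-$*$ approximation of Lemma~\ref{km}, and whose coefficient matrix is independent of $N$ and invertible by Lemma~\ref{lem:find.r1.r2}.

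First, I would plug the proposed data \eqref{eq:data.for.N.back} into the null-dust compatibility condition. Since $u_\bA^{(N)}\restriction_{\Sigma_0} = x^1\cos\om_\bA + x^2\sin\om_\bA$ (so $|\nabla u_\bA^{(N)}|\restriction_{\Sigma_0}=1$ and $\partial_j u_\bA^{(N)}\restriction_{\Sigma_0}$ equals the constant $\delta_{j1}\cos\om_\bA + \delta_{j2}\sin\om_\bA$), and $(\breve F_\bA^{(N)})^2\restriction_{\Sigma_0} = \alp_\bA^{(N)} \breve f^2(\cdot,\om_\bA)\restriction_{\Sigma_0}$, the dust contribution becomes
$$S_j^N := \sum_{\bA=0}^{N-1}\alp_\bA^{(N)} \int_{\m R^2} \breve f^2(x,\om_\bA)(\delta_{j1}\cos\om_\bA + \delta_{j2}\sin\om_\bA)\,\ud x.$$
Because $\om\mapsto \int_{\m R^2}\breve f^2(x,\om)(\delta_{j1}\cos\om + \delta_{j2}\sin\om)\,\ud x$ is continuous on $\m S^1$ (by smoothness and compact $x$-support of $\breve f$), Lemma~\ref{km} gives $S_j^N \to S_j^\infty := \int_{\m S^1}\!\int_{\m R^2}\breve f^2(x,\om)(\delta_{j1}\cos\om + \delta_{j2}\sin\om)\,\ud x\,\ud m(\om)$ as $N\to\infty$.

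Second, I would expand the wave term. Since only $\dot\phi^{(N)} = \dot\phi + \Omg_1^N r_1 + \Omg_2^N r_2$ differs from the Vlasov data and $(\phi,\varpi,\dot\varpi)$ are unchanged, the bilinear form \eqref{eq:lara.def} gives
$$\int_{\m R^2} 2\langle \dot U^{(N)},\partial_j U^{(N)}\rangle\,\ud x = \int_{\m R^2} 2\langle \dot U,\partial_j U\rangle\,\ud x + 2\sum_{\ell=1}^2 \Omg_\ell^N \int_{\m R^2} r_\ell \partial_j\phi\,\ud x,$$
and by the Vlasov orthogonality assumption \eqref{eq:ortho.main.thm} the first integral on the right equals $-S_j^\infty$. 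Combining, the compatibility condition for the $N$-dust data reduces to the $2\times 2$ linear system $M(\Omg_1^N,\Omg_2^N)^T = (S_j^\infty - S_j^N)_{j=1,2}$ with entries $M_{j\ell} := 2\int_{\m R^2} r_\ell \partial_j\phi\,\ud x$.

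Finally, I would invoke Lemma~\ref{lem:find.r1.r2} applied with $R' = R$ and $R'' = \tfrac{3R}{2}$, using assumption (5) of Theorem~\ref{main.thm.2} (that $\phi\restriction_{\Sigma_0}\not\equiv 0$, with support in $B(0,R)$), to guarantee $\det M \neq 0$; the normalization $\|r_\ell\|_{H^{10}(\Sigma_0)}=\ep$ is a harmless rescaling here. Since $M$ is independent of $N$ and invertible, the system has a unique solution $(\Omg_1^N,\Omg_2^N)$, and since $S_j^\infty - S_j^N \to 0$, this solution tends to $(0,0)$ as $N\to\infty$, establishing conclusion~(1). No step is genuinely difficult; the only non-routine ingredient is the existence of $r_1, r_2$ yielding a non-degenerate matrix, which is precisely the content of Lemma~\ref{lem:find.r1.r2}.
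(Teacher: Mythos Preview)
Your proposal is correct and follows essentially the same approach as the paper's proof: reduce the compatibility condition to a $2\times 2$ linear system for $(\Omega_1^N,\Omega_2^N)$, use the hypothesis \eqref{eq:def.r1.r2} for invertibility of the coefficient matrix, and use Lemma~\ref{km} to show the right-hand side vanishes as $N\to\infty$. One small stylistic point: since the statement already assumes $r_1,r_2$ satisfy \eqref{eq:def.r1.r2}, you need only cite that hypothesis for $\det M\neq 0$ rather than re-invoke Lemma~\ref{lem:find.r1.r2}.
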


\begin{proof}
	To keep the notation lean, we suppress the superscript $^{(N)}$. We write the condition \eqref{main.data.cond} for the given free initial data, i.e.,  
	$$	\int_{\mathbb R^2} \Big(2(\dot{\phi} + \Omega^N_1 r_1 + \Omega^N_2 r_2)\rd_j \phi + \frac{1}{2}e^{-4\phi}\dot{\varpi}\partial_j \varpi+ \sum_{\bA}\breve{F}_{{\bA}}^2|\nab u_{{\bA}}|\rd_j u_{{\bA}} \Big) \, \ud x =0,$$
as a system of linear equations for $\Omega^N_1,\Omega^N_2$, which in matrix form is given by
$$-\Bigg[ \begin{array}{ll} \int_{\mathbb R^2} 2r_1 \rd_1 \phi  \,\ud x  & \int_{\mathbb R^2} 2r_2 \rd_1 \phi \, \ud x \\
 \int_{\mathbb R^2}2r_1 \rd_2 \phi \, \ud x &  \int_{\mathbb R^2} 2r_2 \rd_2 \phi \, \ud x \end{array}
\Bigg] \Bigg[ \begin{array}{l} \Omega^N_1 \\ \Omega^N_2\end{array} \Bigg] = \Bigg[ \begin{array}{l} \int_{\mathbb R^2} \Big( 2 \dot{\phi}\rd_1 \phi+ \frac{1}{2}e^{-4\phi}\dot{\varpi}\partial_2 \varpi + {\sum_{\bA}}\breve{F}_{{\bA}}^2|\nab u_{{\bA}}|\rd_1 u_{{\bA}} \Big)  \,\ud x \\  \int_{\mathbb R^2} \Big( 2 \dot{\phi}\rd_2\phi+ \frac{1}{2}e^{-4\phi}\dot{\varpi}\partial_2 \varpi+ {\sum_{\bA}} \breve{F}_{{\bA}}^2|\nab u_{{\bA}}|\rd_2 u_{{\bA}} \Big)  \,\ud x \end{array} \Bigg].$$
 
Our choice of $r_1$, $r_2$ satisfies \eqref{eq:def.r1.r2} so that the matrix in the above equation is invertible, giving the desired $\Omg_1^N$, $\Omg_2^N$.
We then note that (by Lemma~\ref{km})
$$\int_{\mathbb R^2} \left(\int_{\m S^1} \breve{f}^2(x,\omega)|\nab_x u (x,\omega)|\rd_j u(x,\omega)\ud m(\omega)- {\sum_{\bA}} \breve{F}_{{\bA}}^2|\nab u_{{\bA}}|\rd_j u_{{\bA}} \right)  \ud x\rightarrow 0, \quad N \rightarrow \infty.$$
Consequently $\Omega^N_1$ and $\Omega^N_2$ tend to zero as $N$ tends to infinity.
\end{proof}

From now on, for any given $N \in \mathbb Z_{\geq 1}$, we consider \eqref{preback} with initial data given by Lemma~\ref{lem:constraint.lower.bound}. 
\begin{itemize}
    \item Alternatively, this can be also be viewed as initial data for \eqref{back.with.omega}; in this case \eqref{eq:data.for.N.back.1}, \eqref{eq:data.for.N.back.2} remain the same, but  \eqref{eq:data.for.N.back.3}--\eqref{eq:data.for.N.back.5} are replaced by \eqref{eq:back.with.omega.data}.
    \item This can be also be viewed as initial data for \eqref{back}; in this case \eqref{eq:data.for.N.back.1}, \eqref{eq:data.for.N.back.2}, \eqref{eq:data.for.N.back.5} remain the same, but \eqref{eq:data.for.N.back.3} is replaced by \eqref{eq:F.split.initial}.
\end{itemize}

\subsection{Construction of the approximations by solutions to the Einstein--null dust system}

In this subsection, we construct solutions to \eqref{back.with.omega} with data prescribed above tend to solutions to \eqref{vlasov} as $N\to \infty$.

It is convenient here as for the remainder of the paper to introduce the sets $K$ and $K_t$. These are spatially compact past set. Since we will only need estimates for the eikonal function on the support of the dust and the scalar field, we will only consider them on these sets.
\begin{definition}\label{def:K}
Define 
$$K= \{(t,x): 0 \leq t \leq 1,\, |x|< 2R+2 +2(1-t) \}.$$ 
In all the settings below, since since the metric is close to the Minkowski metric (see e.g., \eqref{lwp.small}), the boundary $\{(t,x): |x| = 2R+2+2(1-t)\}$ is spacelike and thus $\overline{K}$ is a compact past set. We denote also
\begin{equation}\label{eq:Kt.def}
K_t =B(0,2R+2+2(1-t))
\end{equation}
so that $K = \cup_{t \in [0,1]} K_t$.
\end{definition}

The following proposition is the main result of this subsection. Note that the proposition in particular implies the existence of solutions to \eqref{vlasov}. (Notice that existence for \eqref{vlasov} is not included in the local existence results in Section~\ref{sec:easy.existence}, as they are either for vacuum or for the null dust case, even though one expects that an analogous existence result can be proven by similar techniques as \cite{HL.elliptic}.)
\begin{proposition}\label{prop:construct.dust}
Fix $R\geq 10$. Given initial data $(\dot{U}, \nabla U,\breve f (x,\omega), u(x,\omega))$ satisfying the hypotheses (1)--(5) of Theorem~\ref{main.thm.2}, there exists $\ep_1 = \ep_1(R)>0$ such that the following holds whenever $\ep<\ep_1$: 
\begin{enumerate}
\item A solution $(g,U, f, u)$ to \eqref{vlasov} arising from the given admissible free initial data set exists on a time interval $[0,1]$. 
\item There exists a sequence of solutions $\{(g^{(N)},U^{(N)},f^{(N)},u^{(N)})\}_{N=1}^\infty$ to \eqref{back.with.omega} so that 
$$(g^{(N)},U^{(N)},u^{(N)}) \xrightarrow{N\to \infty} (g,U, u) \quad \hbox{ in $C^2_{loc,t,x}$}$$
and 
$$f^{(N)} \xrightarrow{N\to \infty} f \quad \hbox{in $C^2_{t,x} C^0_\omega$}.$$
\item There exists a sequence of solutions $\{(g^{(N)},U^{(N)},\{ F_\bA^{\phi,(N)},F_\bA^{\varpi,(N)}, u_\bA^{(N)}\}_{\bA=0}^{N-1})\}_{N=1}^\infty$ to \eqref{back} such that for any $N\in \mathbb Z_{\geq 1}$, $(g^{(N)},U^{(N)})$ agree with those in part (2) above, and 
\begin{equation}\label{eq:construct.dust.agree}
    (F^{\phi,(N)}_{\bA})^2+\frac{e^{-4\phi^{(N)}}}{4}(F^{\varpi,(N)}_\bA)^2 = \alpha_{\bA}^{(N)} f^2(\cdot,\om_\bA),\quad u^{(N)}_\bA = u^{(N)}(\cdot,\om_\bA).
\end{equation}
\item For $N \in \mathbb Z_{\geq 1}$, $(g^{(N)},U^{(N)},\{F^{\phi,(N)}_\bA, F^{\varpi,(N)}_\bA,u^{(N)}_\bA\}_{\bA = 1}^{N-1})$ in part (3) above satisfies the following properties:
\begin{enumerate}
    \item The estimate \eqref{eq:lwp.1} in (1) and points (2)--(5) in the conclusion of Proposition~\ref{lwp} hold with $k=11$ and with $C_h$ replaced by $C \ep$, where $C = C(R)>0$ is independent of $N$.
    \item The estimate \eqref{eq:lwp.2} in (1) of Proposition~\ref{lwp} holds for both $F_\bA^{\phi,(N)}$ and $F_\bA^{\varpi,(N)}$, i.e., for all $t \in [0,1]$,
    \begin{align}
        \sum_{\bA =0}^{N-1} \sum_{j=0}^1 \left(\|\rd_t^j F_\bA^{\phi,(N)} \|_{H^{11-j}(\Sigma_t)}^2  + \| \rd_t^j F_\bA^{\varpi,(N)} \|_{H^{11-j}(\Sigma_t)}^2 \right)^{\f 12} \leq C \ep.\label{eq:lwp.2.refined}
    \end{align}
    where $C = C(R)>0$.
    \item The following pointwise estimates hold for all $t \in [0,1]$:
    \begin{align}
        \sum_{\bA =0}^{N-1} \sum_{j=0}^1 \left( \| \rd_t^j F_\bA^{\phi,(N)}(\cdot,\omg) \|_{C^{9-j}(\Sigma_t)}^2 + \| \rd_t^j F_\bA^{\varpi,(N)}(\cdot,\omg) \|_{C^{9-j}(\Sigma_t)}^2 \right)^{\f 12} \leq C \ep, \\ 
         \sum_{j=0}^1 \Big(\| \rd_t^j \rd U \|_{C^{9-j}(\Sigma_t)} + |\rd_t^j \mfg_a| + \| \rd_t^j \mfg_r \|_{C^{11-j}(K_t)} + \|\rd_t^j \rd u_\bA \|_{C^{9-j}(K_t)} \Big) \leq C \ep
    \end{align}
    where we have written
    $$\mfg(t,x)=\mfg_{a}(t)\zeta(|x|)\log({|x|})+ \mfg_r(t,x).$$
    \item $\mathrm{supp}(U^{(N)}),\,\mathrm{supp}(F^{\phi,(N)}_\bA),\,\mathrm{supp}(F^{\varpi,(N)}_\bA) \subset [0,1]\times B(0, \f {3R}2+2)$.
\end{enumerate} 
\end{enumerate}
\end{proposition}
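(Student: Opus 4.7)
The overall plan is to construct, for each $N$, a solution to the split null dust system \eqref{back} via Theorem~\ref{lwp}/Corollary~\ref{lwp.small} applied to \eqref{preback}, then pass to the limit $N\to\infty$ to obtain the Vlasov solution. I would first verify that the initial data from Lemma~\ref{lem:constraint.lower.bound} satisfy smallness \emph{uniformly in $N$}: the perturbation $\Omega_1^N r_1 + \Omega_2^N r_2$ of $\dot\phi$ is of size $O(\ep)$ in $H^{11}$ because the linear system in the proof of that lemma has a matrix $M$ independent of $N$ and right-hand side bounded by $\|\dot U\|_{L^2}\|\nabla U\|_{L^2} + \sum_\bA \|\breve F_\bA^{(N)}\|_{L^2}^2\lesssim \ep^2$; meanwhile,
\[
\sum_{\bA=0}^{N-1}\|\breve F_\bA^{(N)}\|_{H^{11}}^2 = \sum_\bA \alpha_\bA^{(N)}\|\breve f(\cdot,\omega_\bA^{(N)})\|_{H^{11}}^2 \leq \sup_{\omega}\|\breve f(\cdot,\omega)\|_{H^{11}}^2 \leq \ep^2.
\]
Applying Corollary~\ref{lwp.small} with $k=11$ therefore produces a solution to \eqref{preback} on $[0,1]\times\mathbb R^2$ satisfying all the estimates of Theorem~\ref{lwp} with $C_h=C_0\ep$ and $C_0=C_0(R)$ independent of $N$, giving (4a).

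Next, to pass from \eqref{preback} to \eqref{back}, I would use Proposition~\ref{prop:dust.formulation.equivalent.2} to define $(F_\bA^{\phi,(N)},F_\bA^{\varpi,(N)})$, and derive (4b) by energy estimates. The equations \eqref{back.without.omega.4}--\eqref{back.without.omega.5} form, for each fixed $\bA$, a \emph{linear} coupled transport system for the pair $(F_\bA^{\phi,(N)},F_\bA^{\varpi,(N)})$ whose coefficients $\Box_g u_\bA^{(N)}, \partial\phi^{(N)}, \partial\varpi^{(N)}, \partial u_\bA^{(N)}$ are controlled by (4a). Commuting with $\partial_x^\alpha$ for $|\alpha|\leq 11$, taking an $L^2$ energy estimate along the characteristics generated by $L_\bA^{(N)}$, and then summing the square energies over $\bA$, one obtains
\[
\tfrac{d}{dt}\!\sum_\bA\bigl(\|F_\bA^{\phi,(N)}\|_{H^{11}}^2+\|F_\bA^{\varpi,(N)}\|_{H^{11}}^2\bigr)\lesssim \sum_\bA\bigl(\|F_\bA^{\phi,(N)}\|_{H^{11}}^2+\|F_\bA^{\varpi,(N)}\|_{H^{11}}^2\bigr),
\]
from which Gronwall and the initial bound $\sum_\bA\|\breve F_\bA^{(N)}\|_{H^{11}}^2\leq\ep^2$ give \eqref{eq:lwp.2.refined} uniformly in $N$; the time-derivative bound follows from the equation. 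The pointwise estimates (4c) are then immediate from $H^2(\mathbb R^2)\hookrightarrow C^0$ applied to the $H^{11}$ bounds (with Cauchy--Schwarz extracting the pointwise $\ell^2$ sum from the $L^2$-sum of $\|\cdot\|_{H^{11}}^2$), and the support statement (4d) is a consequence of finite speed of propagation together with the Minkowski-closeness of $g^{(N)}$ in Corollary~\ref{lwp.small}, which gives support in $[0,1]\times B(0,R+1+O(\ep))\subset[0,1]\times B(0,\tfrac{3R}{2}+2)$ since $R\geq 10$.

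For the $N\to\infty$ limit, the uniform $H^{11}$ bounds plus Aubin--Lions/Arzel\`a--Ascoli give a subsequence along which $(g^{(N)},U^{(N)})\to(g,U)$ in $C^2_{\mathrm{loc}}$, and similarly $u_\bA^{(N)}\to u_{\omega_\bA}$ on $K$. To get a genuine function $u(t,x,\omega)$ and $f(t,x,\omega)$ in the limit, I would invoke Proposition~\ref{prop:dust.formulation.equivalent.1}(2) to build $f^{(N)}, u^{(N)}$ as functions of $(t,x,\omega)$ by solving the transport/eikonal system for every $\omega\in\mathbb S^1$ with the fixed initial data, and note that this preserves smoothness in $\omega$ uniformly in $N$ (differentiating in $\omega$ yields a linear transport system with the same coefficients). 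Continuous dependence of the ODE flow on coefficients then yields $(u^{(N)},f^{(N)})\to(u,f)$ in $C^2_{t,x}C^0_\omega$ along the subsequence. Passage to the limit in \eqref{back.with.omega.1}--\eqref{back.with.omega.3} is straightforward for the geometric and wave-map parts; for the matter source one writes
\[
\sum_\bA \alpha_\bA^{(N)}(f^{(N)}_\bA)^2\partial_\mu u^{(N)}_\bA\partial_\nu u^{(N)}_\bA = \int_{\mathbb S^1}\!(f^{(N)})^2(\omega)\partial_\mu u^{(N)}(\omega)\partial_\nu u^{(N)}(\omega)\,d\Bigl(\sum_\bA \alpha_\bA^{(N)}\delta_{\omega_\bA^{(N)}}\Bigr)
\]
and combines Lemma~\ref{km} with the uniform convergence of the integrand in $\omega$ to identify the limit with $\int f^2\,\partial u\partial u\,dm(\omega)$, giving (1) and (2). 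Finally, (3) follows from (2) by reapplying Proposition~\ref{prop:dust.formulation.equivalent.2}, with the algebraic identity \eqref{eq:construct.dust.agree} coming directly from Proposition~\ref{prop:dust.formulation.equivalent.1} plus \eqref{eq:reformulation.split}.

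The main obstacle, I expect, is establishing the $N$-uniform $H^{11}$ bound in (4b): one must ensure that the Gronwall constants arising from the transport energy estimate depend only on the background norms controlled uniformly by (4a), and not on $N$, so that summing over $\bA$ the square of the $H^{11}$ energy does not pick up an unbounded factor. The linearity of \eqref{back.without.omega.4}--\eqref{back.without.omega.5} in the pair $(F_\bA^{\phi,(N)},F_\bA^{\varpi,(N)})$ is precisely what allows this $\ell^2$-summed Gronwall argument to go through; the nonlinear coupling to the metric has already been handled at the level of \eqref{preback} by Corollary~\ref{lwp.small}, and Proposition~\ref{prop:dust.formulation.equivalent.2} guarantees that this does not introduce additional constraints between $F^\phi$ and $F^\varpi$ beyond \eqref{eq:reformulation.split}.
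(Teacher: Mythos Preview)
Your proposal is correct and follows essentially the same route as the paper: apply Corollary~\ref{lwp.small} to \eqref{preback} with the data of Lemma~\ref{lem:constraint.lower.bound} (uniformly in $N$), lift to \eqref{back.with.omega} via Proposition~\ref{prop:dust.formulation.equivalent.1}(2), extract a limit by compactness (propagating a $C^1$ bound in $\omega$), identify it as a Vlasov solution using Lemma~\ref{km}, and obtain the split system \eqref{back} and estimate~(4b) from Proposition~\ref{prop:dust.formulation.equivalent.2} and the linearity of the $(F^\phi_\bA,F^\varpi_\bA)$ transport equations. One small slip: the initial support of $\dot U^{(N)}$ is $B(0,\tfrac{3R}{2})$, not $B(0,R)$, because of the correction terms $\Omega_i^N r_i$ with $\mathrm{supp}(r_i)\subset B(0,\tfrac{3R}{2})$; the domain-of-dependence argument then gives support in $B(0,\tfrac{3R}{2}+2)$ directly, which is exactly what (4d) asserts.
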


\begin{proof}
\pfstep{Step~1: Existence of solutions \eqref{back.with.omega} and existence of a limit} For every $N \in \mathbb Z_{\geq 1}$, solve \eqref{preback} with the initial data given by Lemma~\ref{lem:constraint.lower.bound}. By Corollary~\ref{lwp.small}, for every $N \in \mathbb Z_{>0}$, there exists a solution $(g^{(N)},U^{(N)},\{F_\bA^{(N)},u_\bA^{(N)}\}_{\bA=0}^{N-1})$ to \eqref{preback}.

In order to get a solution to \eqref{back.with.omega}, we use part 2 of Proposition~\ref{prop:dust.formulation.equivalent.1}, i.e., we fix $(g^{(N)},U^{(N)})$ as above, and solve for $(f^{(N)},u^{(N)})$ by the equations \eqref{back.with.omega.4}--\eqref{back.with.omega.6} with initial data \eqref{eq:back.with.omega.data}. Notice that since \eqref{eq:g.est.bkgd.1}--\eqref{eq:g.est.bkgd.4} hold with $C_h$ being a small constant (by Corollary~\ref{lwp.small}), the eikonal equation \eqref{back.with.omega.6} can be solved up to time $1$, where $u(x,\om)$ remains close to $t+ x^1\cos \om + x^2 \sin \om$ on the set $K$ (see Definition~\ref{def:K}). After that, the equation \eqref{back.with.omega.4} is linear and can be solved. Moreover, by Proposition~\ref{prop:dust.formulation.equivalent.1}, 
\begin{equation}\label{eq:construct.dust.agree.1}
    F_{\bA}^{(N)} = \sqrt{\alp_\bA^{(N)}} f^{(N)}(\cdot,\om_{\bA}^{(N)}),\quad u_{\bA}^{(N)} = u^{(N)}(\cdot,\om_{\bA}^{(N)}).
\end{equation}

Notice that the solutions we constructed are uniformly bounded in high regularity norms. Moreover, we can easily propagate a $C^1$ bound in $\omega$ after differentiating the transport equations \eqref{back.with.omega.4}--\eqref{back.with.omega.6} by $\rd_\om$ (since the $\rd_\om f$ and $\rd_\om u$ terms are linear). As a result, by compactness, we obtain the $C^2_{t,x} C^0_\omega$ convergence of a subsequence to a quintuple $(g^{(\infty)},U^{(\infty)},f^{(\infty)},u^{(\infty)})$.

\pfstep{Step~2: The limit solves \eqref{vlasov}} It is clear that the limit $(g^{(\infty)},U^{(\infty)},f^{(\infty)},u^{(\infty)})$ achieves the given free initial data, which in term implies that it achieves the given initial data. It remains to check that it satisfies \eqref{vlasov}. (In the statement of the theorem, we have removed the superscript $^{(\infty)}$.) 

For this purpose, first observe that the convergence is strong enough to easily justify that \eqref{eq:vlasov.2}--\eqref{eq:vlasov.5} hold for the limit. By uniform-in-$\omega$ convergence and \eqref{eq:m.limit}, it holds that
\begin{equation}
\begin{split}
&\: \lim_{N\to \infty} \sum_{\bA=0}^{N-1} \alpha_\bA^{(N)} f^{(N)}(t,x,\omega_\bA^{(N)})^2\partial_\mu u^{(N)}(t,x,\omega_\bA^{(N)})\partial_\nu u^{(N)}(t,x,\omega_\bA^N) \\ 
= &\: \int_{\m S^1}f^{(\infty)}(t,x,\omega)^2\partial_\mu u^{(\infty)}(t,x,\omega)\partial_\nu u^{(\infty)}(t,x,\omega)\,\ud m(\omega).
\end{split}
\end{equation}
Hence, \eqref{eq:vlasov.1} also holds for the limit.


\pfstep{Step~3: Construction of solutions to \eqref{back}} We start with the solution $(g^{(N)},U^{(N)},\{F_\bA^{(N)},u_\bA^{(N)}\}_{\bA=0}^{N-1})$ to \eqref{preback} already constructed in Step~1, and solve for $\{F_\bA^{\phi,(N)}, F_\bA^{\phi,(N)}\}_{\bA=0}^{N-1}$ using the transport equations according to Proposition~\ref{prop:dust.formulation.equivalent.2}. Since $(g^{(N)}, U^{(N)})$ are now fixed, the transport equations \eqref{back.without.omega.4}--\eqref{back.without.omega.5} are linear and can be solved. Finally, that \eqref{eq:construct.dust.agree} holds is a consequence of \eqref{eq:construct.dust.agree.1} and \eqref{eq:reformulation.split}. This proves point (3) of the proposition.

\pfstep{Step~4: Properties of the solutions} We now turn to point (4) of the proposition. Part (a) is immediate from the application of Corollary~\ref{lwp.small} above. For part (b), notice that Corollary~\ref{lwp.small} only gives a bound for $F_\bA^{(N)} = (F^{\phi,(N)}_{\bA})^2+\frac{e^{-4\phi^{(N)}}}{4}(F^{\varpi,(N)}_\bA)^2$. However, since \eqref{back.without.omega.4}--\eqref{back.without.omega.5} are linear equations given $(g^{(N)},U^{(N)})$, it is straightforward to propagate the bounds to obtain \eqref{eq:lwp.2.refined}.  Part (c) follows from parts (a) and (b) after using Sobolev embedding. Finally, by Corollary~\ref{lwp.small} and assumption of the initial support, we know that $\mathrm{supp}(U,F_{\bA}) \subset J^+(\Sigma_0 \cap B(0, \f{3R}2))$. Using the smallness estimates of the metric component, it is easy to see that $J^+(\Sigma_0 \cap B(0, \f{3R}2)) \subset [0,1]\times B(0, \f{3R}2+2)$, proving part (d). \qedhere

\end{proof}

\subsection{Angular separation of the eikonal functions}\label{sec:angular.separation}

We now show that for solutions to \eqref{back.with.omega} constructed in Proposition~\ref{prop:construct.dust}, we can control the angle between the level sets of $u(\om)$ and $u(\om')$. In particular, this gives an estimate for the angular separation of $u_{\bA}$.

For the rest of the subsection, fix $N$ and take a solution from Proposition~\ref{prop:construct.dust}. Our goal will be to obtain estimates which are independent of $N$. In the following, we will suppress the explicit dependence $^{(N)}$ in our notations. All implicit constants are independent of $N$.

In order to distinguish that derivatives in the $\om$ variable from the derivatives in the spacetime variables, we will use the notation $\srd_\om$ for the $\om$-derivative in the rest of this subsection.


\begin{lemma}\label{lem:srdu}
$$\sup_t \|\srd_\om u \|_{L^\i(K_t)} \ls 1.$$
\end{lemma}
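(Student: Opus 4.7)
The plan is to derive a transport equation for $\slashed{\partial}_\omega u$ along the null generators of the level sets of $u$ and propagate the explicit $L^\infty$ bound from the initial data. Differentiating the eikonal equation $g^{-1}(\ud u,\ud u)=0$ with respect to $\omega$ gives
$$2(g^{-1})^{\alpha\beta}\partial_\alpha u\,\partial_\beta(\slashed{\partial}_\omega u)=0,\quad\text{i.e., }L(\slashed{\partial}_\omega u)=0,$$
where $L=(g^{-1})^{\alpha\beta}\partial_\alpha u\,\partial_\beta$. Thus $\slashed{\partial}_\omega u$ is conserved along each integral curve of $L$. The initial data is explicit by hypothesis (3) of Theorem~\ref{main.thm.2}:
$$\slashed{\partial}_\omega u|_{\Sigma_0}(x,\omega)=-x^1\sin\omega+x^2\cos\omega,$$
which is bounded by $|x|$ on any bounded subset of $\Sigma_0$.

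Next I would show that every integral curve of $L$ passing through a point of $K$ meets $\Sigma_0$ at a point $(0,x_0)$ with $|x_0|\lesssim R$. From Proposition~\ref{prop:construct.dust}(4)(a) the metric $g^{(N)}$ is $O(\epsilon)$-close to Minkowski in $C^2$, uniformly in $N$; in particular $n=1+O(\epsilon)$, $\beta=O(\epsilon)$, $\gamma=O(\epsilon)$ on $K$. Parametrizing the integral curve by the $t$-coordinate (recall $L^0=-n^{-2}\partial_t u+n^{-2}\beta^i\partial_i u$ is nonzero since $\partial_t u>0$), the induced spatial speed is $|\dot x|=|L^i/L^0|$, which by the perturbative form of $g^{-1}$ in \eqref{g.inverse} and the bound $|\nabla u - \vec{c}_\omega|=O(\epsilon)$ (compare \eqref{eq:u.est.bkgd.1}, which extends uniformly in $\omega$ by the same proof) is $1+O(\epsilon)$. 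Hence over a $t$-interval of length at most $1$ the spatial displacement is at most $1+O(\epsilon)$, so for $(t,x)\in K_t$ the terminal point satisfies $|x_0|\leq|x|+1+O(\epsilon)\leq 2R+5$.

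Combining the conservation along $L$ with the bound on $x_0$ gives
$$|\slashed{\partial}_\omega u(t,x,\omega)|=|\slashed{\partial}_\omega u(0,x_0,\omega)|\leq|x_0|\lesssim 1,$$
uniformly in $(t,x,\omega)\in K\times\mathbb S^1$ and in $N$, which is the claim. The only substantive step is the bookkeeping in paragraph two, i.e., verifying that the integral curves of $L$ issuing from $K$ stay in a bounded spacetime region and reach $\Sigma_0$ at spatially bounded points. This is routine given the uniform-in-$N$ smallness of the metric perturbation supplied by Proposition~\ref{prop:construct.dust}(4); no cancellation or structural argument is required, only the fact that $L$ is null and that $g^{(N)}$ is close to Minkowski.
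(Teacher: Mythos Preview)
Your proof is correct and follows essentially the same approach as the paper: differentiate the eikonal equation in $\omega$ to obtain the homogeneous transport equation $L(\slashed{\partial}_\omega u)=0$, and then propagate the explicit initial bound. The paper simply writes ``Integrating this transport equation yields the desired estimate,'' whereas you have spelled out the characteristic-tracing argument; note that this step can be streamlined by observing that $K$ is a compact past set (Definition~\ref{def:K}), so any integral curve of the null vector field $L$ through $K$ automatically stays in $K$ and lands in $K_0=B(0,2R+4)$.
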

\begin{proof}
We begin with 
$$\begin{cases}
\gi^{\alp\bt} \rd_\alp u \rd_\bt u = 0,\\
u(0,x,\om) = x^1 \cos \om + x^2 \sin \om.
\end{cases}$$
Differentiating, we obtain 
$$\begin{cases}
2 \gi^{\alp\bt} \rd_\alp u \rd_\bt (\srd_\om u) = 0,\\
(\srd_\om u)(0,x,\om) = -x^1 \sin \om + x^2 \cos \om.
\end{cases}$$
Integrating this transport equation yields the desired estimate.
\end{proof}

\begin{lemma}\label{lem:rdsrdu}
The following holds for $(t,x) \in K$:
\begin{align}
\label{eq:rdsrdu} |\rd_1 \srd_\om u (t,x,\om) + \sin \om |,\, |\rd_2 \srd_\om u (t,x,\om) - \cos \om |,\, | (e_0 \srd_\om u)(t,x,\om) | \ls \ep, \\
\label{eq:rds2rdu} |\rd_1 \srd_\om^2 u (t,x,\om) + \cos \om |,\, |\rd_2 \srd^2_\om u (t,x,\om) + \sin \om |,\, | (e_0 \srd^2_\om u)(t,x,\om) | \ls \ep.
\end{align}
\end{lemma}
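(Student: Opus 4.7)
I would differentiate the eikonal equation $(g^{-1})^{\alpha\beta}\partial_\alpha u\,\partial_\beta u = 0$ in $\om$ and in $\partial_\mu$, and integrate the resulting transport equations along the characteristics of $L = g^{\alpha\beta}\partial_\alpha u\,\partial_\beta$. The null generators lie $O(\ep)$-close to their Minkowski counterparts by Proposition~\ref{prop:construct.dust}(4c), and one has $|\partial\mfg|, |\partial^2\mfg|, |\partial^2 u|, |\partial^3 u| \ls \ep$ on $K$ (the last two by Sobolev embedding from the $C^{9}$/$C^{11}$-type bounds in (4c); these bounds extend from $u_\bA$ to $u(\cdot,\om)$ for arbitrary $\om$ since the transport system is linear in $\om$-derivatives). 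A Grönwall argument on $[0,1]$ will then convert $O(\ep)$ forcing into $O(\ep)$ control.

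\textbf{Proof of \eqref{eq:rdsrdu}.} Since $g$ is $\om$-independent, $\srd_\om$ applied to the eikonal equation yields $L\srd_\om u = 0$. Commuting with $\partial_\mu$ gives
\begin{equation*}
L(\partial_\mu \srd_\om u) = -(\partial_\mu g^{\alpha\beta})\partial_\alpha u\,\partial_\beta\srd_\om u - g^{\alpha\beta}(\partial_\mu\partial_\alpha u)\,\partial_\beta\srd_\om u.
\end{equation*}
Write $\partial_\mu\srd_\om u = c_\mu + W_\mu$ with $c_0 = 0$, $c_1 = -\sin\om$, $c_2 = \cos\om$. The $c_\mu$ are spacetime-constant, so $L c_\mu = 0$ and $W_\mu$ satisfies the same transport equation. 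At $t=0$ the spatial components of $W_\mu$ vanish exactly (from $\srd_\om u|_{\Sigma_0} = -x^1\sin\om + x^2\cos\om$); the temporal component is computed algebraically from $L\srd_\om u = 0$ using the $O(\ep)$ closeness of $g^{-1}$ to $\eta^{-1}$ and of $\nabla u|_{\Sigma_0}$ to $(\cos\om,\sin\om)$, giving $|W_\mu|_{t=0}| \ls \ep$. The RHS is pointwise $\ls \ep(1+|W|)$ via $|\partial u|\ls 1$ and $|\partial g|,|\partial^2 u|\ls \ep$. Integrating along the $L$-characteristics (which traverse $K$ in time $\leq 1$) and Grönwall give $|W_\mu|\ls \ep$.

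\textbf{Proof of \eqref{eq:rds2rdu}.} Applying $\srd_\om$ once more yields $L\srd_\om^2 u = -g^{\alpha\beta}\partial_\alpha\srd_\om u\,\partial_\beta\srd_\om u$, and commutation with $\partial_\mu$ produces
\begin{equation*}
L(\partial_\mu\srd_\om^2 u) = O(\ep) - 2g^{\alpha\beta}(\partial_\alpha\partial_\mu\srd_\om u)\partial_\beta\srd_\om u,
\end{equation*}
where $O(\ep)$ absorbs all terms containing $\partial g$ or $\partial^2 u$ multiplied by the already-controlled $O(1)$ quantities $\partial u$, $\partial\srd_\om u$. The remaining term is dangerous because $\partial^2\srd_\om u$ is not a priori small. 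The resolution is an auxiliary pointwise bound $|\partial^2\srd_\om u| \ls \ep$ on $K$, obtained by differentiating the transport equation for $\partial_\mu\srd_\om u$ one further time in spacetime; using $|\partial^2 g|,|\partial^3 u|\ls \ep$ from Proposition~\ref{prop:construct.dust}(4c), one obtains a closed Grönwall system $L(\partial_\nu\partial_\mu\srd_\om u) = O(\ep) + O(\ep)|\partial^2\srd_\om u|$ with initial data $O(\ep)$ (purely spatial second derivatives of $\srd_\om u|_{\Sigma_0}$ vanish by linearity; mixed and pure-time components are computed algebraically from $L\srd_\om u=0$ differentiated once more). With this bound inserted, set $\partial_\mu\srd_\om^2 u = c'_\mu + Y_\mu$ where $c'_0 = 0$, $c'_1 = -\cos\om$, $c'_2 = -\sin\om$ (again spacetime-constant); use $Y_\mu|_{t=0} = O(\ep)$ (spatial components exact from the linear initial data $\srd_\om^2 u|_{\Sigma_0} = -x^1\cos\om - x^2\sin\om$) and run Grönwall one last time to obtain \eqref{eq:rds2rdu}.

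\textbf{Main obstacle.} The auxiliary pointwise bound $|\partial^2\srd_\om u|\ls \ep$ is the only nontrivial step: without it the commutator term $g^{\alpha\beta}(\partial_\alpha\partial_\mu\srd_\om u)\partial_\beta\srd_\om u$ is $O(1)$ and the Grönwall scheme fails. The bound is available precisely because Proposition~\ref{prop:construct.dust} furnishes sufficiently many derivatives of $g$ and of $u$ uniformly on $K$ to feed into this auxiliary estimate.
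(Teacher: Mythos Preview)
Your approach is essentially the same as the paper's: differentiate the eikonal equation, obtain transport equations along $L$ for $\partial_\sigma\srd_\om u$ and $\partial_\sigma\srd_\om^2 u$, and integrate (you phrase it as Gr\"onwall, the paper as a bootstrap with $\ep^{3/4}$ thresholds, but these are equivalent here since $t\leq 1$). The paper simply writes ``the proof of \eqref{eq:rds2rdu} is similar, except that we differentiate the eikonal equation once more with $\srd_\om$; we omit the details,'' so you have actually gone further than the paper in isolating the commutator term $g^{\alpha\beta}(\partial_\alpha\partial_\mu\srd_\om u)\partial_\beta\srd_\om u$ and supplying the auxiliary bound $|\partial^2\srd_\om u|\ls\ep$ via one more differentiation of the transport equation --- this is indeed what is needed to close the argument, and your resolution is correct.
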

\begin{proof}
We will prove the desired bounds by a bootstrap argument, assume that 
\begin{align}
\label{eq:rdsrdu.BA} |\rd_1 \srd_\om u (t,x,\om) + \sin \om |,\, |\rd_2 \srd_\om u (t,x,\om) - \cos \om |,\, | (e_0 \srd_\om u)(t,x,\om) | \leq \ep^{\f 34},\\
\label{eq:rds2rdu.BA} |\rd_1 \srd_\om^2 u (t,x,\om) + \cos \om |,\, |\rd_2 \srd^2_\om u (t,x,\om) + \sin \om |,\, | (e_0 \srd^2_\om u)(t,x,\om) | \leq \ep^{\f 34}.
\end{align}

We first prove \eqref{eq:rdsrdu}. Differentiating the eikonal equation by $\srd_\om \rd_\sigma$, we obtain
\begin{equation}\label{eq:transport.for.rdsrdu}
2\gi^{\alp\bt} \rd_\alp u \rd^2_{\bt \sigma} (\srd_\om u) + (\rd_\sigma \gi^{\alp\bt}) \rd_\alp u \rd_\bt (\srd_\om u) + 2\gi^{\alp\bt} \rd^2_{\alp \sigma} u \rd_\bt (\srd_\om u) = 0.
\end{equation}

Then by the estimates for $u$ in \eqref{eq:u.est.bkgd.1}--\eqref{eq:u.est.bkgd.2}, $g$ in \eqref{eq:g.est.bkgd.1}--\eqref{eq:g.est.bkgd.4} (with $C_h \ls \ep$, see Corollary~\ref{lwp.small}) (and Sobolev embedding) and the bootstrap assumptions \eqref{eq:rdsrdu.BA}--\eqref{eq:rds2rdu.BA}, we have $| (\rd_\sigma \gi^{\alp\bt}) \rd_\alp u \rd_\bt (\srd_\om u) |\ls \ep$ and $|\gi^{\alp\bt} \rd^2_{\alp \sigma} u \rd_\bt (\srd_\om u)| \ls \ep$. Therefore, integrating the transport equation \eqref{eq:transport.for.rdsrdu}, we obtain that $\rd_\sigma \srd_\om u$ is $O(\ep)$ close to its initial value, which gives \eqref{eq:rdsrdu} and improves the bootstrap assumption \eqref{eq:rdsrdu.BA}.

The proof of \eqref{eq:rds2rdu} is similar, except that we differentiate the eikonal equation once more with $\srd_\om$; we omit the details. \qedhere

\end{proof}


\begin{lemma}\label{lem:LL'}
For any $\om, \om' \in \mathbb S^1$, the following holds uniformly in $(t,x) \in K$:
$$-10 |\om - \om'|^2 \leq \gi^{\alp\bt} \rd_\alp u(t,x,\om) \rd_\bt u(t,x,\om') \leq -\f 1{10} |\om - \om'|^2,$$
where $|\om - \om'|$ is to be understood mod $2\pi$. 
\end{lemma}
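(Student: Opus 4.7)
The key observation I would exploit is that by the eikonal equation \eqref{back.with.omega.6}, both $du(\om)$ and $du(\om')$ are null with respect to $g^{-1}$, so the polarization identity gives
\begin{equation*}
g^{-1}(du(\om), du(\om')) = -\tfrac{1}{2}\, g^{-1}(du(\om) - du(\om'), du(\om) - du(\om')).
\end{equation*}
I would then write $du(\om) - du(\om') = \int_{\om'}^{\om} d(\srd_s u(s))\, ds$ and expand by bilinearity to obtain
\begin{equation*}
g^{-1}(du(\om), du(\om')) = -\tfrac{1}{2}\int_{\om'}^{\om}\int_{\om'}^{\om} g^{-1}(d\srd_s u(s), d\srd_r u(r))\, dr\, ds,
\end{equation*}
which reduces matters to a pointwise estimate on the integrand.

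The second step is to show pointwise on $K$ that $g^{-1}(d\srd_s u(s), d\srd_r u(r)) = \cos(s-r) + O(\ep)$. By Lemma~\ref{lem:rdsrdu}, the spatial gradient satisfies $\nab\srd_s u(s) = (-\sin s, \cos s) + O(\ep)$ and the time component satisfies $e_0 \srd_s u(s) = O(\ep)$. Substituting into the elliptic gauge expression \eqref{g.inverse} for $g^{-1}$, and using that $\gamma$, $\beta$, and $n-1$ are $O(\ep)$ (from Corollary~\ref{lwp.small} combined with Sobolev embedding), a direct computation yields the claimed expansion uniformly in $(t,x)\in K$ and $s,r \in \m S^1$.

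From there, the elementary identity $\int_{\om'}^{\om}\int_{\om'}^{\om}\cos(s-r)\, dr\, ds = 4\sin^2\frac{\om-\om'}{2}$ together with the bounds $\tfrac{s^2}{\pi^2} \leq \sin^2\tfrac{s}{2} \leq \tfrac{s^2}{4}$ for $|s|\leq \pi$ produces
\begin{equation*}
g^{-1}(du(\om), du(\om')) = -2\sin^2\tfrac{\om-\om'}{2} + O(\ep\,|\om - \om'|^2).
\end{equation*}
Taking $\ep$ sufficiently small then yields the stated two-sided bound, with plenty of room in the constants $10$ and $\tfrac{1}{10}$.

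The main obstacle this strategy addresses is that a naive approach of estimating $du(\om)$ and $du(\om')$ individually and then computing their $g^{-1}$-inner product would fail whenever $|\om - \om'| \lesssim \sqrt{\ep}$: the pointwise $O(\ep)$ errors on each factor would swamp the quadratic main term of size $|\om - \om'|^2$. The polarization identity above circumvents this by extracting the quadratic structure automatically from the eikonal equation, so that only derivative estimates for $\srd_\om u$ (supplied by Lemma~\ref{lem:rdsrdu}) are actually required, and these are precisely the estimates whose pointwise errors do not degenerate as $\om \to \om'$.
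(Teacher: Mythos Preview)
Your proof is correct and takes a genuinely different route from the paper's. The paper proceeds by a case split: for $|\om-\om'| \geq \tfrac{\pi}{4}$ it computes $g^{-1}(du(\om),du(\om')) = e^{-2\gamma}(-1+\cos(\om-\om')) + O(\ep)$ directly from the estimates \eqref{eq:u.est.bkgd.1}--\eqref{eq:u.est.bkgd.2}, which suffices since the main term is then bounded away from zero; for $|\om-\om'| < \tfrac{\pi}{4}$ it differentiates twice in $\om'$ to obtain $\srd_{\om'}^2[g^{-1}(du(\om),du(\om'))] = -e^{-2\gamma}\cos(\om-\om') + O(\ep)$ via \eqref{eq:rds2rdu}, and then integrates from $\om'=\om$ using the vanishing initial conditions furnished by the eikonal equation and its first $\om$-derivative. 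Your polarization identity extracts this same quadratic vanishing directly from the nullity of $du(\om)$ and $du(\om')$, and the double integral representation replaces the two-step Taylor integration, handling all $|\om-\om'|$ uniformly. A small bonus of your approach is that it only invokes the first-derivative estimate \eqref{eq:rdsrdu} from Lemma~\ref{lem:rdsrdu}, whereas the paper also needs the second-derivative estimate \eqref{eq:rds2rdu}; on the other hand, the paper's case-split argument is perhaps more transparent about where the $O(\ep)$ error becomes dangerous.
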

\begin{proof}
We will suppress the explicit dependence on $(t,x)$. We first compute using \eqref{eq:u.est.bkgd.1}--\eqref{eq:u.est.bkgd.2} (with $C_h \ls \ep$, see Corollary~\ref{lwp.small}) that
\begin{equation}
\begin{split}
&\: \gi^{\alp\bt} \rd_\alp u(\om) \rd_\bt u(\om') = - \f 1{n^2} (e_0 u)(\om) (e_0 u)(\om') + e^{-2\gamma} \de^{ij} (\rd_i u)(\om) (\rd_j u)(\om') \\
= &\: e^{-2\gamma} (-1 + \cos \om \cos \om' + \sin \om \sin \om') + O(\ep) \\
= &\: e^{-2\gamma} (-1 + \cos (\om - \om')) + O(\ep).
\end{split}
\end{equation}
This already implies the desired result when, say, $|\om - \om'| \geq \f{\pi}4$.

For the case $|\om - \om'| <\f{\pi}{4}$, we need to compute the derivative. We compute
\begin{equation}\label{eq:srdom'guu}
\begin{split}
&\: \srd_{\om'}^2 [\gi^{\alp\bt} \rd_\alp u(\om) \rd_\bt u(\om')] = \gi^{\alp\bt} \rd_\alp u(\om) \srd_{\om'}^2 \rd_\bt u(\om') \\
= &\: - \f 1{n^2} (e_0 u)(\om) (e_0 \srd_{\om'}^2 u)(\om') + e^{-2\gamma} \de^{ij} (\rd_i u)(\om) (\rd_j \srd_{\om'}^2 u)(\om') \\
= &\: e^{-2\gamma} (- \cos \om \cos \om' - \sin \om \sin \om') + O(\ep) = - e^{-2\gamma} \cos (\om - \om') + O(\ep),
\end{split}
\end{equation}
where we have used \eqref{eq:u.est.bkgd.1}--\eqref{eq:u.est.bkgd.2} and Lemma~\ref{lem:rdsrdu}.

Integrating \eqref{eq:srdom'guu} with the initial condition $\gi^{\alp\bt} \rd_\alp u(\om) \rd_\bt u(\om) = 0$ and $\gi^{\alp\bt} \rd_\alp u(\om) \rd_\bt \srd_\om u(\om) = 0$  yields the desired conclusion when $|\om - \om'| < \f{\pi}4$. \qedhere
\end{proof}

The following is an immediate consequence of Lemma~\ref{lem:LL'} (after recalling $\om_\bA^{(N)} = \f{2\pi \bA}{N}$) and will be the main angular separation result that we use later in the paper.
\begin{cor}\label{cor:null.adapted}
Fix $N \in \mathbb Z_{>0}$ and let $(g^{(N)},U^{(N)},f^{(N)},u^{(N)})$ be the solution to \eqref{back.with.omega} as in Proposition~\ref{prop:construct.dust}. Denote $u_{\bA}(t,x) = u^{(N)}(t,x,\om_\bA^{(N)})$. Then the following holds uniformly in $K$:
$$\min_{\bA \neq \bB} \Big(- \gi^{\alp\bt} \rd_\alp u_\bA  \rd_\bt u_\bB \Big) \gtrsim N^{-2}.$$
\end{cor}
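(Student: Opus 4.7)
The plan is to directly apply Lemma~\ref{lem:LL'} with $\om = \om_\bA^{(N)}$ and $\om' = \om_\bB^{(N)}$. This immediately gives
\[
- \gi^{\alp\bt} \rd_\alp u_\bA \rd_\bt u_\bB \geq \tfrac{1}{10} |\om_\bA^{(N)} - \om_\bB^{(N)}|^2
\]
uniformly on $K$, where the distance is measured modulo $2\pi$.

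The only remaining point is a combinatorial observation about the choice of angles. Since $\om_\bA^{(N)} = \f{2\pi \bA}{N}$ for $\bA = 0, 1, \ldots, N-1$, for any $\bA \neq \bB$ the mod-$2\pi$ distance $|\om_\bA^{(N)} - \om_\bB^{(N)}|$ is a positive integer multiple of $\f{2\pi}{N}$, and thus bounded below by $\f{2\pi}{N}$. Substituting yields
\[
- \gi^{\alp\bt} \rd_\alp u_\bA \rd_\bt u_\bB \;\geq\; \tfrac{1}{10} \cdot \tfrac{4\pi^2}{N^2} \;\gtrsim\; N^{-2},
\]
and taking the minimum over $\bA \neq \bB$ gives the claim.

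There is essentially no obstacle here: all the analytic work has already been done in Lemma~\ref{lem:LL'} (where the closeness of $u(\cdot,\om)$ to the flat null function $t + x^1\cos\om + x^2\sin\om$ was used to reduce the quadratic form in $\om - \om'$ to its Minkowski-space value up to $O(\ep)$ errors), and the present corollary is just the specialization to the equispaced angles $\om_\bA^{(N)}$ plus the elementary pigeonhole-type lower bound $\f{2\pi}{N}$ on their mutual separation.
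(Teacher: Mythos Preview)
Your proof is correct and takes essentially the same approach as the paper: the paper simply states that the corollary is an immediate consequence of Lemma~\ref{lem:LL'} after recalling $\om_\bA^{(N)} = \f{2\pi \bA}{N}$, which is exactly what you do.
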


\subsection{Angularly separated, spatially adapted and null adapted eikonal functions}\label{sec.prepare}


In order to control the interaction of different waves, we need to set up our eikonal functions to be spatially adapted and null adapted. We now introduce these notions. For the definitions (and the remainder of the paper), $\nab$ denotes the Euclidean gradient, and $|\nab u_\bA|^2 = \sum_{i=1}^2 (\rd_i u_\bA)^2$.
\begin{df}\label{def.spatial}
	We say that a set of eikonal functions $\{u_{\bA}\}_{\bA\in \mathcal A}$ is {\bf spatially adapted} if  
	\begin{enumerate}
		\item $|\nab u_{\bA}|> 1, \quad \forall \bA\in\mathcal A$,
		\item $|\nab (u_{\bA_1}\pm u_{\bA_2})|> 1 \quad \forall \bA_1, \bA_2\in \mathcal A,\, \bA_1\neq \bA_2$,
		\item $|\nab (u_{\bA_1}\pm 2u_{\bA_2})|>1, \quad \forall \bA_1, \bA_2\in \mathcal A,\, \bA_1\neq \bA_2$,
		\item $|\nab (u_{\bA_1}\pm u_{\bA_2}\pm u_{\bA_3})|> 1, \quad \forall \bA_1, \bA_2,\bA_3\in \mathcal A,\, \bA_1\neq \bA_2,\bA_3, \bA_1\neq \bA_2$.
	\end{enumerate}
\end{df}

\begin{df}\label{def.null}
	We say that a set of eikonal functions $\{u_{\bA}\}_{\bA\in \mathcal A}$ is {\bf null adapted} if there exists a constant $\eta>0$ such that for any choice of the signs $\pm$, $\pm_1$, $\pm_2$, it holds that
	\begin{enumerate}
		\item $|(g^{-1})^{\alp\bt} \rd_\alp (u_{\bA_1}\pm u_{\bA_2})\rd_\beta(u_{\bA_1}\pm u_{\bA_2}))|> 1, \quad \forall \bA_1, \bA_2\in \mathcal A,\, \bA_1\neq \bA_2$,
		\item $|(g^{-1})^{\alp\bt} \rd_\alp (u_{\bA_1}\pm 2 u_{\bA_2})\rd_\beta(u_{\bA_1}\pm 2 u_{\bA_2}))|> 1, \quad \forall \bA_1, \bA_2\in \mathcal A,\, \bA_1\neq \bA_2$,
		\item $|(g^{-1})^{\alp\bt} \rd_\alp (u_{\bA_1}\pm_1 u_{\bA_2}\pm_2 u_{\bA_3})\rd_\beta(u_{\bA_1}\pm_1 u_{\bA_2}\pm_2 u_{\bA_3}))|> 1, \quad \forall \bA_1, \bA_2,\bA_3\in \mathcal A$ all different.
				\item $|(g^{-1})^{\alp\bt} \rd_\alp u_{\bA_1}\rd_\beta(u_{\bA_2}\pm u_{\bA_3}))|>1, \quad \forall \bA_1, \bA_2, \bA_3\in \mathcal A,\, \bA_2\neq \bA_3$,
		\item $|(g^{-1})^{\alp\bt} \rd_\alp u_{\bA_1}\rd_\beta(u_{\bA_2}\pm 2u_{\bA_3}))|> 1 \quad \forall \bA_1, \bA_2, \bA_3\in \mathcal A,\, \bA_2\neq \bA_3$,
		\item $|(g^{-1})^{\alp\bt} \rd_\alp u_{\bA_1}\rd_\beta( u_{\bA_2}\pm_1 u_{\bA_3} \pm_2 u_{\bA_4})|> 1 \quad \forall \bA_1, \bA_2,\bA_3,\bA_4\in \mathcal A$, $\bA_2 \neq \bA_3 \neq \bA_4 \neq \bA_2$.
	\end{enumerate}
\end{df}

We argue as \cite{HL.HF}, where we start with a family of ``angularly separated'' eikonal functions and show that after rescaling, the rescaled eikonal functions are both spatially adapted and null adpated. We recall the notion of angular separation from \cite{HL.HF}.
\begin{df}\label{def:angular.separation}
We say that $\{u_\bA\}_{\bA \in \mathcal A}$ are \textbf{angularly separated} if there exists $\eta' \in (0,1)$ such that 
$$\f{\de^{ij} \rd_i u_\bA \rd_j u_\bB }{|\nabla u_\bA| |\nabla u_\bB|} < 1 - \eta'$$
whenever $\bA \neq \bB$.
\end{df}

\begin{lm}\label{lem:angular.separation}
Define $\{u_\bA\}_{\bA=0}^{N-1}$ as in Corollary~\ref{cor:null.adapted}. Then $\{u_\bA\}_{\bA=0}^{N-1}$ are angularly separated on the set $[0,1]\times B(0,2R+4)$ with constant $\eta'$ in Definition~\ref{def:angular.separation} bounded below by $\eta' \gtrsim \f 1{N^2}$.
\end{lm}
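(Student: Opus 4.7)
My plan is to derive the Euclidean angular separation directly from the Lorentzian inner product lower bound $-\gi^{\alp\bt} \rd_\alp u_\bA \rd_\bt u_\bB \gtrsim N^{-2}$ furnished by Corollary~\ref{cor:null.adapted}, by exploiting the fact that each $u_{\bA}$ is null with respect to $g$. For a pair of null covectors, the Lorentzian inner product captures exactly the Euclidean angular deviation, so the bound transfers cleanly.

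Concretely, using the explicit form of $\gi$ in \eqref{g.inverse},
\[
\gi^{\alp\bt} \rd_\alp u_\bA \rd_\bt u_\bB = -\f{1}{n^2}(e_0 u_\bA)(e_0 u_\bB) + e^{-2\gamma} \de^{ij} \rd_i u_\bA \rd_j u_\bB,
\]
and the eikonal equation $\gi^{\alp\bt} \rd_\alp u_\bA \rd_\bt u_\bA = 0$ for each $u_{\bA}$ yields $\f{e_0 u_\bA}{n} = e^{-\gamma} |\nab u_\bA|$, with the positive sign determined by $\rd_t u \restriction_{\Sigma_0} > 0$, the smallness of $\bt$ via \eqref{eq:g.est.bkgd.4}, and continuity. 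Substituting yields the central identity
\[
-\gi^{\alp\bt} \rd_\alp u_\bA \rd_\bt u_\bB = e^{-2\gamma}\left(|\nab u_\bA||\nab u_\bB| - \de^{ij} \rd_i u_\bA \rd_j u_\bB\right).
\]
Combining with Corollary~\ref{cor:null.adapted} and using $e^{-2\gamma} \sim 1$ together with $|\nab u_\bA|, |\nab u_\bB|$ uniformly close to $1$ (from \eqref{eq:u.est.bkgd.1}, \eqref{eq:g.est.bkgd.1} plus Sobolev embedding on bounded sets), I divide by $|\nab u_\bA||\nab u_\bB|$ to obtain
\[
1 - \f{\de^{ij} \rd_i u_\bA \rd_j u_\bB}{|\nab u_\bA||\nab u_\bB|} \gtrsim N^{-2},
\]
which is the desired angular separation with $\eta' \gtrsim N^{-2}$.

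To extend the conclusion from the set $K$ appearing in Corollary~\ref{cor:null.adapted} to the slightly larger set $[0,1]\times B(0, 2R+4)$, I observe that the proofs of Lemmas~\ref{lem:srdu}, \ref{lem:rdsrdu}, and \ref{lem:LL'} depend only on the global weighted estimates for $u$ and $g$ in Theorem~\ref{lwp}, which yield uniform pointwise smallness on any bounded set (at worst with a harmless factor $\langle x \rangle^{\alp_0} \ls 1$). These proofs therefore extend verbatim to $[0,1]\times B(0, 2R+4)$, and Corollary~\ref{cor:null.adapted} likewise extends. I do not anticipate a genuine obstacle: the lemma is essentially an algebraic reformulation of the already-established null inner product bound, with the only care needed being to verify that the sign choice and closeness-to-Minkowski estimates remain valid on the slightly enlarged bounded region.
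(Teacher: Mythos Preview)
Your proof is correct and follows essentially the same route as the paper: both use the eikonal equation to write $e_0 u_\bA = n e^{-\gamma}|\nabla u_\bA|$, substitute into the Lorentzian inner product to obtain the identity $-(g^{-1})^{\alp\bt}\rd_\alp u_\bA \rd_\bt u_\bB = e^{-2\gamma}(|\nabla u_\bA||\nabla u_\bB| - \nabla u_\bA \cdot \nabla u_\bB)$, and then divide through using Corollary~\ref{cor:null.adapted} and the bound $e^{2\gamma}\gtrsim 1$ on compact sets. Your additional remarks on the sign choice and on extending from $K$ to $[0,1]\times B(0,2R+4)$ are reasonable elaborations that the paper leaves implicit.
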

\begin{proof}
Since $u_\bA$ is null, $e_0 u_\bA = n e^{-\gamma} |\nabla u_\bA|$. Similarly for $u_\bB$. Then 
$$-(g^{-1})^{\alp\bt} \rd_\alp u_\bA \rd_\bt u_\bB = \f 1{n^2} (e_0 u_\bA)(e_0 u_\bB) - e^{-2\gamma} \nabla u_\bA \cdot \nabla u_\bB = e^{-2\gamma} (|\nabla u_\bA| |\nabla u_\bB| - \nabla u_\bA \cdot \nabla u_\bB).$$ 
Dividing by $e^{-2\gamma} |\nabla u_\bA| |\nabla u_\bB|$ throughout, using the estimate in Corollary~\ref{cor:null.adapted} and recalling that $e^{2\gamma} \gtrsim 1$ on a spatially compact set, we obtain the desired bound. \qedhere
\end{proof}

\begin{lm}\label{lemma.adapt}
	Fix $N \in \mathbb Z_{>0}$ and let $(g^{(N)},U^{(N)},f^{\phi,(N)},f^{\varpi,(N)},u^{(N)})$ be the solution as in Proposition~\ref{prop:construct.dust}. Denote $u_{\bA}(t,x) = u^{(N)}(t,x,\om_\bA^{(N)})$ as in Corollary~\ref{cor:null.adapted}.
	Then there exists a collection of positive constants $\{a_\bA\}_{\bA=0}^{N-1}$ such that for $u'_\bA=a_\bA u_\bA$, $\{u'_{\bA}\}_{\bA\in \mathcal A}$ is spatially adapted and null adapted on the set $[0,1]\times B(0,2R+4)$.
Moreover, $\{a_\bA\}_{\bA=0}^{N-1}$ can be chosen so that $a_\bA \geq 1$ and
$$|a_\bA|\leq C(N)$$
for $C(N)$ a large enough constant depending on $N$.
\end{lm}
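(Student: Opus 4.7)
The plan is to follow the strategy of \cite{HL.HF} and set $a_\bA := M^{\bA+1}$, where $M = M(N) \geq 3$ is a constant depending polynomially on $N$ to be fixed at the end. Under this choice, $1 \leq a_\bA \leq M^N =: C(N)$, giving the size bound. The key observation is that exponential spacing ensures that every linear combination of the $u_\bA$'s appearing in Definitions~\ref{def.spatial}--\ref{def.null} has one coefficient (or pairwise product of coefficients in the null case) that strictly dominates all the others by a factor of at least $M$. Non-cancellation then follows by the triangle inequality in the spatial case and by expanding the quadratic form in the null case.

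The two inputs are: (i) from Proposition~\ref{prop:construct.dust}(4c), $|\nabla u_\bA| = 1 + O(\ep)$ and $|\gi^{\alp\bt}\rd_\alp u_\bA \rd_\bt u_\bB| \ls 1$ uniformly on $[0,1]\times B(0,2R+4)$; and (ii) from Corollary~\ref{cor:null.adapted} (equivalently, Lemma~\ref{lem:angular.separation} combined with the identity $-\gi^{\alp\bt}\rd_\alp u_\bA \rd_\bt u_\bB = e^{-2\gamma}(|\nabla u_\bA||\nabla u_\bB| - \nabla u_\bA \cdot \nabla u_\bB)$ valid for null $u_\bA$, $u_\bB$), that $-\gi^{\alp\bt}\rd_\alp u_\bA \rd_\bt u_\bB \geq c N^{-2}$ for some $c>0$ whenever $\bA \neq \bB$ on the relevant region. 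For the spatial conditions, an arbitrary combination $v = \sum_i \sigma_i a_{\bA_i} u_{\bA_i}$ (with $\sigma_i \in \{\pm 1, \pm 2\}$ and distinctness of the $\bA_i$ as required) has a unique maximal index $\bA_{i^*}$ for which $a_{\bA_{i^*}} \geq M a_{\bA_j}$ for all $j\neq i^*$, so by triangle inequality
\begin{equation*}
|\nabla v| \geq |\sigma_{i^*}| a_{\bA_{i^*}}|\nabla u_{\bA_{i^*}}| - \sum_{j\neq i^*} |\sigma_j| a_{\bA_j}|\nabla u_{\bA_j}| \geq a_{\bA_{i^*}}\bigl(1 - O(\ep) - O(M^{-1})\bigr) \geq M/2 > 1.
\end{equation*}

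For the null conditions, expand
\begin{equation*}
\gi^{\alp\bt}\rd_\alp v\, \rd_\bt v = \sum_{i\neq j}\sigma_i\sigma_j a_{\bA_i}a_{\bA_j}\,\gi^{\alp\bt}\rd_\alp u_{\bA_i}\rd_\bt u_{\bA_j},
\end{equation*}
where the diagonal $i=j$ terms vanish by the eikonal equation. The unique pair $(i^*,j^*)$ maximizing $a_{\bA_i}a_{\bA_j}$ exceeds every other pair-product by a factor of at least $M$; its contribution is $\geq 2 c a_{\bA_{i^*}}a_{\bA_{j^*}} N^{-2} \geq 2c M^3 N^{-2}$, while the total of the subdominant contributions is $O(a_{\bA_{i^*}}a_{\bA_{j^*}}/M)$. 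For $M \geq C_0 N^2$ with $C_0$ large enough, the dominant term wins, yielding a lower bound $\gtrsim M^3 N^{-2} > 1$. Conditions (4)--(6) of Definition~\ref{def.null}, which concern $\gi^{\alp\bt}\rd_\alp u_{\bA_1}\rd_\bt v'$, are treated identically: any $u_{\bA_1}$--$u_{\bA_1}$ contribution vanishes by the eikonal equation, and the remaining sum is again controlled by the pair of largest $a$-product. Finally, fixing $M = M(N)$ to be a sufficiently large polynomial in $N$ simultaneously satisfies all of the above. The proof is essentially combinatorial: the only ``work'' is to enumerate the finitely many cases of Definitions~\ref{def.spatial}--\ref{def.null}, each of which collapses to the single dominant-term estimate, so no substantial obstacle arises beyond bookkeeping.
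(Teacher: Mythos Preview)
Your proposal is correct and follows essentially the same approach as the paper. The paper's proof is a terse reference to \cite[Lemma~4.8]{HL.HF}, specifying the inductive choice $a_0=1$, $a_{\bA}=cN^2 a_{\bA-1}$; your geometric choice $a_{\bA}=M^{\bA+1}$ with $M\sim N^2$ is the same idea (shifted by one power of $M$), and your dominant-term argument using the eikonal cancellation on the diagonal and the $N^{-2}$ lower bound from Corollary~\ref{cor:null.adapted} is exactly the mechanism the paper is invoking.
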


\begin{proof}
	The proof is an easy adaptation of the proof of Lemma 4.8 in \cite{HL.HF}. 
	
	More precisely, it is done by ordering $\bA_0,\bA_1,\cdots,\bA_{N-1}$, and choosing the $a_\bA$ bigger and bigger to ensure that there is a term in each expression which dominate the other. This is made possible thanks to the hypothesis of angular separation (see Lemma~\ref{lem:angular.separation}). Indeed, following \cite{HL.HF} and using Corollary~\ref{cor:null.adapted}, the choice (defined inductively)
	\begin{equation}\label{eq:choice.of.constants}
	a_{0} = 1,\quad a_{\bA} = c N^2 a_{\bA-1},\quad \bA = 1,\cdots, N-1.
	\end{equation}
	would work for some appropriately chosen $c \geq 1$. This proves the lemma. \qedhere
	
\end{proof}

\textbf{From now on, we define $a_\bA$ according to \eqref{eq:choice.of.constants} so that Lemma~\ref{lemma.adapt} holds.}

\section{Main approximation theorem by high-frequency vacuum spacetimes}\label{sec:main.theorem}

We now turn to the heart of the paper, which is to approximate the solutions to the Einstein--null dust system, constructed in the previous section, by high-frequency vacuum solutions. The following is our main result on this approximation.

\begin{theorem}\label{thm:dust.by.vacuum}
Fix $R \geq 10$. There exist $\ep_2>0$ and $N_0 \in \mathbb Z_{>0}$ such that for any odd $N \geq N_0$, there exists $\lambda_0^{(N)}>0$ such that the following holds. 

Suppose $(\dot{U}(\cdot),\nabla U(\cdot), \breve f (\cdot,\omega), u(\cdot,\omega))\restriction_{\Sigma_0}$ satisfies the assumption (1)--(5) of Theorem~\ref{main.thm.2} and suppose $(g^{(N)},U^{(N)},f^{\phi,(N)},f^{\varpi,(N)},u^{(N)})$ is the sequence of solutions to \eqref{back.with.omega} constructed in Proposition~\ref{prop:construct.dust}. Then if $\ep \leq \ep_2$ and $\lambda \in (0,\lambda_0^{(N)})$, there is a smooth solution $(g^{(N,\lambda)},U^{(N,\lambda)})$ to \eqref{sys} on $[0,1]\times \mathbb R^2$ so that the following bounds hold for all $t \in [0,1]$, with $B(N)$ depending only on $N$ and $R$:
\begin{subequations}
		\begin{empheq}{align}
\sum_{k\leq 5} \lambda^k \| g^{(N,\lambda)} - g^{(N)} \|_{H^k_{-\alp_0}(\Sigma_t)} + \sum_{k\leq 4} \lambda^{k+1} \| \rd_t (g^{(N,\lambda)} - g^{(N)}) \|_{H^k_{-\alp_0}(\Sigma_t)} \leq &\:B(N) \ep^2 \lambda^2,  \label{eq:dust.by.vacuum.est.1} \\
\sum_{k\leq 3} \lambda^k \Big\| U^{(N,\lambda)} - U^{(N)} - \lambda \sum_{\bA=0}^{N-1} (a_\bA^{(N)})^{-1} F^{U,(N)}_\bA \cos(\tfrac{a_\bA^{(N)} u^{(N,\lambda)}_\bA}{\lambda}) \Big\|_{H^{k+1}(\Sigma_t)} \leq &\: B(N) \ep \lambda^2, \label{eq:dust.by.vacuum.est.2} \\
\sum_{k\leq 3} \lambda^k \Big\| \rd_t U^{(N,\lambda)} - \rd_t U^{(N)} + \sum_{\bA=0}^{N-1} F^{U,(N)}_\bA \rd_t u^{(N,\lambda)}_\bA \sin(\tfrac{a_\bA u^{(N,\lambda)}_\bA}{\lambda}) \Big\|_{H^k(\Sigma_t)} \leq &\: B(N) \ep \lambda^2, \label{eq:dust.by.vacuum.est.3}
		\end{empheq}
\end{subequations}
for some phase function $u^{(N,\lambda)}_\bA$ satisfying the bound 
\begin{equation}
\sum_{k\leq 4} \lambda^k \| u^{(N,\lambda)}_\bA - u^{(N)}_\bA \|_{H^{k}(\Sigma_t \cap \mathrm{supp}(f^{U,(N)}_{\cdot}))} + \sum_{k\leq 3} \lambda^k \| \rd_t (u^{(N,\lambda)}_\bA - u^{(N)}_\bA) \|_{H^{k}(\Sigma_t \cap \mathrm{supp}(f^{U,(N)}_{\cdot}))} \leq B(N) \ep \lambda^2.
\end{equation}
Here, $a_\bA$ are constants as in Lemma~\ref{lemma.adapt}, and we have denoted $F^{U,(N)}_\bA  =(F^{\phi,(N)}_\bA , F^{\varpi,(N)}_\bA )$.
\end{theorem}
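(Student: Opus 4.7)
\medskip

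\noindent \textbf{Proof proposal for Theorem~\ref{thm:dust.by.vacuum}.} The strategy is to construct $(g^{(N,\lambda)},U^{(N,\lambda)})$ via a WKB-type ansatz whose leading oscillatory profile is built out of the null dust solution $(g^{(N)},U^{(N)},\{F^{\phi,(N)}_\bA,F^{\varpi,(N)}_\bA,u^{(N)}_\bA\})$ from Proposition~\ref{prop:construct.dust}, and then to solve a nonlinear system for the error terms via a bootstrap argument in the elliptic gauge. From now on we suppress the superscript $^{(N)}$ on the background quantities. The phase constants $a_\bA$ are fixed as in Lemma~\ref{lemma.adapt} so that $\{a_\bA u_\bA\}$ is spatially and null adapted; this is precisely what is needed so that interaction phases $a_\bA u_\bA \pm a_\bB u_\bB$ (and triple combinations) remain non-stationary and admit $\lambda$-gains whenever $\bA\neq\bB$.

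\medskip

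\noindent First, I would set up the full parametrix. For the wave maps, I would posit
\begin{equation*}
\phi = \phi_0 + \lambda \sum_\bA a_\bA^{-1} F^\phi_\bA \cos(\tfrac{a_\bA u_\bA}{\lambda}) + \lambda^2 \Phi_2[\phi_0,\varpi_0,\mfg_0,F^U_\bA,u_\bA] + \wht\phi,
\end{equation*}
and analogously for $\varpi$, where $\Phi_2$ collects all the second-order correctors of the three types in \eqref{eq:parametrix.intro.phi} (single-phase $\sin$, doubled-phase $\cos(2a_\bA u_\bA/\lambda)$, and cross-phase $\cos(\tfrac{a_\bA u_\bA \pm a_\bB u_\bB}{\lambda})$). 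The coefficients of $\Phi_2$ are chosen algebraically to cancel exactly the $O(\lambda^{-1})$ and $O(\lambda^0)$ oscillatory residues produced when the first-order ansatz is plugged into $\Box_g\phi + \tfrac12 e^{-4\phi} g^{-1}(\ud\varpi,\ud\varpi)$. The fact that $F^\phi_\bA,F^\varpi_\bA$ satisfy \eqref{back.without.omega.4}--\eqref{back.without.omega.5} is exactly what is needed to cancel the transport-equation residues at the coefficient of $\sin(\tfrac{a_\bA u_\bA}{\lambda})$ in the $\phi$-equation and $\varpi$-equation simultaneously. For the metric components, I use the elliptic equations \eqref{elliptic.1}--\eqref{elliptic.4} to produce $\mfg = \mfg_0 + \mfg_2 + \wht\mfg$ with $\mfg_2 = O(\lambda^2)$ algebraically determined by the wave parametrix (no $O(\lambda^3)$ correction is needed). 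For each eikonal function I write $u_\bA = u_\bA^0 + u_\bA^2 + \wht u_\bA$ where $u_\bA$ satisfies the \emph{nonlinear} eikonal equation $g^{\alpha\beta}\rd_\alpha u_\bA\rd_\beta u_\bA=0$ (i.e., I do not freeze it at the background); and I decompose the null expansion $\chi_\bA=\Box_g u_\bA = \chi_\bA^0+\chi_\bA^1+\wht\chi_\bA$, where $\chi_\bA^1$ captures the $O(\lambda)$ oscillation extracted from the Raychaudhuri equation.

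\medskip

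\noindent Next, I would prescribe initial data for the vacuum problem that is consistent with the parametrix up to a very small residue in $\lambda$, using an analogue of Lemma~\ref{lem:constraint.lower.bound} to solve the constraints (modulo small adjustments that tend to zero with $\lambda$). Crucially, the data for $u_\bA$ is slightly polished beyond plane level sets so that the initial $\wht\chi_\bA$ is $O(\lambda)$ with respect to $\lambda$ (this is the Section~\ref{sec:data.u} point). I then set up a bootstrap on $[0,T]$ (aiming at $T=1$) for $(\wht\phi,\wht\varpi,\wht{\mfg},\wht u_\bA,\wht\chi_\bA)$ using the nested constants $C(N)\ll C_b(N)\ll A(N)$ of Section~\ref{sec:intro.N.dependence}. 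Typical bootstrap assumptions take the form
\begin{equation*}
\sum_{k\leq 3}\lambda^k \|\rd\wht\phi\|_{H^k(\Sigma_t)} + \sum_{k\leq 3}\lambda^k\|\rd\wht\varpi\|_{H^k(\Sigma_t)} \leq 2\ep\, C_b(N)\,\lambda^2\, e^{A(N)t},
\end{equation*}
and analogously (with appropriate $N$- and $\lambda$-weights) for $\wht u_\bA$, $\wht\chi_\bA$ and the metric pieces. Before engaging the nonlinear machinery I would verify, using almost orthogonality of the phases, that the WKB-type main terms obey
\begin{equation*}
\|\rd \phi\|_{L^4(\Sigma_t)}+\|\rd\varpi\|_{L^4(\Sigma_t)}\leq C\ep,
\end{equation*}
with $C$ independent of $N$ (for $\lambda^{-1}$ sufficiently large in terms of $N$); this is what lets me invoke Touati's local existence Theorem~\ref{thm:Touati} to get a qualitative solution with $T$ a priori independent of $N$, which is needed to open the bootstrap.

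\medskip

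\noindent With the bootstrap assumptions in hand, the core of the argument is a set of a priori estimates, compartmentalized as in the outline Sections~\ref{sec:wave}--\ref{sec:chi}. For the waves $(\wht\phi,\wht\varpi)$, I would carry out a standard $H^k$ energy estimate using Lemma~\ref{lem:wave.operator}; the source terms are sorted into (i) error terms left over by truncating the parametrix at the second order, which are algebraically $O_N(\lambda^2)\cdot O(1)$, (ii) nonlinear couplings to $\wht\mfg,\wht u_\bA,\wht\chi_\bA$, and (iii) quadratic terms in the error, which come with extra smallness from $\lambda$ and therefore never dominate; integrating against $e^{A(N)t}$ produces the required factor $C(N)C_b(N)/A(N) \ll C_b(N)$, so the bootstrap closes in the waves. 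For $\wht u_\bA$ I would integrate the nonlinear eikonal equation along its own characteristics, and for $\wht\chi_\bA$ I would use the Raychaudhuri equation along the same characteristics, exploiting that its source is $O_N(\lambda)$ rather than $O_N(1)$ to get the improved estimate. For the metric I would invert the elliptic operators \eqref{elliptic.1}--\eqref{elliptic.4}; here I cannot absorb constants into a Gronwall factor, so I would use that $\Delta^{-1}:L^{4/3}_{r}\to W^{1,4}_{r-1}$ (with appropriate decay weights as in Section~\ref{sec:setup}) together with the $N$-independent smallness of $\|\rd\phi\|_{L^4}+\|\rd\varpi\|_{L^4}$ to obtain the required smallness factor, converting the metric estimates into $N$-independent smallness and closing the metric part of the bootstrap. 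Finally, I would verify that the resulting solution satisfies \eqref{eq:dust.by.vacuum.est.1}--\eqref{eq:dust.by.vacuum.est.3} by unpacking the parametrix and reading off the differences.

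\medskip

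\noindent The main obstacle is the bookkeeping of $N$ dependence in the nonlinear estimates. Even after choosing a hierarchy $C(N)\ll C_b(N)\ll A(N)$ and using the $e^{A(N)t}$ weight to absorb wave-energy nonlinear losses, the elliptic estimates for the metric have no such time integration available, and controlling the quadratic and cubic interactions of $\sum_\bA F^U_\bA \cos(\tfrac{a_\bA u_\bA}{\lambda})$ in $L^p$ norms uniformly in $N$ is delicate: it forces the dichotomy between $L^2$-type estimates (which carry a $\sqrt{N}$ loss through $\ell^1$--$\ell^2$ duality) and $L^4$-type estimates (which recover via the orthogonality of the phases). Making the $L^4\to W^{1,4}$ elliptic gain and the almost-orthogonality cancellations work together for \emph{all} the metric components, including the shift $\beta$ whose equation is a system, is the most technically demanding point, and the rest of the proof is a careful but essentially mechanical orchestration around it.
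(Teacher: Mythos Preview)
Your proposal is correct and follows essentially the same architecture as the paper's proof (Sections~\ref{sec:parametrix}--\ref{sec:together}): the same parametrix structure for $(\phi,\varpi,\mfg,u_\bA,\chi_\bA)$, the same bootstrap with exponentially growing $N$-dependent constants, the same use of almost orthogonality for $N$-independent $L^4$ smallness, and the same $L^{4/3}\to W^{1,4}$ elliptic gain to close the metric estimates without a time integral. One small correction: Touati's theorem does \emph{not} give a time of existence independent of $N$ or $\lambda$---it only relaxes the \emph{smallness} hypothesis to $L^4$, while the resulting $T$ still depends on $C_{high}$ and hence on $\lambda$; it is the bootstrap that extends the solution to $t=1$ (also, the paper actually needs a four-constant hierarchy $C(N)\ll C_b(N)\ll C_b'(N)\ll A(N)$, with the extra $C_b'(N)$ reserved for the time-differentiated metric quantities, which you will discover is forced once you track the $\rd_t\wht\mfg$ terms that appear in the wave equation for $\wht\phi$).
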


Notice that we have only given the estimates that are the easiest to state and that more precise bounds are available from (and indeed necessary for) the proof.

At this point, we show that our main theorem (Theorem~\ref{main.thm.2}) follows as a consequence of Theorem~\ref{thm:dust.by.vacuum}.
\begin{proof}[Proof of Theorem~\ref{main.thm.2} assuming Theorem~\ref{thm:dust.by.vacuum}] We choose our sequence $(g_i, U_i)$ to be $(g^{(N_i,\lambda_{N_i})}, U^{(N_i,\lambda_{N_i})})$ given by Theorem~\ref{thm:dust.by.vacuum}, with $N_i \geq N_0$ odd, $N_i \to \infty$, and $\lambda_{N_i} \in (0, \lambda_0^{(N_i)})$ chosen to be suitably small. We now show that this choice satisfies the conclusion of Theorem~\ref{main.thm.2}.

\pfstep{Step~1: $C^0$ convergence} For $g$, the $C^0_{\mathrm{loc}}$ convergence follows immediately from the estimates \eqref{eq:dust.by.vacuum.est.1} and the embedding $H^{2}_{-\alp_0} \hookrightarrow C^0_{\mathrm{loc}}$ in $2$ dimensions. For $\phi$, it follows from \eqref{eq:dust.by.vacuum.est.2}, \eqref{eq:dust.by.vacuum.est.3} and the bound $\| \cdot \|_{L^\i(\Sigma_t)} \ls \|\cdot \|_{L^2(\Sigma_t)}^{\f 12}\|\cdot \|_{H^2(\Sigma_t)}^{\f 12}$ in $2$ dimensions.

\pfstep{Step~2: Weak $W^{1,2}$ convergence} The estimates \eqref{eq:dust.by.vacuum.est.1} for the metric components (together with Sobolev embedding) are sufficient to show \emph{strong} convergence of the first derivatives of $g$. For $\rd_\mu\phi$, in order to check weak $L^2$ convergence, it suffices to check that for suitably small $\lambda_N \in (0, \lambda^{(N)}_0]$, 
\begin{equation}
\lim_{N\to \infty} \int_{\Sigma_t} \eta(x) \sum_{\bA=0}^{N-1} F^{\phi,(N)}_\bA(t,x) \rd_\mu u^{(N)}_\bA \sin(\tfrac{a_\bA^{(N)} u^{(N)}_\bA(t,x)}{\lambda_N}) \, \ud x = 0.
\end{equation}
This can in turn be established using 
$$\sin(\tfrac{a_\bA u^{(N)}_\bA(t,x)}{\lambda_N}) = -\f{\lambda^2 \Delta(\sin (\tfrac{a_\bA^{(N)} u^{(N)}_\bA(t,x)}{\lambda_N}))}{|a_\bA^{(N)}\nab u^{(N)}|^2 }+\f{\lambda a_\bA^{(N)} (\Delta u^{(N)}_\bA(t,x))\cos (\tfrac{a_\bA^{(N)} u^{(N)}_\bA(t,x)}{\lambda_N}))}{|a_\bA^{(N)}\nab u^{(N)}|^2 }$$
and integration by parts.

\pfstep{Step~3: Uniform boundedness up to $W^{1,4}$} As before, the estimate for $g$ is obviously strong enough. It thus suffices to bound $\sum_{\bA=0}^{N-1} F^{\phi,(N)}_\bA \rd_t u^{(N,\lambda)}_\bA \sin(\tfrac{a_\bA u^{(N,\lambda)}_\bA}{\lambda})$ uniformly in $L^4(\Sigma_t)$, which follows from Lemma~\ref{lem:orthogonality} and the estimates in Theorem~\ref{thm:dust.by.vacuum}. \qedhere

\end{proof}

It thus remains to prove Theorem~\ref{thm:dust.by.vacuum}. This will occupy the remainder of the paper; see Section~\ref{sec:together} for the conclusion of the proof. \textbf{For the rest of the paper, we will consider a solution to \eqref{back} given by Proposition~\ref{prop:construct.dust} for some odd $N \in \mathbb Z_{\geq 1}$. We will refer this as the \emph{background solution}. For suitably defined high-frequency perturbations, we will prove the estimates in Theorem~\ref{thm:dust.by.vacuum}.}

\textbf{In order to keep the notation lean, we will drop all the indices $^{(N)}$ associated with the background solution. Instead, we use $_0$ or $^0$ to denote the background solution, i.e., we write $g_0$, $\phi_0$, $u_{\bA}^0$, etc. Note that despite not writing $N$ explicitly, it is important to keep track of the dependence of the estimates on $N$.}

From now on, we use the convention that \textbf{$C>0$ will denote a constant which is independent of $N$, while $C(N) >0$ will denote a constant which is dependent on $N$.} These constants may change from line to line. We will also use the shorthand \textbf{$h_1 \ls h_2$ to denote $h_1 \leq C h_2$, where $C>0$ is a constant independent of $N$}, consistent with the convention above. (Later on we will also introduce $A(N)$, $C_b(N)$ and $C_b'(N)$; see Section~\ref{sec:bootstrap}.)

\section{The parametrix constuction}\label{sec:parametrix}

To approximate solutions to Einstein null dust system by solutions to Einstein vacuum equations, we want to construct well chosen high-frequency solutions, similarly to \cite{HL.HF}. We introduce an ansatz up to the second order in $\lambda$, in which all the amplitude functions are defined entirely in term of the background (which mean the solution to \eqref{back}), but where the phases $u_\bA$ are actual solutions to the eikonal equation corresponding to the metric of the solution to Einstein vacuum equations. 

For each of the quantities of interest, we will introduce a decomposition as a sum of a background term, some explicit highly oscillatory terms (with one or two powers of $\lambda$) which depend only on the background quantities, and an error term that is to be controlled with nonlinear estimates. To unify the notation, the error quantity will be denoted with a tilde, e.g., $\wht \phi$, $\wht \varpi$, $\wht \gamma$, etc.

Introduce the following decomposition of $\phi$ and $\varpi$:
\begin{equation}\label{eq:wave.para}
\phi = \phi_0 + \phi_1 + \phi_2 + \wht \phi,\quad \varpi = \varpi_0 + \varpi_1 + \varpi_2 + \wht \varpi, 
\end{equation}
where $(\phi_0,\varpi_0)$ are as in the background solutions, and 
\begin{subequations}\label{eq:wave.para.2}
		\begin{empheq}{align}
\phi_1 = &\: \lambda \sum_\bA  a_\bA^{-1}F^\phi_\bA \cos(\tfrac{a_\bA u_\bA}{\lambda})\\
\phi_2 = &\: \lambda^2 \Big( \sum_\bA a_\bA^{-1}\Big( F^{1,\phi}_\bA \sin(\tfrac{a_\bA u_\bA}{\lambda}) + F_\bA^{2,\phi}\cos(\tfrac{2a_\bA u_\bA}{\lambda})\Big)+\sum_{\pm,\bA,\bB:\bA\neq \bB} (\pm 1) F^\phi_{\bA,\bB} \cos (\tfrac{a_\bA u_\bA \pm a_\bB u_\bB}{\lambda }) \Big), \\
\varpi_1 = &\: \lambda \sum_\bA  a_\bA^{-1}F^\varpi_\bA \cos(\tfrac{a_\bA  u_\bA}{\lambda}),  \\
\varpi_2 = &\: \lambda^2 \Big( \sum_\bA a_\bA^{-1} \Big( F^{1,\varpi}_\bA \sin(\tfrac{a_\bA u_\bA}{\lambda}) +  F_\bA^{2,\varpi}\cos(\tfrac{2a_\bA u_\bA}{\lambda}) \Big) +\sum_{\pm,\bA,\bB:\bA\neq \bB} (\pm 1) F^\varpi_{\bA,\bB} \cos (\tfrac{a_\bA u_\bA \pm a_\bB u_\bB}{\lambda }) \Big),
		\end{empheq}
\end{subequations}
where $F_\bA^\phi$ and $F_\bA^\varpi$ are as in the background solution to \eqref{back}, while $F^{1,\phi}_\bA$, $F_\bA^{2,\phi}$, $F^\phi_{\bA,\bB}$ and their counterparts for $\varpi$ will be introduced in Section~\ref{sec.para.phi}.

For each metric component $\mfg$ (i.e., $\mfg\in \{\gamma,n,\beta^i\}$), introduce the following decomposition
\begin{equation}\label{eq:g.para}
\mfg=\mfg_0+\mfg_2+\wht{\mfg},
\end{equation}
where $\mfg_0$ is the corresponding metric component of the background solution and 
\begin{equation}\label{g1.def}
	\mfg_2= \lambda^2\Big( \sum_{\bA} \q G_{2,\bA}(\mfg) \cos(\tfrac{2a_\bA u_\bA}{\lambda})
	+\sum_{\bA} \q G_{1,\bA} (\mfg)\sin(\tfrac{a_\bA u_\bA}{\lambda})
	+\sum_{\pm, \bA,\bB: \bA \neq \bB} \q G_{\pm,\bA,\bB} (\mfg)\cos(\tfrac{a_\bA u_\bA \pm a_\bB u_{\gra B}}{\lambda})\Big)
\end{equation}
with $\q G_{2,\bA}(\mfg)$, etc.~to be introduced in Section~\ref{sec.para.g}.

Introduce the decomposition of the eikonal function $ u_\bA$ into
\begin{equation}\label{eq:u.para}
u_\bA = u^0_\bA + u_\bA^2+ \wht{u}_\bA,
\end{equation}
with $u^0_\bA$ the background eikonal function and $u^2_\bA$ is defined implicitly\footnote{The definition is implicit because $u_\bA$ on the right-hand side also depends on $u^2_\bA$. See Lemma~\ref{lem:u2.well.defined}.} by
\begin{equation}\label{u1.def}
	u^2_\bA = \lambda^2 a_{\bA}^{-1} \mathfrak v_\bA \sin(\tfrac{a_\bA u_\bA}{\lambda}),
\end{equation}
where $\mathfrak v_\bA$ is defined in Section \ref{sec.para.u}. 
 
In addition, $\chi_\bA = \Box_g u_\bA$ is decomposed into
\begin{equation}\label{eq:chi.para}
\chi_\bA = \chi_\bA^0 +\chi_\bA^1 +\wht \chi_\bA,
\end{equation}
with
\begin{equation}\label{chi1.def}
	\chi_\bA^1 = \lambda \Big(\sum_{\bB } \mathfrak X^{(\bA)}_{2,\bB} \sin ( \tfrac{2a_\bB u_\bB}{\lambda})
	+\sum_{\bB } \mathfrak X^{(\bA)}_{1,\bB} \cos ( \tfrac{a_\bB u_\bB}{\lambda})
	+ \sum_{\substack{\pm,\bB,\bC: \\\bB \neq \bA, \bC \neq \bA,\bB}}\mathfrak X^{(\bA)}_{\pm,\bB,\bC} \sin ( \tfrac{a_\bB u_\bB\pm a_\bC u_\bC}{\lambda})\Big).
\end{equation}
The terms $\mathfrak X^{(\bA)}$ will be defined in Section~\ref{sec.para.u}.

In the notations introduced above, the quantities $\phi_0,\varpi_0, \mfg_0, F_\bA^{\phi, \varpi}, u^0_\bA,\chi_\bA^0 $ are given by the solution of the background system \eqref{back}. The quantities $F_\bA^{\phi,\varpi,1,2}, F_{\bA,\bB}^{\phi, \varpi},\mfg_2,u_\bA^2,\chi_\bA^1$ are also background quantities in the sense that they can be computed directly from the background solution. The error quantities $\wht \phi$, $\wht \varpi$, $\wht \mfg$, $\wht u_\bA$ and $\wht{\chi}_\bA$ are coupled. They will be controlled using a bootstrap argument.

\subsection{Terms in the parametrix for $\phi$ and $\varpi$}\label{sec.para.phi}
Recall from \eqref{eq:wave.para}, \eqref{eq:wave.para.2} that we need to prescribe $F_\bA^{1,\phi}$, $F_\bA^{1,\varpi}$, $F^{2,\phi}_\bA$, $F^{2,\varpi}_\bA$ and $F^{\phi}_{\bA,\bB}$, $F^{\varpi}_{\bA,\bB}$ (for $\bA \neq \bB$). The first four quantities are prescribed by solving transport equations:
\begin{subequations}\label{eq:F1.F2.transport}
		\begin{empheq}[left=\empheqlbrace]{align}
	\label{whtfi}
	2 L_{\bA}^0  F^{1,\phi}_\bA + \chi_{\bA}^0 F^{1,\phi}_\bA+e^{-4\phi_0}(g_0^{-1})^{\alpha \beta}\partial_\alpha \varpi_0\partial_\beta u^0_\bA F_\bA^{1,\varpi}
	=&\:  - a_\bA^{-1} \Box_{g_0} F^\phi_\bA +A_\bA^{1,\phi}, \\
	\label{whtpi}
	2L_{\bA}^0  F^{1,\varpi}_\bA + \chi_{\bA}^0 F^{1,\varpi}_\bA -4g_0^{-1}(\ud \varpi_0, \ud u_\bA^0) F^{1,\phi}_\bA -4g_0^{-1}(\ud \phi_0, \ud u_\bA^0)  F^{1,\varpi}_\bA
		= &\: - a_\bA^{-1} \Box_{g_0} F^\varpi_\bA +A_\bA^{1,\varpi}, \\
	\label{whtfi2}2 L^0_\bA \partial_\beta  F^{2,\phi}_\bA + \chi_{\bA}^0 F^{1,\phi}_\bA F_\bA^{2,\phi} +e^{-4\phi_0}g_0^{-1}(\ud \varpi_0, \ud u^0_\bA) F_\bA^{2,\varpi}
	=&\: A_\bA^{2,\phi}, \\
	\label{whtpi2}2L_\bA^0 \partial_\beta F^{2,\varpi}_\bA + \chi_{\bA}^0 F^{1,\phi}_\bA F^{2,\varpi}_\bA - 4g^{-1}_0(\ud \varpi, \ud u^0_\bA) F^{2,\phi}_\bA -4 g_0^{-1}(\ud \phi, \ud u^0_\bA)  F^{2,\varpi}_\bA
	=&\: A_\bA^{2,\varpi},
	\end{empheq}
\end{subequations}
where
$L_{\bA}^0 = (g_0^{-1})^{\alp\bt}\rd_\alp u_{\bA} \rd_\bt$ and the $A_\bA$ terms are compactly supported in $[0,1]\times B(2R+2)$, depend only on the background (and can be explicitly written down, see Proposition~\ref{prop:wave.error}) and obey the following bounds for $t \in [0,1]$:
	\begin{equation}\label{A.bound.1}
		\|A_\bA\|_{C^{8}(\Sigma_t)}+\|\rd_tA_\bA\|_{C^{7}(\Sigma_t)}\leq C \ep^2.
	\end{equation}
We set zero initial data for all of the quantities $F_\bA^{1,\phi}$, $F_\bA^{1,\varpi}$, $F^{2,\phi}_\bA$, $F^{2,\varpi}_\bA$. Consequently, the following estimates hold for all $t\in [0,1]$:
\begin{equation}\label{eq:bkgd.bd:FA}
\|F_\bA^{1,U}\|_{C^7(\Sigma_t)}+\|F_\bA^{2,U}\|_{C^{8}(\Sigma_t)}
+\|F_{\bA}^{U}\|_{C^9(\Sigma_t)}+\|\rd_t F_\bA^{1,U}\|_{C^6(\Sigma_t)}+\|\rd_tF_\bA^{2,U}\|_{C^{7}(\Sigma_t)}
+\|\rd_tF_{\bA}^{U}\|_{C^8(\Sigma_t)} \leq C \ep^2,
\end{equation}
where we used $F_\bA^{1,U}$ to denote either $F_\bA^{1,\phi}$ or $F_\bA^{1,\varpi}$ (and similarly for other quantities).

The quantities $F^{\phi}_{\bA,\bB}$ and $F^{\varpi}_{\bA,\bB}$ (for $\bA \neq \bB$) are defined explicitly by
\begin{align}
\label{fabphi}	F^\phi_{\bA,\bB} = - \frac{e^{-4\phi_0} F_\bA^\varpi F_\bB^\varpi}{8 a_\bA a_\bB}, \quad F^\varpi_{\bA,\bB} = \f{F_\bA^\phi F_\bB^\varpi}{a_\bA a_\bB}.
	\end{align} 
Given the bounds on $F_\bA^U$ and $\phi_0$, the following thus holds for $t \in [0,1]$:
	\begin{equation}\label{eq:bkgd.bd:FAB}
	\| F_{\bA,\bB}^{U} \|_{C^9(\Sigma_t)} \leq C \ep^4.
	\end{equation}

\subsection{Terms in the parametrix for $g$}\label{sec.para.g}

For the metric $g$, we construct a parametrix for each metric component.  More precisely, we will define
$$n=n_0+n_2+\wht n,\quad \gamma=\gamma_0+\gamma_2+\wht \gamma,\quad \beta^{i}=\beta^{i}_0+\beta^{i}_2+\wht \beta^i.$$
We will use the convention that $\mfg$ denotes one of these metric components. In this notation, the decomposition above reads
\begin{equation}\label{g.para.def}
	\mfg=\mfg_0+\mfg_2+\wht \mfg.
\end{equation}
Before defining $\mfg_2$, we first define a notation to treat the elliptic equations satisfied by the metric components simultaneously. We consider the equations \eqref{elliptic.1}--\eqref{elliptic.4} in the vacuum case. After some algebraic manipulations (see \cite[Section~4.4.2]{HL.HF}), we obtain the following system of elliptic equations for the metric components:
\begin{align}
	&\Delta \gamma = -|\nabla \phi|^2-\frac{1}{4}e^{-4\phi}|\nabla \varpi|^2-\frac{e^{2\gamma}}{n^2} ((e_0 \phi)^2+\frac{1}{4}e^{-4\phi}(e_0 \varpi)^2)-\frac{e^{2\gamma}}{8n^2}|\mathfrak L\beta|^2,\label{elliptic.g.1}\\
	&\Delta n =\f{e^{2\gamma}}{4n}|\mathfrak L\beta|^2+ \frac{2e^{2\gamma}}{n}((e_0 \phi)^2+\frac{1}{4}e^{-4\phi}(e_0 \varpi)^2),\label{elliptic.g.2}\\
	& \Delta \beta^j={\delta^{ik}}\delta^{j\ell}\rd_k\left(\log(ne^{-2\gamma})\right)(\mathfrak L\beta)_{i\ell}-4 \delta^{ij}((e_0 \phi)( \partial_i \phi)+\frac{1}{4}e^{-4\phi}(e_0 \varpi)(\partial_i \varpi). \label{elliptic.g.3}
\end{align}

To handle \eqref{elliptic.g.1}--\eqref{elliptic.g.3} with uniform notations, for $\mfg \in\{ \gamma, n, \bt^j\}$, we write 
\begin{equation}\label{g.elliptic}
	\Delta \mfg= {\bf \Gamma}(\mfg)^{\mu\nu} \langle \rd_\mu U, \rd_\nu U \rangle +\Upsilon(\mfg),
\end{equation}
where
\begin{align}
	{\bf \Gamma}(\gamma)^{\mu\nu}:=& 
	\left(\begin{array}{ccc}-\f{e^{2\gamma}}{n^2} & \f{e^{2\gamma}}{n^2}\beta^1 & \f{e^{2\gamma}}{n^2}\beta^2\\
		\f{e^{2\gamma}}{n^2}\beta^1 & -1-\f{e^{2\gamma}}{n^2}(\beta^1)^2 & -\f{e^{2\gamma}}{n^2}\beta^1 \beta^2\\
		\f{e^{2\gamma}}{n^2}\beta^2 & -\f{e^{2\gamma}}{n^2}\beta^1 \beta^2 & -1-\f{e^{2\gamma}}{n^2}(\beta^2)^2
	\end{array}
	\right),\label{G.g}\\
	{\bf \Gamma}(n)^{\mu\nu}:=& 
	\left(\begin{array}{ccc}\f{2e^{2\gamma}}{n} & -\f{2e^{2\gamma}}{n}\beta^1 & -\f{2e^{2\gamma}}{n}\beta^2\\
	-	\f{2e^{2\gamma}}{n}\beta^1 & \f{2e^{2\gamma}}{n}(\beta^1)^2 & \f{2e^{2\gamma}}{n}\beta^1 \beta^2\\
	-	\f{2e^{2\gamma}}{n}\beta^2 & \f{2e^{2\gamma}}{n}\beta^1 \beta^2 & \f{2e^{2\gamma}}{n}(\beta^2)^2
	\end{array}
	\right),\label{G.N}\\
	{\bf \Gamma}(\beta^1)^{\mu\nu}:=& 
	\left(\begin{array}{ccc} 0 & -2 & 0\\
		-2 & 4\beta^1 & 2\beta^2\\
		0 & 2\beta^2 & -0
	\end{array}
	\right), \quad {\bf \Gamma}(\beta^2)^{\mu\nu}:= 
	\left(\begin{array}{ccc} 0 & 0 & -2\\
		0 & 0 & 2\beta^1\\
		-2 & 2\beta^1 & 4\beta^2
	\end{array}
	\right),\label{G.b}
\end{align}
and
\begin{align}
	\Upsilon(\gamma):=&-\frac{e^{2\gamma}}{8n^2}|\mathfrak L\beta|^2,\label{up1}\\
	\Upsilon(n):=&\f{e^{2\gamma}}{4n}|\mathfrak L\beta|^2,\label{up2}\\
	\Upsilon(\beta^i):=&{\delta^{{j}k}}\delta^{{i}\ell}\left((\rd_k\log n)(\mathfrak L\beta)_{{j}\ell}-2(\rd_k\gamma)(\mathfrak L\beta)_{{j}\ell}\right).\label{up3}
\end{align}
(Recall that $\langle \partial_\alpha U,\partial_\beta U \rangle = \partial_\alpha \phi \partial_\beta \phi + \frac{1}{4}e^{-4\phi}\partial_\alpha \varpi \partial_\beta \varpi$.)

Define also  $\bfG_0(\gamma)$, $\bfG_0(n)$, $\bfG_0(\beta^i)$,$\Upsilon_0(\gamma)$, $\Upsilon_0(N)$ and $\Upsilon_0(\beta^i)$ in a similar manner as above except that all of the metric components $\gamma$, $n$ and $\beta^i$ are replaced by their background value $\gamma_0$, $n_0$ and $\beta_0^i$. In particular, the background metric components, which are denoted by $\mfg_0$, satisfies a similar equation with an extra term
\begin{equation}\label{g0.elliptic}
	\Delta \mfg_0= {\bf \Gamma}_0(\mfg)^{\mu\nu} \langle \rd_\mu U_0, \rd_\nu U_0 \rangle +\f 12 \sum_{\bA} F^2_{\bA}{\bf \Gamma}_0(\mfg)^{\mu\nu}(\rd_\mu u_{\bA})(\rd_\nu u_{\bA})+\Upsilon_0(\mfg),
\end{equation}
where $F_{\bA}^2 = (F^{\phi}_{\bA})^2+\frac{e^{-4\phi}}{4}(F^{\varpi}_\bA)^2$.

With this notation, we now turn to the definition of $\mfg_2$ (see \eqref{g.para.def}), which is given by
\begin{equation}\label{eq:mfg2.form}
		\mfg_2= \lambda^2\sum_{\bA} \Big( \q G_{2,\bA}(\mfg) \cos(\tfrac{2a_\bA u_\bA}{\lambda})
		+ \q G_{1,\bA} (\mfg)\sin(\tfrac{a_\bA u_\bA}{\lambda}) \Big)
	+\lambda^2\sum_{\pm,\bA,\bB:\bB \neq \bA}\q G_{\pm,\bA,\bB}(\mfg) \cos(\tfrac{a_\bA u_\bA \pm a_\bB u_{\gra B}}{\lambda}),
	\end{equation}
	where
\begin{subequations}
		\begin{empheq}{align}
		 \q G_{2,\bA}(\mfg)& =-\f 18{\bf \Gamma}_0(\mfg)^{\mu\nu} \frac{ ((F^\phi_\bA)^2+\frac{1}{4}e^{-4\phi_0}(F_\bB^\varpi)^2)}{a_\bA^2|\nabla u^0_\bA|^2}  (\partial_\mu u^0_\bA)( \partial_\nu u^0_\bA), \label{eq:g2.def.1} \\
	  \q G_{1,\bA} (\mfg)&=-2{\bf \Gamma}_0(\mfg)^{\mu\nu}\frac{ F_\bA^\phi}{a_\bA^2|\nabla u^0_\bA|^2} (\partial_\mu \phi_0 )(\partial_\nu u^0_\bA) -\frac{1}{2}e^{-4\phi_0}{\bf \Gamma}_0(\mfg)^{\mu\nu}\frac{F_\bA^\varpi}{a_\bA^2 |\nabla u^0_\bA|^2} (\partial_\mu \varpi_0 )(\partial_\nu u^0_\bA), \label{eq:g2.def.2}\\
	  \q G_{\pm,\bA,\bB} &=-{\f 12}{\bf \Gamma}_0(\mfg)^{\mu\nu}\frac{(\mp 1)\cdot  (F^\phi_\bA F^\phi_{\gra B}+\frac{1}{4}e^{-4\phi_0}F^\varpi_\bA F^\varpi_\bB)}{|\nabla (a_\bA u^0_\bA \pm a_\bB u^0_{\gra B})|^2} (\partial_\mu u^0_\bA)( \partial_\nu u^0_{\gra B}).\label{eq:g2.def.3}
	  \end{empheq}
\end{subequations}
Recall that the (background) collection $\{a_\bA u^0_{\bA}\}_{\bA=0}^{N-1}$ is spatially adapted (see Lemma~\ref{lemma.adapt}) and therefore the above expression is well-defined. Moreover, each of the $\q G(\mfg)$ term is compactly supported in $B(2R+2)$ for each fixed $t$ and is bounded as follows for all $t \in [0,1]$:
\begin{equation}\label{eq:calG.est}
\| \q G(\mfg)\|_{C^{9}(\Sigma_t)}+\| \rd_t\q G(\mfg)\|_{C^{8}(\Sigma_t)} \leq C\ep^2.
\end{equation}

\subsection{Terms in the parametrix for $u_\bA$ and $\chi_\bA$}\label{sec.para.u}

The construction and estimates for the parametrix for $u_\bA$ and $\chi_\bA$ will only be relevant in the region $\cup_{t \in [0,1]} K_t$ (see \eqref{eq:Kt.def}), which is a causal region in which $\phi$ and $\varpi$ are supported. We will thus focus on this region.

We recall that $u_\bA$ is a solution to the equation $g^{\alpha \beta}\partial_\alpha u_\bA\partial_\beta u_\bA=0$, which can also be written as
$$-\frac{1}{n^2}(e_0 (u_\bA))^2 +e^{2\gamma}|\nabla u_{\bA}|^2 =0.$$

Recall that $u^2_\bA$ is defined by $u^2_\bA = \lambda^2 a_{\bA}^{-1} \mathfrak v_\bA \sin(\tfrac{a_\bA u_\bA}{\lambda})$ (see \eqref{eq:u.para}). The function $\mathfrak v_\bA$ is defined to be the solution to the transport equation
\begin{equation}\label{defu1}
2g^{\alpha \beta}_0\partial_\alpha u_\bA^0 \partial_\beta \mathfrak v_\bA = - \frac{1}{a_\bA}g_0^{\alpha \beta}(4F_\bA^\phi\partial_\alpha \phi_0 \partial_\beta u^0_\bA + e^{-4\phi_0}F_\bA^\varpi\partial_\alpha \varpi_0 \partial_\beta u^0_\bA)
\end{equation}
with zero initial data. As a consequence, the following estimates hold for $t \in [0,1]$: 
\begin{equation}\label{eq:v.est}
\| \mathfrak v_\bA \|_{C^{9}(K_t)} +\| \rd_t \mathfrak v_\bA \|_{C^{8}(K_t)} + \| \rd_t^2 \mathfrak v_\bA\|_{C^7(K_t)} \leq C a_\bA^{-1} \ep^2.
\end{equation}
It will be important (see proof of Proposition~\ref{phi.phi.diff}) to keep track of the factor $a_\bA^{-1}$ (instead of just bounding it above by $\leq 1$). We note also that the bounds from the $\rd_t$ derivatives in \eqref{eq:v.est} are derived using the transport equation \eqref{defu1} together with the fact that $2g^{\alpha t}_0\partial_\alpha u_\bA^0$ is uniformly bounded away from $0$.

We now turn to $\chi_\bA$. It satisfies the Raychaudhuri equation (see \cite[Appendix C]{HL.elliptic} for a derivation)
\begin{equation}\label{eq:Raychaudhuri}
L_\bA(\chi)+\chi^2=-R_{L_\bA L_\bA}=-(L_\bA\phi)^2-e^{-4\phi}(L_\bA\varpi)^2,
\end{equation}
where $L_\bA =g^{\alpha \beta}\partial_\alpha u_\bA \partial_\beta.$
Recall from \eqref{chi1.def} that $\chi_\bA^1$ is given by
$$\chi_\bA^1 = \lambda \sum_{\bB } \mathfrak X^{(\bA)}_{2,\bB} \sin ( \tfrac{2a_\bB u_\bB}{\lambda})
+\lambda \sum_{\bB } \mathfrak X^{(\bA)}_{1,\bB} \cos( \tfrac{a_\bB u_\bB}{\lambda})
+\lambda \sum_{\substack{\pm,\bB,\bC: \\ \bB\neq \bA, \bC \neq \bA,\bB}}\mathfrak X^{(\bA)}_{\pm,\bB,\bC} \sin ( \tfrac{a_\bB u_\bB\pm a_\bC u_\bC}{\lambda}).$$
The terms $\mathfrak X^{(\bA)}_{2,\bB}$ and $\mathfrak X^{(\bA)}_{1,\bB}$ are set in a slightly different manner: they are to be defined pointwise when $\bB \neq \bA$, while they are defined through transport equations when $\bB = \bA$. The distinction is made according to whether the oscillations are transversal to $L_\bA$. More precisely, we define
\begin{subequations}
		\begin{empheq}{align}
	\mathfrak X^{(\bA)}_{2,\bB}&=-\frac{\mathcal H^{0,(\bA)}_{2,\bB}}{2a_\bB (L_\bA^0(u^0_\bB))}  \qquad \text{ for }\bA \neq \bB, \label{eq:chi1.def.1.1}\\
	\mathfrak X^{(\bA)}_{1,\bB}&=\frac{\mathcal H^{0,(\bA)}_{1,\bB}}{a_\bB (L_\bA^0(u^0_\bB))}  \qquad \text{ for }\bA \neq \bB, \label{eq:chi1.def.1.2}\\
	\mathfrak X^{(\bA)}_{\pm,\bB,\bC}&=-\frac{\mathcal H^{0,(\bA)}_{\pm,\bB,\bC}}{L^0_\bA(a_\bB u_\bB^0 \pm a_\bC u_\bC^0)}, \label{eq:chi1.def.1.3}
		\end{empheq}
\end{subequations}
and $\mathfrak X^{(\bA)}_{1,\bA} $ and $\mathfrak X^{(\bA)}_{2,\bA}$ are defined  as solutions to the transport equation
\begin{subequations}
		\begin{empheq}{align}
	\label{eq:chi1.def.2.1}&	L_\bA^0 (	\mathfrak X^{(\bA)}_{1,\bA})  +2 \chi_\bA^0 	\mathfrak X^{(\bA)}_{1,\bA} = - \q H^{1,(\bA)}_{1,\bA}, \\
\label{eq:chi1.def.2.2}	&L_\bA^0 (	\mathfrak X^{(\bA)}_{2,\bA})  +2 \chi_\bA^0 	\mathfrak X^{(\bA)}_{2,\bA} = - \q H^{1,(\bA)}_{2,\bA},
		\end{empheq}
\end{subequations}
where $L_\bA^0= g_0^{\alpha \beta}\partial_\alpha u_\bA^0\partial_\beta$, and $\mathcal H^{0,(\bA)}_{2,\bB}, \cdots, \q H^{1,(\bA)}_{2,\bA}$ are background expressions which are to be defined in Lemma~\ref{lm:rll}. It will be proven in Lemma~\ref{lm:rll} that all the relevant\footnote{Note that in Lemma~\ref{lm:rll}, there are $\q H$ terms for which we only control $7$ derivatives, but the terms appearing here all obey the bound \eqref{eq:H.est.good}.} $\mathcal H$ quantities are supported in $[0,1]\times B(0,2R+2)$ and satisfy $\|\mathcal H\|_{C^8(K_t)} \leq C(N) \ep$. Thus, we have, for $t \in [0,1]$,
$$\| \mathfrak X \|_{C^8(K_t)} \leq C(N) \ep^2,$$
as long as these bounds are also satisfied by the initial data for $\mathfrak X^{(\bA)}_{1,\bA}$ and $\mathfrak X^{(\bA)}_{2,\bA}$ (which will be checked in Lemma~\ref{lem:chi.data}).

\section{The initial data}\label{sec.id}

In this section, we discuss the initial data and show that the initial data can be prescribed in a way that is consistent with the intended parametrix in Section~\ref{sec:parametrix}. In the three subsections of this section, we will discuss the data for the wave part, the eikonal functions $u_{\bA}$ and the second fundamental form $\chi_{\bA}$, respectively. Note that since the metric part satisfies elliptic equations, the elliptic estimates can be handled in the same way for every fixed time, and we do not need to separately deal with the initial data.

\subsection{Initial data for the wave part: almost orthogonality and the constraint equations}

\begin{lemma}\label{lem:orthogonality}
Suppose $\mathfrak F$ is compactly supported. Suppose also that $\{v_{\bA}\}_{\bA = 0}^{N-1}$ satisfies
$$\sup_\bA \| v_\bA \|_{C^2(\mathrm{supp}(\mathfrak F))}  \ls 1.$$
Then the following holds:
\begin{enumerate}
\item Suppose that on $\mathrm{supp}(\mathfrak F)$,
$$\bA \neq \bB \implies |\nabla (v_\bA \pm v_\bB)| \geq 1,$$
then
$$\| \sum_{\bA} \mathfrak F_{\bA} \sin(\tfrac{v_\bA}{\lambda}) \|_{L^2(\Sigma_t)} \ls \Big( \sum_{\bA} \| \mathfrak F_{\bA}\|_{L^2(\Sigma_t)}^2\Big)^{\f 12} + \lambda^{\f 12} N \sup_{\bA} \| \mathfrak F_{\bA}\|_{L^2(\Sigma_t)}^{\f 12}\| \mathfrak F_{\bA}\|_{H^{1}(\Sigma_t)}^{\f 12}.$$
\item Suppose that on $\mathrm{supp}(\mathfrak F)$,
\begin{equation}\label{eq:use.spatial.separation.for.orthogonality}
\bA \neq \bB, \bC,\bD \implies |\nabla(v_{\bA} \pm_1 v_\bB \pm_2 v_\bC \pm_3 v_\bD)|\geq 1 \hbox{ for any choice of signs $\pm_1,\pm_2,\pm_3$},
\end{equation} 
then 
$$\| \sum_{\bA} \mathfrak F_{\bA} \sin(\tfrac{v_\bA}{\lambda}) \|_{L^4(\Sigma_t)} \ls \Big( \sum_{\bA} \| \mathfrak F_{\bA}\|_{L^4(\Sigma_t)}^2\Big)^{\f 12} + \lambda^{\f 14} N \sup_{\bA} \| \mathfrak F_{\bA}\|_{L^4(\Sigma_t)}^{\f 34}\| \mathfrak F_{\bA}\|_{W^{1,4}(\Sigma_t)}^{\f 14} .$$
\end{enumerate}
\end{lemma}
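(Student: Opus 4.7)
The plan is to estimate $\|\sum_\bA \mathfrak F_\bA \sin(v_\bA/\lambda)\|_{L^p}^p$ for $p=2,4$ by expanding the power directly as a multi-index sum over the $\bA$'s, and to dispatch each summand either by a trivial pointwise bound (when its effective phase vanishes) or by one integration by parts (exploiting the gradient separation of the phases to extract a factor of $\lambda$).

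For part (1), write $\|S\|_{L^2}^2 = \sum_{\bA,\bB}\int \mathfrak F_\bA\mathfrak F_\bB\sin(v_\bA/\lambda)\sin(v_\bB/\lambda)\,\ud x$ and apply the product-to-sum identity $2\sin(x)\sin(y) = \cos(x-y)-\cos(x+y)$. The diagonal piece ($\bA=\bB$) reduces to $\f{1}{2}\sum_\bA\|\mathfrak F_\bA\|_{L^2}^2$ plus an oscillatory remainder with phase $2v_\bA/\lambda$, which is bounded trivially by $\f{1}{2}\sum_\bA\|\mathfrak F_\bA\|_{L^2}^2$. For the off-diagonal part ($\bA\neq\bB$) we integrate by parts once using the identity
\[
\cos(\phi/\lambda)=-\lambda\,\nab\cdot\Big(\f{\nab\phi}{|\nab\phi|^2}\sin(\phi/\lambda)\Big) + \lambda\sin(\phi/\lambda)\,\nab\cdot\Big(\f{\nab\phi}{|\nab\phi|^2}\Big)
\]
with $\phi = v_\bA\pm v_\bB$; the hypothesis $|\nab\phi|\geq 1$ on $\mathrm{supp}(\mathfrak F)$ together with the $C^2$ bound on $v_\bA$ produces a uniform estimate $C\lambda\|\mathfrak F_\bA\mathfrak F_\bB\|_{W^{1,1}} \ls \lambda\sup_\bA\|\mathfrak F_\bA\|_{L^2}\sup_\bA\|\mathfrak F_\bA\|_{H^1}$ per term. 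Summing over the $O(N^2)$ off-diagonal pairs and taking a square root yields the claim.

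For part (2), compute $\|S\|_{L^4}^4 = \int S^4$ and expand into the four-fold sum $\sum_{\bA_1,\bA_2,\bA_3,\bA_4}\int\mathfrak F_{\bA_1}\mathfrak F_{\bA_2}\mathfrak F_{\bA_3}\mathfrak F_{\bA_4}\prod_{k=1}^{4}\sin(v_{\bA_k}/\lambda)\,\ud x$. Writing $\sin(v/\lambda)=(e^{iv/\lambda}-e^{-iv/\lambda})/(2i)$ further decomposes each summand into a finite collection of terms of the form $\f{\pm 1}{16}\mathfrak F_{\bA_1}\mathfrak F_{\bA_2}\mathfrak F_{\bA_3}\mathfrak F_{\bA_4}\,e^{i\Phi/\lambda}$ with $\Phi=\sum_k\epsilon_k v_{\bA_k}$, $\epsilon_k\in\{\pm 1\}$. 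Call such a term \emph{non-oscillatory} if $\Phi\equiv 0$, which happens exactly when the signs at positions sharing a common index cancel in pairs; in this case bound pointwise by $\prod_k|\mathfrak F_{\bA_k}|\leq\mathfrak F_{\bA_{i_1}}^2\mathfrak F_{\bA_{i_2}}^2$ for appropriate $i_1,i_2$, so that the total non-oscillatory contribution is dominated by $\int(\sum_\bA\mathfrak F_\bA^2)^2 \leq (\sum_\bA\|\mathfrak F_\bA\|_{L^4}^2)^2$, producing the first term of the claim after a fourth root. For the oscillatory terms, whenever one of the four indices $\bA_{k_0}$ differs from the other three, hypothesis \eqref{eq:use.spatial.separation.for.orthogonality} gives $|\nab\Phi|\geq 1$ and the same IBP as above yields a factor $\lambda$; when no index is lone but $\Phi\not\equiv 0$ (necessarily the pattern $(\bA,\bA,\bB,\bB)$ with $\bA\neq\bB$), the phase is either $\pm 2v_\bA$ or $\pm 2v_\bB$ and bounded trivially by $\mathfrak F_\bA^2\mathfrak F_\bB^2$ (absorbed into the main term), or $\pm 2(v_\bA\pm v_\bB)$ which still admits IBP since \eqref{eq:use.spatial.separation.for.orthogonality} applied with $\bC=\bD=\bB$ gives $|\nab(v_\bA\pm v_\bB)|\geq 1$. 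Each IBP term is then bounded by $C\lambda\sup_\bA\|\mathfrak F_\bA\|_{W^{1,4}}\sup_\bA\|\mathfrak F_\bA\|_{L^4}^3$ via Leibniz and H\"older; summing over the $O(N^4)$ such terms and taking a fourth root yields the second term of the claim.

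The main obstacle is the bookkeeping of all index and sign patterns in the four-fold expansion: one must verify that every oscillatory phase either falls under \eqref{eq:use.spatial.separation.for.orthogonality} (supplying the gradient lower bound needed for IBP) or admits a trivial pointwise estimate strong enough to be absorbed into the first claimed term, and the classification above handles each case. Boundary terms from the IBP vanish by the compact support of $\mathfrak F_\bA$, and the $C^2$ bound on $v_\bA$ is precisely what controls the zeroth-order remainder $\nab\cdot(\nab\phi/|\nab\phi|^2)$ arising from a single IBP.
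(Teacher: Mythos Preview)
Your proof is correct and follows essentially the same approach as the paper: expand $\|S\|_{L^p}^p$, bound the ``paired-index'' terms trivially by $\big(\sum_\bA\|\mathfrak F_\bA\|_{L^p}^2\big)^{p/2}$, and gain one power of $\lambda$ by a single integration by parts on the terms with a lone index (the paper uses the equivalent identity $\cos(\Phi/\lambda)=-\lambda^2|\nabla\Phi|^{-2}\Delta\cos(\Phi/\lambda)-\lambda|\nabla\Phi|^{-2}(\Delta\Phi)\sin(\Phi/\lambda)$ followed by one IBP, which yields the same $\lambda$-gain). One small bookkeeping slip: your claim that ``no lone index but $\Phi\not\equiv0$'' forces the pattern $(\bA,\bA,\bB,\bB)$ overlooks the all-equal case $(\bA,\bA,\bA,\bA)$ with unbalanced signs (phase $\pm2v_\bA$ or $\pm4v_\bA$), but this is harmless since those terms are bounded trivially by $\|\mathfrak F_\bA\|_{L^4}^4$ and absorbed into the main term just as you do for the two-pair case.
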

\begin{proof}
We only give the proof of the $L^4$ bound which is harder. We expand the following expression and whenever one index is different from all the rest, we treat the term as an error, i.e.,
\begin{equation}
\begin{split}
&\:\| \sum_{\bA} \mathfrak F_{\bA} \sin(\tfrac{v_\bA}{\lambda}) \|_{L^4(\Sigma_t)}^4 \\
=&\: \int_{\Sigma_t} \sum_{\bA} \mathfrak F_{\bA}^4 \cos^4 (\tfrac{v_\bA}{\lambda}) \, \ud x + 6 \int_{\Sigma_t} \sum_{\bA,\,\bB: \bA \neq \bB} \mathfrak F_{\bA}^2 \mathfrak F_{\bB}^2 \sin^2 (\tfrac{v_\bA}{\lambda})\sin^2 (\tfrac{v_\bB}{\lambda}) \, \ud x \\
&\: + O\Big(\sum_{\pm_1,\pm_2,\pm_3,\bA,\bB,\bC,\bD: \bA \neq \bB, \bC,\bD} \Big| \int_{\Sigma_t} \mathfrak F_{\bA}\mathfrak F_{\bB}\mathfrak F_{\bC}\mathfrak F_{\bD}\cos(\tfrac{v_{\bA} \pm_1 v_\bB \pm_2 v_\bC \pm_3 v_\bD}{\lambda})\, \ud x \Big| \Big).
\end{split}
\end{equation}
The first two terms are clearly $\ls \Big( \sum_{\bA} \| \mathfrak F_{\bA}\|_{L^4(\Sigma_t)}^2\Big)^2$. For the last term, we use \eqref{eq:use.spatial.separation.for.orthogonality} to write 
\begin{equation}
\begin{split}
&\: \cos(\tfrac{v_{\bA} \pm_1 v_\bB \pm_2 v_\bC \pm_3 v_\bD}{\lambda}) \\
=&\: -\f{\lambda^2 \Delta(\cos (\tfrac{v_{\bA} \pm_1 v_\bB \pm_2 v_\bC \pm_3 v_\bD}{\lambda}))}{|\nabla (v_{\bA} \pm_1 v_\bB \pm_2 v_\bC \pm_3 v_\bD)|^2} - \f{\lambda (\Delta(v_{\bA} \pm_1 v_\bB \pm_2 v_\bC \pm_3 v_\bD)) \sin(\tfrac{v_{\bA} \pm_1 v_\bB \pm_2 v_\bC \pm_3 v_\bD}{\lambda})}{|\nabla(v_{\bA} \pm_1 v_\bB \pm_2 v_\bC \pm_3 v_\bD)|^2}
\end{split}
\end{equation}
Integrating by parts one of the derivatives away from the phase and then using H\"older's inequality, we obtain the desired result. \qedhere
\end{proof}

\begin{lm}\label{lmini}Under the assumptions of Theorem \ref{main.thm.2}, there exists $\lambda_0>0$ sufficiently small such that for every $\lambda\in (0,\lambda_0]$, there exist initial functions $U_{{\lambda}} $ and $\dot{U}_{{\lambda}}$ (for \eqref{sys}) which are compactly supported in $B(0,2R)$, such that the following holds:
	\begin{enumerate}
		\item 
		$U_\lambda$ and $\dot{U}_\lambda$ obey the following bounds:
		$$\left\|U_{{\lambda}} -U_0- \sum_\bA  \lambda a_\bA^{-1} F^U_\bA \cos(\tfrac{a_\bA u_\bA^0}{\lambda})   -\sum_{\pm, \bA,\bB:\bA\neq \bB} (\pm 1)\lambda^2F_{\bA,\bB}^U \cos(\tfrac{a_\bA u_\bA^0 \pm a_\bB u_\bB^0}{\lambda}) \right\|_{H^4(\Sigma_0)}\leq \ep\lambda^2C(N),$$
		and
\begin{align*}&\Big \| \dot{U}_{{\lambda}} -\dot{U}_0- \sum_\bA  e^{-\gamma_0} F^U_\bA|\nabla u_\bA^0| \sin(\tfrac{a_\bA u_\bA^0}{\lambda})  \\
&\: \qquad \qquad  -\sum_{\pm, \bA,\bB:\bA\neq \bB} (\pm 1) \lambda e^{-2\gamma_0}(a_\bA| \nabla u_\bA^0| \pm a_\bB |\nabla u_\bB^0| )F_{\bA,\bB}^U \sin(\tfrac{a_\bA u_\bA^0 \pm a_\bB u_\bB^0}{\lambda})  \Big\|_{H^3(\Sigma_0)}  \leq \ep\lambda^2C(N),
	\end{align*}
	where $F^U_\bA$ is as in the background solution and $F^U_{\bA,\bB}$ is as in \eqref{fabphi}.
		\item The following condition holds initially (see \eqref{eq:lara.def}, \eqref{main.data.cond}):
\begin{equation}\label{qty}\int_{\Sigma_0} \langle \dot{U}_{{\lambda}}, \partial_j U _{{\lambda}} \rangle \, \ud x=0.
	\end{equation}
	\end{enumerate}
\end{lm}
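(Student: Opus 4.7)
The plan is to construct $(U_\lambda, \dot{U}_\lambda)$ by taking the explicit ansatz inside the displayed norms as a first guess, and then adding a tiny correction to $\dot{\phi}_\lambda$ of the form $c_1(\lambda)r_1 + c_2(\lambda)r_2$ in order to enforce the orthogonality condition \eqref{qty}. The functions $r_1,r_2$ are produced by Lemma~\ref{lem:find.r1.r2} applied to $\phi_0\restriction_{\Sigma_0}$ (which is not identically zero by hypothesis~(5) of Theorem~\ref{main.thm.2}), with $R'=R$ and $R''=\tfrac{3R}{2}$, and rescaled so that $\|r_i\|_{H^{10}(\Sigma_0)}=\ep$. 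Denote by $U_\lambda^{\sharp}$ and $\dot{U}_\lambda^{\sharp}$ the two naive ans\"atze. Since $U_0$ is supported in $B(0,R)$ and all of $F^U_\bA$, $F^U_{\bA,\bB}$ are supported in $B(0,\tfrac{3R}{2}+2)\subset B(0,2R)$ by Proposition~\ref{prop:construct.dust}(4)(d), both naive ans\"atze have the required support, and property~(1) of the lemma holds \emph{exactly} for them before any modification.

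The heart of the proof is to show that the constraint defect
\[
I_j(\lambda) := \int_{\Sigma_0}\langle \dot{U}_\lambda^{\sharp},\rd_j U_\lambda^{\sharp}\rangle\,\ud x
\]
satisfies $|I_j(\lambda)|\le C(N)\,\ep^2\,\lambda^2$. I would decompose $\dot{U}_\lambda^{\sharp}=\dot{U}_0+\dot{U}_1+\dot{U}_2$ and $\rd_j U_\lambda^{\sharp}=\rd_j U_0+\rd_j U_1+\rd_j U_2$, where the subscript records the order in $\lambda$ of each piece. Only two of the nine cross--products can produce a non--oscillatory contribution to $I_j(\lambda)$: the purely background term $\int\langle\dot{U}_0,\rd_j U_0\rangle$, and the resonant diagonal $\bA=\bB$ part of $\int\langle\dot{U}_1,\rd_j U_1\rangle$ (the other diagonals $\sin\cos$ produce $\sin(2a_\bA u_\bA/\lambda)$, which are oscillatory). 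The coefficients of these two contributions are arranged to cancel: the compatibility condition of Lemma~\ref{lem:constraint.lower.bound} for the background reads $\int 2\langle \dot{U}_0,\rd_j U_0\rangle + \sum_\bA \int \breve{F}_\bA^2|\nab u_\bA^0|\rd_j u_\bA^0\,\ud x = 0$, and using $\breve F_\bA = e^{\gamma_0/2}F_\bA$ on $\Sigma_0$ together with $F_\bA^2 = (F^\phi_\bA)^2 + \tfrac{e^{-4\phi_0}}{4}(F^\varpi_\bA)^2$, this is exactly (minus twice) the $\sin^2$--mean arising from the diagonal of $\int\langle \dot U_1,\rd_j U_1\rangle$. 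All remaining contributions are genuinely oscillatory: off--diagonals of $\dot{U}_1\cdot\rd_j U_1$ carry phases $a_\bA u_\bA^0\pm a_\bB u_\bB^0$ that are spatially adapted by Lemma~\ref{lemma.adapt}; the mixed pieces $\dot{U}_1\cdot\rd_j U_2$ and $\dot{U}_2\cdot\rd_j U_1$ produce triple-phase combinations $a_\bA u_\bA^0\pm a_\bB u_\bB^0 \pm a_\bC u_\bC^0$ that are also spatially adapted by Definition~\ref{def.spatial}(4); and every $\dot{U}_0\cdot\rd_j U_k$ term ($k\ge 1$) involves a single non-degenerate phase against a smooth compactly supported background. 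Iterated integration by parts in $\rd_{x}$ then yields $|I_j(\lambda)|=O(\lambda^{K})$ for any $K$, with a constant depending polynomially on $N$ through the $a_\bA$.

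In the final step I impose \eqref{qty} by choosing $\dot{\phi}_\lambda := \dot{\phi}_\lambda^{\sharp} + c_1(\lambda)r_1 + c_2(\lambda)r_2$ (and $\dot{\varpi}_\lambda := \dot{\varpi}_\lambda^{\sharp}$, $U_\lambda := U_\lambda^{\sharp}$). Condition \eqref{qty} becomes the $2\times 2$ linear system
\[
\sum_{k=1}^{2}c_k(\lambda)\int_{\Sigma_0}r_k\,\rd_j\phi_\lambda^{\sharp}\,\ud x\;=\;-I_j(\lambda),\qquad j=1,2.
\]
The coefficient matrix converges as $\lambda\to 0$ to the nonsingular matrix of \eqref{eq:def.r1.r2} (applied to $\phi_0\restriction_{\Sigma_0}$, which lies in the support assumption of Lemma~\ref{lem:find.r1.r2}). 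Hence for $\lambda_0$ small enough the system is invertible, and using Step~2 we obtain $|c_k(\lambda)|\le C(N)\ep\,\lambda^2$. The correction therefore contributes at most $\|c_k r_k\|_{H^4(\Sigma_0)}\le \ep\cdot C(N)\ep\lambda^2 \le \ep\,\lambda^2\,C(N)$, which is absorbed into the right--hand side of the bound in~(1).

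The main obstacle is Step~2: one must book--keep, for every quadratic combination of the at most three phase families $\{a_\bA u_\bA^0\}$, $\{2a_\bA u_\bA^0\}$ (not appearing in the ansatz but worth checking) and $\{a_\bA u_\bA^0\pm a_\bB u_\bB^0\}_{\bA\ne \bB}$, whether the resulting phase is stationary somewhere on the support, and in the unique resonant situation, verify that the precise coefficients (including the $\tfrac14 e^{-4\phi_0}$ weight in $\langle\cdot,\cdot\rangle$ and the geometric factors $e^{\pm\gamma_0}$, $|\nab u_\bA^0|$) reconstitute exactly the dust term in the background compatibility condition of Lemma~\ref{lem:constraint.lower.bound}. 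The inclusion of the $\lambda^2$ off--diagonal terms $F^U_{\bA,\bB}$ in the naive ansatz is essential not for the constraint itself (which would be correctable with their absence) but to guarantee that $\wht{U}$ and $\rd_t\wht{U}$ vanish at $t=0$ at the order required by the later bootstrap, so the ans\"atze stated in the lemma are the minimal ones for which the subsequent sections' machinery applies.
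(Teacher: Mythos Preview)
Your proposal is correct and follows essentially the same approach as the paper: construct the naive ansatz, add a two-parameter correction $c_1 r_1 + c_2 r_2$ to $\dot\phi$ using Lemma~\ref{lem:find.r1.r2}, show the constraint defect is $O(C(N)\ep^2\lambda^2)$ via cancellation of the non-oscillatory $O(1)$ piece against the background dust compatibility condition of Lemma~\ref{lem:constraint.lower.bound} plus integration by parts on all remaining oscillatory pieces, and finally solve the $2\times 2$ linear system. Your account is in fact slightly more explicit than the paper's (which defers the bookkeeping to \cite[Lemma~4.9]{HL.HF}), and you correctly pinpoint that the cancellation relies on the \emph{dust} constraint from Lemma~\ref{lem:constraint.lower.bound} for the $N$-dust background, rather than the Vlasov condition \eqref{eq:ortho.main.thm}.
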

\begin{proof}
	The proof is similar in logic to the proof of \cite[Lemma 4.9]{HL.HF} but is slightly different as we use Lemma~\ref{lem:find.r1.r2}. Using the genericity assumption that $\phi_0$ is not identically zero, and using Lemma~\ref{lem:find.r1.r2} (and rescaling), we find $\wht r_{1}$, $\wht r_{2}$ such that $\mathrm{supp}(\wht r_{1}),\,\mathrm{supp}(\wht r_{2}) \subset B(0, 2R)$, $\|\wht r_1\|_{H^{10}(\Sigma_0)} = \|\wht r_2\|_{H^{10}(\Sigma_0)} = \ep$ and
	\begin{equation}\label{eq:def.tr1.tr2}
\det \Bigg[ \begin{array}{ll} \int_{\mathbb R^2} 2 \wht r_{1} \rd_1 \phi_0  \,\ud x  & \int_{\mathbb R^2} 2 \wht r_{2} \rd_1 \phi_0 \, \ud x \\
 \int_{\mathbb R^2}2 \wht r_{1} \rd_2 \phi_0 \, \ud x &  \int_{\mathbb R^2} 2 \wht r_{2} \rd_2 \phi_0 \, \ud x \end{array}
\Bigg]   \neq 0.
\end{equation}
	Notice that $\wht r_{1,\lambda}$, $\wht r_{2,\lambda}$ and the value in \eqref{eq:def.tr1.tr2} are all independent of $\lambda$.
	
We will look for $U_\lambda, \dot{U}_\lambda$ of the form
	\begin{align}
	U_\lambda=&\: U_0+ \sum_\bA  \lambda a_\bA^{-1} F^U_\bA \cos(\tfrac{a_\bA u_\bA^0}{\lambda})   +\sum_{\pm,\bA,\bB:\bA\neq \bB} (\pm 1)\lambda^2F_{\bA,\bB}^U \cos(\tfrac{a_\bA u_\bA^0 \pm a_\bB u_\bB^0}{\lambda}) ,\label{eq:U.in.osc.constraint} \\
\dot{U}_\lambda=&\: \dot{U}_0 + \sum_\bA  e^{-\gamma_0} F^U_\bA|\nabla u_\bA^0|  \sin(\tfrac{a_\bA u_\bA^0}{\lambda}) \nonumber \\
&\:  + \sum_{\pm,\bA,\bB:\bA\neq \bB} (\pm 1)\lambda e^{-2\gamma_0}(a_\bA| \nabla u_\bA^0| \pm a_\bB |\nabla u_\bB^0| )F_{\bA,\bB}^U \sin(\tfrac{a_\bA u_\bA^0 \pm a_\bB u_\bB^0}{\lambda}) + \wht \Omg_{1,\lambda}^U \wht r_{1} + \wht \Omg_{2,\lambda}^U \wht r_{2},\label{eq:dotU.in.osc.constraint}
	\end{align}
	where $\wht \Omg_i^U = (\wht \Omg_{i,\lambda}^\phi, \wht \Omg_{i,\lambda}^\varpi)$ for $i = 1,2$, with $(\wht \Omg_{1,\lambda}^\phi, \wht \Omg_{2,\lambda}^\phi)$ being the unknowns to be solved and $(\wht \Omg_{1,\lambda}^\varpi, \wht \Omg_{2,\lambda}^\varpi) = (0,0)$.

	We now impose \eqref{qty}. In a similar manner as Lemma \ref{lem:constraint.lower.bound}, this becomes an equation for $(\wht \Omg_{1,\lambda}^\phi, \wht \Omg_{2,\lambda}^\phi)$. We argue that all the terms on the right-hand side can be estimated by $\ep^2 C(N)\lambda^2$ :
	\begin{itemize}
		\item The $O(\ep^2 C(N))$ term without a high-frequency oscillatory phase cancels exactly due to \eqref{eq:ortho.main.thm}.
		\item The $O(\ep^2C(N))$ terms either cancel because of \eqref{main.data.cond} or are high frequency oscillatory terms, which become $O(\ep^2 C(N)\lambda^2)$ after integration by part twice.
		\item The remaining $O(\ep^2 \lambda C(N))$ terms have high-frequency oscillatory phases; they become $O(\ep^2 C(N) \lambda^2)$ after integration by part.
	\end{itemize}
	We refer the reader to \cite[Lemma 4.9]{HL.HF} for a similar argument. The upshot is that \eqref{qty} can be solved and $(\wht \Omg_{1,\lambda}^\phi, \wht \Omg_{2,\lambda}^\phi)$ obeys the bound $|\wht \Omg_{1,\lambda}^\phi|,\,|\wht \Omg_{2,\lambda}^\phi|\ls \ep^2 C(N) \lambda^2$. Plugging this estimate back into \eqref{eq:U.in.osc.constraint} and \eqref{eq:dotU.in.osc.constraint}, we obtain the desired estimates in part (1). \qedhere

\end{proof}
We note that the estimates we obtained in Lemma~\ref{lmini} are compatible with the parametrix for $U$ at $t = 0$; see \eqref{eq:wave.para.2}. In particular, recall that when solving the transport equations \eqref{eq:F1.F2.transport}, we have set the initial data for $F_\bA^{(1,U)}$ and $F_\bA^{(2,U)}$ to be zero. 

We now state the following important lemma, which allows us to apply the local existence result of Touati \cite{Touati.local} (Theorem~\ref{thm:Touati}) to obtain the existence of a solution $(U,g)$.
	
\begin{lemma}\label{smallnessl4}
	The initial data satisfies
	$$\| \partial U \|_{L^4(\Sigma_0)} \lesssim \ep.$$
\end{lemma}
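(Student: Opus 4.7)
The plan is to expand $\partial U_\lambda = (\nabla U_\lambda, \partial_t U_\lambda)$ on $\Sigma_0$ using the explicit formulas \eqref{eq:U.in.osc.constraint}--\eqref{eq:dotU.in.osc.constraint} provided by Lemma~\ref{lmini}. Since $\dot U_\lambda = \frac{e^{2\gamma}}{n}(\partial_t U_\lambda - \beta^i\partial_i U_\lambda)$ and the initial metric coefficients $n, \gamma, \beta$ are bounded (being solutions of the constraints with small data), it suffices to control $\nabla U_\lambda$ and $\dot U_\lambda$ in $L^4(\Sigma_0)$.

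First I would exploit the fact that, by \eqref{eq:F.split.initial}, $F_\bA^{\varpi}\restriction_{\Sigma_0}\equiv 0$. Inspecting the definition \eqref{fabphi} of $F^\phi_{\bA,\bB}$ and $F^\varpi_{\bA,\bB}$, this forces $F_{\bA,\bB}^{U}\restriction_{\Sigma_0}\equiv 0$, and therefore all spatial derivatives of $F_{\bA,\bB}^U$ also vanish on $\Sigma_0$. Consequently every term in $\nabla U_\lambda\restriction_{\Sigma_0}$ and $\dot U_\lambda\restriction_{\Sigma_0}$ carrying an $F_{\bA,\bB}^U$ factor drops out. Similarly, $\nabla^2 u_\bA^0\restriction_{\Sigma_0}=0$ since $u_\bA^0\restriction_{\Sigma_0}$ is a linear function. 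The only surviving contributions are $\nabla U_0$, $\dot U_0$ (background, controlled by $\ep$ through Sobolev embedding thanks to the smallness hypothesis~(2) of Theorem~\ref{main.thm.2} and compact support); a term of the form $\lambda\sum_\bA a_\bA^{-1}\nabla F_\bA^U\cos(\tfrac{a_\bA u_\bA^0}{\lambda})$ which is bounded pointwise by $\ls \lambda\ep$ using Cauchy--Schwarz in $\bA$ and $\sum_\bA \alpha_\bA^{(N)}=1$; the corrections $\wht\Omega_{i,\lambda}^U\wht r_i$ which are of size $\ls C(N)\ep^3\lambda^2$ in $L^4$; and finally the leading oscillatory sums
\[
\sum_\bA F_\bA^U\,\nabla u_\bA^0\,\sin(\tfrac{a_\bA u_\bA^0}{\lambda}),\qquad \sum_\bA e^{-\gamma_0} F_\bA^U|\nabla u_\bA^0|\,\sin(\tfrac{a_\bA u_\bA^0}{\lambda}).
\]

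The main step is to bound these two sums in $L^4$ uniformly in $N$. For this I apply Lemma~\ref{lem:orthogonality}(2) with $v_\bA=a_\bA u_\bA^0$ and $\mathfrak F_\bA = F_\bA^U\nabla u_\bA^0$ (respectively $e^{-\gamma_0}F_\bA^U|\nabla u_\bA^0|$). The spatial-adaptedness hypothesis \eqref{eq:use.spatial.separation.for.orthogonality} is provided by Lemma~\ref{lemma.adapt}, since $\{a_\bA u_\bA^0\}$ is spatially adapted. This yields
\[
\Big\|\sum_\bA F_\bA^U\nabla u_\bA^0\sin(\tfrac{a_\bA u_\bA^0}{\lambda})\Big\|_{L^4(\Sigma_0)} \ls \Big(\sum_\bA\|F_\bA^U\nabla u_\bA^0\|_{L^4}^2\Big)^{1/2} + \lambda^{1/4}N\sup_\bA\|F_\bA^U\nabla u_\bA^0\|_{L^4}^{3/4}\|F_\bA^U\nabla u_\bA^0\|_{W^{1,4}}^{1/4}.
\]
From \eqref{eq:data.for.N.back.3}, $F_\bA^U\restriction_{\Sigma_0}$ has $L^\infty$ size bounded by $\sqrt{\alpha_\bA^{(N)}}\,\ep$ and similarly for $W^{1,4}$, while $|\nabla u_\bA^0|\restriction_{\Sigma_0}=1$ and the supports are uniformly compact. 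Using $\sum_\bA\alpha_\bA^{(N)}=1$, the $\ell^2$ sum is $\ls \ep$, and the remainder term is $\ls \lambda^{1/4}N\ep$. Choosing $\lambda_0=\lambda_0(N)$ so that $\lambda^{1/4}N\leq 1$ whenever $\lambda\in(0,\lambda_0]$ absorbs this error.

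The hardest (and only delicate) point is this uniform-in-$N$ $L^4$ control of the oscillatory sum: a naive triangle inequality gives only an $\ell^1$ bound $\sum_\bA\|F_\bA^U\|_{L^4}\ls \sum_\bA\sqrt{\alpha_\bA^{(N)}}\ep$, which can blow up with $N$, and it is precisely the almost orthogonality of the high-frequency phases (rather than any smallness of the individual amplitudes) that rescues the estimate. Combining all the pieces gives $\|\partial U\|_{L^4(\Sigma_0)}\ls \ep$ as claimed.
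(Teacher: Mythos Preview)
Your proof is correct and follows the same strategy as the paper: control the background and lower-order pieces directly, then apply the almost-orthogonality estimate Lemma~\ref{lem:orthogonality}(2) (with spatial adaptedness from Lemma~\ref{lemma.adapt}) to the leading oscillatory sum $\sum_\bA F_\bA^U \nabla u_\bA^0 \sin(\tfrac{a_\bA u_\bA^0}{\lambda})$, using $\sum_\bA \alpha_\bA^{(N)} = 1$ for the $\ell^2$ bound. Your write-up is simply more explicit than the paper's one-line proof; the observation that $F_\bA^\varpi\restriction_{\Sigma_0}=0$ kills the $F_{\bA,\bB}^U$ contributions is a clean simplification, though even without it those terms carry an extra factor of $\lambda$ and would be harmless after taking $\lambda$ small depending on $N$.
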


\begin{proof}
Given that $\{a_\bA u_\bA\}$ are spatially adapted, the main term is of size $\ep^2$ in $L^4$ by Lemma~\ref{lem:orthogonality}. \qedhere
\end{proof}

We now apply the result Theorem~\ref{thm:Touati} of Touati to obtain a local solution. \textbf{From now on, $(\phi,\varpi,g)$ is the solution to Einstein vacuum equations in elliptic gauge, issuing from the free initial data of Lemma \ref{lmini}.} At this point, the time of existence given by Theorem~\ref{thm:Touati} could be very small (and dependent on $\lambda$ and $N$); we will prove a priori estimates showing that in fact the solution exists up to time $t = 1$.

Since we need also to solve for $u_\bA$ and $\chi_\bA$ using the eikonal equation and the Raychaudhuri equation, we need also to prescribe initial data for them; see Section~\ref{sec:data.u}. Note that the construction of the solution $(\phi,\varpi,g)$ will not require prescribing the data for $u_\bA$ and $\chi_\bA$, but they will be important in justifying the parametrix construction.


\subsection{Initial data for $\mfg$}

The initial data for $\mfg$ need not be prescribed since they satisfy elliptic equations. We record the estimates we have here. Since we have not prescribed $u_\bA$ yet, we describe a decomposition of $\mfg$ where the phases are given in terms of $u_\bA^0$ instead of $u_\bA$. This does not change any of the estimates since we will later prescribe $u_\bA$ so that $\sum_{k\leq 3} \|\rd (u_\bA^0- u_\bA)\|_{W^{k,4}(K_0)} \leq C(N) \ep \lambda^2$ (see Lemma~\ref{lem:chi.data} below), consistent with Section~\ref{sec.para.u} and that $\mathfrak v_\bA \restriction_{\Sigma_0} \equiv 0$. (Note that this is only possible for the initial data, as $\sum_{k\leq 3} \|\rd (u_\bA^0- u_\bA)\|_{W^{k,4}(K_t)}$ would only be $O(\lambda)$ for $t \in (0,1]$ due to the non-trivial $u_\bA^2$ terms.) The proof of the estimates is essentially the same as the estimates in Section~\ref{sec:metric} (which apply for any time $t$) and is omitted.

\begin{proposition}\label{prop:g.data}
For $\mfg \in \{ \gamma,\bt^i,n\}$, $\mfg$ admits a decomposition
$$\mfg = \mfg_0 + \underline{\mfg}_2 + \widetilde{\underline{\mfg}},$$
where $\underline{\mfg}_2$ is as in \eqref{eq:mfg2.form}, \eqref{eq:g2.def.1}--\eqref{eq:g2.def.3} except for using $u^0_\bA$ instead of $u_\bA$ in the phases. Then $\widetilde{\underline{\mfg}} = \widetilde{\underline{\mfg}}_a(t) \zeta(|x|) \log(|x|) + \widetilde{\underline{\mfg}}_r$, where $\widetilde{\underline{\mfg}}_a(t)$ are functions of $t$ alone and the following bounds hold:
	\begin{align}
|\widetilde{\underline{\mfg}}_{a}|(t)+\|\widetilde{\underline{\mfg}}_r\|_{W^{1,\f 43}_{-\alp_0 - \f 12}(\Sigma_t)} + \lambda|\rd_t \wht \mfg_{a}|(t)+\lambda\|\rd_t\widetilde{\underline{\mfg}}_r\|_{W^{1,\f 43}_{-\alp_0 - \f 12}(\Sigma_t)} \leq &\: C(N)\ep^2 \lambda^2 , \label{eq:g.L43.data}\\
\sum_{k\leq 3} \lambda^k \|\widetilde{\underline{\mfg}}_r\|_{W^{k+1,4}_{-\alp_0 + \f 12}(\Sigma_t)} + \sum_{k\leq 2} \lambda^{k+1} \|\rd_t\widetilde{\underline{\mfg}}_r\|_{W^{k+1,4}_{-\alp_0 + \f 12}(\Sigma_t)} \leq &\:  C(N)\ep^2 \lambda^2, \label{eq:g.L4.data}\\
	\sum_{k\leq 3} \lambda^k \|\widetilde{\underline{\mfg}}_r\|_{H^{k+2}_{-\alp_0}(\Sigma_t)} + \sum_{k\leq 2} \lambda^{k+1}\|\rd_t\widetilde{\underline{\mfg}}_r\|_{H^{k+2}_{-\alp_0}(\Sigma_t)}\leq &\: C(N)\ep^2 \lambda.  \label{eq:g.L2.data}
\end{align}
(Recall here that $\alp_0 = 10^{-10}$; see \eqref{eq:alp0}.)
\end{proposition}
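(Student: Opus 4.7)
The plan is to derive an elliptic equation for $\widetilde{\underline{\mfg}}$ at $t=0$ by subtracting \eqref{g0.elliptic} (applied to $\mfg_0$) and $\Delta\underline{\mfg}_2$ from \eqref{g.elliptic} (applied to $\mfg$). Using the parametrix for $U\restriction_{\Sigma_0},\dot U\restriction_{\Sigma_0}$ from Lemma~\ref{lmini} (whose phases are $u^0_\bA$, consistent with $\mathfrak v_\bA\restriction_{\Sigma_0}=0$ so that $u_\bA\restriction_{\Sigma_0}=u^0_\bA$), I would expand $\bfG(\mfg)^{\mu\nu}\langle \rd_\mu U,\rd_\nu U\rangle$ into: (i) the background piece $\bfG_0(\mfg)^{\mu\nu}\langle\rd_\mu U_0,\rd_\nu U_0\rangle$; (ii) diagonal $\sin^2$ contributions whose non-oscillatory averages $\tfrac12\sum_\bA[(F^\phi_\bA)^2+\tfrac14 e^{-4\phi_0}(F^\varpi_\bA)^2]\,\rd u^0_\bA\otimes \rd u^0_\bA$ cancel exactly with the extra null-dust term in \eqref{g0.elliptic} by virtue of \eqref{eq:reformulation.split}; and (iii) genuinely oscillatory terms at phases $\tfrac{a_\bA u^0_\bA}{\lambda}$, $\tfrac{2a_\bA u^0_\bA}{\lambda}$, $\tfrac{a_\bA u^0_\bA\pm a_\bB u^0_\bB}{\lambda}$. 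The amplitudes $\q G_{1,\bA}$, $\q G_{2,\bA}$, $\q G_{\pm,\bA,\bB}$ in \eqref{eq:g2.def.1}--\eqref{eq:g2.def.3} are chosen precisely so that the leading $\lambda^{-2}$ part of $\Delta\underline{\mfg}_2$ (both derivatives on the phase) cancels the pieces in (iii). The residue $\mathcal R := \Delta\widetilde{\underline{\mfg}}$ then consists of $O(\lambda)$ oscillations (one derivative on $\q G_\ast$ or on $|\nabla u^0_\bA|^{-2}$), $O(\lambda^2)$ pieces (both derivatives on amplitudes), and $O(\ep^3)$ nonlinear corrections from $\bfG(\mfg)-\bfG_0(\mfg)$ and from the cubic $F^U_{\bA,\bB}$ pieces in the initial parametrix, all supported in $B(0,2R+2)$.

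For the weighted $L^{4/3}$ bound \eqref{eq:g.L43.data}, I would integrate by parts twice against each oscillatory phase (admissible since $\{a_\bA u^0_\bA\}$ is spatially adapted by Lemma~\ref{lemma.adapt}) to extract an explicit factor $\lambda^2$; the resulting source is a finite sum of oscillations $\sum_\bA \mathcal F_\bA\sin(\tfrac{a_\bA u^0_\bA}{\lambda})$ and $\sum_{\bA\neq\bB}\mathcal F_{\bA,\bB}\cos(\tfrac{a_\bA u^0_\bA\pm a_\bB u^0_\bB}{\lambda})$ whose amplitudes are $\ell^2$-bounded in $(\bA,\bB)$ uniformly in $N$ by Proposition~\ref{prop:construct.dust}(4). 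Invoking Lemma~\ref{lem:orthogonality}(2) in its $L^{4/3}$-dual form then gives an $N$-independent $L^{4/3}$ bound on $\mathcal R$. Weighted elliptic inversion for $\Delta$ in $\mathbb R^2$ with the logarithmic ansatz $\widetilde{\underline{\mfg}}_a(t)\zeta(|x|)\log|x|+\widetilde{\underline{\mfg}}_r$, where $\widetilde{\underline{\mfg}}_a(t)\propto \int_{\mathbb R^2}\mathcal R\,\ud x$ absorbs any nonzero spatial average, then yields \eqref{eq:g.L43.data}. The $W^{k+1,4}$ bounds \eqref{eq:g.L4.data} are proved by $k+2$ integrations by parts against each phase (costing $\lambda^{k+2}$) together with Calder\'on--Zygmund for $\Delta^{-1}\colon L^4\to W^{2,4}$ and Lemma~\ref{lem:orthogonality}(2); each of the $k$ differentiations incurs a factor $\lambda^{-1}$ from phase derivatives but is compensated by one of the extra integrations by parts. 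The $H^{k+2}$ bound \eqref{eq:g.L2.data} only needs Lemma~\ref{lem:orthogonality}(1), which saves $\lambda^{1/2}$ rather than $\lambda$ relative to the summed amplitudes, explaining the weaker $C(N)\ep^2\lambda$ right-hand side. The $\rd_t$-variants follow identically using the background $\rd_t$-bounds from Proposition~\ref{prop:construct.dust}(4)(c).

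The core difficulty is the $N$-dependence: naive $\ell^1$-in-$\bA$ bounds on $\mathcal R$ would blow up with $N$, and it is only through Lemma~\ref{lem:orthogonality} — which combines the spatial adaptedness of $\{a_\bA u^0_\bA\}$ with the $\ell^2$-bound $\big(\sum_\bA \|F^U_\bA\|_{L^\infty}^2\big)^{1/2}\ls \ep$ from Proposition~\ref{prop:construct.dust}(4)(b) — that one recovers an $N$-independent estimate strong enough to fit inside the stated $C(N)\ep^2\lambda^2$ budget. The residual $N$-dependence absorbed into $C(N)$ enters only through the spatial/null adaptation constants $a_\bA \leq C(N)$ of Lemma~\ref{lemma.adapt}. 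The accompanying bookkeeping of trigonometric identities — so that the $\sin^2$ averages cancel the null-dust term and so that $\Delta\underline{\mfg}_2$ eliminates the $O(\lambda^0)$ oscillations term by term — is routine but must be organized carefully to expose the correct power of $\lambda$ in each residual contribution to $\mathcal R$.
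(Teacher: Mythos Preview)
Your overall strategy---derive the elliptic equation for $\widetilde{\underline{\mfg}}$ from \eqref{g.elliptic} and \eqref{g0.elliptic}, verify that the diagonal $\sin^2$ averages cancel the null-dust term and that $\Delta\underline{\mfg}_2$ kills the leading oscillations, then subtract an explicit $O(\lambda^3)$ oscillatory corrector and invert $\Delta$ via weighted elliptic theory---matches the paper, which omits the proof and points to the identical mechanism in Section~\ref{sec:metric} (Propositions~\ref{phi.phi.diff} through \ref{prop:g.est}) specialized to $t=0$ with phases $u^0_\bA$.

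However, you have misidentified the ``core difficulty.'' The target bounds \eqref{eq:g.L43.data}--\eqref{eq:g.L2.data} carry an \emph{explicit} $C(N)$ factor, so there is no need for $N$-independent control here: a na\"ive $\ell^1$-in-$\bA$ sum, together with $a_\bA\le C(N)$ from Lemma~\ref{lemma.adapt}, already lands inside the stated budget. Lemma~\ref{lem:orthogonality} is \emph{not} required for this proposition. Almost-orthogonality becomes essential only later, in the bootstrap of Section~\ref{sec:metric}, where one must recover the error bounds \eqref{eq:g.error.3}--\eqref{eq:g.error.4} with the sharper constant $C_b(N)$ (not $C(N)C_b(N)$) so as to improve the bootstrap without a time integral; see the discussion around \eqref{eq:orthogonality.again}. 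At $t=0$ the situation is strictly simpler: the error in $U$ from Lemma~\ref{lmini} is already $O(C(N)\ep\lambda^2)$, there are no $\wht u_\bA$ or $\wht\chi_\bA$ contributions (these are defined only afterwards in Lemma~\ref{lem:chi.data}), and the nonlinear dependence of $\bfG(\mfg),\Upsilon(\mfg)$ on $\widetilde{\underline{\mfg}}$ closes by a direct smallness argument. Your invocation of an ``$L^{4/3}$-dual form'' of Lemma~\ref{lem:orthogonality} is also unnecessary (and not quite a thing): in Section~\ref{sec:metric} the $L^{4/3}$ control comes from H\"older $L^{4/3}\leftarrow L^2\times L^4$ with $\|\rd\phi_1\|_{L^4}\ls\ep$, but again this refinement is only needed for the bootstrap, not here.
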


\subsection{Initial data for $u_\bA$ and $\chi_\bA$}\label{sec:data.u} 

The initial data for $u_\bA$ and $\chi_\bA$ are to be prescribe simultaneously since $\chi_\bA$ is completely determined by $\chi_\bA = \Box_g u_\bA$ and the eikonal equation for $u_\bA$.

On $\Sigma_0$, we prescribe $u_\bA=u_\bA^0 + \wht u_{\bA}$ (i.e., $u_\bA^2\restriction_{\Sigma_0} = 0$ and $\mathfrak v_\bA\restriction_{\Sigma_0} = 0$ for $u_\bA^2$, $\mathfrak v_\bA$ in \eqref{eq:u.para}, \eqref{u1.def}), where $u_\bA^0 = x^1 \cos \om_\bA + x^2 \sin \om_\bA$, and 
\begin{equation}\label{eq:initial.utilde}
\wht u_{\bA} = \lambda^3 \Bigg( \sum_{\bB: \bB \neq \bA} \Big( \mathfrak v^{\mathrm{data},(\bA)}_{1,\bB} \cos (\tfrac{a_\bB u_\bB^0}\lambda) + \mathfrak v^{\mathrm{data},(\bA)}_{2,\bB} \sin (\tfrac{2 a_\bB u_\bB^0}\lambda) \Big) + \sum_{\pm,\bB, \bC} \mathfrak v^{\mathrm{data},(\bA)}_{\bB,\bC} \sin(\tfrac{a_\bB u_\bB^0 \pm a_\bC u_\bC^0}\lambda) \Bigg),
\end{equation}
where $\mathfrak v^{\mathrm{data},(\bA)}_{1,\bB}$, $\mathfrak v^{\mathrm{data},(\bA)}_{2,\bB}$ and $\mathfrak v^{\mathrm{data},(\bA)}_{\bB,\bC}$ are to be chosen below (see Lemma~\ref{lem:chi.data}).

 The initial data for $\chi_\bA$ are completely determined by the expression $\chi_\bA=\Box_g u_\bA$ and the choice of the initial data for $u_\bA$ through the following expression (see \cite[Lemma 7.5]{HL.elliptic}):
	\begin{equation}\label{Box.u.data}
	\begin{split}
		\Box_g u_{\bA} \restriction_{\Sigma_0}=& \frac{1}{n}e^{-\gamma} (e_0\gamma) |\nabla u_{\bA}|\restriction_{\Sigma_0}+\frac{1}{n e^{2\gamma}} \delta^{ij}\rd_i(n \rd_j u_{\bA})\restriction_{\Sigma_0}\\
		&-\frac{1}{n}e^{-\gamma}\left(\frac{1}{|\nabla u_{\bA}|}\delta^{ij}\rd_i u_{\bA} \rd_j (e^{-\gamma} n |\nabla u_{\bA}|) + \frac{1}{|\nabla u_{\bA}|}\delta^{ij}(\rd_i u_\bA)(\rd_j \beta^k)\partial_k u_{\bA}\right)\restriction_{\Sigma_0}.
	\end{split}
\end{equation}

The following lemma shows that $\mathfrak v^{\mathrm{data},(\bA)}_{1,\bB}$, $ \mathfrak v^{\mathrm{data},(\bA)}_{2,\bB}$ and $ \mathfrak v^{\mathrm{data},(\bA)}_{1,\bB}$ can be chosen so that the initial data for $u_\bA$ and $\chi_\bA$ obey the desired estimates that are to be propagated in evolution.
\begin{lemma}\label{lem:chi.data}
There exists a suitable choice of $\mathfrak v^{\mathrm{data},(\bA)}_{1,\bB}$, $ \mathfrak v^{\mathrm{data},(\bA)}_{2,\bB}$ and $ \mathfrak v^{\mathrm{data},(\bA)}_{1,\bB}$ in \eqref{eq:initial.utilde} such that the following all hold:
\begin{enumerate}
\item  $ \mathfrak v^{\mathrm{data},(\bA)}_{1,\bB}$, $ \mathfrak v^{\mathrm{data},(\bA)}_{2,\bB}$ and $ \mathfrak v^{\mathrm{data},(\bA)}_{1,\bB}$ satisfies the bound 
\begin{equation}\label{eq:frkv.def.est}
\|\mathfrak v^{\mathrm{data}} \|_{C^8(K_0)} \leq C(N) \ep^{2}.
\end{equation}
In particular,
\begin{equation}
\sum_{k\leq 3}\lambda^k \|\rd \wht u_\bA \|_{W^{k,4}(K_0)} \leq C(N) \ep^{2} \lambda^2. 
\end{equation}
\item $\chi_\bA\restriction_{\Sigma_0}$ admits the decomposition $\chi_{\bA}\restriction_{\Sigma_0} = \chi_{\bA}^0 \restriction_{\Sigma_0} + \chi_{\bA}^1 \restriction_{\Sigma_0} + \wht \chi_{\bA} \restriction_{\Sigma_0}$ with the following properties:
\begin{enumerate}
\item $\chi_{\bA}^0 \restriction_{\Sigma_0}$ is given by the background value.
\item $\chi_\bA^1\restriction_{\Sigma_0}$ is given by \eqref{chi1.def}, where $\mathfrak X^{(\bA)}_{2,\bB}$ ($\bA\neq \bB$), $\mathfrak X^{(\bA)}_{1,\bB}$ ($\bA\neq \bB$) and $\mathfrak X^{(\bA)}_{\pm,\bB,\bC}$ are given by \eqref{eq:chi1.def.1.1}--\eqref{eq:chi1.def.1.3}, and $\mathcal H$ are given in Lemma \ref{lm:rll}.
\item $\mathfrak X^{(\bA)}_{2,\bA}\restriction_{\Sigma_0}$, $\mathfrak X^{(\bA)}_{1,\bA}\restriction_{\Sigma_0}$ are prescribed to obey the bounds 
\begin{equation}
\| \mathfrak X^{(\bA)}_{2,\bA}\|_{C^{8}(K_0)} + \| \mathfrak X^{(\bA)}_{1,\bA}\|_{C^{8}(K_0)} \leq C(N) \ep.
\end{equation}
\item $\wht \chi_\bA \restriction_{\Sigma_0}$ satisfies the estimate
\begin{equation}
\sum_{k\leq 3} \lambda^k \| \wht \chi_\bA \|_{H^k(\Sigma_0)} \leq C(N) \ep \lambda^2.
\end{equation}
\end{enumerate}
\end{enumerate}
\end{lemma}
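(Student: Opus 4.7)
The plan is to insert the ansatz \eqref{eq:initial.utilde} together with the metric decomposition of Proposition~\ref{prop:g.data} into the trace formula \eqref{Box.u.data} for $\Box_g u_\bA\restriction_{\Sigma_0}$, expand in $\lambda$, and match oscillatory amplitudes against the parametrix $\chi_\bA^0+\chi_\bA^1+\wht\chi_\bA$. The $\mathfrak v^{\mathrm{data}}$ parameters are solved for from this matching, and the free parameters $\mathfrak X^{(\bA)}_{\star,\bA}\restriction_{\Sigma_0}$ absorb the contributions at phases with $\bB=\bA$.

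The two sources of $O(\lambda)$ oscillatory contributions are: (a) first derivatives of $\underline{\mfg}_2$ acting on $u_\bA^0$ in \eqref{Box.u.data}, producing a catalogue of phases matching exactly those in $\chi_\bA^1$ with explicitly computable background amplitudes; and (b) two spatial derivatives acting on $\wht u_\bA$, producing in each phase the same phase back with coefficient $\lambda\,M(\bA;\bB,\bC)\,\mathfrak v^{\mathrm{data},(\bA)}_{\star}$, where $M$ is an explicit background quantity (for example, for the phase $\cos(a_\bB u_\bB^0/\lambda)$ one computes, after assembling the contributions from both $\delta^{ij}\partial_i\partial_j u_\bA$ and the $\tfrac{1}{n|\nabla u_\bA|}\delta^{ij}\partial_i u_\bA\partial_j|\nabla u_\bA|$ piece of \eqref{Box.u.data}, $M=-e^{-2\gamma_0}a_\bB^2|\nabla u_\bA^0 \times \nabla u_\bB^0|^2/|\nabla u_\bA^0|^2$) that is bounded below in absolute value by $C(N)^{-1}$ by the angular separation of Lemma~\ref{lem:angular.separation}. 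Crucially distinct $\mathfrak v^{\mathrm{data}}$ parameters feed distinct phases at leading order, so the matching decouples phase-by-phase.

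For each phase with $\bB\ne\bA$ (respectively each pair $(\bB,\bC)$ with $\bB,\bC\ne\bA$ and $\bB\ne\bC$), this matching is a scalar algebraic identity of the form $\lambda\,M\,\mathfrak v^{\mathrm{data},(\bA)}_{\star} = \lambda\,\mathfrak X^{(\bA)}_{\star} - (\text{$\underline{\mfg}_2$ amplitude})$, with $\mathfrak X^{(\bA)}_{\star}$ given by \eqref{eq:chi1.def.1.1}--\eqref{eq:chi1.def.1.3}. Solving yields $\mathfrak v^{\mathrm{data}}$ as an explicit ratio of background quantities; the bound $\|\mathfrak v^{\mathrm{data}}\|_{C^8(K_0)}\le C(N)\ep^2$ follows from the lower bound on $|M|$ together with \eqref{eq:bkgd.bd:FA}--\eqref{eq:bkgd.bd:FAB}, \eqref{eq:calG.est}, and the bound on $\mathcal H$ from Lemma~\ref{lm:rll} (noting that the $\mathcal H$, $\q G$, and $F_\bA^U$ factors are all at least quadratic in background matter and so genuinely scale like $\ep^2$, beating the $\ep$ estimate stated in the excerpt). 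For the remaining phases with $\bB=\bA$, which $\wht u_\bA$ does not touch, I define the initial data $\mathfrak X^{(\bA)}_{1,\bA}\restriction_{\Sigma_0},\mathfrak X^{(\bA)}_{2,\bA}\restriction_{\Sigma_0}$ to equal the corresponding $\underline{\mfg}_2$ amplitudes, immediately yielding the bound in part (2)(c) via \eqref{eq:calG.est}. The residual $\wht\chi_\bA\restriction_{\Sigma_0}$ then collects the $O(\lambda^2)$ leftovers --- the $\widetilde{\underline{\mfg}}$ contribution bounded by Proposition~\ref{prop:g.data}, subleading terms produced when $\Box_g$ hits the amplitudes in $\wht u_\bA$, and quadratic cross-terms --- and the estimate $\sum_{k\leq 3}\lambda^k \|\wht\chi_\bA\|_{H^k(\Sigma_0)}\le C(N)\ep\lambda^2$ follows by Sobolev embedding and the bounds already in hand. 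I expect the main technical obstacle to be careful phase bookkeeping in \eqref{Box.u.data}: one must verify that every $O(\lambda)$ term genuinely falls into the decoupled form above and that no cancellation in $M$ spoils the lower bound (the latter being precisely what the angular-separation estimate rules out).
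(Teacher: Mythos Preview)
Your overall approach is the same as the paper's: expand \eqref{Box.u.data} with the ansatz \eqref{eq:initial.utilde} and the metric decomposition of Proposition~\ref{prop:g.data}, observe that the second-derivative operator acting on $\wht u_\bA$ is $e^{-2\gamma_0}\big(\Delta-(\cos\om_\bA\rd_1+\sin\om_\bA\rd_2)^2\big)$, and match phase-by-phase. Your computation of the diagonal multiplier $M$ for the single phase $\cos(a_\bB u_\bB^0/\lambda)$ agrees with the paper's $-\lambda\,a_\bB^2\sin^2(\om_\bA-\om_\bB)$ (up to the $e^{-2\gamma_0}$ prefactor), and your treatment of the $\bB=\bA$ phases via the free data $\mathfrak X^{(\bA)}_{1,\bA},\mathfrak X^{(\bA)}_{2,\bA}$ is also what the paper does.

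The gap is in your justification of the lower bound $|M|\ge C(N)^{-1}$. Angular separation in the sense of Lemma~\ref{lem:angular.separation} only bounds $\cos(\om_\bA-\om_\bB)$ away from $+1$; it says nothing about the case $\om_\bA-\om_\bB$ close to $\pi$, for which $\sin^2(\om_\bA-\om_\bB)$ can be arbitrarily small. The paper resolves this by using the explicit lattice structure $\om_\bA=2\pi\bA/N$ together with the hypothesis that \emph{$N$ is odd}, which guarantees $|\om_\bA-\om_\bB\pm\pi|\ge\pi/N$ and hence $|\sin(\om_\bA-\om_\bB)|\gtrsim 1/N$. This is in fact the only place in the entire argument where the oddness of $N$ (assumed in Theorem~\ref{thm:dust.by.vacuum}) is used.

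More seriously, for the cross phase $\sin\big((a_\bB u_\bB^0\pm a_\bC u_\bC^0)/\lambda\big)$ the multiplier is $\big(a_\bB\sin(\om_\bA-\om_\bB)\pm a_\bC\sin(\om_\bA-\om_\bC)\big)^2$, and when $\bA,\bB,\bC$ are all distinct there is genuine potential for cancellation between the two summands. Angular separation does not rule this out. The paper's mechanism here is the \emph{geometric gap} in the constants $a_\bA$ built into \eqref{eq:choice.of.constants}, namely $a_\bA=cN^2 a_{\bA-1}$: since each $|a_\bD\sin(\om_\bA-\om_\bD)|$ lies in $[N^{-1},1]$ (by the previous point) and consecutive $a_\bD$ differ by a factor $cN^2$, one of the two summands dominates the other and no cancellation occurs. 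You should invoke \eqref{eq:choice.of.constants} rather than Lemma~\ref{lem:angular.separation} at this step.
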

\begin{proof}
\pfstep{Step~1: Preliminaries} We will assume for now that the estimates \eqref{eq:frkv.def.est} are satisfied. We will verify this when the functions $\mathfrak v^{\mathrm{data},(\bA)}_{1,\bB}$, $ \mathfrak v^{\mathrm{data},(\bA)}_{2,\bB}$ and $ \mathfrak v^{\mathrm{data},(\bA)}_{1,\bB}$ are chosen.

We first write 
\begin{equation}
\chi_\bA\restriction_{\Sigma_0} = \chi_\bA^0 + \underline{\chi}_\bA^1 + \wht{\underline{\chi}}_\bA,
\end{equation}
where $\chi_\bA^0$ is as before, but
\begin{equation}
	\underline{\chi}_\bA^1 = \lambda \Big(\sum_{\bB } \mathfrak X^{(\bA)}_{2,\bB} \sin ( \tfrac{2a_\bB u^0_\bB}{\lambda})
	+\sum_{\bB } \mathfrak X^{(\bA)}_{1,\bB} \cos ( \tfrac{a_\bB u^0_\bB}{\lambda})
	+ \sum_{\substack{\pm,\bB,\bC: \\\bB \neq \bA, \bC \neq \bA,\bB}}\mathfrak X^{(\bA)}_{\pm,\bB,\bC} \sin ( \tfrac{a_\bB u^0_\bB\pm a_\bC u^0_\bC}{\lambda})\Big).
\end{equation}
When compared to \eqref{chi1.def}, notice that the amplitudes $\mathfrak X^{(\bA)}_{2,\bB}$, etc.~are the same as in \eqref{chi1.def} and it is only that $u_\bB$ etc.~are replaced by $u_\bB^0$ etc.~in the phases. Observe that as long as \eqref{eq:frkv.def.est} are satisfied, it suffices to control the $\mathfrak X^{(\bA)}$ quantities and $\wht{\underline{\chi}}_\bA$, since 
\begin{equation}
\sum_{k\leq 3} \lambda^k \Big\| \sin(\tfrac{2 a_\bA u_\bA}\lambda) - \sin(\tfrac{2 a_\bA u_\bA^0}\lambda) \Big\|_{H^k(\Sigma_0)} \leq C(N) \ep^2\lambda^2,
\end{equation}
(and similarly for other phases), which can be derived using \eqref{eq:initial.utilde}, \eqref{eq:frkv.def.est} and the pointwise bound 
\begin{equation}\label{eq:phase.change}
\Big| \sin(\tfrac{2a_\bA u_\bA}\lambda) - \sin(\tfrac{2 a_\bA u_\bA^0}\lambda) \Big| \ls |a_\bA| \f{|u_\bA - u_{\bA}^0|}{\lambda} \leq C(N) \ep^2\lambda^2.
\end{equation}

\pfstep{Step~2: Prescribing $\mathfrak v^{\mathrm{data},(\bA)}_{1,\bB}$, $ \mathfrak v^{\mathrm{data},(\bA)}_{2,\bB}$, $ \mathfrak v^{\mathrm{data},(\bA)}_{1,\bB}$, $\mathfrak X^{(\bA)}_{2,\bA}\restriction_{\Sigma_0}$, $\mathfrak X^{(\bA)}_{1,\bA}\restriction_{\Sigma_0}$} 
We now consider $\Box_g u_{\bA} - \Box_{g_0} u_{\bA}^0$, which can be written in two ways. On the one hand, $\Box_g u_{\bA} - \Box_{g_0} u_{\bA}^0 = \underline{\chi}_{\bA}^1 + \wht{\underline{\chi}}_\bA$. On the other hand, it can be expressed by expanding the right-hand side of \eqref{Box.u.data} with $u_\bA = u_\bA^0 + \wht u_\bA$ and $\mfg = \mfg_0 + \underline{\mfg}_2 + \widetilde{\underline{\mfg}}$ (for each metric component $\mfg$). We isolate two type of terms that arise in this process:
\begin{enumerate}
\item Two derivatives hit on $\wht u_\bA$ while all other terms come from background contribution:
\begin{equation}\label{eq:chi.data.2nd.u}
\begin{split}
&\: \frac{1}{e^{2\gamma_0}} \delta^{ij} \rd^2_{ij} \wht u_{\bA} - \frac{1}{e^{-2\gamma_0}} \delta^{ij} \de^{k\ell} \rd_i u^0_{\bA} \rd_\ell u^0_{\bA} \rd^2_{jk} \wht u_{\bA} \\
= &\: \frac{1}{e^{2\gamma_0}} \Big( \Delta \wht u_{\bA} -  (\cos \om_\bA \rd_{1} + \sin \om_\bA \rd_{2})^2 \wht u_{\bA} \Big),
\end{split}
\end{equation}
where as before, $\Delta$ denotes the Euclidean Laplacian.
\item The $u_\bA$ contribution comes from background, while the $\mfg$ contribution is linear in $\rd_x \underline{\mfg}_2$. (Note that by Lemma~\ref{lem:e0gamma}, $e_0\gamma$ can be written as spatial derivative of $\bt$.) Using the oscillations allowed here, we see that these terms take the form
\begin{equation}
\lambda \Big( \sum_{\bB} \mathcal Y^{(\bA)}_{1,\bB} \cos (\tfrac{a_\bB u_\bB^0}\lambda) + \sum_{\bB} \mathcal Y^{(\bA)}_{2,\bB} \sin (\tfrac{2 a_\bB u_\bB^0}\lambda) + \sum_{\pm,\bB,\bC: \bB \neq \bC} \mathcal Y^{(\bA)}_{\pm,\bB,\bC} \sin(\tfrac{a_\bB u_\bB^0 \pm a_\bC u_\bC^0}\lambda)\Big). 
\end{equation}
By Proposition~\ref{prop:g.data}, we have
\begin{equation}
\|\mathcal Y \|_{C^8(K_0)} \leq C(N) \ep^{2}.
\end{equation}
\end{enumerate}
We only isolate these terms because by Lemma~\ref{lmini}, Proposition~\ref{prop:g.data} and the assumed bound \eqref{eq:frkv.def.est}, all the other terms coming from $\hbox{RHS of \eqref{Box.u.data}} - \hbox{(RHS of \eqref{Box.u.data})}_0$ satisfy 
$\sum_{k\leq 3} \lambda^k\|\cdots \|_{H^k(\Sigma_0)} \leq \ep^2 \lambda^2 C(N)$.

We further expand the terms in \eqref{eq:chi.data.2nd.u} using \eqref{eq:initial.utilde}. For this it is helpful to note $\Delta = (\cos \om_\bA \rd_{1} + \sin \om_\bA \rd_{2})^2 + (\sin \om_\bA \rd_{1} - \cos \om_\bA \rd_{2})^2$. First,
\begin{equation}
\begin{split}
&\: \lambda^3 \Big( \Delta (\mathfrak v^{\mathrm{data},(\bA)}_{1,\bB} \cos (\tfrac{a_\bB u_\bB^0}\lambda) ) -  (\cos \om_\bA \rd_{1} + \sin \om_\bA \rd_{2})^2 (\mathfrak v^{\mathrm{data},(\bA)}_{1,\bB} \cos (\tfrac{a_\bB u_\bB^0}\lambda)) \Big) \\
= &\: - \lambda \mathfrak v^{\mathrm{data},(\bA)}_{1,\bB} \cos (\tfrac{a_\bB u_\bB^0}\lambda)   \Big( a_\bB^2 (\sin \om_\bA \cos \om_\bB - \cos \om_\bA \sin \om_\bB)^2)  \Big) + O(\lambda^2 \rd^{\leq 2} \mathfrak v^{\mathrm{data},(\bA)}_{1,\bB} ) \\
= &\: - \lambda \mathfrak v^{\mathrm{data},(\bA)}_{1,\bB} \cos (\tfrac{a_\bB u_\bB^0}\lambda)    a_\bB^2 \sin^2(\om_\bA-\om_\bB) + O(\lambda^2 \rd^{\leq 2}\mathfrak v^{\mathrm{data},(\bA)}_{1,\bB} ),
\end{split}
\end{equation}
where $O(\lambda^2 \rd^{\leq 2}\mathfrak v^{\mathrm{data},(\bA)}_{1,\bB} )$ denotes terms which are up to two derivatives of $\mathfrak v^{\mathrm{data},(\bA)}_{1,\bB}$ with a prefactor $\lambda^2$.
Similarly,
\begin{equation}
\begin{split}
&\: \lambda^3 \Big( \Delta (\mathfrak v^{\mathrm{data},(\bA)}_{2,\bB} \sin (\tfrac{2a_\bB u_\bB^0}\lambda) ) -  (\cos \om_\bA \rd_{1} + \sin \om_\bA \rd_{2})^2 (\mathfrak v^{\mathrm{data},(\bA)}_{2,\bB} \sin (\tfrac{2a_\bB u_\bB^0}\lambda)) \Big) \\
= &\: - 4 \lambda \mathfrak v^{\mathrm{data},(\bA)}_{2,\bB} \sin (\tfrac{2a_\bB u_\bB^0}\lambda)    a_\bB^2 \sin^2(\om_\bA-\om_\bB)  + O(\lambda^2  \rd^{\leq 2} \mathfrak v^{\mathrm{data},(\bA)}_{2,\bB}).
\end{split}
\end{equation}
Finally, 
\begin{equation}
\begin{split}
&\: \lambda^3 \Big( \Delta (\mathfrak v^{\mathrm{data},(\bA)}_{\bB,\bC} \sin (\tfrac{a_\bB u_\bB^0 \pm a_\bC u_\bC^0}\lambda) ) -  (\cos \om_\bA \rd_{1} + \sin \om_\bA \rd_{2})^2 (\mathfrak v^{\mathrm{data},(\bA)}_{\bB,\bC} \sin (\tfrac{a_\bB u_\bB^0 \pm a_\bC u_\bC^0}\lambda)) \Big) \\
= &\: - \lambda \mathfrak v^{\mathrm{data},(\bA)}_{\bB,\bC} \sin (\tfrac{a_\bB u_\bB^0 \pm a_\bC u_\bC^0}\lambda) \Big( a_\bB \sin(\om_\bA - \om_\bB)  \pm a_\bC \sin(\om_\bA- \om_\bC) \Big)^2 + O(\lambda^2 \rd^{\leq 2} \mathfrak v^{\mathrm{data},(\bA)}_{\bB,\bC} ).
\end{split}
\end{equation}

Therefore, the equation that we need to solve for is
\begin{equation}
\begin{split}
&\: \lambda \Big( \sum_{\bB } \mathfrak X^{(\bA)}_{2,\bB} \sin ( \tfrac{2a_\bB u_\bB}{\lambda})
+ \sum_{\bB } \mathfrak X^{(\bA)}_{1,\bB} \cos( \tfrac{a_\bB u_\bB}{\lambda})
+\sum_{\pm,\bB,\bC:\bB \neq \bC}\mathfrak X^{(\bA)}_{\pm,\bB,\bC} \sin ( \tfrac{a_\bB u_\bB\pm a_\bC u_\bC}{\lambda}) \Big) + \wht \chi_\bA \\
= &\: \lambda \Big( \sum_{\bB} \mathcal Y_{\bA,1,\bB} \cos (\tfrac{a_\bB u_\bB}\lambda) + \sum_{\bB} \mathcal Y_{\bA,2,\bB} \sin (\tfrac{2 a_\bB u_\bB}\lambda) + \sum_{\pm,\bB,\bC: \bB \neq \bC} \mathcal Y_{\pm,\bA,\bB,\bC} \sin(\tfrac{a_\bB u_\bB \pm a_\bC u_\bC}\lambda)\Big) \\
&\:  - \lambda \Big( \sum_{\bB: \bB \neq \bA} a_\bB^2 \sin^2(\om_\bA-\om_\bB) (\mathfrak v^{\mathrm{data},(\bA)}_{1,\bB} \cos (\tfrac{a_\bB u_\bB^0}\lambda)      +  4 \mathfrak v^{\mathrm{data},(\bA)}_{2,\bB} \sin (\tfrac{2a_\bB u_\bB^0}\lambda) ) \Big) \\
&\: - \lambda \sum_{\pm,\bB,\bC: \bB\neq \bC} \mathfrak v^{\mathrm{data},(\bA)}_{\bB,\bC} \sin (\tfrac{a_\bB u_\bB^0 \pm a_\bC u_\bC^0}\lambda) \Big( a_\bB \sin(\om_\bA - \om_\bB)  \pm a_\bC \sin(\om_\bA- \om_\bC) \Big)^2 + \cdots,
\end{split}
\end{equation}
where $\mathfrak X^{(\bA)}_{2,\bB}$ ($\bB \neq \bA$), $\mathfrak X^{(\bA)}_{1,\bB}$  ($\bB \neq \bA$), $\mathfrak X^{(\bA)}_{\pm,\bB,\bC}$, $\mathcal Y_{\bA,1,\bB}$, $\mathcal Y_{\bA,2,\bB}$ and $\mathcal Y_{\bA,\bB,\bC}$ ($\bB\neq \bC$) are given, while the initial value for $\mathfrak v^{\mathrm{data},(\bA)}_{1,\bB}$ ($\bB \neq \bA$), $\mathfrak v^{\mathrm{data},(\bA)}_{2,\bB}$ ($\bB \neq \bA$), $\mathfrak v^{\mathrm{data},(\bA)}_{\bB,\bC}$ ($\bB \neq \bC$), $\mathfrak X^{(\bA)}_{1,\bA}$, $\mathfrak X^{(\bA)}_{2,\bA}$, $\wht \chi_\bA$ are to be prescribed.

This naturally suggests the following choices of initial data for the terms making up $\wht u_\bA$:
\begin{align}
\mathfrak v^{\mathrm{data},(\bA)}_{1,\bB} = \f{\mathcal Y_{\bA,1,\bB} - \mathfrak X^{(\bA)}_{1,\bB}}{a_\bB^2 \sin^2(\om_\bA-\om_\bB)},\quad \mathfrak v^{\mathrm{data},(\bA)}_{2,\bB} = \f{\mathcal Y_{\bA,2,\bB} - \mathfrak X^{(\bA)}_{2,\bB}}{4 a_\bB^2 \sin^2(\om_\bA-\om_\bB)},\quad \bA\neq \bB, \label{eq:u.data.lower.bound.invert.1} \\
\mathfrak v^{\mathrm{data},(\bA)}_{\bB,\bC} = \f{\mathcal Y_{\bA,\bB,\bC} - \mathfrak X^{(\bA)}_{\pm,\bB,\bC}}{( a_\bB \sin(\om_\bA - \om_\bB)  \pm a_\bC \sin(\om_\bA- \om_\bC) )^2},\quad \bB \neq \bC, \label{eq:u.data.lower.bound.invert.2}
\end{align}
as well as the following choices of initial data for $\mathfrak X^{(\bA)}_{1,\bA}$ and $\mathfrak X^{(\bA)}_{2,\bA}$:
\begin{equation}
\mathfrak X^{(\bA)}_{1,\bA} = \mathcal Y_{\bA,1,\bA},\quad \mathfrak X^{(\bA)}_{2,\bA} = \mathcal Y_{\bA,2,\bA}.
\end{equation}
Finally, $\wht \chi_\bA$ can be prescribed to absorb all the remaining terms. It can be easily checked all the desired estimates hold, as long as we can lower-bound the coefficients that were divided in \eqref{eq:u.data.lower.bound.invert.1} and \eqref{eq:u.data.lower.bound.invert.2}. In other words, it remains to check the following claims:
\begin{align}
a_\bB^2 \sin^2(\om_\bA - \om_\bB) \gtrsim \f 1{N^2} \quad \hbox{whenever $\bA \neq \bB$}, \label{eq:chi.data.easy.lower.bound}\\
\Big( a_\bB \sin(\om_\bA - \om_\bB)  \pm a_\bC \sin(\om_\bA- \om_\bC) \Big)^2 \gtrsim \f 1{N^2} \quad \hbox{whenever $\bB \neq \bC$}.\label{eq:chi.data.funny.lower.bound}
\end{align} 
These can be established with the following observations:
\begin{itemize}
\item When $\bA \neq \bB$, we have $\Big| \sin(\om_\bA - \om_\bB) \Big| \gtrsim \f 1{N}$. This is because $|\om_\bA - \om_\bB|\geq \f{2\pi}{N}$ (recall $\om_\bA = \f{2\pi \bA}{N}$), and since $N$ is odd\footnote{We remark that this is the only place in the proof where we used that $N$ is odd.}, we also have $|\om_\bB - \om_\bC \pm \pi| \geq \f{\pi}{N}$. This proves \eqref{eq:chi.data.easy.lower.bound}.
\item We now consider \eqref{eq:chi.data.funny.lower.bound} when $\bA = \bB \neq \bC$. In this case, $\hbox{LHS of \eqref{eq:chi.data.funny.lower.bound}} = |a_\bC|^2 \sin^2(\om_\bB- \om_\bC)$, which can be bounded as in \eqref{eq:chi.data.easy.lower.bound}. The case $\bA = \bC$ is the same.
\item It remains to prove \eqref{eq:chi.data.funny.lower.bound} when $\bA \neq \bB \neq \bC \neq \bA$. By the argument for \eqref{eq:chi.data.easy.lower.bound}, we must have $\f 1{N^2} \ls |a_\bB|^2 \sin^2(\om_\bA - \om_\bB) \ls 1$ and $\f 1{N^2} \ls |a_\bC|^2 \sin^2(\om_\bA - \om_\bC) \ls 1$. Investigating again the choice of the constants $a_\bB$ and $a_\bC$ in \eqref{eq:choice.of.constants}, we see that the gap between the constants prohibits cancellations and thus \eqref{eq:chi.data.funny.lower.bound} holds. 
\end{itemize}
In particular, we can justify assuming \eqref{eq:frkv.def.est} from the beginning and this concludes the proof. \qedhere
\end{proof}

\section{Bootstrap assumptions and their consequences}\label{sec:bootstrap}

\subsection{The bootstrap assumptions}\label{sec:bootstrap.assumptions}

The local solution constructed in Section \ref{sec.id} exists a priori only on a time interval whose size depends on $\lambda$. To push the existence result on an interval of size $1$, we will use a continuity argument, based on bootstrap assumptions for $\wht \phi$, $\wht \varpi$, $\wht{g}$, $\wht u_\bA$ and $\wht \chi_\bA$ that we give in this subsection, and are assumed to hold true on a time interval $[0,T)$ for some $T \in (0,1)$.

In order to introduce our bootstrap assumptions, we need a hierarchy of constants:
\begin{equation}\label{eq;constant.hierarchy}
C(N)  \ll C_b(N) \ll C_b'(N) \ll A(N).
\end{equation}
All of these are allowed to depend on $N$. We use $C(N)$ to denote all constants that depend only on $N$ which appear in the proof. The other constants are chosen to be appropriately larger; $C_b(N)$ and $C_b'(N)$ are bootstrap constants with $C_b'(N)$ being used for time-differentiated quantities. The largest constant $A(N)$ is to be used as an exponentially growing factor.  We use the convention that \textbf{the constants $C(N)$ can increase from line to line, but $C_b(N)$, $C_b'(N)$ and $A(N)$ will always denote the same functions, which are to be chosen at the end of the proof}; see Section~\ref{sec:together}. We also continue to use the convention introduced at the end of Section~\ref{sec:main.theorem}, where \textbf{the $\ls$ notation are used when the implicit constants involved as \underline{independent} of $N$}.

We will moreover impose that $\ep_2$ is a universally small constant and that hierarchy of constants in $N$ will be chosen much larger that $\ep_2^{-1}$. In particular, we will also assume that
\begin{equation}\label{eq:N.vs.ep}
C(N) \leq C_b(N) \ep^3 \leq C_b'(N) \ep^9 \leq A(N) \ep^{12}.
\end{equation}

\subsubsection{Estimates for the wave map field.}
\begin{align}
&\sum_{k\leq 3} \lambda^k (\|\rd \wht \phi \|_{H^k(\Sigma_t)} + \|\rd \wht \varpi \|_{H^k(\Sigma_t)}) \leq 2 C_{b}(N)\lambda^2 \ep e^{A(N)t}. \label{BA:wave}
\end{align}
\subsubsection{Estimates for the metric.} For $\wht \mfg$, each of the metric component can be decomposed as 
$$\wht \mfg(t,x)=\wht \mfg_{a}(t)\zeta(|x|)\log({|x|})+ \wht {\mfg}_r(t,x),$$
where $\zeta$ is a cutoff function defined as Definition~\ref{def:cutoff}. Because of the local existence result (see Theorem~\ref{lwp} and Theorem~\ref{thm:Touati}), we know that $\gamma_a$ is independent of time and $\bt^j_a = 0$.

Recall that $\alp_0 = 10^{-10}$ (see \eqref{eq:alp0}). We introduce three different types of bootstrap assumptions. For up to the first derivatives of $\wht{\mfg}$ and $\rd_t \wht{\mfg}$, we assume that the following $L^{\f 43}$-based bootstrap assumptions hold:
\begin{align}
&|\wht \mfg_{a}|(t)+\|\wht{\mfg}_r\|_{W^{1,\f 43}_{-\alp_0 - \f 12}(\Sigma_t)}\leq 2 C_{b}(N)\ep^{\frac{3}{2}} \lambda^2e^{A(N)t} \label{BA:g.L43}\\
&|\rd_t \wht \mfg_{a}|(t)+\|\rd_t\wht{\mfg}_r\|_{W^{1,\f 43}_{-\alp_0 - \f 12}(\Sigma_t)}\leq 2C'_{b}(N)\ep^{\frac{3}{2}} \lambda e^{A(N)t}.\label{BA:dtg.L43}
\end{align}
For up to the fourth derivatives of $\wht{\mfg}$ and third derivatives of $\rd_t \wht{\mfg}$, we assume that the following $L^4$-based bootstrap assumptions hold:
\begin{align}
&\sum_{k\leq 3} \lambda^k \|\wht{\mfg}_r\|_{W^{k+1,4}_{-\alp_0 + \f 12}(\Sigma_t)}\leq 2 C_{b}(N)\ep^{\frac{3}{2}} \lambda^2e^{A(N)t} \label{BA:g.L4}\\
&\sum_{k\leq 2} \lambda^k \|\rd_t\wht{\mfg}_r\|_{W^{k+1,4}_{-\alp_0 + \f 12}(\Sigma_t)}\leq 2C'_{b}(N)\ep^{\frac{3}{2}} \lambda e^{A(N)t}.\label{BA:dtg.L4}
\end{align}
For up to the top derivatives, we assume that the following $L^2$-based bootstrap assumption holds:
\begin{align}
	&\sum_{k\leq 3} \lambda^k \|\wht{\mfg}_r\|_{H^{k+2}_{-\alp_0}(\Sigma_t)}\leq 2 C_{b}(N)\ep^{\frac{3}{2}} \lambda e^{A(N)t} \label{BA:g.L2}\\
	&\sum_{k\leq 2} \lambda^k\|\rd_t\wht{\mfg}_r\|_{H^{k+2}_{-\alp_0}(\Sigma_t)}\leq 2C'_{b}(N)\ep^{\frac{3}{2}} e^{A(N)t}. \label{BA:dtg.L2}
\end{align}

\subsubsection{Estimates for the eikonal functions $u_\bA$ and for $\chi_\bA$}
Since $u_\bA$ and $\chi_\bA$ are always multiplied by the amplitude functions $F$, which are supported in $[0,T]\times B(0,2R+2)$, it suffices to prove estimates for them on the set $K$ in Definition~\ref{def:K}.

We now introduce the bootstrap assumptions for $u_\bA$ and $\chi_\bA$:
\begin{align}
	\sum_{k \leq 3} \lambda^k\| \partial \widetilde{u}_\bA\|_{W^{k,4}(K_t)}\leq &\: 2C_{b}(N) \lambda^2 \ep e^{A(N)t}, \label{BA:u.L4}\\ 
	\| \partial \widetilde{u}_\bA\|_{H^{4}(K_t)}\leq &\: 2C_{b}(N) \lambda^{-2} \ep e^{A(N)t}, \label{BA:u.L2} \\
\sum_{k \leq 3}
\lambda^k\|\wht \chi_\bA\|_{H^{k}(K_t)}\leq &\: 2 C_b(N) \lambda^2 \ep e^{A(N)t}. \label{BA:chi}
\end{align}
Note that similarly to the estimates for the metric functions, the lower order bounds for $\wht u_\bA$ are $L^4$-based while the top order bound is $L^2$-based.

\subsection{Bootstrap assumptions are satisfied initially}

It is straightforward to check that the constructions in Section~\ref{sec.id} are exactly set up so that the bootstrap assumptions hold at time $t=0$. (In particular, using calculations such as \eqref{eq:phase.change}, we can pass between the oscillating phases in terms of $u_\bA$ and those in terms of $u_{\bA}^0$.) We summarize this as follows.
\begin{proposition}
For a suitable choice of $C_b(N)$ and $C_b'(N)$, all the bootstrap assumptions hold initially when $t = 0$.
\end{proposition}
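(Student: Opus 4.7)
The plan is to verify each bootstrap inequality in Section~\ref{sec:bootstrap.assumptions} at $t=0$ directly from the initial data construction of Section~\ref{sec.id}, choosing $C_b(N), C_b'(N)$ large enough (and larger than the various $C(N)$ that appear) so that the factor~$2$ in the bootstrap is absorbed with room to spare. Every bound we need has already been proved in Lemma~\ref{lmini}, Lemma~\ref{lem:chi.data}, or Proposition~\ref{prop:g.data}; the only new content is a systematic translation between the $u_\bA^0$-phased expressions that arise there and the $u_\bA$-phased parametrices \eqref{eq:wave.para}--\eqref{chi1.def} used to define the bootstrap error terms.

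For \eqref{BA:wave}, I would first note that at $t=0$ the amplitudes $F_\bA^{1,U}, F_\bA^{2,U}$ vanish (they solve \eqref{whtfi}--\eqref{whtpi2} with zero initial data), so $\phi_0+\phi_1+\phi_2\restriction_{\Sigma_0}$ is precisely the expression compared to $U_\lambda$ in Lemma~\ref{lmini}, \emph{except} that the phases involve $u_\bA$ rather than $u_\bA^0$. Since $u_\bA^2\restriction_{\Sigma_0}=0$ (by $\mathfrak v_\bA\restriction_{\Sigma_0}=0$ in \eqref{defu1}) and $\wht u_\bA\restriction_{\Sigma_0}= O(\lambda^3 C(N))$ by Lemma~\ref{lem:chi.data}, the phase shift $a_\bA \wht u_\bA/\lambda$ is $O(\lambda^2 C(N))$. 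Taylor expanding $\cos(\cdot)$ to first order in this shift and using the $W^{k,4}(K_0)$ bound on $\wht u_\bA$ from Lemma~\ref{lem:chi.data} absorbs the discrepancy into $\wht\phi, \wht\varpi$ with constant $C(N)\ep$. The time-derivative component of $\rd \wht\phi, \rd \wht\varpi$ is obtained by converting the $H^3$ bound on $\dot U_\lambda$ from Lemma~\ref{lmini} into a bound on $\rd_t U_\lambda$ via $\rd_t U = (n/e^{2\gamma})\dot U + \beta^i \rd_i U$, together with the algebraic determination of $\rd_t u_\bA\restriction_{\Sigma_0}$ by the eikonal equation.

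For \eqref{BA:g.L43}--\eqref{BA:dtg.L2}, Proposition~\ref{prop:g.data} already produces on $\Sigma_0$ the decomposition $\mfg = \mfg_0 + \underline{\mfg}_2 + \widetilde{\underline{\mfg}}$ together with the required $L^{4/3}$-, $L^4$-, and $L^2$-based bounds on $\widetilde{\underline{\mfg}}$ (and on $\rd_t\widetilde{\underline{\mfg}}$). The bootstrap error $\wht\mfg$ differs from $\widetilde{\underline{\mfg}}$ only by the correction $\underline{\mfg}_2 - \mfg_2$ arising from the phase substitution $u_\bA^0 \to u_\bA$; Taylor expanding as above, and counting one factor $\lambda^{-1}$ per spatial derivative that lands on the oscillating phase, shows that this correction is strictly smaller than the Proposition~\ref{prop:g.data} bounds in every prescribed norm. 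The assumptions \eqref{BA:u.L4} and \eqref{BA:chi} are the contents of Lemma~\ref{lem:chi.data}(1) and~(2)(d), respectively. Finally, \eqref{BA:u.L2} follows from the explicit form \eqref{eq:initial.utilde}: with the amplitude bound $\|\mathfrak v^{\mathrm{data}}\|_{C^8(K_0)}\leq C(N)\ep^2$ from \eqref{eq:frkv.def.est}, and with up to five spatial derivatives of $\sin/\cos$ producing at worst $\lambda^{3-5}=\lambda^{-2}$, one obtains $\|\rd\wht u_\bA\|_{H^4(K_0)}\leq C(N)\ep^2\lambda^{-2}$, comfortably within $2 C_b(N)\ep\lambda^{-2}$.

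The main obstacle is purely bookkeeping: at each instance one must carefully track the small phase correction $a_\bA\wht u_\bA/\lambda$ and verify that the resulting error absorbs into the appropriate bootstrap quantity with a prefactor depending on $N$ only through $C(N)$. Once this is done, choosing $C_b(N), C_b'(N) \geq 10\, C(N)/\ep$ (say) suffices to make every initial bound strictly smaller than the corresponding bootstrap inequality, yielding the proposition.
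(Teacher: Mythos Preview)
Your proposal is correct and follows exactly the approach the paper intends: the paper's own proof is a one-line remark that the constructions in Section~\ref{sec.id} were set up precisely so that the bootstrap bounds hold at $t=0$, with the passage between $u_\bA$-phases and $u_\bA^0$-phases handled by the pointwise estimate \eqref{eq:phase.change}. You have supplied the bookkeeping that the paper omits, invoking Lemma~\ref{lmini}, Proposition~\ref{prop:g.data}, and Lemma~\ref{lem:chi.data} in exactly the right places and handling the phase correction by Taylor expansion as the paper suggests.
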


\subsection{Immediate consequences of the bootstrap assumptions}

We have the following $L^\infty$ estimate:
\begin{lm}\label{lminfty}
The following $L^\infty$ estimates hold for all $t \in [0,T)$:
\begin{align*}
\sum_{k\leq 1} \lambda^k (\|\partial \widetilde{\phi} \|_{W^{k,\infty}(\Sigma_t)} + \|\partial \widetilde{\varpi} \|_{W^{k,\infty}(\Sigma_t)}) \lesssim &\: \ep \lambda^\frac{1}{2},\\
\sum_{k \leq 3} \lambda^k \| \wht \mfg_r \|_{W^{k,\infty}_{-\alp_0+1}(\Sigma_t)} + \sum_{k\leq 2} \lambda^{k+1} \|\partial_t \wht \mfg_r \|_{W^{k,\infty}_{-\alp_0+1}(\Sigma_t)} \lesssim &\:  \ep \lambda^{\frac{3}{2}},\\
\sum_{k\leq 2} \lambda^k \|\partial \wht u_\bA \|_{W^{k,\infty}(K_t)} \lesssim &\: \ep \lambda^{\frac{1}{2}},\\
\sum_{k\leq 1} \lambda^k \|\wht{\chi}_\bA\|_{W^{k,\infty}(K_t)} \lesssim &\: \ep \lambda^\frac{1}{2}.
\end{align*}
\end{lm}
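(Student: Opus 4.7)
The plan is to derive each $L^\infty$ estimate from the corresponding bootstrap assumption in Section~\ref{sec:bootstrap.assumptions} by Sobolev embedding and Gagliardo--Nirenberg-type interpolation in two spatial dimensions, and then to absorb all of the $N$-dependent factors $C_b(N)$, $C_b'(N)$ and $e^{A(N)t}$ that these arguments introduce into the prefactor $\lambda^{1/2}$ by choosing $\lambda_0^{(N)}$ sufficiently small. The two 2D interpolation facts I need are (i) the Sobolev embedding $W^{1,p}\hookrightarrow L^\infty$ for any $p>2$ (in particular for $p=4$), and (ii) the interpolation $\|g\|_{L^\infty}\lesssim \|g\|_{L^2}^{2/3}\|g\|_{H^3}^{1/3}$ together with the refinement $\|g\|_{L^\infty}\lesssim \|g\|_{H^{1+\delta}}\lesssim \|g\|_{L^2}^{(1-\delta)/2}\|g\|_{H^2}^{(1+\delta)/2}$ for small $\delta>0$.

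For the wave map error $\wht\phi$ (and $\wht\varpi$ analogously) I apply (ii) to $\partial\wht\phi$, using the $k=0$ and $k=3$ endpoints of \eqref{BA:wave}, to get
\[
\|\partial\wht\phi\|_{L^\infty}\lesssim C_b(N)\,\ep\,\lambda\, e^{A(N)t},
\]
while $\lambda\|\partial^2\wht\phi\|_{L^\infty}$ follows from the refined version of (ii) applied to $\partial^2\wht\phi$ using the $k=1$ and $k=3$ endpoints. The same argument handles $\wht\chi_\bA$ from \eqref{BA:chi} (interpolating $\wht\chi_\bA$ and then $\partial\wht\chi_\bA$). For the eikonal error $\wht u_\bA$ the $L^4$-based bootstrap \eqref{BA:u.L4} already controls one spatial derivative in $L^4$, so (i) with $p=4$ directly yields $\lambda^k\|\partial^{k+1}\wht u_\bA\|_{L^\infty(K_t)}\lesssim C_b(N)\,\ep\,\lambda\, e^{A(N)t}$ for $k\leq 2$. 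For the metric I use the weighted 2D Sobolev embeddings (which, since $\alp_0=10^{-10}$ is tiny, leave ample room in the weights): the spatial bounds for $\wht\mfg_r$ follow from \eqref{BA:g.L2}, and to save extra powers of $\lambda$ I interpolate against \eqref{BA:g.L4}; the time-differentiated bounds are handled identically using \eqref{BA:dtg.L4} and \eqref{BA:dtg.L2}.

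In every case the resulting bound is of the schematic form $C_b(N)\ep\lambda^{\alpha}e^{A(N)t}$ (or $C_b'(N)\ep\lambda^{\alpha-1}e^{A(N)t}$ for the $\partial_t$ estimates) with some $\alpha>1/2$ once $\delta$ is taken small. Since $t\leq 1$, shrinking $\lambda_0^{(N)}$ so that $C_b'(N)e^{A(N)}(\lambda_0^{(N)})^{\alpha-1/2}\leq 1$ absorbs the entire $N$-dependence into $\lambda^{1/2}$, yielding the stated $\lesssim \ep\lambda^{1/2}$ with implicit constant independent of $N$. The only mildly technical step is bookkeeping the weights on the metric side, but this is routine because the weight shift produced by 2D Sobolev embedding ($\tfrac{1}{2}$ or $1$, depending on whether one starts from the $L^4$ or $L^2$ bootstrap) is precisely what separates the weight $-\alp_0$ on the domain side from the target weight $-\alp_0+1$, with the extra room to spare coming from the smallness of $\alp_0$.
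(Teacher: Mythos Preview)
Your proposal is correct and follows essentially the same strategy as the paper: Sobolev/Gagliardo--Nirenberg embedding applied to the bootstrap assumptions, followed by absorbing the $N$-dependent factor $C_b(N)e^{A(N)t}$ (or $C_b'(N)e^{A(N)t}$) into the surplus half-power of $\lambda$ via the smallness of $\lambda_0^{(N)}$. Two minor simplifications the paper makes relative to your write-up: (i) for $\partial\wht\phi$ it uses the sharp 2D Agmon inequality $\|h\|_{L^\infty}\lesssim \|h\|_{L^2}^{1/2}\|h\|_{H^2}^{1/2}$ directly (no $\delta$-loss needed), and (ii) for $\wht\mfg_r$ it applies the single weighted embedding $W^{k+1,4}_{-\alp_0+1/2}\hookrightarrow W^{k,\infty}_{-\alp_0+1}$ to \eqref{BA:g.L4} alone, with no interpolation against the $L^2$ bootstrap required. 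One small point you omit: for $\wht u_\bA$ and $\wht\chi_\bA$ the norms live on the bounded domain $K_t$, so one should invoke a standard Sobolev extension before applying the $\mathbb R^2$ embeddings; the paper notes this explicitly.
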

\begin{proof} 
These are direct consequences of the bootstrap assumptions and Sobolev embedding. For $\wht U = (\widetilde{\phi}, \widetilde{\varpi})$,
	we use the estimate $\|h\|_{L^\infty(\mathbb R^2)} \leq \|f\|_{L^2(\mathbb R^2)}^\frac{1}{2}\|f\|_{H^2(\mathbb R^2)}^\frac{1}{2}$ to obtain using \eqref{BA:wave} that
$$	\|\rd \wht U \|_{L^\infty} \lesssim \ep \lambda C_b(N) e^{A(N)t} \lesssim \ep \lambda^\frac{1}{2},$$
where we have used $\lambda^\frac{1}{2}  C_b(N)e^{A(N)t} \lesssim 1$. $\wht u_\bA$ and $\wht \chi_\bA$ can treated similarly using \eqref{BA:u.L4}, \eqref{BA:u.L2}, \eqref{BA:chi}, after using a standard Sobolev extension theorem so that the embedding estimate applies on the set $K_t$.

Finally, for $\wht \mfg_r$, we use the following weighted Sobolev embedding: by Theorem~3.4.1.(a) in Appendix~I of \cite{CB.book}, we have
\begin{equation}\label{eq:weighted.Li.Sobolev}
\| h \|_{W^{k,\infty}_{-\alp_0+1}(\mathbb R^2)} \ls \| h \|_{W^{k+1,4}_{-\alp_0+\f 12}(\mathbb R^2)}.
\end{equation}
Thus the desired estimate follows from \eqref{BA:g.L4} and \eqref{BA:dtg.L4}. \qedhere
\end{proof}

We also observe that even though the bootstrap assumptions do not explicitly bound $\wht{U}$ itself (they only bound $\rd \wht{U}$), we can estimate $\wht{U}$ using that it is compactly supported:
\begin{lemma}\label{lem:wave.Poincare}
The following estimates hold for all $t \in [0,T)$:
    \begin{equation*}
        \begin{split}
            \| \wht \phi \|_{L^2(\Sigma_t)} + \| \wht \varpi \|_{L^2(\Sigma_t)} \ls &\: C_b(N) \ep \lambda^2 \ep e^{A(N)t}, \\ 
            \| \wht \phi \|_{L^\infty(\Sigma_t)} + \| \wht \varpi \|_{L^\infty(\Sigma_t)} \ls &\: \ep \lambda.
        \end{split}
    \end{equation*}
\end{lemma}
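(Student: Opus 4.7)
The plan is to exploit compact support of $\wht\phi$ and $\wht\varpi$ together with a Poincar\'e inequality, so that the hypothesis controlling only $\rd\wht\phi, \rd\wht\varpi$ in \eqref{BA:wave} is enough to control the undifferentiated quantities. The proof reduces to (i) identifying a uniform-in-$\lambda$, uniform-in-$N$ compact set containing $\mathrm{supp}(\wht\phi)$ and $\mathrm{supp}(\wht\varpi)$, and (ii) passing from the $L^2$ bound on $\rd\wht U$ to the desired $L^2$ and $L^\infty$ bounds on $\wht U$.

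For the support property, I would argue as follows. The initial data $U_\lambda$, $\dot U_\lambda$ constructed in Lemma~\ref{lmini} are supported in $B(0,2R)$. Since the metric stays close to the background solution (which is in turn close to Minkowski by Corollary~\ref{lwp.small}), finite speed of propagation for the wave map system \eqref{sys} yields that $\phi,\varpi$ are supported in $B(0,2R+2)$ for all $t\in[0,T)\subseteq[0,1]$. On the other hand, by Proposition~\ref{prop:construct.dust} the background $\phi_0,\varpi_0$ and the amplitudes $F^U_{\bA}, F^{1,U}_{\bA}, F^{2,U}_{\bA}, F^U_{\bA,\bB}$ are all supported in $B(0,\tfrac{3R}{2}+2)$, so in view of \eqref{eq:wave.para}--\eqref{eq:wave.para.2} the full parametrix $\phi_0+\phi_1+\phi_2$ (and similarly for $\varpi$) is supported in a ball of radius depending only on $R$. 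Therefore $\wht\phi=\phi-\phi_0-\phi_1-\phi_2$ and $\wht\varpi=\varpi-\varpi_0-\varpi_1-\varpi_2$ are supported in some fixed ball $B(0,R^\star)$ with $R^\star\ls R$, uniformly in $\lambda,N,t$.

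Given this, the Poincar\'e inequality on $B(0,R^\star)$ gives
\[
\|\wht\phi\|_{L^2(\Sigma_t)}+\|\wht\varpi\|_{L^2(\Sigma_t)}\ls \|\rd\wht\phi\|_{L^2(\Sigma_t)}+\|\rd\wht\varpi\|_{L^2(\Sigma_t)},
\]
and the $k=0$ case of \eqref{BA:wave} yields the first asserted bound. For the $L^\infty$ estimate, I would combine the $L^2$ bound just obtained with the two-dimensional Sobolev interpolation $\|h\|_{L^\infty(\mathbb{R}^2)}\ls \|h\|_{L^2(\mathbb{R}^2)}^{1/2}\|h\|_{H^2(\mathbb{R}^2)}^{1/2}$. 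Since $\|\wht\phi\|_{H^2(\Sigma_t)}\ls\|\wht\phi\|_{L^2(\Sigma_t)}+\|\rd\wht\phi\|_{H^1(\Sigma_t)}\ls C_b(N)\lambda\,\ep\, e^{A(N)t}$ (using \eqref{BA:wave} at $k=1$ together with the first bound), interpolation gives $\|\wht\phi\|_{L^\infty(\Sigma_t)}\ls C_b(N)\lambda^{3/2}\ep\, e^{A(N)t}$, and analogously for $\wht\varpi$. The claim $\ls\ep\lambda$ then follows from the smallness $C_b(N)\lambda^{1/2}e^{A(N)t}\ls 1$, which is already invoked (and justified by the choice of hierarchy in Section~\ref{sec:bootstrap.assumptions}) in the proof of Lemma~\ref{lminfty}.

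No real obstacle is expected; the only point requiring care is the uniform-in-$\lambda,N$ compactness of $\mathrm{supp}(\wht U)$, which is not a gauge-invariant statement but is guaranteed here by the finite speed of propagation argument above applied to the elliptic gauge \eqref{g.form}, together with the explicit support properties of the parametrix amplitudes established in Section~\ref{sec.para.phi}.
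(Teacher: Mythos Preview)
Your proposal is correct and follows essentially the same approach as the paper: compact support of $\wht U$ (via finite speed of propagation and the explicit support of the parametrix amplitudes) together with Poincar\'e's inequality yields the $L^2$ bound from \eqref{BA:wave}, and the $L^\infty$ bound then follows from the two-dimensional interpolation $\|h\|_{L^\infty}\ls\|h\|_{L^2}^{1/2}\|h\|_{H^2}^{1/2}$ exactly as in Lemma~\ref{lminfty}. The paper's own proof is in fact terser than yours---it simply cites part~(5) of Theorem~\ref{lwp} for the support and defers the $L^\infty$ step to ``a similar manner as Lemma~\ref{lminfty}''---so your additional detail on tracking the supports of the individual parametrix pieces is more than the paper provides, but not different in substance.
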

\begin{proof}
    Since $\wht U = (\widetilde{\phi}, \widetilde{\varpi})$ is supported in $B(0,R+2)$ (by part (5) of Proposition~\ref{lwp} and the control over the metric), we have
    $$\|\wht U \|_{L^2(\Sigma_t)} \ls \| \rd \wht U\|_{L^2(\Sigma_t)}$$
    by Poincar\'e's inequality. Thus the first inequality follows from \eqref{BA:wave}. The second inequality can then be proven in a similar manner as Lemma~\ref{lminfty}. \qedhere
\end{proof}

\section{Estimates for the wave map field}\label{sec:wave}

In this section, we prove the estimates for the error terms $\wht \phi$ and $\wht \varpi$. Since these estimates are very similar, we will mostly focus on the bounds for $\wht \phi$ and omit the details for $\wht \varpi$.

\subsection{Equation for the error term}

Expanding $\Box_g \phi$ using the expression of $\phi$ in \eqref{eq:wave.para}, \eqref{eq:wave.para.2}, and using the equations $\Box_{g} \phi + \f 12 e^{-4\phi} g^{-1}(\ud\varpi,\ud\varpi) = 0$, $\Box_{g_0} \phi_{0} + \f 12 e^{-4\phi_0} g_0^{-1}(\ud\varpi_0,\ud\varpi_0) = 0$, we obtain
\begin{equation}\label{eq:wave.first.equation}
\begin{split}
\Box_g \wht \phi =&\: - \sum_{\bA}\Box_{g} \Big(\lambda a_\bA^{-1} F^\phi_\bA \cos(\tfrac{a_\bA u_\bA}{\lambda}) \Big) - \sum_{\bA} \Box_g \Big(\lambda^2 a_{\bA}^{-1}F_\bA^{1,\phi} \sin(\tfrac{a_\bA u_\bA}{\lambda})\Big) \\
&\: - \sum_{\bA} \Box_g \Big(\lambda^2 a_{\bA}^{-1}F_\bA^{2,\phi} \cos(\tfrac{2a_\bA u_\bA}{\lambda})\Big) - \sum_{\pm,\bA,\bB:\bA\neq \bB}\Box_g \left(\lambda^2 F^\phi_{\bA,\bB} \cos (\tfrac{a_\bA u_\bA \pm a_\bB u_\bB}{\lambda }) \right) \\
&\: +  (\Box_{g_0} - \Box_{g}) \phi_0 +  \f 12 e^{-4\phi_0} g_0^{-1}(\ud\varpi_0,\ud\varpi_0) - \f 12 e^{-4\phi} g^{-1}(\ud\varpi,\ud\varpi).
\end{split}
\end{equation}

We will consider the terms on the right-hand side of \eqref{eq:wave.first.equation} in the next few lemmas. Our goal will be to identify the main terms up to errors that satisfy 
\begin{equation}\label{eq:error.general}
\sum_{k\leq 3}\lambda^k\|\cdots\|_{H^k(\Sigma_t)} \leq \ep \lambda^2 C(N)C'_{b}(N)e^{A(N)t}
\end{equation}
for all $t \in [0,T)$. See Proposition~\ref{prop:wave.error} for the equation we derive for $\wht \phi$. 

Before we proceed, we make some useful observations that will be used throughout.

\begin{remark}\label{rmk:background}(Background quantities)
We will refer to $g_0$, $\phi_0$, $\varpi_0$, $F^\phi_\bA$, $F^\varpi_\bA$ as ``background quantities.'' Any smooth functions of these quantities, which is independent of $\lambda$ is also referred to ``background quantities.''

With this convention, the terms $F^\phi_{\bA,\bB}$, $\mathcal G_{1,\bA}$, $\mathcal G_{2,\bA}$, $\mathcal G_{\pm,\bA,\bB}$, $\mathfrak v_\bA$, $\mathfrak X^{(\bA)}_{1,\bB}$ ($\bA\neq \bB$), $\mathfrak X^{(\bA)}_{2,\bB}$ and $\mathfrak X^{(\bA)}_{\pm,\bB,\bC}$ are all background quantities. (Note that $F^{1,U}_\bA$ and $F^{2,U}_\bA$ are not considered background quantities at this point and have to be tracked explicitly.)
\end{remark}
 
\begin{remark}\label{rmk:tilded}(Dealing with the tilded quantities)
In what follows, we will expand $\phi$, $\varpi$, $g^{-1}$, $u_{\bA}$ (and their first derivatives) and $\chi_{\bA}$ according to the parametrices in \eqref{eq:wave.para}, \eqref{eq:g.para}, \eqref{eq:u.para} and \eqref{eq:chi.para}. The following observations are useful for the tilded quantities:
\begin{enumerate}
\item For most terms that involve the tilded quantities linearly, we use the bootstrap assumption to show that are at worst $O(C(N)C_{b}'(N)\lambda^2\ep^2 e^{A(N)t})$ in the norm $\sum_{k\leq 3}\lambda^{-k}\|\cdot\|_{H^k(\Sigma_t)}$. 
\item There are only two exceptions for which some care is needed:
\begin{enumerate}
\item For the term $\Box_{g} \Big(\lambda a_\bA^{-1} F^\phi_\bA \cos(\tfrac{a_\bA u_\bA}{\lambda}) \Big)$, both $\rd_t \wht g$ and $\lambda \rd^2 \wht u_\bA$ may show up, and either term is one power of $\lambda$ short. Importantly, the actual terms combine to $\chi_\bA$ and we use that $\widetilde{\chi}_\bA$ obeys the better bound $\widetilde{\chi}_\bA = O(\lambda C_b(N)e^{A(N)t})$.
\item For the term $(\Box_{g_0} - \Box_{g}) \phi_0$, we get terms of the form $\rd_t \mfg \rd \phi_0$, which is again one power of $\lambda$ short. \textbf{This is the only type of tilded terms that we need to keep.} See Proposition~\ref{prop:wave.error}.
\end{enumerate}
\item Using in addition Lemma~\ref{lminfty} (and Lemma~\ref{lem:wave.Poincare}), we further see that for terms where the tilded quantities appear at least quadratically, the terms can be bounded are $O(C(N)(C'_{b}(N))^2 \lambda^{\f 52}\ep^3 e^{2A(N)t})$. Despite the bad dependence on $N$, the additional power in $\lambda^{\f 12}$ makes it an acceptable error.
\end{enumerate}
\end{remark}


%

\begin{lm}\label{lm.sf.1} For any $\bA$,
\begin{align*}
\Box_{g} \Big(\lambda a_\bA^{-1} F^\phi_\bA \cos(\tfrac{a_\bA u_\bA}{\lambda})\Big) 
=&-\Big(F^{\phi}_{\bA} \chi_\bA^0 + 2(g_0^{-1})^{\alpha \beta}\partial_\alpha u^0_{\bA}\partial_\beta F^\phi_{\bA}\Big)\sin(\tfrac{a_\bA u_{\bA}}{\lambda}) \\
& +\lambda a_{\bA}^{-1}\Box_{g_0} F_\bA^\phi \cos(\tfrac{a_\bA u_\bA}{\lambda}) +\lambda A^{\phi,\Box}_{2,\bA} \sin(\tfrac{2a_\bA u_\bA}{\lambda}) \\
&+\lambda\sum_{\pm, \bB: \bB \neq \bA} \Big( A^{\phi,\Box}_{\bA,2\bB} \cos(\tfrac{2a_\bB u_\bB \pm a_\bA u_\bA}{\lambda}) +
	A^{\phi,\Box}_{\bA,\bB} \sin(\tfrac{a_\bB u_\bB \pm a_\bA u_\bA}{\lambda}) \Big)\\
	&+\lambda\sum_{\substack{\pm_1,\pm_2, \bB,\bC:\\ \bB \neq \bA,\bC\neq \bB,\bA}}A^{\phi,\Box}_{\pm_1,\pm_2,\bA,\bB,\bC} \cos(\tfrac{a_\bA u_\bA \pm_1 a_\bB u_\bB \pm_2 a_\bC u_\bC}{\lambda})
	+\cdots,
\end{align*}
	where $A^{\phi,\Box}_{2,\bA} $, $A^{\phi,\Box}_{\bA,\bB}$, $A^{\phi,\Box}_{\bA,2\bB} $ and $A^{\phi,\Box}_{\pm_1,\pm_2,\bA,\bB,\bC}$ are background quantities supported in $[0,1]\times B(0,2R)$ and obey the following bound for all $t \in [0,1]$:
	$$\| A \|_{C^8(\Sigma_t)} + \| \rd_t A \|_{C^7(\Sigma_t)} \leq C(N) \ep,$$
	and the error term $\cdots$ satisfies \eqref{eq:error.general}.
\end{lm}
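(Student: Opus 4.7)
The plan is to derive the stated decomposition by rewriting $\Box_g$ acting on a high-frequency product via the scalar product rule, then inserting the parametrix decompositions of Section~\ref{sec:parametrix} and regrouping by phase frequency. For any scalar $\theta$,
\begin{equation*}
\Box_g(h\cos\theta) = (\Box_g h)\cos\theta - 2(g^{-1})^{\alpha\beta}(\partial_\alpha h)(\partial_\beta\theta)\sin\theta - h(\Box_g\theta)\sin\theta - h\,(g^{-1})^{\alpha\beta}(\partial_\alpha\theta)(\partial_\beta\theta)\cos\theta.
\end{equation*}
With $h = \lambda a_\bA^{-1}F^\phi_\bA$ and $\theta = a_\bA u_\bA/\lambda$, the final $\cos\theta$ term drops by the eikonal equation $g^{\alpha\beta}\partial_\alpha u_\bA\partial_\beta u_\bA = 0$, producing the exact identity
\begin{equation*}
\Box_g\bigl(\lambda a_\bA^{-1}F^\phi_\bA\cos(\tfrac{a_\bA u_\bA}{\lambda})\bigr) = \lambda a_\bA^{-1}(\Box_g F^\phi_\bA)\cos(\tfrac{a_\bA u_\bA}{\lambda}) - \bigl[2(g^{-1})^{\alpha\beta}(\partial_\alpha F^\phi_\bA)(\partial_\beta u_\bA) + F^\phi_\bA\chi_\bA\bigr]\sin(\tfrac{a_\bA u_\bA}{\lambda}).
\end{equation*}
This is the skeleton; everything else is accounting.

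I next substitute $g^{-1} = g_0^{-1} + (g^{-1}-g_0^{-1})$, $u_\bA = u_\bA^0 + u_\bA^2 + \wht u_\bA$, and $\chi_\bA = \chi_\bA^0 + \chi_\bA^1 + \wht\chi_\bA$ into this identity. The $O(1)\sin$ term on the first displayed line of the lemma comes from keeping $g_0^{-1}$, $u_\bA^0$, $\chi_\bA^0$ in the amplitude while retaining the full $u_\bA$ in the phase. The $\lambda a_\bA^{-1}\Box_{g_0}F^\phi_\bA\cos(\tfrac{a_\bA u_\bA}{\lambda})$ term is obtained by replacing $\Box_g$ by $\Box_{g_0}$ on the background amplitude $F^\phi_\bA$, at the cost of an $O(\lambda^2)$ error. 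The new $O(\lambda)$ oscillations at the listed combined phases are generated by two mechanisms. First, the product $F^\phi_\bA\chi_\bA^1\sin(\tfrac{a_\bA u_\bA}{\lambda})$ together with the three phase families in \eqref{chi1.def} produces, via $\sin\alpha\sin\beta=\tfrac12(\cos(\alpha-\beta)-\cos(\alpha+\beta))$ and its analogue for $\sin\alpha\cos\beta$, precisely the listed phases: $2a_\bA u_\bA/\lambda$ (from the $\bB=\bA$ diagonal of $\mathfrak X^{(\bA)}_{1,\bB}$); $(2a_\bB u_\bB\pm a_\bA u_\bA)/\lambda$ and $(a_\bB u_\bB\pm a_\bA u_\bA)/\lambda$ (from $\bB\neq\bA$); and the triple phase $(a_\bA u_\bA\pm_1 a_\bB u_\bB\pm_2 a_\bC u_\bC)/\lambda$ (from the tri-phase sum with $\bA,\bB,\bC$ distinct). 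Second, $\partial_\beta u_\bA^2$, computed from $u_\bA^2 = \lambda^2 a_\bA^{-1}\mathfrak v_\bA\sin(\tfrac{a_\bA u_\bA}{\lambda})$, contributes an $O(\lambda)\cos(\tfrac{a_\bA u_\bA}{\lambda})$ oscillation; multiplied by $\sin(\tfrac{a_\bA u_\bA}{\lambda})$ this folds $\tfrac12\sin(\tfrac{2a_\bA u_\bA}{\lambda})$ into the $A^{\phi,\Box}_{2,\bA}$ coefficient. The amplitudes $A^{\phi,\Box}$ are polynomial expressions in the background metric, $F^\phi_\bA$, $\mathfrak X^{(\bA)}$, $\mathfrak v_\bA$ and the constants $a_\bA$; their $C^8$ bounds (and $C^7$ bounds for $\partial_t$-differentiated quantities) of size $C(N)\ep$ are immediate from Section~\ref{sec.para.phi}--\ref{sec.para.u}, with $N$-dependence entering only through $\mathfrak X^{(\bA)}$ and the constants $a_\bA$.

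Everything else goes into $\cdots$ and must satisfy \eqref{eq:error.general}. These residual terms split into three types. Type~A (linear in the tilded quantities, i.e.\ exactly one factor among $\wht\mfg$, $\wht u_\bA$, $\wht\chi_\bA$, $\wht U$): the Leibniz rule $\|fh\|_{H^k}\ls\|f\|_{H^k}\|h\|_{L^\infty}+\|f\|_{L^\infty}\|h\|_{H^k}$ combined with the background $C^k$ estimates and the bootstrap assumptions \eqref{BA:wave}, \eqref{BA:g.L2}--\eqref{BA:dtg.L2}, \eqref{BA:u.L2}, \eqref{BA:chi} yields the required $\ep\lambda^2 C(N)C_b'(N)e^{A(N)t}$. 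Type~B (quadratic or higher in the tilded quantities): Lemma~\ref{lminfty} gains an extra $\lambda^{1/2}$ per tilded factor, so by \eqref{eq:N.vs.ep} these fit comfortably into the error. Type~C (purely background errors of size $\lambda^2$): examples are $\lambda(\Box_g-\Box_{g_0})F^\phi_\bA\cos\theta$ and $(g^{-1}-g_0^{-1})^{\alpha\beta}(\partial_\alpha F^\phi_\bA)(\partial_\beta u_\bA^0)\sin\theta$, in which $g^{-1}-g_0^{-1}$ is itself $O(\lambda^2)$ from the $g_2$ piece of the parametrix; both carry $O(\lambda^2)$ amplitudes times bounded high-frequency phases and so satisfy \eqref{eq:error.general} directly. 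The main obstacle is purely bookkeeping: carefully partitioning each product-to-sum pairing arising from expanding $\chi_\bA$ and $u_\bA$ against $\sin(\tfrac{a_\bA u_\bA}{\lambda})$ into the listed $O(\lambda)$ coefficients, into the main $O(1)\sin\theta$ or $\lambda a_\bA^{-1}\Box_{g_0}F^\phi_\bA\cos\theta$ terms (via cancellations like the $\cos\theta$ piece of $\sin\theta\sin(\tfrac{2a_\bA u_\bA}{\lambda})$ being absorbed into the $\cos\theta$ line), or into the $\cdots$ error pool.
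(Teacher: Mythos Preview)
Your proposal is correct and follows essentially the same approach as the paper: both compute $\Box_g$ of the oscillatory product via the scalar Leibniz rule, use the eikonal equation to kill the $(g^{-1})^{\alpha\beta}\partial_\alpha u_\bA\partial_\beta u_\bA$ term, then insert the parametrix decompositions $g^{-1}=g_0^{-1}+\cdots$, $u_\bA=u_\bA^0+u_\bA^2+\wht u_\bA$, $\chi_\bA=\chi_\bA^0+\chi_\bA^1+\wht\chi_\bA$, identifying the explicit $O(\lambda)$ oscillations from $\chi_\bA^1\sin(\tfrac{a_\bA u_\bA}{\lambda})$ and $(g_0^{-1})^{\alpha\beta}\partial_\alpha u_\bA^2\partial_\beta F^\phi_\bA\sin(\tfrac{a_\bA u_\bA}{\lambda})$ via product-to-sum formulas, and relegating the remainder to the error pool exactly along the lines of Remark~\ref{rmk:tilded}. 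Your Type~A/B/C breakdown of the residuals matches the paper's error discussion, and your phase bookkeeping is somewhat more explicit than the paper's own sketch.
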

\begin{proof}
We first compute
\begin{equation}
	\begin{split}
&\Box_{g} (\lambda a_\bA^{-1} F^\phi_\bA \cos(\tfrac{a_\bA u_\bA}{\lambda}))\\
=& -\Big(F^{\phi}_{\bA} \chi_{\bA} + 2\gi^{\alpha \beta}\partial_\alpha u_{\bA}\partial_\beta F^\phi_{\bA}\Big)\sin(\tfrac{a_\bA u_{\bA}}{\lambda})+ \lambda  a_\bA^{-1} \Box_g F^\phi_{\bA} \cos(\tfrac{a_\bA u_\bA}{\lambda}),
\end{split}
\end{equation}
where we have used the fact that $u_{\bA}$ satisfy the eikonal equation $\gi^{\alp\bt} \rd_\alp u_\bA \rd_\bt u_\bA = 0$ and that $\Box_g u_\bA = \chi_\bA$.
Using the decomposition of $g$, $u_\bA $ and $\chi_\bA$, we write
\begin{equation}
	\begin{split}
&\Box_{g} (\lambda a_\bA^{-1} F^\phi_\bA \cos(\tfrac{a_\bA u_\bA}{\lambda}))\\
=& -\Big(F^{\phi}_{\bA} \chi^0_{\bA} + 2(g_0^{-1})^{\alpha \beta}\partial_\alpha u^0_{\bA}\partial_\beta F^\phi_{\bA}\Big)\sin(\tfrac{a_\bA u_{\bA}}{\lambda})\\ 
&+(-\chi_\bA^{1}F_\bA^\phi +2(g_0^{-1})^{\alpha \beta}\partial_\alpha u_\bA^2 \partial_\beta F_\bA^\phi)
\sin(\tfrac{a_\bA u_\bA}{\lambda})
+a_\bA^{-1}\lambda \Box_{g_{0}}F_\bA^\phi \cos(\tfrac{a_\bA u_\bA}{\lambda} )+\cdots,
\end{split}
\end{equation}
where the $\cdots$ are all acceptable terms according to Remark~\ref{rmk:tilded}.

Let us compute the $O(\lambda)$ term, which can be entirely expressed in term of background quantities. We have
\begin{align*}
	\chi_\bA^1  \sin( \tfrac{a_\bA u_\bA}{\lambda})
	=& \lambda \sum_{\bB:\bB \neq \bA} \mathfrak X^{(\bA)}_{2,\bB} \sin ( \tfrac{2a_\bB u_\bB}{\lambda}) \sin( \tfrac{a_\bA u_\bA}{\lambda})\\
	&+\lambda \sum_{\bB:\bB \neq \bA} \mathfrak X^{(\bA)}_{1,\bB} \cos ( \tfrac{a_\bB u_\bB}{\lambda}) \sin( \tfrac{a_\bA u_\bA}{\lambda})\\
	&+\lambda \sum_{\substack{\pm,\bB,\bC: \\ \bB\neq \bA, \bC \neq \bA,\bB}} \mathfrak X^{(\bA)}_{\pm,\bB,\bC} \sin ( \tfrac{a_\bB u_\bB\pm a_\bC u_\bC}{\lambda}) \sin( \tfrac{a_\bA u_\bA}{\lambda}),
\end{align*}
and we have
$$2(g_0^{-1})^{\alpha \beta}\partial_\alpha u_\bA^1 \partial_\beta F_\bA^\phi
\sin(\tfrac{a_\bA u_\bA}{\lambda})
=2\lambda\mathfrak v_\bA(g_0^{-1})^{\alpha \beta}\partial_\alpha u_\bA^0 \partial_\beta F_\bA^\phi \cos(\tfrac{a_\bA u_\bA}{\lambda})
\sin(\tfrac{a_\bA u_\bA}{\lambda})+\cdots$$
where $\cdots$ satisfies the desired estimates. This concludes the proof of Lemma \ref{lm.sf.1}.
\end{proof}

\begin{lm}\label{lm.sf.2}
	For any $\bA$,
\begin{align*}\Box_g \Big(\lambda^2 a_{\bA}^{-1}F_\bA^{1,\phi} \sin(\tfrac{a_\bA u_\bA}{\lambda})\Big)=&\: +\lambda\Big(F^{1,\phi}_{\bA} \Box_{g_0} u_{\bA}^0 + 2g_0^{-1}(\ud u^0_{\bA},\ud F^{1,\phi}_{\bA}) \Big)\cos(\tfrac{a_\bA u_{\bA}}{\lambda})+\cdots\\
	\Box_g \Big(\lambda^2 a_{\bA}^{-1}F_\bA^{2,\phi} \cos(\tfrac{2a_\bA u_\bA}{\lambda})\Big)=&\: -2\lambda\Big(F^{2,\phi}_{\bA} \Box_g u_{\bA}^0 + 2g_0^{-1}(\ud u^0_{\bA}, \ud F^{2,\phi}_{\bA})\Big)\sin(\tfrac{2 a_\bA u_{\bA}}{\lambda})+\cdots,
\end{align*}
where the error term $\cdots$ satisfies \eqref{eq:error.general}.
\end{lm}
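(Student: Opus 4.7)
\medskip

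\noindent\textbf{Proof proposal for Lemma~\ref{lm.sf.2}.} The plan is a direct application of the wave operator identity
\begin{equation*}
\Box_g(h_1 h_2) = (\Box_g h_1) h_2 + h_1 (\Box_g h_2) + 2 g^{-1}(\ud h_1, \ud h_2),
\end{equation*}
combined with the exact eikonal equation $g^{-1}(\ud u_\bA, \ud u_\bA) = 0$ and the identity $\Box_g u_\bA = \chi_\bA$, followed by replacing $g$, $u_\bA$ and $\chi_\bA$ by their background analogues. First I would compute, for a general amplitude $f$ and phase $\psi$,
\begin{align*}
\Box_g\bigl(f \sin(\tfrac{\psi}{\lambda})\bigr) &= (\Box_g f)\sin(\tfrac{\psi}{\lambda}) + \tfrac{1}{\lambda}\bigl(f (\Box_g \psi) + 2 g^{-1}(\ud f,\ud \psi)\bigr)\cos(\tfrac{\psi}{\lambda}) - \tfrac{1}{\lambda^2} f\, g^{-1}(\ud\psi,\ud\psi)\sin(\tfrac{\psi}{\lambda}),\\
\Box_g\bigl(f \cos(\tfrac{\psi}{\lambda})\bigr) &= (\Box_g f)\cos(\tfrac{\psi}{\lambda}) - \tfrac{1}{\lambda}\bigl(f (\Box_g \psi) + 2 g^{-1}(\ud f,\ud \psi)\bigr)\sin(\tfrac{\psi}{\lambda}) - \tfrac{1}{\lambda^2} f\, g^{-1}(\ud\psi,\ud\psi)\cos(\tfrac{\psi}{\lambda}).
\end{align*}
Applied to $f = F_\bA^{1,\phi}$, $\psi = a_\bA u_\bA$ in the first formula and to $f = F_\bA^{2,\phi}$, $\psi = 2a_\bA u_\bA$ in the second, the $\lambda^{-2}$ singular terms vanish identically by the eikonal equation for $u_\bA$. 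After multiplication by $\lambda^2 a_\bA^{-1}$ this leaves
\begin{align*}
\Box_g\bigl(\lambda^2 a_\bA^{-1} F^{1,\phi}_\bA \sin(\tfrac{a_\bA u_\bA}{\lambda})\bigr) &= \lambda\bigl(F^{1,\phi}_\bA \chi_\bA + 2 g^{-1}(\ud u_\bA, \ud F^{1,\phi}_\bA)\bigr)\cos(\tfrac{a_\bA u_\bA}{\lambda}) + \lambda^2 a_\bA^{-1}(\Box_g F^{1,\phi}_\bA)\sin(\tfrac{a_\bA u_\bA}{\lambda}),\\
\Box_g\bigl(\lambda^2 a_\bA^{-1} F^{2,\phi}_\bA \cos(\tfrac{2a_\bA u_\bA}{\lambda})\bigr) &= -2\lambda\bigl(F^{2,\phi}_\bA \chi_\bA + 2 g^{-1}(\ud u_\bA, \ud F^{2,\phi}_\bA)\bigr)\sin(\tfrac{2a_\bA u_\bA}{\lambda}) + \lambda^2 a_\bA^{-1}(\Box_g F^{2,\phi}_\bA)\cos(\tfrac{2a_\bA u_\bA}{\lambda}).
\end{align*}

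The next step is to replace $g$, $u_\bA$ and $\chi_\bA$ in the surviving $O(\lambda)$ terms by $g_0$, $u_\bA^0$ and $\chi_\bA^0 = \Box_{g_0} u_\bA^0$ respectively, using the decompositions $g = g_0 + \mfg_2 + \wht{\mfg}$ (over each component), $u_\bA = u_\bA^0 + u_\bA^2 + \wht u_\bA$, $\chi_\bA = \chi_\bA^0 + \chi_\bA^1 + \wht \chi_\bA$ from \eqref{eq:g.para}, \eqref{eq:u.para}, \eqref{eq:chi.para}. Since $F^{1,\phi}_\bA$ and $F^{2,\phi}_\bA$ are background quantities in the sense of Remark~\ref{rmk:background} and are bounded by $\ep^2$ in $C^7$ by \eqref{eq:bkgd.bd:FA}, every term involving a tilded quantity (i.e., $\wht{\mfg}$, $\rd \wht{\mfg}$, $\wht u_\bA$, $\rd \wht u_\bA$ or $\wht \chi_\bA$) is estimated by the bootstrap assumptions \eqref{BA:wave}--\eqref{BA:chi} exactly as in case (1) of Remark~\ref{rmk:tilded}, producing an error bounded by $\ep^2 \lambda^2 C(N) C_b'(N) e^{A(N) t}$ in the norm $\sum_{k\leq 3}\lambda^k\|\cdot\|_{H^k(\Sigma_t)}$; this fits into \eqref{eq:error.general}. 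Crucially, since the parametrices $F_\bA^{j,\phi}\sin,\cos$ under consideration already start at order $\lambda^2$ (rather than $\lambda$), the delicate exceptions (2a)--(2b) of Remark~\ref{rmk:tilded}, which were needed in Lemma~\ref{lm.sf.1}, do not arise here: the linear-in-tilde contributions are all automatically one order in $\lambda$ better than the borderline terms, giving genuine gains. Similarly, the $\mfg_2$, $u_\bA^2$, $\chi_\bA^1$ contributions, each of size $O(\lambda^2)$ or higher, also belong to $\cdots$. Finally the remainder terms $\lambda^2 a_\bA^{-1}(\Box_g F^{j,\phi}_\bA)\sin$ and $\cos$ involve only background quantities, which are controlled using \eqref{eq:bkgd.bd:FA} combined with the metric estimates of Proposition~\ref{prop:construct.dust}; these are manifestly $O(\lambda^2 C(N)\ep^2)$ in the required norm, hence belong to $\cdots$.

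The only potential obstacle is verifying that the replacement of $\chi_\bA = \chi_\bA^0 + \chi_\bA^1 + \wht \chi_\bA$ produces controllable errors at the $O(\lambda)$ level, since a priori $\chi_\bA^1$ is itself of order $\lambda$. However the product $F^{j,\phi}_\bA \chi_\bA^1 \sin(\cdot)$ or $\cos(\cdot)$ combined with the explicit form \eqref{chi1.def} of $\chi_\bA^1$ yields only further oscillating terms with trivial or higher-frequency phases, all of amplitude $O(\lambda)$ and background, which are supported in $[0,1]\times B(0,2R)$ and bounded in $C^8$ by $C(N)\ep^2$; thus after multiplication with $\lambda$ from the overall factor, these generate $O(\lambda^2)$ contributions to $\cdots$ obeying \eqref{eq:error.general}. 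The term $F^{j,\phi}_\bA \wht\chi_\bA$ is handled by the bootstrap assumption \eqref{BA:chi}, which gains a factor $\lambda^2$, and the tilded $g$-term in $g^{-1}(\ud u_\bA,\ud F_\bA^{j,\phi})$ is absorbed using \eqref{BA:g.L2}--\eqref{BA:dtg.L2}. Combining, the leading contribution is exactly as stated, and everything else is folded into $\cdots$. \hfill$\qed$
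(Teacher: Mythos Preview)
Your proof is correct and follows essentially the same approach as the paper's own (omitted) proof, which simply remarks that the computation is identical to Lemma~\ref{lm.sf.1} but easier because the extra power of $\lambda$ lets all $O(\lambda^2)$ terms be discarded as error. You have accurately identified this key simplification---namely that the exceptions (2a)--(2b) of Remark~\ref{rmk:tilded} do not arise---and have spelled out the details the paper leaves implicit.
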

\begin{proof}
These are similar to that in Lemma~\ref{lm.sf.1}, except that they are easier because we could discard all the $\lambda^2$ terms as error. We omit the details. \qedhere


\end{proof}

\begin{lm}\label{lem:wave.main.elliptic.terms}
For $\bA \neq \bB$ and $F^\phi_{\bA,\bB}$ prescribed as in \eqref{fabphi}, we have
\begin{align*}
	\Box_g \Big(\lambda^2 F^\phi_{\bA,\bB} \cos (\tfrac{a_\bA u_\bA \pm a_\bB u_\bB}{\lambda }) \Big)
	=&\: \mp \frac{e^{-4\phi_0} F_\bA^\varpi F_\bB^\varpi}{4} g^{-1}( \ud  u_\bA, \ud u_\bB ) \cos (\tfrac{a_\bA u_\bA \pm a_\bB u_\bB}{\lambda })\\
	&\:  +\lambda A_{\pm,\bA ,\bB}^{FAB} \sin(\tfrac{a_\bA u_\bA \pm a_\bB u_\bB}{\lambda }) + \cdots,
\end{align*}
where $A_{\pm,\bA,\bB}^{FAB}$ are background quantities supported in $[0,1]\times B(0,2R)$ and obey the following bound for all $t \in [0,1]$:
	$$\| A \|_{C^8(\Sigma_t)} + \| \rd_t A \|_{C^7(\Sigma_t)} \leq C(N) \ep,$$
	and the error term $\cdots$ satisfies \eqref{eq:error.general}.
\end{lm}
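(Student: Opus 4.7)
The plan is a direct product-rule computation, combined with the eikonal equation and the specific choice of $F^\phi_{\bA,\bB}$, followed by an error analysis in the spirit of Remark~\ref{rmk:tilded}. Writing $\Phi_\pm := a_\bA u_\bA \pm a_\bB u_\bB$, the product rule gives
\begin{align*}
\Box_g \bigl( \lambda^2 F^\phi_{\bA,\bB} \cos(\tfrac{\Phi_\pm}{\lambda}) \bigr)
= &\: - F^\phi_{\bA,\bB}\, g^{-1}(\ud \Phi_\pm, \ud \Phi_\pm) \cos(\tfrac{\Phi_\pm}{\lambda}) \\
&\: - \lambda \bigl( 2 g^{-1}(\ud F^\phi_{\bA,\bB}, \ud \Phi_\pm) + F^\phi_{\bA,\bB}\, \Box_g \Phi_\pm \bigr) \sin(\tfrac{\Phi_\pm}{\lambda}) \\
&\: + \lambda^2 (\Box_g F^\phi_{\bA,\bB}) \cos(\tfrac{\Phi_\pm}{\lambda}).
\end{align*}

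The first step is to extract the $O(\lambda^0)$ main term. Since $u_\bA$ and $u_\bB$ both satisfy the eikonal equation, we have $g^{-1}(\ud u_\bA, \ud u_\bA) = g^{-1}(\ud u_\bB, \ud u_\bB) = 0$, so that $g^{-1}(\ud \Phi_\pm, \ud \Phi_\pm) = \pm 2 a_\bA a_\bB\, g^{-1}(\ud u_\bA, \ud u_\bB)$. The choice $F^\phi_{\bA,\bB} = -\tfrac{e^{-4\phi_0} F^\varpi_\bA F^\varpi_\bB}{8 a_\bA a_\bB}$ in \eqref{fabphi} is designed precisely so that the resulting $O(\lambda^0)$ cosine term matches the claimed main term on the right-hand side (the factors of $a_\bA a_\bB$ cancel and the numerical constants combine correctly).

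The second step is to isolate the amplitude $A^{FAB}_{\pm,\bA,\bB}$ of the $\sin$ contribution. Using $\Box_g \Phi_\pm = a_\bA \chi_\bA \pm a_\bB \chi_\bB$ and the parametrices \eqref{eq:g.para}, \eqref{eq:u.para}, \eqref{eq:chi.para}, we set
\[
A^{FAB}_{\pm,\bA,\bB} := -\bigl( 2 g_0^{-1}(\ud F^\phi_{\bA,\bB}, \ud \Phi^0_\pm) + F^\phi_{\bA,\bB}(a_\bA \chi_\bA^0 \pm a_\bB \chi_\bB^0) \bigr),\qquad \Phi^0_\pm := a_\bA u_\bA^0 \pm a_\bB u_\bB^0.
\]
This is a pure background quantity, compactly supported in $B(0,2R)$, and the required $C^8$ and $C^7$ bounds follow from \eqref{eq:bkgd.bd:FAB}, together with the background control of $\phi_0$, $\chi_\bA^0$, $u_\bA^0$, $g_0$ provided by Proposition~\ref{prop:construct.dust} and the transport equation \eqref{defu1} feeding into Section~\ref{sec.para.phi}.

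Finally, the remainder $\cdots$ collects: (i) the $\lambda^2 (\Box_g F^\phi_{\bA,\bB}) \cos$ term, which is immediately of size $O(\lambda^2 C(N))$ in $\sum_{k\leq 3} \lambda^k\|\cdot\|_{H^k(\Sigma_t)}$; and (ii) all pieces of the $O(\lambda^0)$ cosine and $O(\lambda)$ sine contributions in which at least one tilded quantity ($\wht g, \wht{u_\bA}, \wht{\chi_\bA}$) appears after expansion along the parametrices. As explained in Remark~\ref{rmk:tilded}, terms linear in tilded quantities are bounded in $\sum_{k\leq 3}\lambda^k\|\cdot\|_{H^k}$ by $O(\ep\lambda^2 C(N) C_b'(N) e^{A(N)t})$ using the bootstrap assumptions from Section~\ref{sec:bootstrap.assumptions}, while terms with two or more tilded factors gain an extra $\lambda^{1/2}$ via Lemma~\ref{lminfty}; the two problematic patterns singled out in Remark~\ref{rmk:tilded}(2) do not arise here since the amplitude $F^\phi_{\bA,\bB}$ is purely background and the phase $\Phi_\pm$ is linear in $u_\bA, u_\bB$, so the analysis is strictly easier than that of Lemma~\ref{lm.sf.1}. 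The main obstacle, if any, is purely bookkeeping: ensuring that the error terms of the form $\rd \wht{u_\bA}\in W^{k,4}$ and $\wht g\in W^{k+1,4}_{\mathrm{weight}}$ multiplied by background amplitudes indeed combine, via Hölder in $L^2$, into the form \eqref{eq:error.general} with the stated $N$-dependence.
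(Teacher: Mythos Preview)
Your proposal is correct and follows essentially the same route as the paper: product-rule expansion of $\Box_g$, simplification of the leading cosine term via the eikonal equation and the explicit choice \eqref{fabphi}, extraction of the background amplitude $A^{FAB}_{\pm,\bA,\bB}$ from the $O(\lambda)$ sine coefficient, and disposal of the remainder through Remark~\ref{rmk:tilded}. Your explicit formula for $A^{FAB}_{\pm,\bA,\bB}$ matches the paper's (up to a sign convention stemming from whether one includes the $(\pm1)$ factor from the parametrix \eqref{eq:wave.para.2}), and your observation that the two delicate patterns of Remark~\ref{rmk:tilded}(2) are absent here is exactly why the paper treats this lemma more tersely than Lemma~\ref{lm.sf.1}.
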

\begin{proof}
We compute
\begin{align*}
	&\:\Box_g \left((\pm 1) \lambda^2 F^\phi_{\bA,\bB} \cos (\tfrac{a_\bA u_\bA \pm a_\bB u_\bB}{\lambda }) \right)\\
	=&\:- F^\phi_{\bA,\bB}  \gi^{\alpha \beta}\partial_\alpha (a_\bA u_\bA \pm a_\bB u_\bB) \partial_\beta(a_\bA u_\bA \pm a_\bB u_\bB)\cos (\tfrac{a_\bA u_\bA \pm a_\bB u_\bB}{\lambda })\\
	&\:+\lambda A_{\pm,\bA,\bB}^{FAB} \sin(\tfrac{a_\bA u_\bA \pm a_\bB u_\bB}{\lambda }) +\cdots,
	\end{align*}
where 
$$A_{\pm,\bA,\bB}^{FAB} =\mp \left( 2(g_0^{-1})^{\alpha \beta}\partial_\alpha(a_\bA u^0_\bA \pm a_\bB u^0_\bB)\partial_\beta F^\phi_{\bA,\bB} 
+\Box_{g_0}(a_\bA u^0_\bA \pm a_\bB u^0_\bB)F^\phi_{\bA,\bB}\right).$$
Moreover, using the eikonal equation, it holds that
$$ \gi^{\alpha \beta}\partial_\alpha (a_\bA u_\bA \pm a_\bB u_\bB) \partial_\beta(a_\bA u_\bA \pm a_\bB u_\bB)= \pm 2 a_\bA a_\bB\gi^{\alpha \beta}\partial_\alpha u_\bA \partial_\beta u_\bB.$$
Recalling also that $F^\phi_{\bA,\bB} = \frac{e^{-4\phi_0} F_\bA^\varpi F_\bB^\varpi}{8 a_\bA a_\bB}$ (by \eqref{fabphi}), we obtain the desired conclusion. \qedhere
\end{proof}

\begin{lm}
	We have
\begin{align*}
	\Box_g \phi^{0} = &\: \Box_{g_0} \phi^0 + \frac{1}{n_0^3}\partial_t \wht n \partial_t \phi^0 + \frac{1}{n_0^2}\partial_t\wht \beta^i \partial_i \phi^0\\
&\: + \lambda \sum_{\bA} A_{2,\bA}^{\phi_0} \sin(\tfrac{2a_\bA u_\bA}{\lambda})
+\lambda\sum_{\bA}A_{1,\bA}^{\phi_0} \cos(\tfrac{a_\bA u_\bA}{\lambda})
+\lambda\sum_{\pm,\bA,\bB: \bB\neq \bA} A_{\pm,\bA,\bB}^{\phi_0} \sin(\tfrac{a_\bA u_\bA \pm a_\bB u_{\gra B}}{\lambda})+\cdots,
\end{align*}
where $ A_{2,\bA}^{\phi_0}$, $A_{1,\bA}^{\phi_0}$ and $A_{\pm,\bA,\bB}^{\phi_0}$ are background quantities supported in $[0,1]\times B(0,2R)$ and obey the following bound for all $t \in [0,1]$:
	$$\| A \|_{C^8(\Sigma_t)} + \| \rd_t A \|_{C^7(\Sigma_t)} \leq C(N) \ep,$$
	and the error term $\cdots$ satisfies \eqref{eq:error.general}.
\end{lm}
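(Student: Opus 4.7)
The plan is to apply the coordinate form of the wave operator in Lemma~\ref{lem:wave.operator} with $h=\phi^0$, together with the expansion $\mfg=\mfg_0+\mfg_2+\wht\mfg$ for each metric component and the corresponding expansion of $e_0=\rd_t-\beta^i\rd_i$. The idea is to write $\Box_g\phi^0=\Box_{g_0}\phi^0+(\Box_g-\Box_{g_0})\phi^0$, separate the difference into pieces linear in $\mfg_2$, linear in $\wht\mfg$, and a quadratic remainder, and to verify that the displayed expressions exhaust the linear contributions modulo errors satisfying \eqref{eq:error.general}. The quadratic remainder and the cross terms $\mfg_2\cdot\wht\mfg$ are immediately of order $O(\lambda^{5/2})$ in the relevant norm by Lemma~\ref{lminfty} and the bootstrap, so they go into $\cdots$.

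For the part linear in $\mfg_2$, I would substitute \eqref{eq:mfg2.form} into each of the four pieces $-\tfrac{(e_0)^2 h}{n^2}$, $e^{-2\gamma}\Delta h$, $\tfrac{e_0 n}{n^3}e_0 h$, $\tfrac{e^{-2\gamma}}{n}\de^{ij}\rd_i n\,\rd_j h$. Any derivative that lands on an amplitude $\mathcal G_{\cdot,\bA}(\mfg)$ or $\mathcal G_{\pm,\bA,\bB}(\mfg)$ leaves the net $\lambda^2$ prefactor intact and is absorbed into $\cdots$; any derivative that lands on one of the phases $\cos(\tfrac{2a_\bA u_\bA}{\lambda})$, $\sin(\tfrac{a_\bA u_\bA}{\lambda})$ or $\cos(\tfrac{a_\bA u_\bA\pm a_\bB u_\bB}{\lambda})$ costs a factor of $\lambda^{-1}$ and produces an $O(\lambda)$ oscillatory contribution of exactly one of the three types in the statement, defining the amplitudes $A^{\phi_0}_{2,\bA}$, $A^{\phi_0}_{1,\bA}$, $A^{\phi_0}_{\pm,\bA,\bB}$. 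Their bounds and compact support in $[0,1]\times B(0,2R)$ follow from \eqref{eq:calG.est} and the $C^8$ regularity of $\phi_0$, $g_0$, $u_\bA^0$. Replacing $u_\bA$ by $u_\bA^0$ inside the phases is harmless: by $u_\bA-u_\bA^0=u_\bA^2+\wht u_\bA=O_N(\lambda^2)$ from \eqref{u1.def} and \eqref{BA:u.L4}--\eqref{BA:u.L2}, the phase discrepancy costs $\lambda^{-1}\cdot\lambda^2=\lambda$, which together with the outer $\lambda^2$ prefactor lands in $\cdots$.

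For the part linear in $\wht{\mfg}$, undifferentiated $\wht\mfg$ and purely spatial $\rd_x\wht\mfg$ are controlled by the sharper $L^{4/3}$/$L^4$/$L^2$ bootstrap \eqref{BA:g.L43}--\eqref{BA:g.L2} together with Lemma~\ref{lminfty}, which furnishes an additional $\lambda$ relative to what is needed, so these all enter $\cdots$. For the time derivatives of $\wht\mfg$, direct expansion of $(e_0)^2\phi^0=\rd_t^2\phi^0-2\beta^j\rd_j\rd_t\phi^0+\beta^i\beta^j\rd^2_{ij}\phi^0-(\rd_t\beta^i)\rd_i\phi^0+\beta^j(\rd_j\beta^i)\rd_i\phi^0$ shows that $\rd_t\beta^i$ occurs exactly once at leading order, in the combination $\tfrac{\rd_t\beta^i}{n^2}\rd_i\phi^0$ coming from $-(e_0)^2\phi^0/n^2$, whose $\wht\beta^i$-linearization gives precisely $\tfrac{\rd_t\wht\beta^i}{n_0^2}\rd_i\phi^0$; similarly, expansion of $\tfrac{e_0 n}{n^3}e_0\phi^0$ shows that $\rd_t n$ occurs at leading order in $\tfrac{\rd_t n}{n^3}\rd_t\phi^0$, linearizing to $\tfrac{\rd_t\wht n}{n_0^3}\rd_t\phi^0$. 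Every remaining $\rd_t\wht\mfg$ appearance carries an additional small factor of $\beta_0^i$, $\rd_t\beta_0^i$, or $\rd_i\phi_0$, each of size $\ls\ep$; together with the smallness $\ep\ll\lambda^2$ built into the choice $\lambda_0^{(N)}=\lambda_0^{(N)}(\ep,N)$, these products sit in $\cdots$.

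The main obstacle is the bookkeeping in verifying that no additional $\rd_t\wht\mfg\cdot\rd\phi_0$ term with an $O(1)$ coefficient survives outside the two displayed ones. One must carefully track the product rule in $(e_0)^2\phi^0$ and the Taylor expansion of the scalar coefficients $1/n^2$, $1/n^3$, $e^{-2\gamma}$, $e^{-2\gamma}/n$ about their background values, confirm that the two coefficients $1/n_0^2$ and $1/n_0^3$ are correctly extracted, and check that the amplitudes $A^{\phi_0}$ collected from the $\mfg_2$ contribution respect the phase-type separation (so that, e.g., a $\sin(\tfrac{a_\bA u_\bA}{\lambda})$ oscillation produced by $\mfg_2$ does not contaminate the $\sin(\tfrac{2a_\bA u_\bA}{\lambda})$ sum). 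The latter point is automatic here because each derivative of a given phase produces the complementary trigonometric function at the same frequency.
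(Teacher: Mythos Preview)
Your overall strategy matches the paper's: decompose $\Box_g\phi^0-\Box_{g_0}\phi^0$ via the metric splitting $\mfg=\mfg_0+\mfg_2+\wht\mfg$, read off the $O(\lambda)$ oscillatory $A^{\phi_0}$ terms from one derivative landing on a phase in $\mfg_2$, and isolate the $\rd_t\wht n$ and $\rd_t\wht\beta^i$ contributions from $e_0 n$ and $e_0^2$. That part is fine.

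The gap is in your last paragraph. You assert that the subleading $\rd_t\wht\mfg$ contributions (such as $-\tfrac{1}{n_0^3}(\rd_t\wht n)\,\beta_0^j\rd_j\phi^0$ coming from $(\rd_t n)\,e_0\phi^0/n^3$) are absorbed into $\cdots$ thanks to the ``smallness $\ep\ll\lambda^2$ built into the choice $\lambda_0^{(N)}=\lambda_0^{(N)}(\ep,N)$''. This is the wrong hierarchy: in this paper $\ep$ is fixed first (by $\ep_2$), and then $\lambda_0^{(N)}$ is chosen small depending on $N$ (so that e.g.\ $\lambda^{1/2}C_b(N)e^{A(N)}\ls 1$); one has $\lambda\ll\ep$, not $\ep\ll\lambda^2$. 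With the correct ordering, an extra factor of $\ep$ does \emph{not} buy you a factor of $\lambda$: a term of size $\ep^{7/2}\lambda\,C_b'(N)e^{A(N)t}$ in $L^2$ fails the bound \eqref{eq:error.general}, which requires $\ep\lambda^2$. So these terms cannot be pushed into $\cdots$ by your argument.

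The fix is not to discard these subleading $\rd_t\wht\mfg$ contributions but to keep them alongside the two displayed terms: they have exactly the same structure (a $\rd_t\wht\mfg$ multiplied by a compactly supported background coefficient of size $O(\ep)$ or $O(\ep^2)$), and in Proposition~\ref{prop:wave.est.final} they are all handled by the same integration-by-parts in $t$. The paper's brief proof treats the displayed coefficients schematically and defers the bookkeeping to \cite[Proposition~6.4]{HL.HF}; it also notes (via Lemma~\ref{lem:e0gamma}) that any $\rd_t\wht\gamma$ that might arise rewrites as spatial derivatives of $\wht\beta$, hence already satisfies the sharper bounds. In short: your decomposition is right, but drop the $\ep\ll\lambda^2$ claim and instead group \emph{all} $\rd_t\wht\mfg\cdot(\hbox{background})$ terms with the displayed ones.
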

\begin{proof}
	In the expression of $\Box_g \phi^{0} -\Box_{g_0} \phi^0$ we isolate the term which are only $O(\lambda)$. They involved either a derivative of $g_2$ and can be expressed in term of background quantities, or are $\partial_t$ derivatives of $\wht n$ or $\wht \beta$ (since $\partial_t \wht \gamma$ can be expressed in term of space derivatives of other metric coefficients, and therefore has a better behaviour in $\lambda$, see Lemma~\ref{lem:e0gamma}). The detailed calculations can be found in Proposition 6.4 of \cite{HL.HF}.
\end{proof}

We now look at the quadratic nonlinearity.
\begin{lemma}\label{lem:wave.nonlinear}
	We have
	\begin{align*}
		&\frac{1}{2}e^{-4\phi}g^{-1}(\ud\varpi,\ud \varpi)=	\frac{1}{2}e^{-4\phi_0}g_0^{-1}(\ud\varpi_0,\ud \varpi_0)\\
	= &-	\sum_{\bA }e^{-4\phi_0} g_0^{-1}(\ud\varpi_0, \ud u_\bA^0)F_\bA^{\varpi}\sin(\tfrac{a_\bA u_\bA}{\lambda})+\sum_{\pm, \bA,\bB: \bA \neq \bB} \f{(\mp 1) e^{-4\phi_0} F_\bA^{\varpi}F_\bB^{\varpi}}{4} g^{-1}(\ud u_\bA, \ud u_\bB)\cos(\tfrac{a_\bA u_\bA\pm a_\bB u_\bB}{\lambda})\\
	&+	\lambda\sum_{\bA } \Big( e^{-4\phi_0} g_0^{-1}(\ud\varpi_0, \ud u_\bA^0)F_\bA^{1,\varpi} +A^{sl}_{1,\bA} \Big)	\cos(\tfrac{a_\bA u_\bA}{\lambda})\\
	&+\lambda\sum_{\bA } \Big(- 2e^{-4\phi_0} g_0^{-1}(\ud\varpi_0, \ud u_\bA^0)F_\bA^{2,\varpi} +A^{sl}_{2,\bA}\Big) \sin(\tfrac{2a_\bA u_\bA}{\lambda}) \\
	&+\lambda\sum_{\pm,\bA,\bB: \bA \neq \bB} \Big(A^{sl}_{\bA, \bB} \sin(\tfrac{a_\bA u_\bA \pm a_\bB u_\bB}{\lambda}) + A^{sl}_{2\bA, \bB} \cos(\tfrac{2a_\bA u_\bA \pm a_\bB u_\bB}{\lambda}) \Big)\\
&+ \lambda \sum_{\substack{\pm_1,\pm_2,\bA,\bB,\bC:\\\bB \neq \bA, \bC \neq \bA,\bB}} A^{sl}_{\pm_1,\pm_2, \bA, \bB, \bC} \cos(\tfrac{a_\bA u_\bA \pm_1 a_\bB u_\bB \pm_2 a_\bC u_\bC}{\lambda})+\cdots,
\end{align*}
where $A^{sl}_{1,\bA}$, $A^{sl}_{2,\bA}$, $A^{sl}_{\bA, \bB}$, $A^{sl}_{2\bA, \bB}$ and $A^{sl}_{\pm_1,\pm_2, \bA, \bB, \bC}$ are background quantities supported in $[0,1]\times B(0,2R)$ and obey the following bound for all $t \in [0,1]$:
$$\|A \|_{C^8(\Sigma_t)} + \|\rd_t A \|_{C^7(\Sigma_t)} \leq C(N) \ep^2,$$
and the error term $\cdots$ satisfies \eqref{eq:error.general}.
\end{lemma}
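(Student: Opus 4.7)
The plan is to substitute the parametrices \eqref{eq:wave.para}, \eqref{eq:g.para}, \eqref{eq:u.para} into
\[ \tfrac{1}{2}e^{-4\phi}g^{-1}(\ud\varpi,\ud\varpi) - \tfrac{1}{2}e^{-4\phi_0}g_0^{-1}(\ud\varpi_0,\ud\varpi_0) \]
and sort the result by frequency content and by powers of $\lambda$. First I would split the difference according to whether $\varpi-\varpi_0 = \varpi_1+\varpi_2+\wht\varpi$, $g^{-1}-g_0^{-1} = g_2^{-1}+\wht g^{-1}$, or $e^{-4\phi}-e^{-4\phi_0}$ is varied (the last via the Taylor expansion $e^{-4(\phi-\phi_0)} = 1 - 4(\phi-\phi_0) + 8(\phi-\phi_0)^2 + \cdots$). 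Every resulting contribution containing at least one tilded factor is routed into the $\cdots$ error via Remark~\ref{rmk:tilded}: single tilded factors are $O(C(N)C_b'(N)\lambda^2\ep^2 e^{A(N)t})$ in $\sum_{k\leq 3}\lambda^{-k}\|\cdot\|_{H^k(\Sigma_t)}$ by the bootstrap assumptions \eqref{BA:wave}, \eqref{BA:g.L2}, \eqref{BA:dtg.L2}, \eqref{BA:u.L4}, \eqref{BA:u.L2}, while quadratic-or-higher tilded contributions gain an extra $\lambda^{1/2}$ via Lemma~\ref{lminfty} and Lemma~\ref{lem:wave.Poincare} and are also acceptable.

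For the remaining background-times-oscillatory contributions, the principal ($\lambda^0$) terms come from two specific bilinear factors. The factor $e^{-4\phi_0}g_0^{-1}(\ud\varpi_0,\ud\varpi_1)$ reproduces, after differentiating the phases of $\varpi_1=\lambda\sum_\bA a_\bA^{-1}F_\bA^\varpi\cos(\tfrac{a_\bA u_\bA}{\lambda})$ and dropping $O(\lambda)$ amplitude-derivative pieces, exactly $-\sum_\bA e^{-4\phi_0}g_0^{-1}(\ud\varpi_0,\ud u_\bA^0)F_\bA^\varpi\sin(\tfrac{a_\bA u_\bA}{\lambda})$ (the replacement $\ud u_\bA\mapsto\ud u_\bA^0$ is one order better in $\lambda$ by \eqref{eq:u.para}). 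The factor $\tfrac{1}{2}e^{-4\phi_0}g_0^{-1}(\ud\varpi_1,\ud\varpi_1)$ yields $\tfrac{1}{2}e^{-4\phi_0}\sum_{\bA,\bB}F_\bA^\varpi F_\bB^\varpi g_0^{-1}(\ud u_\bA,\ud u_\bB)\sin(\tfrac{a_\bA u_\bA}{\lambda})\sin(\tfrac{a_\bB u_\bB}{\lambda})$; the diagonal $\bA=\bB$ vanishes thanks to the eikonal equation $g^{-1}(\ud u_\bA,\ud u_\bA)=0$, and for $\bA\neq\bB$ the identity $2\sin\alpha\sin\beta=\cos(\alpha-\beta)-\cos(\alpha+\beta)$ produces exactly the displayed $\tfrac{(\mp 1)e^{-4\phi_0}F_\bA^\varpi F_\bB^\varpi}{4}g^{-1}(\ud u_\bA,\ud u_\bB)\cos(\tfrac{a_\bA u_\bA\pm a_\bB u_\bB}{\lambda})$, with the replacement $g_0^{-1}(\ud u_\bA^0,\ud u_\bB^0)\mapsto g^{-1}(\ud u_\bA,\ud u_\bB)$ costing only a term in \eqref{eq:error.general}.

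The subprincipal $\lambda\cos(\tfrac{a_\bA u_\bA}{\lambda})$ and $\lambda\sin(\tfrac{2a_\bA u_\bA}{\lambda})$ amplitudes involving $F_\bA^{1,\varpi}$ and $F_\bA^{2,\varpi}$ come from $e^{-4\phi_0}g_0^{-1}(\ud\varpi_0,\ud\varpi_2)$: differentiating the $F_\bA^{1,\varpi}\sin(\tfrac{a_\bA u_\bA}{\lambda})$ and $F_\bA^{2,\varpi}\cos(\tfrac{2a_\bA u_\bA}{\lambda})$ pieces of $\varpi_2$ trades one power of $\lambda$ for a phase derivative and reproduces exactly $\lambda e^{-4\phi_0}g_0^{-1}(\ud\varpi_0,\ud u_\bA^0)F_\bA^{1,\varpi}\cos(\tfrac{a_\bA u_\bA}{\lambda})$ and $-2\lambda e^{-4\phi_0}g_0^{-1}(\ud\varpi_0,\ud u_\bA^0)F_\bA^{2,\varpi}\sin(\tfrac{2a_\bA u_\bA}{\lambda})$. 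Every other background-times-oscillatory piece of order $\lambda$ — the Taylor cross-term $-4\phi_1 g_0^{-1}(\ud\varpi_0,\ud\varpi_1)$, the metric perturbation $g_2^{-1}(\ud\varpi_0,\ud\varpi_1)$, amplitude derivatives in $\ud\varpi_1,\ud\varpi_2$, the cross $g_0^{-1}(\ud\varpi_1,\ud\varpi_2)$, and the $F_{\bA,\bB}^\varpi$ contribution from $\varpi_2$ — is routed, via product-to-sum identities, into the $A^{sl}$ coefficients of the stated higher-frequency modes $\tfrac{a_\bA u_\bA\pm a_\bB u_\bB}{\lambda}$, $\tfrac{2a_\bA u_\bA\pm a_\bB u_\bB}{\lambda}$, $\tfrac{a_\bA u_\bA\pm_1 a_\bB u_\bB\pm_2 a_\bC u_\bC}{\lambda}$. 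Their support and $C^8$, $\rd_t C^7$ bounds follow from \eqref{eq:bkgd.bd:FA}, \eqref{eq:bkgd.bd:FAB}, \eqref{eq:calG.est} and the control of $\phi_0,\varpi_0,g_0,u_\bA^0$.

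The hard part will be the combinatorial bookkeeping: each of the many cross-terms must be classified as an explicit displayed term, an $A^{sl}$ contribution (after a trigonometric identity), or a genuine error meeting \eqref{eq:error.general}. The most delicate are those involving a single tilded factor multiplied by a differentiated phase — which costs one power of $\lambda$ — but as in Remark~\ref{rmk:tilded}(2) these always enter through $\rd_t\wht\mfg$ or $\lambda\rd^2\wht u_\bA$, both accounted for in \eqref{BA:dtg.L2} and \eqref{BA:u.L4}--\eqref{BA:u.L2}, and are exactly absorbed by the $C_b'(N)$ constant built into \eqref{eq:error.general}.
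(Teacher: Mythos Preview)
Your approach is exactly the paper's: expand via the parametrices, strip off tilded factors using Remark~\ref{rmk:tilded}, read off the $O(\lambda^0)$ terms from $e^{-4\phi_0}g_0^{-1}(\ud\varpi_0,\ud\varpi_1)$ and $\tfrac12 e^{-4\phi_0}g^{-1}(\ud\varpi_1,\ud\varpi_1)$, isolate the $F_\bA^{1,\varpi},F_\bA^{2,\varpi}$ pieces from $g_0^{-1}(\ud\varpi_0,\ud\varpi_2)$, and dump all remaining $O(\lambda)$ oscillatory background terms into the $A^{sl}$ buckets via product-to-sum identities.

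There is one concrete slip. You claim the replacement $g_0^{-1}(\ud u_\bA^0,\ud u_\bB^0)\mapsto g^{-1}(\ud u_\bA,\ud u_\bB)$ in the off-diagonal $\varpi_1\cdot\varpi_1$ term ``cost[s] only a term in \eqref{eq:error.general}''. That is false: the difference picks up $g_0^{-1}(\ud u_\bA^2,\ud u_\bB^0)$, and by \eqref{u1.def} one has $\ud u_\bA^2 = \lambda\,\mathfrak v_\bA\,\ud u_\bA^0\cos(\tfrac{a_\bA u_\bA}{\lambda})+O(\lambda^2)$, which is only $O(\lambda)$. Multiplied by $\cos(\tfrac{a_\bA u_\bA\pm a_\bB u_\bB}{\lambda})$ this produces $O(\lambda)$ terms with phases $\cos(\tfrac{a_\bB u_\bB}{\lambda})$ and $\cos(\tfrac{2a_\bA u_\bA\pm a_\bB u_\bB}{\lambda})$; these do \emph{not} meet \eqref{eq:error.general} (they are one power of $\lambda$ short) and must instead be routed into the $A^{sl}_{1,\bA}$ and $A^{sl}_{2\bA,\bB}$ buckets, exactly as the paper does. (Your listing of $g_2^{-1}(\ud\varpi_0,\ud\varpi_1)$ as an $O(\lambda)$ source is also off---$\mfg_2$ is $O(\lambda^2)$ so that term is genuinely $\cdots$---but this one is harmless.) Once you redirect the $\ud u_\bA^2$ contributions into $A^{sl}$ your argument is complete.
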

\begin{proof}

Using Remark~\ref{rmk:tilded}, all the tilded quantities can be absorbed into the error terms. We compute (using $\cdots$ to denote acceptable error terms)
\begin{align*}
	&\: \frac{1}{2}e^{-4\phi}g^{-1}(\ud\varpi,\ud \varpi)-	\frac{1}{2}e^{-4\phi_0}g_0^{-1}(\ud\varpi_0,\ud \varpi_0)\\
= &\: e^{-4\phi_0} \Big( g_0^{-1}(\ud \varpi_0, \ud \varpi_1) + \f 12 g^{-1}(\ud \varpi_1, \ud \varpi_1) +  g_0^{-1}(\ud \varpi_0, \ud \varpi_2) +  g^{-1}(\ud \varpi_1, \ud \varpi_2) \Big)\\
&\: + \f 12 (e^{-4\phi} - e^{-4\phi_0}) \Big(g_0^{-1}(\ud\varpi_0,\ud \varpi_0) + 2 g_0^{-1}(\ud\varpi_0,\ud \varpi_1) + g_0^{-1}(\ud\varpi_1,\ud \varpi_1)\Big) + \cdots
\end{align*}

We only need to keep track of up to $O(\lambda)$ terms. Hence, we write
\begin{align*}
\ud \varpi_1 = &\: -\sum_{\bA} F_{\bA}^\varpi (\ud u_{\bA}^0 + \ud u_{\bA}^2) \sin (\tfrac{a_{\bA} u_{\bA}}\lambda) + \lambda \sum_{\bA} a_{\bA}^{-1} \ud F_{\bA}^\varpi \cos (\tfrac{a_{\bA} u_{\bA}}\lambda) +\cdots \\
= &\: -\sum_{\bA} F_{\bA}^\varpi \ud u^0_{\bA} \sin (\tfrac{a_{\bA} u_{\bA}}\lambda) + \lambda  \sum_{\bA} \Big(a_{\bA}^{-1} \ud F_{\bA}^\varpi \cos (\tfrac{a_{\bA} u_{\bA}}\lambda) - \f 12  F_{\bA}^\varpi \mathfrak v_\bA \sin (\tfrac{2a_{\bA} u_{\bA}}\lambda)\Big)+\cdots, \\
\ud \varpi_2 = &\: \lambda \Big(\sum_{\bA} F_\bA^{1,\phi} \ud u_{\bA}^0 \cos(\tfrac{a_\bA u_\bA}{\lambda}) - 2 \sum_{\bA} F_\bA^{2,\phi} \ud u_{\bA}^0 \sin(\tfrac{2a_\bA u_\bA}{\lambda}) \\
&\: \qquad - \sum_{\pm, \bA,\bB:\bA\neq \bB} F^\phi_{\bA,\bB} (a_\bA \ud u_\bA^0 \pm a_\bB \ud u_\bB^0)\sin(\tfrac{a_\bA u_\bA \pm a_\bB u_\bB}{\lambda}) \Big) + \cdots, \\
e^{-4\phi} - e^{-4\phi_0} = &\: -4 \lambda e^{-4\phi_0} \sum_{\bA} a_{\bA}^{-1} F^\phi_{\bA} \cos(\tfrac{a_{\bA} u_{\bA}}\lambda) + \cdots.
\end{align*}

We can read off the terms which give $O(\lambda^0)$ contribution:
\begin{equation*}
\begin{split}
-	\sum_{\bA }e^{-4\phi_0} g_0^{-1}(\ud\varpi_0,\ud u_\bA^0)F_\bA^{\varpi}\sin(\tfrac{a_\bA u_\bA}{\lambda})+\sum_{\pm,\bA,\bB: \bA \neq \bB}(\mp 1) \f{e^{-4\phi_0}}{4}g_0^{-1}(\ud u_\bA^0,\ud u_\bB^0)F_\bA^{\varpi}F_\bB^{\varpi}\cos(\tfrac{a_\bA u_\bA^0\pm a_\bB u_\bB^0}{\lambda}).
\end{split}
\end{equation*}
Here, we used that $g_0^{-1}(\ud u_\bA^0,\ud u_\bA^0)$ is of order $\lambda$ since $g_0^{-1}(\ud u_\bA,\ud u_\bA) = 0$. For convenience of comparison with Lemma~\ref{lem:wave.main.elliptic.terms} later (see Proposition~\ref{prop:wave.est.final}), we then rewrite $g_0^{-1}(\ud u_\bA^0,\ud u_\bB^0)F_\bA^{\varpi}F_\bB^{\varpi}\cos(\tfrac{a_\bA u_\bA^0\pm a_\bB u_\bB^0}{\lambda})$ as $g^{-1}(\ud u_\bA,\ud u_\bB)F_\bA^{\varpi}F_\bB^{\varpi}\cos(\tfrac{a_\bA u_\bA \pm a_\bB u_\bB}\lambda)$, at the expense of acceptable terms with phases $\cos(\tfrac{a_\bA u_\bA}\lambda)$ or $\cos(\tfrac{2 a_\bA u_\bA\pm a_\bB u_\bB }\lambda)$ using \eqref{eq:u.para}--\eqref{u1.def}. For the remaining terms, it is straightforward to check that they have the desired form. \qedhere

%
%
\end{proof}

We summarize the computations so far. Notice that the $O(\lambda^0)$ terms are set up to cancel, and we can collect all the $O(\lambda)$ terms. We also give the corresponding statement for $\varpi$ which can be obtained using similar computations as above.

\begin{proposition}\label{prop:wave.error}
Define $F_\bA^{1,\phi}$, $F_\bA^{1,\varpi}$, $F_\bA^{2,\phi}$ and $F_\bA^{2,\varpi}$ by \eqref{whtfi}--\eqref{whtpi2}, where\footnote{Since the equations for $F_\bA^{1,\phi}$, $F_\bA^{2,\phi}$ couple with those for $F_\bA^{1,\varpi}$, $F_\bA^{2,\varpi}$, in order to solve for $F_\bA^{1,\phi}$, $F_\bA^{2,\phi}$, we also need to prescribe $A^{1,\varpi}_{\bA}$ and $A^{2,\phi}_\bA$. We will not spell that out that explicitly, but it is clear that they can be handled with the same argument.} 
$$A^{1,\phi}_{\bA}= -A^{\phi_0}_{1,\bA}-A^{sl}_{1,\bA}, \qquad A^{2,\phi}_\bA= \f 12A^{\phi,\Box}_{2,\bA} + \f 12 A^{\phi_0}_{2,\bA} + \f 12 A^{sl}_{2,\bA},$$
and similarly for $A^{1,\varpi}_{\bA}$ and $A^{2,\varpi}_\bA$.

Then
\begin{align*}
	\Box_g \wht \phi = &\lambda \sum_{\pm,\bA,\bB: \bB\neq \bA} E^\phi_{\pm,\bA \bB} \sin( \tfrac{a_\bA u_\bA \pm a_\bB u_\bB}{\lambda})
	+\lambda\sum_{\pm, \bA,\bB:\bB\neq \bA} E^\phi_{\pm,2\bA,\bB} \cos(\tfrac{2a_\bA u_\bA \pm a_\bB u_\bB}{\lambda})\\
	&+\lambda \sum_{\substack{\pm_1,\pm_2,\bA,\bB,\bC: \\ \bB\neq \bA, \bC \neq \bA,\bB}} E^\phi_{\pm_1,\pm_2,\bA,\bB,\bC}\cos(\tfrac{a_\bA u_\bA \pm_1  a_\bB u_\bB \pm_2 a_\bC u_\bC }{\lambda})\\
	&+ \frac{1}{n_0^3}\partial_t \wht n \partial_t \phi^0 + \frac{1}{n_0^2}\partial_t \wht \beta^i \partial_i \phi^0+\cdots,
\end{align*}
where
\begin{align*}
	E^\phi_{\pm,\bA \bB} &=-A^\Box_{\pm,\bA \bB} -A^{FAB}_{\pm,\bA \bB} -A^{\phi_0}_{\pm,\bA \bB} -A^{sl}_{\pm,\bA \bB}, \\
	E^\phi_{\pm,2\bA,\bB} &=-A^{\phi,\Box}_{\pm,2\bA,\bB}-A^{sl}_{\pm,2\bA,\bB},  \\
 E^\phi_{\pm_1,\pm_2,\bA,\bB,\bC}&= -A^\Box_{\pm_1,\pm_2,\bA,\bB,\bC}- A^{sl}_{\pm_1,\pm_2,\bA,\bB,\bC},
\end{align*}
and the error term $\cdots$ satisfies \eqref{eq:error.general}. Similarly for $\wht \varpi$.
\end{proposition}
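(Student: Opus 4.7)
The plan is to start from the identity \eqref{eq:wave.first.equation} and systematically substitute each $\Box_g$ term using Lemmas~\ref{lm.sf.1}, \ref{lm.sf.2}, \ref{lem:wave.main.elliptic.terms}, the unnumbered lemma on $\Box_g\phi^0$, and Lemma~\ref{lem:wave.nonlinear}. The argument then reduces to (i) checking that all $O(\lambda^0)$ contributions cancel identically using the background null-dust equations, (ii) showing that the $O(\lambda)$ contributions carrying single phases $\sin(\tfrac{a_\bA u_\bA}{\lambda})$ and $\cos(\tfrac{2a_\bA u_\bA}{\lambda})$ are annihilated by the choices of $F_\bA^{1,\phi}$, $F_\bA^{1,\varpi}$, $F_\bA^{2,\phi}$, $F_\bA^{2,\varpi}$ via the transport equations \eqref{whtfi}--\eqref{whtpi2}, and (iii) repackaging the surviving multi-phase $O(\lambda)$ terms into the quantities $E_\bullet^\phi$ stated in the proposition.

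For step (i), the cancellation of the $\sin(\tfrac{a_\bA u_\bA}{\lambda})$ coefficient is exactly the background transport equation $F_\bA^\phi \chi_\bA^0 + 2(g_0^{-1})^{\alpha\beta}\partial_\alpha u_\bA^0 \partial_\beta F_\bA^\phi + e^{-4\phi_0}(g_0^{-1})^{\alpha\beta}\partial_\alpha \varpi_0 \partial_\beta u_\bA^0 F_\bA^\varpi = 0$ coming from \eqref{back.without.omega.4}; the cancellation of the $\cos(\tfrac{a_\bA u_\bA \pm a_\bB u_\bB}{\lambda})$ coefficient is the very algebraic reason behind the definition \eqref{fabphi} of $F_{\bA,\bB}^\phi$ (the eikonal identity $g^{-1}(\ud(a_\bA u_\bA \pm a_\bB u_\bB),\ud(a_\bA u_\bA \pm a_\bB u_\bB)) = \pm 2 a_\bA a_\bB\, g^{-1}(\ud u_\bA,\ud u_\bB)$ used in the proof of Lemma~\ref{lem:wave.main.elliptic.terms} matches the combinatorial factor $\tfrac{1}{8 a_\bA a_\bB}$ appearing in $F_{\bA,\bB}^\phi$). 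Step (ii) is then mechanical: with the $A$-sources of \eqref{whtfi}--\eqref{whtpi2} set to the prescribed linear combinations of $A^{\phi,\Box}$, $A^{\phi_0}$ and $A^{sl}$, the $O(\lambda)$ coefficients of $\cos(\tfrac{a_\bA u_\bA}\lambda)$ and $\sin(\tfrac{2a_\bA u_\bA}\lambda)$ vanish by construction. Step (iii) is pure bookkeeping of signs and phases.

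The main subtle point, as highlighted in Remark~\ref{rmk:tilded}, is the treatment of the tilded perturbations $\wht g$, $\wht u_\bA$, $\wht\chi_\bA$, $\wht\phi$, $\wht\varpi$ that arise upon expanding each factor in the parametrix. Following the remark, almost every such contribution is acceptable because it either carries an extra spatial derivative (giving an immediate $\lambda$-gain) or appears at least quadratically, in which case Lemma~\ref{lminfty} and Lemma~\ref{lem:wave.Poincare} supply the required extra $\lambda^{1/2}$. Two exceptional configurations, however, are borderline: the combination $F_\bA^\phi\, \wht\chi_\bA$ that would arise from blindly splitting $\chi_\bA = \chi_\bA^0 + \chi_\bA^1 + \wht\chi_\bA$ in Lemma~\ref{lm.sf.1} is recovered because $\wht\chi_\bA$ is treated as an independent unknown with the improved bound \eqref{BA:chi} rather than through second derivatives of $\wht u_\bA$; and the time-differentiated metric terms $\tfrac{1}{n_0^3}\partial_t\wht n\,\partial_t \phi^0 + \tfrac{1}{n_0^2}\partial_t\wht\beta^i\,\partial_i\phi^0$ in $(\Box_g - \Box_{g_0})\phi^0$ cannot be improved and must be carried explicitly in the statement, which is exactly what the proposition does.

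Finally I would verify that all remaining error terms $\cdots$ obey \eqref{eq:error.general} by combining the bootstrap assumptions \eqref{BA:wave}--\eqref{BA:chi}, the pointwise bounds of Lemma~\ref{lminfty}, the background estimates \eqref{eq:bkgd.bd:FA}, \eqref{eq:bkgd.bd:FAB}, \eqref{eq:v.est}, \eqref{eq:calG.est} on the amplitudes, and the uniform support of the $A$-terms in $[0,1]\times B(0,2R)$; the gain $C(N)/A(N)$ built into the hierarchy \eqref{eq;constant.hierarchy} ensures compatibility of the form $C(N)C_b'(N) \ep \lambda^2 e^{A(N)t}$. The identical argument, with $\phi$ and $\varpi$ swapped and \eqref{back.without.omega.5} replacing \eqref{back.without.omega.4}, yields the corresponding statement for $\wht\varpi$. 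The hardest part is not any individual step but the sheer combinatorial effort of correctly tracking the signs and $(\pm 1)$ factors attached to every phase across the five lemmas simultaneously.
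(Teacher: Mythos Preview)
Your proposal is correct and follows essentially the same approach as the paper: start from \eqref{eq:wave.first.equation}, substitute via Lemmas~\ref{lm.sf.1}--\ref{lem:wave.nonlinear}, use the background transport equation \eqref{back.without.omega.4} and the definition \eqref{fabphi} to cancel the $O(\lambda^0)$ terms, use the transport equations \eqref{whtfi}--\eqref{whtpi2} to kill the $O(\lambda)$ single-phase terms, and collect the remaining multi-phase contributions into the $E^\phi$ coefficients. Your discussion of the tilded-quantity subtleties (the role of $\wht\chi_\bA$ and the explicit $\partial_t\wht\mfg$ terms) is also in line with Remark~\ref{rmk:tilded}, which the paper invokes implicitly.
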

\begin{proof}
Observe the following cancellations:
\begin{enumerate}
\item The $O(\lambda^0)$ terms with phase $\sin (\tfrac{a_\bA u_\bA}\lambda)$ combine to cancel thanks to the transport equation \eqref{back.without.omega.4} for $F^\phi_\bA$.
\item The $O(\lambda^0)$ terms with phases $\cos(\tfrac{a_\bA u_\bA\pm a_\bB u_\bB}{\lambda})$ in Lemma~\ref{lem:wave.nonlinear} are cancelled by the terms in Lemma~\ref{lem:wave.main.elliptic.terms}.
\item For the $O(\lambda^1)$ terms, the transport equations for $F^{1,\phi}_\bA$ and $F^{2,\phi}_\bA$ are exactly designed to cancel the terms with the phases $\sin(\tfrac{2 a_\bA u_\bA}{\lambda})$ and $\cos(\tfrac{a_\bA u_\bA}{\lambda})$.
\end{enumerate}
We are thus exactly left with the terms as stated. \qedhere
\end{proof}

\begin{prp}\label{prop:wave.est.final}
The following estimates hold:
\begin{align*}
\sum_{k \leq 3}\lambda^k \|\partial \wht \phi\|_{H^k(\Sigma_t)} + \sum_{k \leq 3}\lambda^k \|\partial \wht \varpi\|_{H^k(\Sigma_t)}\lesssim C(N) \lambda^2\ep+ \ep^{\f 54} \lambda^2 C_b(N)e^{A(N)t}.
\end{align*}

\end{prp}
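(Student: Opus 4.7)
The plan is to perform commuted $L^2$ energy estimates for the wave equation $\Box_g \wht\phi = R_{\wht\phi}$ from Proposition~\ref{prop:wave.error}, commuting with $(\lambda\partial)^k$ for $k \leq 3$ to match the bootstrap \eqref{BA:wave}. The standard energy identity in the elliptic gauge (as in \cite{HL.elliptic}) gives
\[
\lambda^k \|\partial \wht\phi\|_{H^k(\Sigma_t)} \lesssim \lambda^k\|\partial \wht\phi\|_{H^k(\Sigma_0)} + \int_0^t \lambda^k \|R_{\wht\phi}\|_{H^k(\Sigma_s)}\,\ud s,
\]
after absorbing Gronwall-type lower-order terms using Lemma~\ref{lminfty}. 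The initial contribution is bounded by $C(N)\ep \lambda^2$ via Section~\ref{sec.id}, accounting for the first term on the right-hand side of the claimed bound. Summing over $k \leq 3$ then reduces everything to estimating $R_{\wht\phi}$ in $L^2$-based spatial norms.

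The source $R_{\wht\phi}$ from Proposition~\ref{prop:wave.error} decomposes into (i) oscillatory $O(\lambda)$ terms $\lambda E \sin(\theta/\lambda)$ whose phases $\theta$ are null-adapted combinations of the $a_\bA u_\bA$ (Lemma~\ref{lemma.adapt}); (ii) the non-oscillatory coupling $n_0^{-3} \rd_t \wht n\, \rd_t\phi_0 + n_0^{-2} \rd_t \wht\beta^i \rd_i \phi_0$; and (iii) an error satisfying \eqref{eq:error.general}. For (i), the null-adaptedness yields $|g^{-1}(\ud\theta,\ud\theta)|\gtrsim 1$, so I would change unknowns to $\wht\phi^\sharp := \wht\phi + \lambda^3 \sum_\theta (E/g^{-1}(\ud\theta,\ud\theta))\sin(\theta/\lambda)$; a direct computation of $\Box_g$ on the correction shows the new source is $O(\lambda^2)$ in $H^k$ ($k\leq 3$), while the correction itself contributes $O(\lambda^2)$ in $\sum_k \lambda^k\|\partial\cdot\|_{H^k}$. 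For (iii), the bound \eqref{eq:error.general} together with $C(N)C'_b(N)/A(N) \leq C_b(N)\ep^6$ (from the hierarchy \eqref{eq:N.vs.ep}) yields a time-integrated contribution at most $\ep^{5/4}\lambda^2 C_b(N) e^{A(N)t}$.

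The main obstacle is category (ii), since $\rd_t \wht g \cdot \rd\phi_0$ carries no intrinsic $\lambda$-smallness and no high-frequency oscillation to integrate by parts against. Directly bounding $\|\rd_t \wht g\, \rd\phi_0\|_{L^2} \lesssim C'_b(N)\ep^{5/2} e^{A(N)s}$ via \eqref{BA:dtg.L2} and $\|\rd \phi_0\|_{L^\i}\lesssim\ep$, then time-integrating using $C'_b(N)/A(N) \leq \ep^3$, yields only $\lesssim \ep^{11/2} e^{A(N)t}$, which falls short of $\ep^{5/4}\lambda^2 C_b(N) e^{A(N)t}$ uniformly in $\lambda$. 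To close the argument, I would invoke the refined metric estimate proved independently in Section~\ref{sec:metric}: the integrability gain $\Delta^{-1}\colon L^{4/3}\to W^{1,4}$ (with appropriate weights), combined with the $N$-independent smallness of $\|\rd U\|_{L^4}$ coming from almost orthogonality (Lemma~\ref{lem:orthogonality}), provides an improved bound on $\rd_t \wht g$ in the $L^2$ norm paired against $\rd \phi_0$; coupling this into the wave energy estimate converts the $\ep^{11/2}$ bound into $\ep^{5/4}\lambda^2 C_b(N)$, as explained heuristically in Section~\ref{sec:intro.N.dependence}. The analogous argument, using Lemma~\ref{lem:wave.nonlinear} and the transport equations \eqref{whtpi}, \eqref{whtpi2} for $F^{1,\varpi}_\bA$, $F^{2,\varpi}_\bA$, yields the matching bound for $\wht\varpi$, completing the proof.
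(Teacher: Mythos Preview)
Your handling of the oscillatory terms (i) and the generic error (iii) matches the paper: you introduce exactly the correction $\wht\phi^{main}$ (your $\wht\phi^\sharp$) to peel off the $O(\lambda)$ null-adapted oscillations, and you use the hierarchy \eqref{eq:N.vs.ep} to absorb the time integral of \eqref{eq:error.general}. That part is fine.

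The gap is in your treatment of (ii), the term $n_0^{-3}\rd_t\wht n\,\rd_t\phi_0 + n_0^{-2}\rd_t\wht\beta^i\rd_i\phi_0$. You correctly observe that a direct $L^2$ bound yields no $\lambda$ power, and you propose to repair this by invoking the refined metric estimates of Section~\ref{sec:metric}. But those estimates do not help here: even Proposition~\ref{prop:g.est} gives $\rd_t\wht\mfg$ only at order $\lambda$ (see \eqref{eq:dtg.L43.recovered}, \eqref{eq:dtg.L4.recovered}), never $\lambda^2$. Feeding that into your displayed energy inequality and time-integrating still leaves you one power of $\lambda$ short. The improved $\ep$- and $N$-dependence from almost orthogonality cannot manufacture the missing $\lambda$.

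The paper's actual mechanism is different and is lost once you write the energy estimate in the form $\|\rd\wht\phi\|_{H^k}\lesssim\mbox{(data)}+\int_0^t\|R_{\wht\phi}\|_{H^k}$. One must instead work with the spacetime integral $\int_0^t\int_{\Sigma_\tau}(\rd_t\wht\mfg)\,\rd\phi_0\,\rd_t(\wht\phi-\wht\phi^{main})$ before applying Cauchy--Schwarz, and integrate by parts in $t$ to move the $\rd_t$ off $\wht\mfg$. The resulting boundary term involves $\wht\mfg$ itself, which by \eqref{BA:g.L4} carries the needed $\lambda^2$. When $\rd_t$ falls on $\rd_t(\wht\phi-\wht\phi^{main})$, one uses the wave equation (Lemma~\ref{lem:wave.operator}) to convert $\rd_t^2$ into terms with at least one spatial derivative, which is then integrated back onto $\wht\mfg$; spatial derivatives of $\wht\mfg$ again obey \eqref{BA:g.L4} with $\lambda^2$. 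This integration-by-parts-in-time step (cf.\ Proposition~6.21 of \cite{HL.HF}) is the missing idea.
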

 
 \begin{proof}
 	We focus on $\wht \phi$ as $\wht \varpi$ is similar. 
	
	We first define
\begin{align*}
 \wht \phi^{main}
	=&\:-\lambda^3 \sum_{\pm, \bA,\bB: \bA \neq \bB} \frac{E^\phi_{\pm,\bA \bB} \sin( \tfrac{a_\bA u_\bA \pm a_\bB u_\bB}{\lambda})}{g_0^{-1}(\ud(a_\bA u^0_\bA \pm a_\bB u^0_\bB),\ud(a_\bA u^0_\bA \pm a_\bB u^0_\bB))} \\
	&\:-\lambda^3\sum_{\pm, \bA,\bB: \bA \neq \bB}  \frac{E^\phi_{\pm,\bA,2\bB} \cos(\tfrac{a_\bA u_\bA \pm 2 a_\bB u_\bB}{\lambda}) }{g_0^{-1}(\ud(a_\bA u^0_\bA \pm 2a_\bB u^0_\bB),\ud(a_\bA u^0_\bA \pm 2a_\bB u^0_\bB))} \\
	&\:-\lambda^3\sum_{\substack{\pm_1,\pm_2, \bA,\bB,\bC:\\ \bB \neq \bA,\bC \neq \bA,\bB}}  \frac{E^\phi_{\pm_1,\pm_2,\bA,\bB,\bC}\cos(\tfrac{a_\bA u_\bA \pm_1  a_\bB u_\bB \pm_2 a_\bC u_\bC }{\lambda})}{g_0^{-1}(\ud (a_\bA u^0_\bA \pm_1 a_\bB u^0_\bB\pm_2 a_\bC u^0_\bC),\ud (a_\bA u^0_\bA \pm_1 a_\bB u^0_\bB\pm_2 a_\bC u^0_\bC))} ,
\end{align*}
where the terms $E^\phi_{\pm,\bA \bB}$, $E^\phi_{\pm,\bA,2\bB}$ and $E^\phi_{\pm_1,\pm_2,\bA,\bB,\bC}$ are as in Proposition~\ref{prop:wave.error}.
Notice that these are well-defined as $g_0^{-1} (\ud (a_\bA u^0_\bA \pm a_\bB u^0_\bB),\ud (a_\bA u^0_\bA \pm a_\bB u^0_\bB))$, etc.~are bounded away from $0$ by null adaptedness of the phases (see Lemma~\ref{lemma.adapt}).
 
We first observe that 
\begin{equation}\label{eq:bounds.for.phimain}
\sum_{k\leq 3} \lambda^k \|\rd \wht \phi^{main} \|_{H^k(\Sigma_t)} \leq C(N)\ep \lambda^2,
\end{equation}
which is much better than needed since $C(N) \ll C_b(N)\ep^{\f 14}$ (see \eqref{eq:N.vs.ep}) so that it remains to control $\wht \phi - \wht \phi^{main}$.
 Moreover, $\wht \phi^{main}$ is defined to remove the main terms in Proposition~\ref{prop:wave.error} so that
 	\begin{equation}\label{eq:wave-wave.main}
	\Box_g (\wht \phi- \wht \phi^{main})
=\frac{1}{n_0^3}\partial_t \wht n \partial_t \phi^0 + \frac{1}{n_0^2}\partial_t \wht \beta^i \partial_i \phi^0+\cdots,
\end{equation}
where as before $\cdots$ satisfies \eqref{eq:error.general}.
 
 
We control $\wht \phi- \wht \phi^{main}$ with energy estimate. The key is to integrate by parts to exchange $\partial_t$ derivatives on $ \wht \mfg$ with spatial derivative, using the wave equation satisfied by $\wht \phi$. This is similar to Proposition 6.21 of \cite{HL.HF}; we will give a sketch here and refer the reader to \cite{HL.HF} for details.

Consider first the $L^2(\Sigma_t)$ estimate for $\rd (\wht \phi- \wht \phi^{main})$, which is derived using $\rd_t (\wht \phi- \wht \phi^{main})$ as a multiplier. Our bootstrap assumptions give sufficient control of the geometry so that we have
\begin{equation}\label{eq:EE.details.1}
\begin{split}
 		&\: \|\partial(\wht \phi- \wht \phi^{main})\|_{L^2(\Sigma_t)}^2\\
 		\lesssim &\: C(N) \lambda^4\ep^2 +\Big( \int_0^t \|\cdots\|_{L^2(\Sigma_\tau)} \ud\tau \Big)^2 \\
		&\: + \Big| \int_0^t \int_{\Sigma_\tau}
 		\left(\frac{1}{n_0^3} \rd_t \wht n \partial_t \phi^0 + \frac{1}{n_0^2} \rd_t \wht \beta^i \partial_i \phi^0\right)\partial_t (\wht \phi- \wht \phi^{main}) \sqrt{|\det g|} \ud x \ud \tau \Big|.
\end{split}
\end{equation}
For the last term, we integrate by parts to remove the $\partial_t$ derivatives from $ \wht \mfg$. When the $\rd_t$ derivatives hits on $\rd_t (\wht \phi- \wht \phi^{main})$ we then use the wave equation (Lemma~\ref{lem:wave.operator}); whenever there are two derivatives, there must be one spatial derivative which we can integrate by parts back to $\wht \mfg$. Then note that by \eqref{BA:g.L4} the spatial derivative of $\wht\mfg$ can be considered an $\cdots$ error term. In other words, capturing also the boundary term from the integration by parts, we obtain
\begin{equation}\label{eq:EE.details.2}
\begin{split}
 		&\: \Big| \int_0^t \int_{\Sigma_\tau}
 		\left(\frac{1}{n_0^3} \rd_t \wht n \partial_t \phi^0 + \frac{1}{n_0^2} \rd_t \wht \beta^i \partial_i \phi^0\right)\partial_t (\wht \phi- \wht \phi^{main}) \sqrt{|\det g|} \ud x \ud \tau \Big|\\
 		\lesssim &\: C(N) \lambda^4\ep^2 + \Big| \int_{\Sigma_t}
 		\left(\frac{1}{n_0^3} \wht n \partial_t \phi^0 + \frac{1}{n_0^2} \wht \beta^i \partial_i \phi^0\right)\partial_t (\wht \phi- \wht \phi^{main}) \sqrt{|\det g|}\ud x \Big| \\
 	&\:	+\int_0^t \|\cdots \|_{L^2(\Sigma_\tau)} \|\rd (\wht \phi- \wht \phi^{main})\|_{L^2(\Sigma_\tau)} \ud\tau \\
	\ls &\: C(N) \lambda^4\ep^2 + \|\rd (\wht \phi- \wht \phi^{main})\|_{L^2(\Sigma_t)} \|\wht g\|_{L^2(\Sigma_t\cap B(0,2R+2))} + \int_0^t \|\cdots \|_{L^2(\Sigma_\tau)} \|\rd (\wht \phi- \wht \phi^{main})\|_{L^2(\Sigma_\tau)} \ud\tau.
\end{split}
\end{equation}

Combining \eqref{eq:EE.details.1} and \eqref{eq:EE.details.2}, using \eqref{BA:wave}, \eqref{BA:g.L4}, \eqref{eq:bounds.for.phimain} and the bounds \eqref{eq:error.general} for $\|\cdots\|_{L^2(\Sigma_t)}$, we thus obtain 
\begin{equation}\label{eq:EE.details.3}
\begin{split}
 		&\: \|\partial(\wht \phi- \wht \phi^{main})\|_{L^2(\Sigma_t)}^2\\
 		\lesssim &\: C(N) \lambda^4\ep^2 +\ep^2 \lambda^4 \Big( \int_0^t C(N) C_b(N) e^{A(N)\tau} \ud\tau \Big)^2 \\
		&\:+ \ep^{\f 52} \lambda^4 (C_b(N))^2 e^{2A(N)t} + \ep^2 \lambda^4\int_0^t C(N) (C_b(N))^2 e^{2A(N)\tau} \ud\tau \\
		\ls &\: C(N) \lambda^4\ep^2+\ep^{\f 52} \lambda^4 (C_b(N))^2 e^{2A(N)t},
\end{split}
\end{equation} 		
where we have used \eqref{eq;constant.hierarchy}, which in particular imply that the time integrals satisfy
$$\Big(\int_0^t C_b(N) e^{A(N)\tau} \ud \tau\Big)^2 ,\,\int_0^t (C_b(N))^2 e^{2A(N)\tau} \ud\tau \leq e^{2A(N)t}.$$
		
		The concludes the $k=0$ case of the estimates. The higher derivative estimates up to three derivatives can be handled similarly after noting that the $\cdots$ terms and the bootstrap assumptions lose a power of $\lambda$ for every additional derivative. \qedhere
 \end{proof} 		

 \begin{lemma}
The following estimate holds for all $t \in [0,T)$:
 	\begin{align*}
	\sum_{k\leq 2}\lambda^k \|\partial_t^2 \wht \phi \|_{H^k(\Sigma_t)}+ \sum_{k\leq 2}\lambda^k \|\partial_t^2 \wht \varpi \|_{H^k(\Sigma_t)}\lesssim 
  C(N) \lambda\ep+\lambda\ep^{\frac{5}{4}}C'_b(N)e^{A(N)t}.
	\end{align*}
 	\end{lemma}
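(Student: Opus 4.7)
The plan is to derive the bound on $\partial_t^2 \wht\phi$ (and analogously $\partial_t^2 \wht\varpi$) algebraically from the wave equation for $\wht\phi$ already worked out in Proposition~\ref{prop:wave.error}, combined with the spatial-derivative bounds from Proposition~\ref{prop:wave.est.final}. By Lemma~\ref{lem:wave.operator}, the wave operator takes the form
\begin{equation*}
    \Box_g \wht\phi = -\frac{1}{n^2}\partial_t^2 \wht\phi + \frac{2\beta^i}{n^2}\partial_t \partial_i \wht\phi + \Big(e^{-2\gamma}\delta^{ij} - \frac{\beta^i\beta^j}{n^2}\Big)\partial^2_{ij}\wht\phi + \text{(lower order in $\partial\wht\phi$)},
\end{equation*}
so that $\partial_t^2 \wht\phi$ can be solved for in terms of $\partial_x^2 \wht\phi$, $\partial_t\partial_x\wht\phi$, $\partial \wht\phi$, and $\Box_g\wht\phi$. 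Using that $n$, $\beta^i$, $\gamma$ are close to the Minkowski values (by the background bounds and the bootstrap assumptions), and applying standard Sobolev product inequalities (together with the $L^\infty$ bounds in Lemma~\ref{lminfty} for metric coefficients) gives
\begin{equation*}
    \sum_{k\leq 2} \lambda^k \|\partial_t^2 \wht\phi\|_{H^k(\Sigma_t)} \lesssim \sum_{k\leq 2}\lambda^k\Big(\|\partial \wht\phi\|_{H^{k+1}(\Sigma_t)} + \|\Box_g \wht\phi\|_{H^k(\Sigma_t)}\Big).
\end{equation*}

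The first piece on the right is controlled directly by Proposition~\ref{prop:wave.est.final}, which yields $\sum_{k\leq 2}\lambda^k\|\partial \wht\phi\|_{H^{k+1}(\Sigma_t)} \lesssim \lambda^{-1}\big(C(N)\lambda^2\ep + \ep^{5/4}\lambda^2 C_b(N) e^{A(N)t}\big) = C(N)\lambda\ep + \lambda\ep^{5/4}C_b(N)e^{A(N)t}$, comfortably within the target. For the second piece, we bound each term of $\Box_g \wht\phi$ as written in Proposition~\ref{prop:wave.error}. The purely oscillatory terms $\lambda E^\phi_{\bullet}\sin(\cdots)$ or $\lambda E^\phi_{\bullet}\cos(\cdots)$ have background amplitude bounded (in the appropriate $C^k$ norm) uniformly in $N$ by $C(N)\ep$; multiplying by $\lambda^k$ and taking $k$ derivatives produces at most a loss of $\lambda^{-k}$, and summing one finds a total contribution of $C(N)\lambda\ep$, as required. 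The error terms $\cdots$ in Proposition~\ref{prop:wave.error} satisfy \eqref{eq:error.general}, i.e., $\sum_{k\leq 3}\lambda^k\|\cdots\|_{H^k(\Sigma_t)}\leq \ep\lambda^2 C(N)C_b'(N)e^{A(N)t}$, which when inserted into $\sum_{k\leq 2}\lambda^k \|\cdot\|_{H^k}$ gives an $O(\ep\lambda^2 C(N)C_b'(N)e^{A(N)t})$ bound; the constant hierarchy \eqref{eq;constant.hierarchy} and smallness assumption \eqref{eq:N.vs.ep} absorb $C(N)C_b'(N)$ into $\ep^{-1/4}C_b'(N)$, so this contribution fits into $\lambda\ep^{5/4}C_b'(N)e^{A(N)t}$.

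The one term that drives the appearance of $C_b'(N)$ (rather than $C_b(N)$) and the loss of a single power of $\lambda$ is
\begin{equation*}
\frac{1}{n_0^3}\partial_t \wht n \,\partial_t \phi^0 + \frac{1}{n_0^2}\partial_t \wht\beta^i \,\partial_i \phi^0,
\end{equation*}
since it contains the time derivative $\partial_t \wht\mfg$ which is only controlled by \eqref{BA:dtg.L2}, \eqref{BA:dtg.L4}, \eqref{BA:dtg.L43}. Using \eqref{BA:dtg.L2} for the top order (and \eqref{BA:dtg.L4} combined with $C^k$ bounds on $\partial\phi^0$ for the lower orders, via Hölder on the compact support of $\phi^0$), we obtain
\begin{equation*}
    \sum_{k\leq 2}\lambda^k \Big\|\partial_t \wht\mfg \cdot \partial\phi^0\Big\|_{H^k(\Sigma_t)} \lesssim \ep \cdot C_b'(N)\ep^{3/2}e^{A(N)t} \lesssim \lambda \ep^{5/4} C_b'(N) e^{A(N)t},
\end{equation*}
using $\ep^{3/2}\cdot\ep \leq \lambda\ep^{5/4}$ for $\ep$ small enough relative to $\lambda$ via \eqref{eq:N.vs.ep}. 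Here the fact that $\phi^0$ and hence $\partial\phi^0$ are smooth background quantities uniformly bounded by $C\ep$ is crucial, so that the top $H^2$ bound on $\partial_t\wht\mfg$ (which has no extra $\lambda$) is the only source of an $N$-dependent loss, matching exactly the $C_b'(N)$ and the single power of $\lambda$ on the right-hand side.

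The only mildly delicate point is ensuring that when $\partial_t^2 \wht\phi$ is solved from $\Box_g\wht\phi$, the quadratic terms in metric coefficients multiplying $\partial_t^2\wht\phi$ (which appear in $n^{-2}$, etc.) can be treated perturbatively; this is routine given Lemma~\ref{lminfty} and the smallness of $\wht\mfg$, so we absorb the $\wht\mfg\,\partial_t^2\wht\phi$ contributions into the left-hand side. The bound for $\partial_t^2\wht\varpi$ follows verbatim by replacing $\phi$ with $\varpi$ throughout, using the analogous wave equation coming from Proposition~\ref{prop:wave.error} for $\wht\varpi$. Combining the contributions gives the claimed estimate.
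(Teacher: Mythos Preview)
Your overall strategy is right and matches the paper's: express $\partial_t^2\wht\phi$ via the wave operator (Lemma~\ref{lem:wave.operator}) and estimate the right-hand side of Proposition~\ref{prop:wave.error} together with the spatial/mixed second derivatives from Proposition~\ref{prop:wave.est.final}. The paper does exactly this, only cosmetically splitting off $\wht\phi^{main}$ first so that the oscillatory $E^\phi$ terms are absorbed into an explicit piece rather than estimated directly.

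However, your bookkeeping for the $\partial_t\wht\mfg\cdot\partial\phi^0$ term is wrong, and the error matters. You write
\[
\sum_{k\leq 2}\lambda^k\|\partial_t\wht\mfg\cdot\partial\phi^0\|_{H^k(\Sigma_t)}\lesssim \ep\cdot C_b'(N)\ep^{3/2}e^{A(N)t}
\]
and then claim $\ep^{5/2}\leq\lambda\ep^{5/4}$ ``for $\ep$ small enough relative to $\lambda$''. This inequality is false in the regime of the paper: $\ep$ is fixed first and then $\lambda<\lambda_0^{(N)}$ is taken arbitrarily small, so you never have $\ep^{5/4}\leq\lambda$. The needed factor of $\lambda$ does not come from any relation between $\ep$ and $\lambda$; it comes from the bootstrap assumption itself. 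The $L^4$ assumption \eqref{BA:dtg.L4} reads $\|\partial_t\wht\mfg_r\|_{W^{1,4}_{-\alp_0+\frac12}}\leq 2C_b'(N)\ep^{3/2}\lambda e^{A(N)t}$ at $k=0$, and this explicit $\lambda$ (together with the $\lambda^k$ weight for $k\geq 1$) gives $\sum_{k\leq2}\lambda^k\|\partial_t\wht\mfg\cdot\partial\phi^0\|_{H^k}\lesssim \lambda\ep^{5/2}C_b'(N)e^{A(N)t}$, which then fits inside $\lambda\ep^{5/4}C_b'(N)e^{A(N)t}$ simply because $\ep<1$.

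A similar slip occurs for the $\cdots$ terms: you claim $C(N)\leq\ep^{-1/4}$ via \eqref{eq:N.vs.ep}, but $C(N)$ grows without bound in $N$ while $\ep$ is $N$-independent, so that inequality fails for large $N$. What actually happens is that the $\cdots$ bound $\ep\lambda^2C(N)C_b'(N)e^{A(N)t}$ carries an \emph{extra} $\lambda$ compared to the target $\lambda\ep^{5/4}C_b'(N)e^{A(N)t}$, and one uses $\lambda C(N)\leq\ep^{1/4}$, which holds because $\lambda<\lambda_0^{(N)}$ is chosen small depending on $N$. In both places the missing ingredient is the same: the smallness of $\lambda$ relative to $N$-dependent constants, not any smallness of $\ep$ relative to $\lambda$.
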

\begin{proof}
	As before we focus on $\wht \phi$. We write $\wht \phi = (\wht \phi - \wht \phi^{main}) + \wht \phi^{main}$ as in Proposition~\ref{prop:wave.est.final}. It is straightforward to check that the term $\wht \phi^{main}$ satisfies the desired estimate. For $\wht \phi - \wht \phi^{main}$, we use the wave equation \eqref{eq:wave-wave.main} (see Lemma~\ref{lem:wave.operator}) to rewrite $\rd_t^2 (\wht \phi - \wht \phi^{main})$ as good terms plus terms involving either one derivative of $\wht \phi - \wht \phi^{main}$ or two derivatives of $\wht \phi - \wht \phi^{main}$ with at least one spatial derivative. Thus, using Proposition~\ref{prop:wave.est.final}, the bootstrap assumptions and the estimates on $\cdots$ in \eqref{eq:wave-wave.main}, we obtain the desired bound. Notice that the larger bootstrap constant $C'_b(N)$ appears because of the $\partial_t \wht \mfg$ terms in the equation \eqref{eq:wave-wave.main}; see \eqref{BA:dtg.L43}, \eqref{BA:dtg.L4}, \eqref{BA:dtg.L2}. \qedhere
	
\end{proof}

\section{Estimates for the solution to the eikonal equation}\label{sec:eikonal}

The aim of this section is to improve the estimates for $u_\bA$. Recall that $u_\bA = u_\bA^0 + u_\bA^2 + \widetilde{u}$ (see \eqref{eq:u.para}). The bounds for the background $u_\bA^0$ are given. We begin with the estimates for $u_\bA^2$. 

\begin{lm}\label{lem:u2.well.defined}
The functions $u^2_\bA$, given implicitly by
$$u^2_\bA = \lambda^2 a_\bA^{-1} \mathfrak v_\bA \sin(\tfrac{a_\bA u_\bA}{\lambda})$$
are well-defined. Moreover,
\begin{equation}\label{eq:dalp.u2}
\rd_\alp u^2_\bA = \lambda^2 \mathfrak b_{\bA,\alp}^{(2)} + \lambda \mathfrak b_\bA^{(1)} \rd_{\alp} (u_\bA^0 + \wht u_\bA),
\end{equation}
where $$\mathfrak b_{\bA,\alp}^{(2)} = \f{a_\bA^{-1} \rd_\alp \mathfrak v_\bA \sin(\tfrac{a_\bA u_\bA}{\lambda})}{1 - \mathfrak v_\bA\lambda \cos (\tfrac{a_\bA u_\bA}{\lambda})},\quad \mathfrak b_\bA^{(1)} = \f{\mathfrak v_\bA \cos (\tfrac{a_\bA u_\bA}{\lambda})}{1 - \mathfrak v_\bA\lambda \cos (\tfrac{a_\bA u_\bA}{\lambda})}.$$
Furthermore, the following estimates hold for $\rd u^2_\bA$:
\begin{equation}\label{eq:u2.est}
\| \rd u^2_\bA \|_{L^{4}(K_t)} \leq C \ep^2 \lambda,\quad \sum_{1\leq k\leq 3} \lambda^k \| \rd u^2_\bA \|_{W^{k,4}(K_t)} +  \lambda^4 \| \rd u^2_\bA \|_{W^{4,2}(K_t)} \leq C(N) \ep^2 \lambda.
\end{equation}
\end{lm}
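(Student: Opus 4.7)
The plan is to establish existence of $u^2_\bA$ via a contraction-mapping argument, obtain the derivative formula by implicit differentiation, and deduce the estimates from the formula, crucially exploiting the $a_\bA^{-1}$ factor baked into $\mathfrak v_\bA$ by \eqref{eq:v.est}.

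For existence and uniqueness, fix $(t,x) \in K$ and consider the map $v \mapsto \Phi(v) := \lambda^2 a_\bA^{-1} \mathfrak v_\bA \sin\bigl(\tfrac{a_\bA (u_\bA^0 + v + \wht u_\bA)}{\lambda}\bigr)$. One computes $|\Phi'(v)| \leq \lambda\|\mathfrak v_\bA\|_{L^\infty(K_t)} \leq C\ep^2\lambda$ using \eqref{eq:v.est}, so for $\lambda$ small enough $\Phi$ is a contraction and admits a unique pointwise fixed point $u^2_\bA(t,x)$; smoothness in $(t,x)$ follows from the implicit function theorem applied to the smooth map $v - \Phi(v)$, and the same computation shows $|1 - \lambda \mathfrak v_\bA \cos(\tfrac{a_\bA u_\bA}{\lambda})| \geq \tfrac{1}{2}$ uniformly on $K$, justifying all divisions below. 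To derive \eqref{eq:dalp.u2}, differentiate the defining identity and substitute $\rd_\alp u_\bA = \rd_\alp(u_\bA^0 + \wht u_\bA) + \rd_\alp u^2_\bA$ to obtain
\begin{equation*}
\bigl(1 - \lambda\mathfrak v_\bA \cos\tfrac{a_\bA u_\bA}{\lambda}\bigr)\rd_\alp u^2_\bA = \lambda^2 a_\bA^{-1} \rd_\alp \mathfrak v_\bA \sin\tfrac{a_\bA u_\bA}{\lambda} + \lambda \mathfrak v_\bA \cos\tfrac{a_\bA u_\bA}{\lambda}\,\rd_\alp(u_\bA^0 + \wht u_\bA);
\end{equation*}
dividing by the bracket recovers exactly the stated formula with $\mathfrak b^{(2)}_{\bA,\alp}$ and $\mathfrak b^{(1)}_\bA$.

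For the $N$-independent bound $\|\rd u^2_\bA\|_{L^4(K_t)} \leq C\ep^2\lambda$, I would use $\|\mathfrak b^{(1)}_\bA\|_{L^\infty(K_t)} \ls \|\mathfrak v_\bA\|_{L^\infty(K_t)} \ls a_\bA^{-1}\ep^2$ and $\|\mathfrak b^{(2)}_{\bA,\alp}\|_{L^\infty(K_t)} \ls a_\bA^{-1}\|\rd\mathfrak v_\bA\|_{L^\infty(K_t)} \ls a_\bA^{-2}\ep^2$ (both from \eqref{eq:v.est}), together with $\|\rd u_\bA^0\|_{L^\infty(K_t)} \ls 1$, the bootstrap bound \eqref{BA:u.L4} on $\|\rd \wht u_\bA\|_{L^4(K_t)}$, and the finiteness of $|K_t|$. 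For the higher $W^{k,4}$ bounds with $1 \leq k \leq 3$ and the top-order $W^{4,2}$ bound, I would differentiate \eqref{eq:dalp.u2} up to $k$ times; each derivative landing on a trigonometric phase contributes a factor $a_\bA\rd u_\bA/\lambda$, so after $k$ derivatives the worst term carries $a_\bA^k/\lambda^k$ which, combined with the single $a_\bA^{-1}$ from $\mathfrak v_\bA$, is bounded by $C(N)/\lambda^k$. Using the bootstrap estimates \eqref{BA:u.L4}--\eqref{BA:u.L2} on the higher derivatives of $\wht u_\bA$ together with the background $C^9$ bounds on $\mathfrak v_\bA$ and $u_\bA^0$, the prefactor $\lambda^k$ absorbs this $\lambda^{-k}$ and produces the claimed $C(N)\ep^2\lambda$ bound.

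The main obstacle is the $N$-independent constant in the first estimate. This is made possible precisely because \eqref{eq:v.est} was stated with the explicit $a_\bA^{-1}$ factor: the first-derivative contribution only triggers a single chain-rule application on a trigonometric function, producing one $a_\bA$, which is exactly absorbed by the $a_\bA^{-1}$ inside $\mathfrak v_\bA$. For higher derivatives this single cancellation no longer suffices, and one must allow $N$-dependent constants, consistent with the statement.
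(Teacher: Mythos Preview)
Your proposal is correct and follows essentially the same approach as the paper's (very terse) proof: implicit function theorem for well-definedness, implicit differentiation for \eqref{eq:dalp.u2}, and then reading off the estimates from that formula using \eqref{eq:v.est} and the bootstrap bounds on $\wht u_\bA$. Your write-up in fact supplies more detail than the paper does, including the explicit observation that the single $a_\bA^{-1}$ in \eqref{eq:v.est} is exactly what makes the lowest-order $L^4$ bound $N$-independent while the higher-order bounds are not.
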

\begin{proof}
After using the bounds for $\mathfrak v_\bA$ and the bootstrap assumptions for $\widetilde{u}_\bA$, the well-definedness is an immediate consequence of the implicit function theorem for $\lambda$ sufficiently small. The equation \eqref{eq:dalp.u2} simply follows from the definition of $u^2_\bA$. The bound \eqref{eq:u2.est} can then be derived from \eqref{eq:dalp.u2} after using the estimates for $\mathfrak v_\bA$, $\widetilde{u}_\bA$. \qedhere
\end{proof}

Next, we turn to the estimates for $\wht u_\bA$. We start with an easy lemma about transport equations.
\begin{lm}\label{lmtransport}
	Let $f$ be a solution to the transport equation
	$$X^\alpha \partial_\alpha f +hf = F.$$
	Assume that $X$ is $C^1$ and satisfies
	\begin{equation}\label{eq:X.basic}
	\frac{1}{2} \leq X^t \leq 4, \qquad |X^r| \leq 2X^t.
	\end{equation}
	Then for $p = 2,4$, $f$ satisfies the estimate
	$$\sup_{t\in[0,T)}\|f\|_{L^p(K_t)} \leq C\left(\|f\|_{L^p(K_0)} + \int_0^T \|F\|_{L^p(K_s)}ds\right) \exp \Big(C\int_0^T \Big(\|\partial_\alpha X^\alpha\|_{L^\infty(K_s)} +\|h\|_{L^\infty(K_s)}\Big) \ud s\Big).$$
	\end{lm}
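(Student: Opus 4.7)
The plan is an energy-method $L^p$ estimate on the truncated cone $\Omega_t := \{(\tau, x) : 0 \leq \tau \leq t,\ x \in K_\tau\}$. First, multiplying the transport equation $X^\alpha \partial_\alpha f + hf = F$ by $p\,|f|^{p-2}f$ and rearranging yields the divergence identity
\[
\partial_\alpha(X^\alpha |f|^p) = (\partial_\alpha X^\alpha)\, |f|^p - p\, h\, |f|^p + p\, F\, |f|^{p-2} f.
\]
Applying the flat-space divergence theorem on $\Omega_t$ then converts this into a boundary/bulk identity whose bulk terms are controlled by $a(\tau) := \|\partial_\alpha X^\alpha\|_{L^\infty(K_\tau)} + \|h\|_{L^\infty(K_\tau)}$ and $\|F\|_{L^p(K_\tau)}$.

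Next, I would analyze $\partial \Omega_t$, which decomposes into $K_t$ and $K_0$ (with Euclidean unit normals $(\pm 1,0,0)$) and the lateral portion $B \subset \{|x| = 2R+2+2(1-\tau)\}$. On $K_t$ and $K_0$, the flux $X^\alpha \nu_\alpha = \pm X^t$ is comparable to $1$ thanks to $\tfrac12 \leq X^t \leq 4$, so these two pieces produce the $\|f\|_{L^p(K_t)}^p$ and $\|f\|_{L^p(K_0)}^p$ terms. On $B$, the Euclidean outward normal is proportional to $(2, x/|x|)$, so $X^\alpha \nu_\alpha \propto 2X^t + (X \cdot x)/|x| \geq 2X^t - |X^r| \geq 0$ by the hypothesis $|X^r| \leq 2X^t$; this is precisely the point where that hypothesis is used. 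Hence the lateral flux is non-negative and may be discarded, yielding
\[
\|f\|_{L^p(K_t)}^p \lesssim \|f\|_{L^p(K_0)}^p + \int_0^t a(\tau)\, \|f\|_{L^p(K_\tau)}^p \, d\tau + \int_0^t \|F\|_{L^p(K_\tau)} \|f\|_{L^p(K_\tau)}^{p-1}\, d\tau.
\]

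Finally, I would conclude by a Gronwall argument. Setting $v(\tau) := \|f\|_{L^p(K_\tau)}$, the above reads $v(t)^p \lesssim v(0)^p + \int_0^t a v^p + \int_0^t \|F\|_{L^p(K_\tau)} v^{p-1}$; differentiating and formally cancelling $v^{p-1}$ gives $v'(\tau) \lesssim a(\tau) v(\tau) + \|F\|_{L^p(K_\tau)}$, so Gronwall's inequality delivers the claimed bound. To make the cancellation of $v^{p-1}$ rigorous I would regularize by replacing $|f|^p$ throughout by $(|f|^2 + \varepsilon)^{p/2}$, for which the divergence identity holds with the same constants and the regularized $v_\varepsilon$ is strictly positive; passing $\varepsilon \to 0$ then recovers the estimate. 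No step is conceptually hard; the only item requiring genuine care is verifying non-negativity of the lateral flux, which is exactly why the hypothesis $|X^r| \leq 2X^t$ is imposed.
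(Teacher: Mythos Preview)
Your proof is correct and follows essentially the same approach as the paper: multiply by $p|f|^{p-2}f$, integrate over the truncated cone $\cup_{s\le t}K_s$, use the hypothesis $|X^r|\le 2X^t$ to drop the lateral boundary flux, and conclude with Gr\"onwall. The paper's version is terser (it omits the regularization step and handles the $F$-term by absorbing $\sup_s\|f\|_{L^p(K_s)}^{p-1}$ directly rather than differentiating), but the method is the same.
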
 
\begin{proof}
	We multiply the equation by $pf^{p-1}$ and integrate over $\displaystyle \cup_{s=0}^t K_t$. Integrating by parts, we obtain
$$
		\int_{K_t} X^t f^p -\int_{K_0} X^tf^p+\int_0^t \int_{\rd K_s} (2X^t - X^r) f^p d\omega
		=\int_0^t \int_{K_s} (f^p (\partial_\alpha X^\alpha - ph) +pF f^{p-1}).$$
		Since $2X^t -X^r \geq 0$, the term on $\rd K_s$ has a good sign and can be dropped. Thus, using $\f 12 \leq X^t \leq 4$ and absorbing suitable terms to the left-hand side, we obtain
		\begin{equation*}
		\begin{split}
		\sup_{s\in[0,t]} \|f\|_{L^p(K_s)}^p \leq &\: C \|f\|_{L^p(K_0)}^p + \Big(\int_0^t \|F\|_{L^p(K_s)} \, \ud s\Big)^p \\
		&\: + \int_0^t \|f\|_{L^p(K_s)}^p ( \|\partial_\alpha X^\alpha\|_{L^\infty(K_s)} +\|h\|_{L^\infty(K_s)}) \, \ud s.
		\end{split}
		\end{equation*}
		We conclude with Gr\"onwall's lemma. \qedhere

\end{proof}


\begin{lm}\label{lmdiuA} For $i \in \{1,2\}$, the functions
	$\partial_i \wht u_\bA$ satisfy the equation
	\begin{align*}
	X_\bA^\beta \partial_\beta \partial_i \wht u_\bA =&\: -\lambda^2 \sum_{\bB:\bB \neq \bA } \rd_i( R_{\bA}( \q G_{2,\bB}(\mfg)) \cos(\tfrac{2a_\bB u_\bB}{\lambda}))
		-\lambda^2 \sum_{\bB: \bB \neq \bA} \rd_i (R_{\bA}(\q G_{1,\bB} (\mfg)) \sin(\tfrac{a_\bB u_\bB}{\lambda})) \\
		&\: -\lambda^2 \sum_{\pm,\bB,\bC:\bB\neq\bC} \rd_i (R_{\bA}(\q G_{\pm,\bB,\bC}(\mfg)) \cos(\tfrac{a_\bB u_\bB \pm a_\bC u_{\gra C}}{\lambda})) + \cdots,
	\end{align*}
	where $X^\bt_{\bA} = 2 (1+ \lambda \mathfrak b_\bA^{(1)}) g^{\alp\bt} \rd_\alp u_\bA$ for $\mathfrak b_\bA^{(1)}$ as in Lemma~\ref{lem:u2.well.defined}, $R_\bA(\q G)$ have an explicit expression in term of background quantities satisfying
	\begin{equation}\label{eq:R.G.bound}
	\| R_{\bA,i}( \q G_{2,\bB}(\mfg))\|_{C^8(K_t)} + \| R_{\bA,i}(\q G_{1,\bB} (\mfg))\|_{C^8(K_t)} + \| R_{\bA,i}(\q G_{\pm,\bA,\bB}(\mfg))\|_{C^8(K_t)} \leq C(N) \ep^2,
	\end{equation}
and the error terms $\cdots$ satisfy the following estimates:
	\begin{align}
	\sum_{k\leq 3}\lambda^k \|\cdots\|_{W^{k,4}(K_t)} + \sum_{k\leq 2}\lambda^{k+1} \|\rd_t(\cdots)\|_{W^{k,4}(K_t)} \leq &\: \lambda^2\ep^{\frac{3}{2}}C(N)C_b(N)e^{A(N)t}, \label{eq:du.error.bound.1}\\
	\|\cdots\|_{H^4(K_t)} + \lambda \|\rd_t(\cdots)\|_{H^3(K_t)} \leq &\: \lambda^{-2} \ep^{\frac{3}{2}}C(N)C_b(N)e^{A(N)t}.\label{eq:du.error.bound.2}
	\end{align}
\end{lm}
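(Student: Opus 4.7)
The plan is to differentiate the eikonal equation $g^{\alpha\beta}\partial_\alpha u_\bA\,\partial_\beta u_\bA = 0$ in $x^i$ and then use Lemma~\ref{lem:u2.well.defined} to algebraically isolate the transport operator $X_\bA^\beta$ acting on $\partial_i\widetilde u_\bA$. Concretely, differentiation yields $2g^{\alpha\beta}\partial_\alpha u_\bA\,\partial_\beta\partial_i u_\bA = -(\partial_i g^{\alpha\beta})\partial_\alpha u_\bA\,\partial_\beta u_\bA$. Combining $\partial_i u_\bA = (1+\lambda\mathfrak b_\bA^{(1)})(\partial_i u_\bA^0+\partial_i\widetilde u_\bA)+\lambda^2\mathfrak b_{\bA,i}^{(2)}$ from \eqref{eq:dalp.u2} with another $\partial_\beta$, the top-order part contributes $(1+\lambda\mathfrak b_\bA^{(1)})\partial_\beta\partial_i\widetilde u_\bA$, which when multiplied by $2g^{\alpha\beta}\partial_\alpha u_\bA$ produces exactly $X_\bA^\beta\partial_\beta\partial_i\widetilde u_\bA$ on the left. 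All remaining pieces are moved to the right.

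To expose the main oscillatory source terms, I would subtract $(1+\lambda\mathfrak b_\bA^{(1)})$ times the analogous differentiated eikonal equation for the background, namely $2g_0^{\alpha\beta}\partial_\alpha u_\bA^0\,\partial_\beta\partial_i u_\bA^0 = -(\partial_i g_0^{\alpha\beta})\partial_\alpha u_\bA^0\,\partial_\beta u_\bA^0$. After this cancellation the leading-order RHS reduces to
\[
  -(1+\lambda\mathfrak b_\bA^{(1)})\bigl[\partial_i(g^{\alpha\beta}-g_0^{\alpha\beta})\bigr]\partial_\alpha u_\bA^0\,\partial_\beta u_\bA^0 + (\text{acceptable errors}).
\]
Inserting the parametrix \eqref{eq:g.para} together with \eqref{eq:mfg2.form}--\eqref{eq:g2.def.3}, the $\mathfrak g_2$ part of $g^{-1}-g_0^{-1}$ produces the three oscillatory sums displayed in the statement, where $R_\bA(\mathcal G)$ denotes the contraction of the matrix-valued amplitude $\mathcal G$ (together with the scalar $(1+\lambda\mathfrak b_\bA^{(1)})$) against $\partial_\alpha u_\bA^0\,\partial_\beta u_\bA^0$ through the inverse-metric dependence. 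The missing diagonal contributions ($\bB=\bA$, and $\bC\in\{\bA,\bB\}$ for the triple sum) collapse to $g_0^{-1}(du_\bA^0,du_\bA^0)\equiv 0$ by the background eikonal equation, so these cases can be placed in $\cdots$.

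Finally, I would bound the $\cdots$ terms. The $\widetilde{\mathfrak g}$ contribution to $\partial_i(g^{\alpha\beta}-g_0^{\alpha\beta})$ is controlled by \eqref{BA:g.L4} and \eqref{BA:g.L2}, producing the $L^4$ bound \eqref{eq:du.error.bound.1} at lower orders and the $H^4$ bound \eqref{eq:du.error.bound.2} at top order (the $\lambda^{-2}$ loss coming exactly from the gap between \eqref{BA:g.L4} and \eqref{BA:g.L2}). The $\lambda^2\mathfrak b_{\bA,i}^{(2)}$ remainder and the extra contributions from $\partial_i u_\bA^2$ are bounded using \eqref{eq:v.est} and \eqref{eq:u2.est}; terms with two or more tilded factors gain the extra $\lambda^{1/2}$ from Lemma~\ref{lminfty} and are harmless. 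Differences between phases in $u_\bB$ and in $u_\bB^0$ are handled by $|u_\bB-u_\bB^0|\ls\lambda$ and a mean value argument. The $\partial_t$ version is analogous: the extra factor of $\lambda$ on the left-hand side of \eqref{eq:du.error.bound.1}--\eqref{eq:du.error.bound.2} precisely matches the loss in passing from \eqref{BA:g.L4}, \eqref{BA:g.L2} to \eqref{BA:dtg.L43}, \eqref{BA:dtg.L2}, where the larger bootstrap constant $C_b'(N)$ is absorbed into the prefactor.

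The main obstacle is the bookkeeping required to verify that every error term simultaneously satisfies the $L^4$-based \eqref{eq:du.error.bound.1} and the $L^2$-based \eqref{eq:du.error.bound.2}. At top order one must track carefully how the $\lambda^{-2}$ loss from the metric bound \eqref{BA:g.L2} interacts with the $\lambda^2$ prefactor in the source terms; at lower order, one must verify the compatibility of each product of tilded quantities with the bootstrap hierarchy \eqref{eq;constant.hierarchy}, ensuring that no term activates $C_b'(N)$ without the corresponding power of $\lambda$ in \eqref{eq:du.error.bound.1}--\eqref{eq:du.error.bound.2} being available to absorb it.
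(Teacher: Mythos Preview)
Your overall strategy matches the paper's, but there is a genuine gap in how you dispose of the diagonal $\bB=\bA$ terms. Your claim that these ``collapse to $g_0^{-1}(du_\bA^0,du_\bA^0)\equiv 0$ by the background eikonal equation'' is incorrect. The quantity $R_\bA(\mathcal G)$ is not a contraction against $g_0^{\mu\nu}$; it comes from linearizing $g^{-1}-g_0^{-1}$ in the metric components $(n,\gamma,\beta^j)$ and inserting the amplitudes $\mathcal G_{\cdot,\bA}(\mfg)$, which themselves carry $\Gamma_0(\mfg)^{\mu\nu}$ rather than $g_0^{\mu\nu}$. In the paper this is handled by the separate Lemma~\ref{lmgprim}, where a direct computation shows $R_\bA(\mathcal G_{2,\bA})=0$ (a genuine algebraic identity using the eikonal equation, but not an immediate one) while $R_\bA(\mathcal G_{1,\bA})=\tfrac{4}{a_\bA^2}g_0^{\alpha\beta}\bigl(F_\bA^\phi\partial_\alpha\phi_0+\tfrac14 e^{-4\phi_0}F_\bA^\varpi\partial_\alpha\varpi_0\bigr)\partial_\beta u_\bA^0\neq 0$. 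After applying $\partial_i$ to $\lambda^2 R_\bA(\mathcal G_{1,\bA})\sin(\tfrac{a_\bA u_\bA}{\lambda})$ this is an $O(\lambda)$ source, one power too large to sit in $\cdots$.

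You also miss the companion $O(\lambda)$ term on the other side. When $\partial_\beta$ acts on $\partial_i u_\bA^2$ via \eqref{eq:dalp.u2}, the piece $\lambda(\partial_\beta\mathfrak b_\bA^{(1)})\partial_i u_\bA^0$ contributes, after contraction with $2g^{\alpha\beta}\partial_\alpha u_\bA$ (the phase-derivative part vanishes by the eikonal equation, but the $\partial_\beta\mathfrak v_\bA$ part survives), the term $2\lambda g_0^{\alpha\beta}\partial_\alpha u_\bA^0\,\partial_\beta\mathfrak v_\bA\,\partial_i u_\bA^0\cos(\tfrac{a_\bA u_\bA}{\lambda})$. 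This is $O(\lambda)$, not controllable by \eqref{eq:v.est} or \eqref{eq:u2.est} at the level of $\cdots$. The point of the construction is that this term and the diagonal $R_\bA(\mathcal G_{1,\bA})$ contribution cancel \emph{exactly} by the transport equation \eqref{defu1} defining $\mathfrak v_\bA$; indeed, that equation was chosen for precisely this purpose. Without invoking Lemma~\ref{lmgprim} and \eqref{defu1}, the argument does not close.
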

\begin{proof}
	Differentiating $g^{\alp\bt} \rd_\alp u_\bA \rd_\bt u_\bA = 0$, we obtain
	\begin{equation}\label{eq:eikonal.diff}
	2 g^{\alp\bt} \rd_\alp u_\bA  \rd_\bt \rd_i u_\bA + \rd_i g^{\alp \bt} \rd_\alp u_\bA \rd_\bt u_\bA = 0.
	\end{equation}
	For the first term in \eqref{eq:eikonal.diff}, we expand $\rd^2_{\bt i} u_\bA$ using $u_\bA = u_\bA^0+u_\bA^2+\wht u_\bA$ and then use \eqref{eq:dalp.u2} to obtain
	\begin{equation}\label{eq:eikonal.diff.1}
	\begin{split}
	&\: 2 g^{\alp\bt} \rd_\alp u_\bA  \rd_\bt \rd_i u_\bA \\
	= &\: 2 g^{\alp\bt} \rd_\alp u_\bA  \rd_\bt \rd_i u_\bA^0 + 2 g^{\alp\bt} \rd_\alp u_\bA  \rd_\bt (\lambda^2 \mathfrak b_{\bA,i}^{(2)} + \lambda \mathfrak b_\bA^{(1)} \rd_i (u_\bA^0 + \wht u_\bA) ) + 2 g^{\alp\bt} \rd_\alp u_\bA  \rd_\bt \rd_i \wht u_\bA  \\
	= &\: (2 + 4 \lambda \mathfrak v_\bA \cos(\tfrac{a_\bA u_\bA}{\lambda})) g_0^{\alp\bt} \rd_\alp u_\bA^0  \rd_\bt \rd_i u_\bA^0 + 2 \lambda g_0^{\alp\bt} \rd_\alp u_\bA^0  \rd_\bt \mathfrak v_\bA \rd_i u_\bA^0 \cos(\tfrac{a_\bA u_\bA}{\lambda})\\
	&\: + 2 (1+ \lambda \mathfrak b_\bA^{(1)}) g^{\alp\bt} \rd_\alp u_\bA  \rd_\bt \rd_i \wht u_\bA + \cdots,
	\end{split}
	\end{equation}
	where in the second equality we used the bootstrap assumptions \eqref{BA:g.L4}, \eqref{BA:g.L2}, \eqref{BA:u.L4}, \eqref{BA:u.L2}, as well as
	\begin{align*}
	g^{\alp\bt} \rd_\alp u_\bA^2 \rd_\bt \rd_i u_\bA^0 = &\: \lambda g_0^{\alp\bt} \mathfrak v_\bA \rd_\alp u_\bA^0 \rd_\bt \rd_i u_\bA^0 \cos(\tfrac{a_\bA u_\bA}{\lambda}) + \cdots,\\
	\lambda g^{\alp\bt} \rd_\alp u_\bA \rd_\bt \mathfrak b_\bA^{(1)}  =&\: \lambda g_0^{\alp\bt} \rd_\alp u_\bA^0 \rd_\bt \mathfrak v_\bA \cos(\tfrac{a_\bA u_\bA}\lambda) + \cdots,\\
	\lambda^2 g^{\alp\bt} \rd_\alp u_\bA  \rd_\bt \mathfrak b_{\bA,i}^{(2)} = &\: \cdots,
	\end{align*}
	where both statements are derived with the help of the eikonal equation when the $\rd_\bt$ derivatives hit on the phase. We note that since we have $L^4$-based norms for the lower order estimates of $g$ and $u_\bA$ and $L^2$-based norms for the top order (see \eqref{BA:g.L4}, \eqref{BA:g.L2}, \eqref{BA:u.L4}, \eqref{BA:u.L2}), we likewise obtain such a hierarchy of estimates for the error terms.
	
	For the second term in \eqref{eq:eikonal.diff}, we expand using the paramatrices and obtain the following using the bootstrap assumptions:
	\begin{equation}\label{eq:eikonal.diff.2}
	\begin{split}
	&\: \rd_i g^{\alp \bt} \rd_\alp u_\bA \rd_\bt u_\bA \\
	= &\: \rd_i g_0^{\alp \bt} \rd_\alp u_\bA^0 \rd_\bt u_\bA^0 + 2\rd_i g_0^{\alp \bt} \rd_\alp u_\bA^2 \rd_\bt u_\bA^0 + \rd_i (g^{\alp\bt} - g^{\alp\bt}_0) \rd_\alp u_\bA^0 \rd_\bt u_\bA^0 + \cdots\\
	= &\: (1 + 2 \lambda \mathfrak v_\bA \cos(\tfrac{a_\bA u_\bA}\lambda)) \rd_i g_0^{\alp \bt} \rd_\alp u_\bA^0 \rd_\bt u_\bA^0 + \rd_i (g^{\alp\bt} - g^{\alp\bt}_0) \rd_\alp u_\bA^0 \rd_\bt u_\bA^0  + \cdots.
	\end{split}
	\end{equation}
	
	Differentiate now the eikonal equation for the background by $\rd_i$ to obtain
	\begin{equation}\label{eq:eikonal.bkgd.diff}
	2 g_0^{\alp\bt} \rd_\alp u_\bA^0  \rd_\bt \rd_i u_\bA^0 + \rd_i g_0^{\alp \bt} \rd_\alp u_\bA^0 \rd_\bt u_\bA^0 = 0.
	\end{equation}
	Thus, plugging \eqref{eq:eikonal.diff.1} and \eqref{eq:eikonal.diff.2} into \eqref{eq:eikonal.diff} and using \eqref{eq:eikonal.bkgd.diff}, we obtain
	\begin{equation}\label{eq:utilde.eqn.almost}
	\begin{split}
	&\: 2 (1+ \lambda \mathfrak b_\bA^{(1)}) g^{\alp\bt} \rd_\alp u_\bA  \rd_\bt \rd_i \wht u_\bA \\ 
	= &\: - 2 \lambda g_0^{\alp\bt} \rd_\alp u_\bA^0  \rd_\bt \mathfrak v_\bA \rd_i u_\bA^0 \cos(\tfrac{a_\bA u_\bA}{\lambda}) - \rd_i (g^{\alp\bt} - g^{\alp\bt}_0) \rd_\alp u_\bA^0 \rd_\bt u_\bA^0 + \cdots \\
	= &\: - 2 \lambda g_0^{\alp\bt} \rd_\alp u_\bA^0  \rd_\bt \mathfrak v_\bA \rd_i u_\bA^0 \cos(\tfrac{a_\bA u_\bA}{\lambda}) - \rd_i ((g^{\alp\bt} - g^{\alp\bt}_0) \rd_\alp u_\bA^0 \rd_\bt u_\bA^0) + \cdots,
	\end{split}
	\end{equation}
	where we used that $(g^{\alp\bt} - g^{\alp\bt}_0)\rd_i (\rd_\alp u_\bA^0 \rd_\bt u_\bA^0)$ satisfies \eqref{eq:du.error.bound.1}.
	
	We now expand $(g^{\alp\bt} - g^{\alp\bt}_0) \rd_\alp u_\bA^0 \rd_\bt u_\bA^0$ using $\mfg = \mfg_0 + \mfg_2 + \widetilde{\mfg}$. Notice that we only need to keep the $\mfg_2$ terms here, as the contributions from $\wht \mfg$ can be treated as error. 
	Using the ansatz for $\mfg_2$ (see \eqref{eq:mfg2.form}, \eqref{eq:g2.def.1}--\eqref{eq:g2.def.3}) we can write
	\begin{equation}\label{eq:main.R.computation.in.g}
	\begin{split}
	 (g^{\alp\bt} - g^{\alp\bt}_0) \rd_\alp u_\bA^0 \rd_\bt u_\bA^0 
	= &\: \lambda^2 \sum_{\bB } R_{\bA}( \q G_{2,\bB}(\mfg)) \cos(\tfrac{2a_\bB u_\bB}{\lambda})
		+\lambda^2 \sum_{\bB} R_{\bA}(\q G_{1,\bB} (\mfg)) \sin(\tfrac{a_\bB u_\bB}{\lambda})\\
		&+\lambda^2 \sum_{\pm,\bB,\bC:\bB\neq\bC} R_{\bA}(\q G_{\pm,\bB,\bC}(\mfg)) \cos(\tfrac{a_\bB u_\bB \pm a_\bC u_{\gra C}}{\lambda}) + \cdots,
	\end{split}
	\end{equation}
	where $R_{\bA}(\mathcal G)$ satisfies the estimates \eqref{eq:R.G.bound}.

	Finally, we isolate the $\bB = \bA$ contribution in the first two terms on the right-hand side of \eqref{eq:main.R.computation.in.g}. We relegate the computation to Lemma \ref{lmgprim} below, according to which we have 
	\begin{equation}
	\begin{split}
	&\: \rd_i ((g^{\alp\bt} - g^{\alp\bt}_0) \rd_\alp u_\bA^0 \rd_\bt u_\bA^0 )\\
	= &\: \lambda^2 \sum_{\bB:\bB \neq \bA } \rd_i( R_{\bA}( \q G_{2,\bB}(\mfg)) \cos(\tfrac{2a_\bB u_\bB}{\lambda}))
		+\lambda^2 \sum_{\bB: \bB \neq \bA} \rd_i (R_{\bA}(\q G_{1,\bB} (\mfg)) \sin(\tfrac{a_\bB u_\bB}{\lambda})) \\
		&\: +\lambda^2 \sum_{\pm,\bB,\bC:\bB\neq\bC} \rd_i (R_{\bA}(\q G_{\pm,\bB,\bC}(\mfg)) \cos(\tfrac{a_\bB u_\bB \pm a_\bC u_{\gra C}}{\lambda})) \\
		&\: + \frac{\lambda^2}{a_\bA^2} \rd_i \Big( g_0^{\alpha \beta}(4F_\bA^\phi\partial_\alpha \phi_0 \partial_\beta u^0_\bA + e^{-4\phi_0}F_\bA^\varpi\partial_\alpha \varpi_0 \partial_\beta u^0_\bA)\sin(\tfrac{a_\bA u_\bA}{\lambda}) \Big)+ \cdots.
	\end{split}
	\end{equation}
	Plugging this back into \eqref{eq:utilde.eqn.almost}, and dropping acceptable error terms, we obtain 
	\begin{equation}
	\begin{split}
	&\: 2 (1+ \lambda \mathfrak b_\bA^{(1)}) g^{\alp\bt} \rd_\alp u_\bA  \rd_\bt \rd_i \wht u_\bA \\ 
	= &\: - \lambda \rd_i u_\bA^0 \Big(2  g_0^{\alp\bt} \rd_\alp u_\bA^0  \rd_\bt \mathfrak v_\bA + \frac{1}{a_\bA}g_0^{\alpha \beta}(4F_\bA^\phi\partial_\alpha \phi_0 \partial_\beta u^0_\bA + e^{-4\phi_0}F_\bA^\varpi\partial_\alpha \varpi_0 \partial_\beta u^0_\bA) \Big) \cos(\tfrac{a_\bA u_\bA}{\lambda})  \\
	&\: -\lambda^2 \sum_{\bB:\bB \neq \bA } \rd_i( R_{\bA}( \q G_{2,\bB}(\mfg)) \cos(\tfrac{2a_\bB u_\bB}{\lambda}))
		-\lambda^2 \sum_{\bB: \bB \neq \bA} \rd_i (R_{\bA}(\q G_{1,\bB} (\mfg)) \sin(\tfrac{a_\bB u_\bB}{\lambda})) \\
		&\: -\lambda^2 \sum_{\pm,\bB,\bC:\bB\neq\bC} \rd_i (R_{\bA}(\q G_{\pm,\bB,\bC}(\mfg)) \cos(\tfrac{a_\bB u_\bB \pm a_\bC u_{\gra C}}{\lambda})) + \cdots,
	\end{split}
	\end{equation}
	which gives the desired conclusion after using the transport equation \eqref{defu1}. \qedhere
\end{proof}

We now complete the computation that was used in the proof of the previous lemma.
\begin{lm}\label{lmgprim}
	We have $R_{\bA}( \q G_{2,\bA})=0$ and 
	$$R_{\bA}(\q G_{1,\bA})=\frac{4}{a_\bA^2}  g_0^{\alpha \beta}(F_\bA^\phi\partial_\alpha \phi_0 \partial_\beta u^0_\bA + \f 14 e^{-4\phi_0}F_\bA^\varpi\partial_\alpha \varpi_0 \partial_\beta u^0_\bA).$$
\end{lm}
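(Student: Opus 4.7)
The plan is to read off $R_\bA(\mathcal G)$ directly from the definition in \eqref{eq:main.R.computation.in.g}: expanding $(g^{\alpha\beta}-g_0^{\alpha\beta})\partial_\alpha u^0_\bA \partial_\beta u^0_\bA$ to first order in $\mfg_2$ (higher-order contributions fall into the $\cdots$ error by the bootstrap assumptions and the $L^\infty$ bounds from Lemma~\ref{lminfty}) gives
\begin{equation*}
(g^{\alpha\beta}-g_0^{\alpha\beta})\partial_\alpha u^0_\bA \partial_\beta u^0_\bA = \sum_{\mfg\in\{\gamma,n,\beta^1,\beta^2\}} A_\mfg \cdot (\mfg-\mfg_0) + \cdots, \qquad A_\mfg := \left.\frac{\partial g^{\alpha\beta}}{\partial \mfg}\right|_{\mfg_0}\partial_\alpha u^0_\bA \partial_\beta u^0_\bA,
\end{equation*}
so that matching the $\cos(2a_\bA u_\bA/\lambda)$ and $\sin(a_\bA u_\bA/\lambda)$ phases with \eqref{eq:mfg2.form}--\eqref{eq:g2.def.2} yields
\begin{equation*}
R_\bA(\mathcal G_{j,\bA}) = \sum_\mfg A_\mfg \,\mathcal G_{j,\bA}(\mfg), \qquad j=1,2.
\end{equation*}

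From \eqref{g.inverse} one computes directly $A_\gamma = -2e^{-2\gamma_0}|\nabla u^0_\bA|^2$, $A_n = \tfrac{2}{n_0^3}(e_0 u^0_\bA)^2$, and $A_{\beta^i} = \tfrac{2}{n_0^2}(e_0 u^0_\bA)\partial_i u^0_\bA$. On the other side, from the $\Gamma_0$ matrices \eqref{G.g}--\eqref{G.b}, a short computation gives the contractions $B_\mfg := \Gamma_0(\mfg)^{\mu\nu}\partial_\mu u^0_\bA \partial_\nu u^0_\bA$, namely $B_\gamma = -2|\nabla u^0_\bA|^2$, $B_n = \tfrac{2e^{2\gamma_0}}{n_0}(e_0 u^0_\bA)^2$, $B_{\beta^i} = -4(\partial_i u^0_\bA)(e_0 u^0_\bA)$; likewise for the bilinear form $C_\mfg := \Gamma_0(\mfg)^{\mu\nu}\partial_\mu \psi \,\partial_\nu u^0_\bA$ with $\psi\in\{\phi_0,\varpi_0\}$ one finds $C_\gamma = e^{2\gamma_0}(L^0_\bA \psi) - 2\nabla\psi\cdot\nabla u^0_\bA$, $C_n = \tfrac{2e^{2\gamma_0}}{n_0}(e_0\psi)(e_0 u^0_\bA)$, $C_{\beta^i} = -2(\partial_i u^0_\bA)(e_0\psi) - 2(\partial_i\psi)(e_0 u^0_\bA)$, where $L^0_\bA = g_0^{\alpha\beta}\partial_\alpha u^0_\bA\partial_\beta$.

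The remaining step is purely algebraic: substitute and invoke the background eikonal identity $(e_0 u^0_\bA)^2 = n_0^2 e^{-2\gamma_0}|\nabla u^0_\bA|^2$ to collapse the sums. For $\mathcal G_{2,\bA}$ one finds $A_\gamma B_\gamma = A_n B_n = 4e^{-2\gamma_0}|\nabla u^0_\bA|^4$ and $\sum_i A_{\beta^i}B_{\beta^i} = -8e^{-2\gamma_0}|\nabla u^0_\bA|^4$, hence $\sum_\mfg A_\mfg B_\mfg = 0$, giving $R_\bA(\mathcal G_{2,\bA})=0$. For $\mathcal G_{1,\bA}$ the analogous computation yields the telescoping identity $\sum_\mfg A_\mfg C_\mfg = -2|\nabla u^0_\bA|^2\,g_0^{\alpha\beta}\partial_\alpha\psi\,\partial_\beta u^0_\bA$, which after substitution into $\mathcal G_{1,\bA}(\mfg)$ (once for $\psi=\phi_0$ with coefficient $-2F_\bA^\phi/(a_\bA^2|\nabla u^0_\bA|^2)$, once for $\psi=\varpi_0$ with coefficient $-e^{-4\phi_0}F_\bA^\varpi/(2a_\bA^2|\nabla u^0_\bA|^2)$) produces exactly the claimed expression.

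The only non-routine aspect is recognizing which cancellation mechanism is at work: both identities are rigid consequences of the fact that the combination $A_\mfg B_\mfg$ and $A_\mfg C_\mfg$ amounts (up to a factor of $|\nabla u^0_\bA|^2$) to testing $\partial_\mfg\bigl(g_0^{\alpha\beta}\partial_\alpha u^0_\bA\partial_\beta u^0_\bA\bigr)$ and $\partial_\mfg\bigl(g_0^{\alpha\beta}\partial_\alpha\psi\partial_\beta u^0_\bA\bigr)$ respectively against the coefficient structure of $\Gamma_0$ --- which itself is the linearisation of $e^{2\gamma_0}g_0^{\alpha\beta}(\cdot)(\cdot)$ at the background --- so the first expression vanishes by the eikonal equation while the second reproduces $g_0^{-1}(d\psi,du^0_\bA)$. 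No energy estimate or nonlinear analysis is required; the main care is just to track signs and the factor $e^{-2\gamma_0}$ versus $e^{2\gamma_0}$ correctly in each of the four $\mfg$-contributions.
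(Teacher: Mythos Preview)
Your proof is correct and follows essentially the same approach as the paper: both expand $(g^{\alpha\beta}-g_0^{\alpha\beta})\partial_\alpha u^0_\bA\partial_\beta u^0_\bA$ to linear order in the metric perturbations, substitute the definitions \eqref{eq:g2.def.1}--\eqref{eq:g2.def.2}, and then use the background eikonal identity $(e_0 u^0_\bA)^2=n_0^2 e^{-2\gamma_0}|\nabla u^0_\bA|^2$ to collapse the resulting sums. Your introduction of the auxiliary quantities $A_\mfg$, $B_\mfg$, $C_\mfg$ makes the bookkeeping tidier than the paper's fully written-out computation, but the substance is identical.
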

\begin{proof}

We use the following computation (see (4.24), (6.8) in \cite{HL.HF}):
\begin{equation*}
	\begin{split}
	 (g^{\alp\bt} - g^{\alp\bt}_0) \rd_\alp u_\bA^0 \rd_\bt u_\bA^0 
	= \frac{2 n_2}{n_0^3}(e_0 u^0_\bA)^2-2 \gamma_2 e^{-2\gamma_0}|\nabla u^0_\bA|^2+\frac{2}{n_0^2} \beta_2^j \partial_j u_\bA^0 (e_0 u^0_\bA) + \cdots.
	\end{split}
	\end{equation*}
	Now, we plug in the expressions in \eqref{eq:mfg2.form}, \eqref{eq:g2.def.1}--\eqref{eq:g2.def.3} (with $\Gamma$ given in \eqref{G.g}, \eqref{G.N}, \eqref{G.b}) for $n_2$, $\gamma_2$ and $\bt_2$. We then obtain
	\begin{align*}
		&\: R_\bA(\q G_{2,\bA})\\
		=&\: -\frac{1}{8}\frac{(F_\bA^\phi)^2+\frac{1}{4}e^{-4\phi_0}(F_\bA^\varpi)^2}{a_\bA^2|\nabla u_\bA^0|^2}\Bigg(\frac{2\mathbf{\Gamma}_0(n)^{\mu\nu}\partial_\mu u^0_\bA \partial_\nu u_\bA^0}{n_0^3}(e_0( u^0_\bA))^2-2\mathbf{\Gamma}_0(\gamma)^{\mu\nu}\partial_\mu u^0_\bA \partial_\nu u_\bA^0e^{-2\gamma_0}|\nabla u^0_\bA|^2\\
		&\qquad\qquad\qquad\qquad+\frac{2}{n_0^2}\mathbf{\Gamma}_0(\beta^j)^{\mu\nu} \partial_\mu u^0_\bA \partial_\nu u_\bA^0 \partial_j u_\bA^0 (e_0( u^0_\bA))\Bigg)\\
		=&\: -\frac{1}{8}\frac{(F_\bA^\phi)^2+\frac{1}{4}e^{-4\phi_0}(F_\bA^\varpi)^2}{a_\bA^2|\nabla u_\bA^0|^2}\Bigg(\frac{2}{n_0^3}(\frac{2e^{2\gamma_0}}{n_0}(e_0 u_\bA^0)^2)(e_0 u^0_\bA)^2\\
		&\: \qquad \qquad \qquad -2(-|\nabla u_\bA^0|^2-\frac{e^{2\gamma_0}}{n_0^2}(e_0 u_\bA^0)^2)e^{-2\gamma_0}|\nabla u^0_\bA|^2 +\frac{2}{n_0^2}(-4e_0(u_\bA^0)\partial^i u_\bA^0) \partial_i u_\bA^0 (e_0( u^0_\bA))\Bigg)\\
		=&\: 0,
		\end{align*}
	where we have used the notation $e_0 =\partial_t-\beta^i_0 \partial_i$, and the eikonal equation for $u_\bA^0$, which says $|\nabla u_\bA^0|^2 = \frac{e^{2\gamma_0}}{n_0^2}(e_0(u_\bA^0))^2$. 
	
	For $R_\bA(\q G_{1,\bA})$, we compute similarly using $|\nabla u_\bA^0|^2 = \frac{e^{2\gamma_0}}{n_0^2}(e_0(u_\bA^0))^2$ to obtain
	\begin{equation*}
	\begin{split}
	&\: R_\bA(\q G_{1,\bA}) \\
	= &\: - \frac{ 2 F_\bA^\phi}{a_\bA^2|\nabla u^0_\bA|^2} \Big( \f{2}{n_0^3}\f{2 e^{2\gamma_0}}{n_0} (e_0 \phi_0) (e_0 u_\bA^0)(e_0 u_\bA^0)^2 - 2(- \de^{ij} \rd_i \phi_0 \rd_j u_\bA^0 - \frac{e^{2\gamma_0}}{n_0^2} (e_0 \phi_0) (e_0 u_\bA^0))e^{-2\gamma_0}|\nabla u^0_\bA|^2 \\
	&\: \qquad\qquad\qquad\qquad + \f{2}{n_0^2} \de^{ij} (-2 e_0 \phi_0 \rd_j u_\bA^0 - 2 \rd_j \phi_0 e_0 u_\bA)\partial_i u_\bA^0 (e_0 u^0_\bA) \Big) \\
	&\:  - \frac{ F_\bA^\varpi e^{-4\phi_0}}{2 a_\bA^2|\nabla u^0_\bA|^2} \Big( \f{2}{n_0^3}\f{2 e^{2\gamma_0}}{n_0} (e_0 \varpi_0) (e_0 u_\bA^0)(e_0 u_\bA^0)^2 - 2(- \de^{ij} \rd_i \varpi_0 \rd_j u_\bA^0 - \frac{e^{2\gamma_0}}{n_0^2} (e_0 \varpi_0) (e_0 u_\bA^0))e^{-2\gamma_0}|\nabla u^0_\bA|^2 \\
	&\: \qquad\qquad\qquad\qquad + \f{2}{n_0^2} \de^{ij} (-2 e_0 \varpi_0 \rd_j u_\bA^0 - 2 \rd_j \varpi_0 e_0 u_\bA)\partial_i u_\bA^0 (e_0 u^0_\bA) \Big) \\
	= &\: - \frac{ 2 (e_0 u_\bA^0)^2}{a_\bA^2|\nabla u^0_\bA|^2} \Bigg( \f{2 e^{2\gamma_0}}{n_0^4} \Big(F_\bA^\phi(e_0 \phi_0) - \f {e^{-4\phi_0}}4 F_\bA^\varpi(e_0 \varpi_0)\Big) (e_0 u_\bA^0) - \f{2}{n_0^2} \de^{ij} \Big(F_\bA^\phi\rd_i \phi_0 - \f {e^{-4\phi_0}}4 F_\bA^\varpi\rd_i \varpi_0 \Big) \rd_j u_\bA^0\Bigg) \\
	= &\: 4 a_\bA^{-2}   \Big(F_\bA^\phi g_0^{\alp\bt} \rd_\alp \phi_0 \rd_\bt u_\bA^0 + \f {e^{-4\phi_0}}4 F_\bA^\varpi g_0^{\alp\bt} \rd_\alp \varpi_0 \rd_\bt u_\bA^0\Big),
	\end{split}
	\end{equation*}
	where at the very last step we recalled \eqref{g.inverse}. \qedhere 
	
	\end{proof}

\begin{proposition}\label{prpu}
	The following estimate holds for all $t \in [0,T)$:
	$$\sum_{k \leq 3 }\lambda^k\| \partial_i \wht u_\bA\|_{W^{k,4}(K_t)} + \lambda^4 \| \partial_i \wht u_\bA\|_{H^{4}(K_t)} \leq C(N) \ep^{\frac{3}{2}}\lambda^2e^{A(N)t}.$$
\end{proposition}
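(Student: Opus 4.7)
The plan is to apply Lemma~\ref{lmtransport} to the transport equation derived in Lemma~\ref{lmdiuA}, after first removing the oscillatory contributions on the right-hand side by exploiting the transversality of the phases to the transport vector field $X_\bA$. Note first that $X_\bA^\beta = 2(1+\lambda \mathfrak b_\bA^{(1)})g^{\alpha\beta}\partial_\alpha u_\bA$, which by the bootstrap assumptions and the estimates on $u_\bA^0$, $u_\bA^2$ is a $C^1$ perturbation of $2L_\bA^0 = 2 g_0^{\alpha\beta}\partial_\alpha u_\bA^0 \partial_\beta$ and therefore satisfies the hypothesis \eqref{eq:X.basic} of Lemma~\ref{lmtransport} with $\|\partial_\alpha X_\bA^\alpha\|_{L^\infty(K_t)} \ls 1$ uniformly in $N$.

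Next, I will construct a \emph{main parametrix} $(\partial_i \wht u_\bA)^{\mathrm{main}}$ which captures the oscillatory right-hand side of Lemma~\ref{lmdiuA} up to acceptable errors. Since the transport vector field $X_\bA$ is essentially aligned with $L_\bA^0$, and since by the spatially- and null-adapted property of $\{a_\bA u_\bA^0\}$ (Lemma~\ref{lemma.adapt}) each of the combined phases $2 a_\bB u_\bB$, $a_\bB u_\bB$ (with $\bB \neq \bA$), and $a_\bB u_\bB^0 \pm a_\bC u_\bC^0$ (with $\bB \neq \bC$) has a transport derivative by $X_\bA$ that is bounded below (away from zero) in absolute value, division by these quantities is well-defined. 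I then set
\[
(\partial_i \wht u_\bA)^{\mathrm{main}} := -\lambda^3 \sum_{\bB \neq \bA} \partial_i\!\left(\frac{R_\bA(\q G_{2,\bB}(\mfg)) \sin(\tfrac{2a_\bB u_\bB}\lambda)}{2 a_\bB X_\bA^\beta \partial_\beta u_\bB}\right) + (\text{analogous terms for the other phases}),
\]
chosen precisely so that $X_\bA^\beta \partial_\beta (\partial_i \wht u_\bA)^{\mathrm{main}}$ reproduces the oscillatory right-hand side in Lemma~\ref{lmdiuA} up to acceptable errors (where each time the $X_\bA^\beta\partial_\beta$ derivative hits amplitudes or denominators one picks up an extra factor of $\lambda$). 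By Lemma~\ref{lem:u2.well.defined}, the background bounds \eqref{eq:R.G.bound}, and the bootstrap assumptions for $\wht u_\bA$, $\wht\mfg$, the term $(\partial_i \wht u_\bA)^{\mathrm{main}}$ itself obeys $\sum_{k\leq 3}\lambda^k \|(\partial_i \wht u_\bA)^{\mathrm{main}}\|_{W^{k,4}(K_t)} + \lambda^4 \|(\partial_i \wht u_\bA)^{\mathrm{main}}\|_{H^4(K_t)} \leq C(N) \ep^2 \lambda^3$, which is strictly better than claimed.

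Then $\partial_i \wht u_\bA^{\mathrm{rem}} := \partial_i \wht u_\bA - (\partial_i \wht u_\bA)^{\mathrm{main}}$ satisfies a transport equation of the form
\[
X_\bA^\beta \partial_\beta (\partial_i \wht u_\bA^{\mathrm{rem}}) = \cdots
\]
where the right-hand side satisfies the error bounds \eqref{eq:du.error.bound.1}--\eqref{eq:du.error.bound.2}. The initial data at $t=0$ is, by Lemma~\ref{lem:chi.data}, bounded by $\sum_{k\leq 3}\lambda^k \|\partial_i \wht u_\bA^{\mathrm{rem}}\|_{W^{k,4}(K_0)} \leq C(N) \ep^2 \lambda^2$. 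Applying Lemma~\ref{lmtransport} with $p=4$ (after differentiating $k\leq 3$ times spatially and commuting, which produces additional lower-order terms controllable by the same bootstrap assumptions) yields
\[
\sum_{k\leq 3} \lambda^k \|\partial_i \wht u_\bA^{\mathrm{rem}}\|_{W^{k,4}(K_t)} \leq C(N)\ep^{3/2}\lambda^2 \int_0^t C_b(N) e^{A(N)\tau}\,\ud\tau \leq C(N) \ep^{3/2} \lambda^2 e^{A(N)t},
\]
using $C_b(N) \ll A(N)$. The $H^4$ bound is obtained analogously with $p=2$, using \eqref{eq:du.error.bound.2} and the top-order $L^2$-based initial bound from Lemma~\ref{lem:chi.data}.

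The main obstacle will be keeping careful track of the hierarchy of $L^4$ (low order) and $L^2$ (top order) bounds when commuting $\partial_x^k$ through the transport equation; derivatives landing on $g$, on $u_\bA$, or on the denominators in $(\partial_i \wht u_\bA)^{\mathrm{main}}$ must be split according to whether they hit high-order or low-order factors, so that each product is controlled by combining the $L^4$ bootstrap assumptions \eqref{BA:g.L4}, \eqref{BA:u.L4} at low order with the $L^2$ bootstrap assumptions \eqref{BA:g.L2}, \eqref{BA:u.L2} at top order (of course never placing two top-order factors in the same product). Once this bookkeeping is done, the error bounds \eqref{eq:du.error.bound.1}--\eqref{eq:du.error.bound.2} are exactly what is needed for Lemma~\ref{lmtransport} to close.
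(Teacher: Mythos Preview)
Your approach is essentially the same as the paper's: construct an explicit oscillatory ``main'' piece absorbing the transversal-phase terms on the right-hand side of Lemma~\ref{lmdiuA}, then apply the transport estimate (Lemma~\ref{lmtransport}) to the remainder. The paper defines a scalar $\wht u_\bA^{main}$ (with purely background quantities $L_\bA^0(u_\bB^0)$ in the denominators rather than your $X_\bA^\beta\partial_\beta u_\bB$) and works with $\partial_i\wht u_\bA^{main}$; your choice is equivalent up to acceptable error, though the background denominators are cleaner since they avoid tilded quantities when you later differentiate for higher $k$.

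Two small points. First, your size claim $\sum_{k\leq 3}\lambda^k\|(\partial_i\wht u_\bA)^{\mathrm{main}}\|_{W^{k,4}}\leq C(N)\ep^2\lambda^3$ is off by one power: once $\partial_i$ hits the phase you lose a $\lambda$, so the parametrix is only $O(\lambda^2)$ in these norms --- but this is exactly what the proposition asks for, so nothing breaks. Second, and more important, the $N$-uniform bound $\|\partial_\alpha X_\bA^\alpha\|_{L^\infty}\lesssim 1$ that you invoke is not automatic: the term $\partial_\alpha(\lambda\mathfrak b_\bA^{(1)})$ hitting the phase produces a factor $a_\bA\mathfrak v_\bA$, and it is precisely the estimate \eqref{eq:v.est} (namely $a_\bA\|\mathfrak v_\bA\|\lesssim\ep^2$, with the $a_\bA^{-1}$ tracked explicitly) that makes this bounded independently of $N$. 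The paper singles this point out; without it the Gr\"onwall factor in Lemma~\ref{lmtransport} would carry an $N$-dependent exponential and the bootstrap would not close.
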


\begin{proof}
	
	Let 
	\begin{align*}
		\wht u_\bA^{main}=&\:  -\lambda^3\sum_{\bB: \bB \neq \bA}\frac{ R_\bA( \q G_{2,\bB}(\mfg))}{2 a_\bB(g_0^{-1})^{\alpha\beta}\partial_\alpha u_\bA^0 \partial_\beta u_\bB^0} \cos(\tfrac{2a_\bB u_\bB}{\lambda})
		- \lambda^3\sum_{\bB: \bB \neq \bA} \frac{R_\bA(\q G_{1,\bB} (\mfg))}{a_\bB(g_0^{-1})^{\alpha\beta}\partial_\alpha u_\bA^0 \partial_\beta u_\bB^0}\sin(\tfrac{a_\bB u_\bB}{\lambda})\\
		&\: - \lambda^2 \sum_{\pm,\bB,\bC:\bB\neq\bC} \frac{R_\bA(\q G_{\pm,\bB,\bC}(\mfg))}{(g_0^{-1})^{\alpha\beta}\partial_\alpha u_\bA^0 \partial_\beta (a_\bB u_\bB^0 \pm a_\bC u_{\gra C}^0)} \cos(\tfrac{a_\bB u_\bB \pm a_\bC u_{\gra C}}{\lambda}). 
	\end{align*}
	It is easy to check that $\wht u_\bA^{main}$ satisfies the desired estimate. Moreover, 
	\begin{equation}\label{eq;}
	X^\beta_\bA \partial_\beta \partial_i (\wht u_\bA-\wht u_\bA^{main})=\cdots,
	\end{equation}
	with $\cdots$ satisfying \eqref{eq:du.error.bound.1}.
	
	Recall that $X^\bt_{\bA} = 2 (1+ \lambda \mathfrak b_\bA^{(1)}) g^{\alp\bt} \rd_\alp u_\bA$ with $ \mathfrak b_\bA^{(1)} = \f{\mathfrak v_\bA \cos (\tfrac{a_\bA u_\bA}{\lambda})}{1 - \mathfrak v_\bA\lambda \cos (\tfrac{a_\bA u_\bA}{\lambda})}$. By the assumptions on the background, the $L^\i$ estimates in Lemma~\ref{lminfty}, and \eqref{eq:v.est}, the bound \eqref{eq:X.basic} holds and, moreover,
	\begin{equation}\label{eq:div.X.bound}
	\|\partial_\alpha X^\alpha_\bA\|_{L^\infty(K_t)} \lesssim \|\Box_g u_\bA\|_{L^\infty(K_t)} +\|\partial g \partial u_\bA\|_{L^\infty(K_t)} + a_\bA \| \mathfrak v_\bA\|_{W^{1,\infty}(K_t)} \lesssim \ep.
	\end{equation}
	We remark that it is important that the right-hand side of \eqref{eq:div.X.bound} is independent of $N$ (which relies on \eqref{eq:v.est}).
	This allows us to apply Lemma \ref{lmtransport},  which yields
	$$\|\partial_i (\wht u_\bA-\wht u_\bA^{main})\|_{L^4(K_t)} \leq C(N) \ep^2 \lambda^2 + \int_0^t  \lambda^2\ep^{\frac{3}{2}} (C(N)C_b(N))e^{A(N)s} \ud s \leq  \lambda^2 \ep^{\frac{3}{2}}e^{A(N)t},$$
	where we used $A(N)\geq C(N)C_b(N)$, and Lemma~\ref{lem:chi.data} to estimate the initial data. This implies the desired bound when $k=0$.

	The estimates for the derivatives are similar after differentiating the equation.  \qedhere
	
\end{proof}

To obtain estimates for $\partial_t \wht u_\bA$ and $\partial_t^2 \wht u_\bA$,  we use the eikonal equation.
\begin{proposition}\label{prpdt}
	The following estimates hold for all $t \in [0,T)$:
	\begin{align}
	\sum_{k\leq 3} \lambda^k \|\partial_t \wht u_\bA\|_{W^{k,4}(K_t)} + \lambda^4 \|\partial_t \wht u_\bA\|_{H^{4}(K_t)} \leq &\:  \lambda^2 C(N) C_b(N) \ep^{\frac{3}{2}} e^{A(N)t}, \label{eqdtuA}\\
	\sum_{k\leq 2} \lambda^k \|\partial^2_t \wht u_\bA\|_{W^{k,4}(K_t)} + \lambda^3 \|\partial^2_t \wht u_\bA\|_{H^3(K_t)} \leq & \:  \lambda C(N)C'_b(N)\ep^{\frac{3}{2}} e^{A(N)t}. \label{eqdtdtuA}
		\end{align}
	Moreover, at the lowest order, we have the following improved estimates for all $t \in [0,T)$:
	\begin{equation}\label{eq:dtu.improved}
	\|\partial_t \wht u_\bA\|_{L^{4}(K_t)} \ls \lambda^2 C_b(N) \ep^{\frac{3}{2}} e^{A(N)t}, \quad  \|\partial^2_t \wht u_\bA\|_{L^{4}(K_t)} \ls \lambda C_b'(N) \ep^{\frac{3}{2}} e^{A(N)t}.
	\end{equation}
\end{proposition}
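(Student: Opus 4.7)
The plan is to obtain estimates for $\partial_t \wht u_\bA$ (and its $\partial_t^2$ analogue) algebraically from the eikonal equation, reducing them to the already-established spatial estimates of Proposition~\ref{prpu} together with the bootstrap assumptions on $\wht\mfg$ and $\partial_t \wht\mfg$. Specifically, using the form \eqref{g.form} of the metric, the eikonal equation $g^{-1}(\ud u_\bA,\ud u_\bA) = 0$ together with the initial condition $\partial_t u_\bA|_{\Sigma_0}>0$ gives the explicit algebraic relation
$$
\partial_t u_\bA \;=\; \beta^i\,\partial_i u_\bA \;+\; n\,e^{-\gamma}\,|\nabla u_\bA|,
$$
and an analogous relation for the background $u_\bA^0$ in terms of $(\beta_0,n_0,\gamma_0)$. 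Subtracting these, and using the explicit formula for $\partial_\alpha u_\bA^2$ from Lemma~\ref{lem:u2.well.defined}, expresses $\partial_t \wht u_\bA$ as a multilinear combination of $\partial_i \wht u_\bA$, $\wht\mfg$, the oscillatory pieces $\mfg_2$, $u_\bA^2$, and smooth background quantities.

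For estimate \eqref{eqdtuA}, I would substitute into this expression the spatial bounds on $\partial_i \wht u_\bA$ from Proposition~\ref{prpu}, the bootstrap assumptions \eqref{BA:g.L4}--\eqref{BA:g.L2} for $\wht\mfg$, and the pointwise background bounds together with \eqref{eq:u2.est}. All products are controlled via H\"older and Sobolev embedding in $\mathbb R^2$, and the hierarchy of $W^{k,4}$ bounds for $k\le 3$ and $H^4$ at the top level is preserved since $\nabla u_\bA^0$ is bounded away from $0$ (so $|\nabla u_\bA|$ and $n e^{-\gamma}$ lie in smooth regimes where no small denominator appears). For \eqref{eqdtdtuA}, differentiate the algebraic relation once in $t$: this schematically gives $\partial_t^2 u_\bA = F(\mfg,\nabla u_\bA)\cdot(\partial_t \mfg, \partial_t\nabla u_\bA)$. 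The new factors are $\partial_t \wht\mfg$, which is bounded using \eqref{BA:dtg.L4}--\eqref{BA:dtg.L2} and therefore produces the $C_b'(N)$ bootstrap constant and the loss of one power of $\lambda$, and $\partial_t\nabla \wht u_\bA = \nabla \partial_t \wht u_\bA$, which is handled by the already-proved bound \eqref{eqdtuA}.

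The improved estimates \eqref{eq:dtu.improved} at the lowest order in $L^4$ are the delicate point and the step I would flag as the main obstacle, since the naive estimate from the algebraic expression produces a factor of $C(N)$ coming from the $\mfg_2$ contribution (which has $\sim N^2$ summands). The key observation is that at order $\lambda^0$ after substitution, the terms that can produce a $C(N)$ factor are precisely sums over $\bB$, $\bC$ of products of background amplitudes against the oscillating phases $\cos(\tfrac{a_\bB u_\bB}{\lambda})$, $\sin(\tfrac{2 a_\bB u_\bB}{\lambda})$, $\cos(\tfrac{a_\bB u_\bB\pm a_\bC u_\bC}{\lambda})$ appearing in $\mfg_2$, all of which have spatially adapted phases by Lemma~\ref{lemma.adapt}. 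Invoking the $L^4$ almost-orthogonality estimate from Lemma~\ref{lem:orthogonality} (part~(2)), these sums are controlled by $\bigl(\sum_\bB \|\cdot\|_{L^4}^2\bigr)^{1/2}$ plus an $O(\lambda^{1/4})$ remainder, which is $N$-independent since $\sum_\bB (F^U_\bB)^2 \lesssim \ep^2$. The remaining contributions from $\wht\mfg$ at the lowest order are handled via the $W^{1,4/3}_{-\alpha_0-1/2}$ bootstrap bound \eqref{BA:g.L43} combined with the weighted Sobolev embedding \eqref{eq:weighted.Li.Sobolev}, which already encodes $N$-independent smallness for $\wht\mfg$ in the natural multiplier norm. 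This yields the desired bound without the $C(N)$ prefactor; the corresponding statement for $\partial_t^2 \wht u_\bA$ follows by applying the same almost-orthogonality and $\partial_t\wht\mfg$ elliptic bounds \eqref{BA:dtg.L43} to the $\partial_t$-differentiated algebraic identity.
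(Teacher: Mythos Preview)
Your overall strategy --- use the eikonal equation to express $\partial_t u_\bA$ algebraically and subtract the background relation --- is correct and is exactly what the paper does. But there is a genuine gap in your argument, and a misdiagnosis of the mechanism behind the improved lowest-order bound.

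\textbf{The gap: an $O(\lambda)$ cancellation you do not identify.} By Lemma~\ref{lem:u2.well.defined}, $\partial_\alpha u_\bA^2=\lambda\,\mathfrak b_\bA^{(1)}\partial_\alpha(u_\bA^0+\wht u_\bA)+\lambda^2\mathfrak b_{\bA,\alpha}^{(2)}$, so $\partial_t u_\bA^2$ has a leading piece $\lambda\,\mathfrak b_\bA^{(1)}\partial_t u_\bA^0$ which is only $O(\lambda)$; indeed \eqref{eq:u2.est} gives merely $\|\partial u_\bA^2\|_{L^4}\leq C\ep^2\lambda$, and $\lambda\ep^2$ is \emph{not} bounded by $\lambda^2 C(N)C_b(N)\ep^{3/2}e^{A(N)t}$ for small $\lambda$. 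Your plan to ``substitute \eqref{eq:u2.est}'' therefore fails by one power of $\lambda$, already for the non-improved estimate \eqref{eqdtuA}. What the paper actually does (see \eqref{eq:dtu.expression.dt.u2}--\eqref{eq:dtu.expression.final}) is to compute
\[
(\partial_t-\beta_0^i\partial_i)u_\bA^2=\lambda\,n_0e^{-\gamma_0}|\nabla u_\bA^0|\,\mathfrak v_\bA\cos(\tfrac{a_\bA u_\bA}{\lambda})+\cdots,\qquad |\nabla u_\bA|=(1+\lambda\mathfrak v_\bA\cos(\tfrac{a_\bA u_\bA}{\lambda}))|\nabla u_\bA^0|+\cdots,
\]
and then observe, using the background eikonal identity $(e_0)_{g_0}u_\bA^0=n_0e^{-\gamma_0}|\nabla u_\bA^0|$, that these two $O(\lambda)$ contributions cancel exactly when combined in $-(\partial_t-\beta^i\partial_i)(u_\bA^0+u_\bA^2)+ne^{-\gamma}|\nabla u_\bA|$. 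Only after this cancellation does one arrive at $(\partial_t-\beta^i\partial_i)\wht u_\bA=\cdots$ with $\cdots$ genuinely of size $\lambda^2$.

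\textbf{The misdiagnosis of \eqref{eq:dtu.improved}.} Once the $O(\lambda)$ cancellation is in place, the non-error remainder is $-(\beta_0^i-\beta^i)\partial_i(u_\bA^0+u_\bA^2)-(n_0e^{-\gamma_0}-ne^{-\gamma})|\nabla u_\bA|$. The $\mfg_2$ part of this is bounded in $L^4(K_t)$ crudely by $C(N)\ep^2\lambda^2$ via \eqref{eq:calG.est}; the constant hierarchy \eqref{eq:N.vs.ep} (namely $C(N)\leq C_b(N)\ep^3$) immediately converts this to $\ls C_b(N)\ep^{3/2}\lambda^2$ with $N$-independent implicit constant. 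The $\wht\mfg$ part is controlled directly by \eqref{BA:g.L4}. No almost-orthogonality is needed or used here, and applying Lemma~\ref{lem:orthogonality} to the double-sum phases in $\mfg_2$ would in any case require phase-separation conditions beyond those verified in Definition~\ref{def.spatial}.
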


\begin{proof}
By the eikonal equation, we have
$$((\rd_t - \bt^i \rd_i) (u_\bA^0 + u_\bA^2 + \wht u_\bA))^2 - n^2 e^{-2\gamma} |\nab u_\bA|^2=0.$$
Solving this as a quadratic equation in $(\rd_t - \bt^i \rd_i)  \wht u_\bA$, we obtain
\begin{equation}\label{eq:dtu.expression}
\begin{split}
(\rd_t - \bt^i \rd_i) \wht u_\bA = &\: -(\rd_t - \bt^i \rd_i)(u_\bA^0 + u_\bA^2) \pm \sqrt{n^2 e^{-2\gamma} |\nab u_\bA|^2} \\
= &\: -(\rd_t - \bt^i \rd_i)(u_\bA^0 + u_\bA^2) \pm n e^{-\gamma} |\nab u_\bA|.
\end{split}
\end{equation}
By the $L^\i$ smallness condition given by Lemma~\ref{lminfty}, we must take the $+$ root above. Using the eikonal equation for the background, we have $(\rd_t - \bt^i_0 \rd_i)u_\bA^0 = n_0 e^{-\gamma_0} |\nab u_\bA^0|$. Also, by definition of $u_\bA^2$, we have
\begin{equation}\label{eq:dtu.expression.dt.u2}
\begin{split}
(\rd_t - \bt_0^i \rd_i) u^2_\bA = &\: \lambda (\rd_t - \bt_0^i \rd_i) u_\bA^0 \mathfrak v_\bA \cos(\tfrac{a_\bA u_\bA}{\lambda})+ \cdots \\
= &\: \lambda n_0 e^{-\gamma_0} |\nab u_\bA^0| \mathfrak v_\bA \cos(\tfrac{a_\bA u_\bA}{\lambda})+ \cdots,
\end{split}
\end{equation}
as well as
\begin{equation}
\begin{split}
|\nab u_\bA| =&\: |\nab u_\bA^0 + \lambda \mathfrak v_\bA \cos(\tfrac{a_\bA u_\bA}{\lambda}) \nab u_\bA^0| + \cdots \\
=&\: (1+  \lambda \mathfrak v_\bA \cos(\tfrac{a_\bA u_\bA}{\lambda})) | \nab u_\bA^0| + \cdots,
\end{split}
\end{equation} 
where we used the background eikonal equation in \eqref{eq:dtu.expression.dt.u2}, and in both instances, the bootstrap assumptions and \eqref{eq:v.est} imply that $\cdots$ satisfy 
\begin{align}
\sum_{k\leq 3} \lambda^k \|\cdots \|_{W^{k,4}(K_t)} + \lambda^4 \| \cdots \|_{H^4(K_t)} \leq &\: C(N) C_b(N) \ep^{\f 32} \lambda^2 e^{A(N)t}, \label{eq:dtu.dots}\\
\sum_{k\leq 2} \lambda^k \|\rd_t(\cdots) \|_{W^{k,4}(K_t)} + \lambda^3 \| \rd_t (\cdots) \|_{H^3(K_t)} \leq &\: C(N) C_b'(N) \ep^{\f 32} \lambda e^{A(N)t},\label{eq:dtu.dtdots}
\end{align}
as well as the improved lowest order estimates
\begin{equation}\label{eq:dtu.dots.lowest}
\| \cdots \|_{L^{4}(K_t)} \ls \lambda^2 C_b(N) \ep^{\frac{3}{2}} e^{A(N)t}, \quad  \|\partial_t (\cdots)\|_{L^{4}(K_t)} \ls \lambda C_b'(N) \ep^{\frac{3}{2}}e^{A(N)t}.
\end{equation}

Thus, plugging all these back into \eqref{eq:dtu.expression}, we obtain
\begin{equation}\label{eq:dtu.expression.final}
\begin{split}
(\rd_t - \bt^i \rd_i) \wht u_\bA 
= &\: -(\rd_t - \bt^i \rd_i)(u_\bA^0 + u_\bA^2) + n e^{-\gamma} |\nab u_\bA| \\
= &\: -(\bt^i_0 - \bt^i)\rd_i (u_\bA^0 + u_\bA^2) - (n_0 e^{-\gamma_0} - n e^{-\gamma}) |\nab u_\bA| + \cdots = \cdots,
\end{split}
\end{equation}
where $\cdots$ continues to denote terms satisfying \eqref{eq:dtu.dots}, \eqref{eq:dtu.dtdots} and \eqref{eq:dtu.dots.lowest}, and in the last equality we used \eqref{eq:calG.est}, \eqref{BA:g.L4}--\eqref{BA:dtg.L2}.

Using \eqref{eq:dtu.dots} and the estimates we obtained in Lemma~\ref{lmdiuA}, we thus obtain \eqref{eqdtuA}. We then differentiate \eqref{eq:dtu.expression.final} by $\rd_t$, use \eqref{eq:dtu.dtdots} and the bound \eqref{eqdtuA} (that we just obtained) to deduce \eqref{eqdtdtuA}. Finally, the improved lowest order bounds \eqref{eq:dtu.improved} hold thanks to \eqref{eq:dtu.dots.lowest}. \qedhere

\end{proof}


\section{Estimates for the metric components}\label{sec:metric}

We now prove estimates for the metric components. We will derive an elliptic equation for $\wht g$ (see Proposition~\ref{prop:whtg.eqn}) and obtain estimates for $\wht g$ that improve the bootstrap assumptions (see Proposition~\ref{prop:g.est}). In the process, we will use $\cdots$ to catch all error terms that obey the following bounds for all $t \in [0,T)$:

\begin{subequations}
\begin{empheq}{align}
				\sum_{k\leq 3}\lambda^k\|\cdots\|_{W^{k,2}_{-\alp_0+2}(\Sigma_t)} \leq &\: \ep^2 \lambda^2 C(N)C_b(N)e^{A(N)t}, \label{eq:g.error.1} \\
				\sum_{k\leq 2}\lambda^{k+1}\|\rd_t(\cdots) \|_{W^{k,2}_{-\alp_0+2}(\Sigma_t)} \leq &\:\ep^2 \lambda C(N)C'_b(N)e^{A(N)t}, 				\label{eq:g.error.2} \\
		\|\cdots \|_{W^{0,\frac{4}{3}}_{-\alp_0+\f 52}(\Sigma_t)} \lesssim &\:\ep^2 \lambda^2 C_b(N)e^{A(N)t},	\label{eq:g.error.3}\\
		\|\rd_t(\cdots)\|_{W^{0,\frac{4}{3}}_{-\alp_0+\f 52}(\Sigma_t)} \lesssim &\:\ep^2 \lambda
		 C'_b(N)e^{A(N)t}. 	\label{eq:g.error.4}
		\end{empheq}
		\end{subequations}
We emphasize that we have better $N$-dependence at the lowest order in \eqref{eq:g.error.3} and \eqref{eq:g.error.4}. This is important for actually improving the bootstrap assumptions. 

To derive the equation for $\mfg$, we use the parametrix given in \eqref{g.para.def}, \eqref{eq:mfg2.form}, the equation \eqref{g.elliptic} as well as its counterpart for $\mfg_0$. For the $\mfg_0$ equation, we start with equations \eqref{elliptic.1}--\eqref{elliptic.4}, but taking into account the matter field so as to obtain
$$\Delta \mfg_0 = {\bf \Gamma}(\mfg_0)^{\mu\nu} \langle \rd_\mu U_0, \rd_\nu U_0 \rangle + \f 12 \sum_{\bA} {\bf \Gamma}(\mfg_0)^{\mu\nu} F_\bA^2 (\rd_\mu u_\bA^0 )(\rd_\nu u_\bA^0)  + \Upsilon(\mfg_0),$$
where as before $F_\bA^2 = (F^\phi_\bA)^2 + \f 14 e^{-4\phi_0} (F^\varpi_\bA)^2$. Hence,
\begin{equation}\label{eq:main.wht.g.prep}
\begin{split}
\Delta \wht \mfg 
=&\:  \Delta \mfg - \Delta \mfg_0 - \Delta \mfg_2 \\
= &\: {\bf \Gamma}(\mfg)^{\mu\nu} \langle \rd_\mu U, \rd_\nu U \rangle - {\bf \Gamma}(\mfg_0)^{\mu\nu} \langle \rd_\mu U_0, \rd_\nu U_0 \rangle -  \f 12 \sum_{\bA} {\bf \Gamma}(\mfg_0)^{\mu\nu} F_\bA^2 (\rd_\mu u_\bA^0 )(\rd_\nu u_\bA^0) \\
&\: +\Upsilon(\mfg) - \Upsilon(\mfg_0)- \Delta \mfg_2.
\end{split}
\end{equation}
We will analyze these terms below. 

In order to analyze the term $ {\bf \Gamma}(\mfg)^{\mu\nu} \langle \rd_\mu U, \rd_\nu U\rangle$ in \eqref{eq:main.wht.g.prep}, we first consider $\langle \partial_\mu U, \rd_\nu U \rangle - \langle \partial_\mu U_0, \rd_\nu U_0 \rangle$ in the proposition below.
\begin{proposition}\label{phi.phi.diff}
	The difference $\langle \partial_\mu U, \rd_\nu U \rangle - \langle \partial_\mu U_0, \rd_\nu U_0 \rangle$ can be expanded as follows:
	\begin{equation}\label{g.expand}
		\begin{split}
			\langle \partial_\mu U, \rd_\nu U \rangle - \langle \partial_\mu U_0, \rd_\nu U_0 \rangle
			= \Theta_{\mu\nu}^{(main)}+\Theta_{\mu\nu}^{(1)} + \Theta_{\mu\nu}^{(2)} + \cdots,
		\end{split}
	\end{equation}
	where
	\begin{equation}\label{g.main.expand}
		\Theta_{\mu\nu}^{(main)}:=\f 12\sum_{\bA} F_\bA^2 (\partial_\mu u_\bA^0)( \partial_\nu u_\bA^0 ),
	\end{equation}
	and
	\begin{equation}\label{g.O1.expand}
		\begin{split}
			\Theta_{\mu\nu}^{(1)}
			:=&-2\sum_{\bA}\left(F_\bA^\phi \partial_{(\mu} \phi_0 +\frac{1}{4}e^{-4\phi_0}F_\bA^\varpi  \partial_{(\mu} \varpi_0 \right)\partial_{\nu)} u^0_\bA \sin(\tfrac{a_\bA u_\bA}{ \lambda})\\
			&	+ \f 12 \sum_{\bA} \left((F^\phi_\bA)^2 +\frac{1}{4}e^{-4\phi_0} (F_\bA^\varpi)^2\right)\partial_{{\mu}} u^0_\bA \partial_{{\nu}} u^0_\bA \cos(\tfrac{2a_{\bA}u_\bA}{ \lambda}) \\
			&+ \f 12\sum_{\pm,\bA,\bB: \bA\neq \bB}(\mp 1)\cdot \left(F^\phi_\bA F^\phi_\bB +\frac{1}{4}e^{-4\varpi_0}F^\phi_\bA F^\varpi_\bB \right) {\rd_\mu u^0_{\bA} \rd_\nu u^0_{\bB} }\cos(\tfrac{a_{\bA}u_\bA \pm a_{\bB}u_\bB}{ \lambda}),
		\end{split}
	\end{equation}
	and
	\begin{equation}\label{g.Ol.expand}
		\begin{split}
			\Theta_{\mu\nu}^{(2)}
			:=& \lambda\sum_{\bA} \left( (	\Theta_{\mu\nu}^{(2)})_{1,\bA} \cos(\tfrac{a_\bA u_\bA}{\lambda}) +(	\Theta_{\mu\nu}^{(2)})_{2\bA} \sin(\tfrac{2a_\bA u_\bA}{\lambda}) + (\Theta_{\mu\nu}^{(2)})_{3\bA} \cos(\tfrac{3a_\bA u_\bA}{\lambda}) \right)\\
			&+\lambda \sum_{\pm,\bA,\bB: \bB \neq \bA } \left( (	\Theta_{\mu\nu}^{(2)})_{\pm,\bA,\bB} \sin(\tfrac{a_\bA u_\bA \pm a_\bB u_\bB }{\lambda}) +(	\Theta_{\mu\nu}^{(2)})_{\pm,2\bA,\bB} \cos(\tfrac{2a_\bA u_\bA\pm a_\bB u_\bB }{\lambda})  \right)\\
			&+\lambda \sum_{\substack{\pm_1,\pm_2,\bA,\bB,\bC\\
			\bB\neq \bA,\,\bC \neq \bA,\bB}}(\Theta_{\mu\nu}^{(2)})_{\pm_1,\pm_2,\bA,\bB,\bC} \cos(\tfrac{a_\bA u_\bA \pm_1 a_\bB u_\bB\pm_1 a_\bC u_\bC }{\lambda})
		\end{split}
	\end{equation}
	with each of the coefficients in $\Theta_{\mu\nu}^{(2)}$ is supported in $[0,1]\times B(2R+2)$ and satisfies
	$$\|(\Theta_{\mu\nu}^{(2)})_{\cdots}\|_{C^6(\Sigma_t)} + \|\rd_t ((\Theta_{\mu\nu}^{(2)})_{\cdots})\|_{C^5(\Sigma_t)} \leq \ep^2 C(N)$$
	for all $t \in [0,T)$, and the error term $\cdots$ satisfies  \eqref{eq:g.error.1}--\eqref{eq:g.error.4}.

\end{proposition}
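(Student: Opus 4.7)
\proof[Proof proposal for Proposition~\ref{phi.phi.diff}]
The plan is a systematic expansion of $\langle \partial_\mu U, \partial_\nu U\rangle-\langle \partial_\mu U_0, \partial_\nu U_0\rangle$ using the parametrices \eqref{eq:wave.para}--\eqref{eq:wave.para.2}, together with the decomposition $\partial_\alpha u_\bA = \partial_\alpha u_\bA^0 + \partial_\alpha u_\bA^2 + \partial_\alpha \wht u_\bA$ (cf.\ \eqref{eq:dalp.u2} of Lemma~\ref{lem:u2.well.defined}) and the Taylor expansion $e^{-4\phi} = e^{-4\phi_0}\big(1 -4(\phi_1+\phi_2+\wht\phi) + 8(\phi_1)^2 + \cdots\big)$. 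Writing $\partial \phi = \partial \phi_0 + \partial \phi_1 + \partial \phi_2 + \partial \wht\phi$ (and similarly for $\varpi$) and grouping the products by total $\lambda$-order and phase, the bulk of the work is to recognize where each summand in $\Theta^{(main)}$, $\Theta^{(1)}$, $\Theta^{(2)}$ comes from; everything else will need to be absorbed into the error $\cdots$ satisfying \eqref{eq:g.error.1}--\eqref{eq:g.error.4}.

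The first step will be to identify $\Theta^{(main)}$ and $\Theta^{(1)}$. The key observation is that $\partial_\mu \phi_1 = -\sum_\bA F_\bA^\phi (\partial_\mu u_\bA^0)\sin(\tfrac{a_\bA u_\bA}{\lambda}) + O(\lambda)$, and similarly for $\varpi_1$. Hence the product $\partial_{(\mu} \phi_1 \partial_{\nu)} \phi_1 + \frac{1}{4}e^{-4\phi_0}\partial_{(\mu}\varpi_1 \partial_{\nu)}\varpi_1$ produces sums of $\sin\!\cdot\!\sin$ terms indexed by pairs $(\bA,\bB)$; expanding via $\sin x\sin y = \tfrac12(\cos(x-y)-\cos(x+y))$, the diagonal $\bA=\bB$ terms contribute $\tfrac12 F_\bA^2 \partial_\mu u_\bA^0 \partial_\nu u_\bA^0$ (no oscillation) plus a $\cos(2a_\bA u_\bA/\lambda)$ piece, while the off-diagonal terms give the $\cos(\tfrac{a_\bA u_\bA \pm a_\bB u_\bB}{\lambda})$ pieces in $\Theta^{(1)}$. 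The cross terms $\partial_{(\mu}\phi_0 \partial_{\nu)}\phi_1$ and its $\varpi$-analogue give the $\sin(\tfrac{a_\bA u_\bA}{\lambda})$ pieces in $\Theta^{(1)}$, with the second summand in $\langle\cdot,\cdot\rangle$ producing the $\varpi_0$ contribution directly (the expansion of $e^{-4\phi}$ only kicks in at $O(\lambda)$ because of the extra $\lambda$ from $\phi_1$).

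The second step is to collect the $\Theta^{(2)}$ pieces, which gather the $O(\lambda)$ oscillatory contributions. These arise from four sources: (i) the amplitude terms $\lambda a_\bA^{-1} \partial F_\bA^{U}\cos$ in $\partial U_1$ paired with the leading pieces of $\partial U_1$ and $\partial U_0$; (ii) the leading pieces of $\partial U_2$ (e.g.\ $-\sum_\bA F_\bA^{1,U}(\partial_\mu u_\bA^0)\cos(\tfrac{a_\bA u_\bA}{\lambda})$ etc.) paired with $\partial U_0$ and $\partial U_1$; (iii) the $u_\bA^2$ contribution to $\partial u_\bA$ via \eqref{eq:dalp.u2}, which using Lemma~\ref{lem:u2.well.defined} yields $\lambda \mathfrak b_\bA^{(1)} \partial u_\bA^0$, and when multiplied into the leading $\sin$-term of $\partial U_1$ produces further oscillatory $\lambda$-contributions; (iv) the $O(\lambda)$ term in the expansion $e^{-4\phi} = e^{-4\phi_0}(1 - 4\phi_1 + \cdots)$ multiplying $\partial_\mu \varpi_1 \partial_\nu \varpi_1$ and its cross-terms. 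Each of these expands via product-to-sum trigonometric identities into the phases listed in \eqref{g.Ol.expand}, and the smoothness and support bounds on the coefficients follow from \eqref{eq:bkgd.bd:FA}, \eqref{eq:bkgd.bd:FAB}, \eqref{eq:v.est} together with the corresponding background bounds from Proposition~\ref{prop:construct.dust}.

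The main obstacle, and the third step, is to verify that everything not listed above is an acceptable error. The terms can be organized into three families: (a) terms linear in a tilded quantity $\wht U$, $\wht u_\bA$, $\wht \mfg$ (the last enters implicitly through $e^{-4\phi}$ only via $\wht\phi$), (b) purely quadratic remainders in tilded quantities, and (c) the $O(\lambda^2)$ remainders from the background parametrix expansions. Family (a) is the delicate one for the $L^{4/3}$ estimates \eqref{eq:g.error.3}--\eqref{eq:g.error.4}: the bootstrap assumptions on $\wht U$ and $\wht u_\bA$ control only their $L^2$-based norms with a $C_b(N)$ factor, so applying H\"older directly with a sum over $\bA$ would cost a factor of $N$. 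The remedy will be to invoke Lemma~\ref{lem:orthogonality} on the $\sum_\bA F_\bA^U \partial u_\bA^0 \sin(\tfrac{a_\bA u_\bA}{\lambda})$-type factors (using the spatially adapted property of $\{a_\bA u_\bA^0\}$ from Lemma~\ref{lemma.adapt}) to bound them in $L^4$ by $\ep$ plus a $\lambda^{1/4}$-small remainder, and then pair with the $L^2$-bound on the tilded quantity via H\"older in $L^{4/3} \subset L^2 \cdot L^4$. Family (b) is handled by Lemma~\ref{lminfty} plus the bootstrap $L^2$-bounds, which as noted in Remark~\ref{rmk:tilded} gives an extra $\lambda^{1/2}$ smallness that absorbs even the bad $(C_b'(N))^2$ factors. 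Family (c) is immediate from the explicit background bounds. The $\partial_t$-differentiated error bounds are proved in the same way, using Proposition~\ref{prpdt} for $\partial_t \wht u_\bA$ and the $C_b'(N)$ bootstrap assumptions for $\partial_t \wht \mfg$, $\partial_t^2 \wht U$.
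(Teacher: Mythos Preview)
Your proposal is essentially correct and follows the same strategy as the paper: expand via the parametrices, identify $\Theta^{(main)}$, $\Theta^{(1)}$, $\Theta^{(2)}$ by trigonometric product-to-sum, and for the $L^{4/3}$ error bounds \eqref{eq:g.error.3}--\eqref{eq:g.error.4} exploit almost-orthogonality (Lemma~\ref{lem:orthogonality}) to avoid paying $N$-dependence at lowest order. Your organization into families (a)--(c) matches the paper's case analysis.

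There is one imprecision in your treatment of family (a). You describe the remedy as bounding the oscillatory sum $\sum_\bA F_\bA^U \partial u_\bA^0 \sin(\tfrac{a_\bA u_\bA}{\lambda})$ in $L^4$ and then pairing with the tilded quantity in $L^2$. This is exactly right for the $\partial\wht\phi\cdot\partial\phi_1$ and $\wht\phi\cdot(\partial\varpi_1)^2$ type terms. But for the $\wht u_\bA$-linear terms the tilded quantity sits \emph{inside} the sum: one must control $\partial_\mu\phi_{1,\wht u}:=\sum_\bA F_\bA^\phi\,\partial_\mu\wht u_\bA\,\sin(\tfrac{a_\bA u_\bA}{\lambda})$ itself. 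Here the paper applies part~(1) of Lemma~\ref{lem:orthogonality} directly to this sum in $L^2$, then uses H\"older $\|F_\bA^\phi\partial\wht u_\bA\|_{L^2}\le\|F_\bA^\phi\|_{L^4}\|\partial\wht u_\bA\|_{L^4}$ on each summand, the $N$-independent $\ell^2$-bound $\big(\sum_\bA\|F_\bA^\phi\|_{L^4}^2\big)^{1/2}\lesssim\ep$ (from Proposition~\ref{prop:construct.dust}), and the $L^4$ bound on $\partial\wht u_\bA$. Note that \eqref{BA:u.L4} is already $L^4$-based (contrary to your remark that only $L^2$ norms are bootstrapped), and the paper in fact uses the already-improved estimates of Propositions~\ref{prpu} and \ref{prpdt} here; either suffices to close. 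With this adjustment your argument goes through.
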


\begin{proof}
	The term $	\Theta_{\mu\nu}^{(1)}$ and $	\Theta_{\mu\nu}^{(2)}$ are obtained with simple computation for products involving only the background functions $\phi_0$ and $\varpi_0$, and the parametrices up to second order. This is very similar to Proposition 7.1 in \cite{HL.HF} and we refer the reader there for details. We only note two differences from \cite{HL.HF}:
	\begin{enumerate}
	\item Compared to \cite{HL.HF}, our paramatrix for $\phi$ has an extra term $\lambda^2 F^\phi_{\bA,\bB} \cos (\tfrac{a_\bA u_\bA \pm a_\bB u_\bB}{\lambda })$, but does not have a term  $\lambda^2 F^{3,\phi}_\bA \sin (\tfrac{3a_\bA u_\bA}{\lambda})$. Thus, when computing $\rd_\mu \phi$, there is a contribution of 
	$$\lambda F^\phi_{\bA,\bB} \rd_\mu u_\bA \sin (\tfrac{a_\bA u_\bA \pm a_\bB u_\bB}{\lambda }),$$ 
	with $\bA \neq \bB$. For the order-$\lambda$ contribution, this could be multiplied by a term with no phase (from $\rd \phi_0$), which gives $(	\Theta_{\mu\nu}^{(2)})_{\pm,\bA,\bB}$, or otherwise multiplied by a term with a phase $\sin(\tfrac{a_\bC u_\bC}{\lambda})$ (from $\rd \phi_1$), which gives $(	\Theta_{\mu\nu}^{(2)})_{1,\bA} $, $(\Theta_{\mu\nu}^{(2)})_{\pm,2\bA,\bB}$, or $ (\Theta_{\mu\nu}^{(2)})_{\pm_1,\pm_2,\bA,\bB,\bC}$ terms. Similarly for $\varpi$. 
	\item We also have additional terms coming from the expansion of $u_\bA$, i.e., in $\rd_\mu \phi$, there are additional terms of the form 
	\begin{equation*}
	\begin{split}
	\sum_\bA F^\phi_\bA \rd_\mu u_\bA^2 \sin(\tfrac{a_\bA u_\bA}{\lambda}) = &\: \lambda \sum_\bA  F^\phi_\bA \mathfrak v_\bA \rd_\mu u_\bA \cos(\tfrac{a_\bA u_\bA}{\lambda}) \sin(\tfrac{a_\bA u_\bA}{\lambda}) + \cdots \\
	= &\: \f 12 \lambda \sum_\bA  F^\phi_\bA \mathfrak v_\bA \rd_\mu u_\bA \sin(\tfrac{2a_\bA u_\bA}{\lambda})  + \cdots
	\end{split}
	\end{equation*} 
	Now, in order for this term to contribute to an order-$\lambda$ term, this is either multiplied by something without a phase, which gives the $(\Theta_{\mu\nu}^{(2)})_{2,\bA}$ term, or else it is multiplied by something with a phase $\sin(\tfrac{a_\bB u_\bB}{\lambda})$ (from $\rd \phi_1$), which gives either the $(	\Theta_{\mu\nu}^{(2)})_{3,\bA}$ or the $(	\Theta_{\mu\nu}^{(2)})_{\pm,2\bA,\bB}$ term. Similarly for $\varpi$. 
	\end{enumerate}
	
	Let us study the remainder term. It is easy to check that all these terms obey the bounds \eqref{eq:g.error.1}--\eqref{eq:g.error.2}. For \eqref{eq:g.error.3} and \eqref{eq:g.error.4}, we need to be more careful to not have extra factors of $C(N)$, for otherwise we could not improve the bootstrap assumptions. For this purpose, it is important to make use of the fact that we only need to control an $L^{\f 43}$-based Sobolev norm (as opposed to $L^2$-based). Orthogonality argument will also play an important role. 
	
	We first make two observations about the remainder term: 
	\begin{enumerate}
	\item If the contribution is independent of the tilded quantities $\wht \phi$, $\wht \varpi$ or $\wht u_{\bA}$, then one has a bound $\ep^2 \lambda^2 C(N)$, which is better than what we need since $C(N) \ll C_b(N)$.
	\item If a term is at least quadratic in the tilded quantities $\wht \phi$, $\wht \varpi$ or $\wht u_{\bA}$, then there are extra $\lambda$'s to spare after using the $L^\infty$ estimates in Lemma~\ref{lminfty}. The extra $\lambda$'s can dominate extra factors of $C_b(N) e^{A(N)t}$.
	\end{enumerate}
	
	We now prove \eqref{eq:g.error.3}. Because of the discussion above, it remains to track the terms which are of order $\lambda^2$, and depend on the bootstrapped quantities $\wht \phi$, $\wht \varpi$ or $\wht u_{\bA}$ linearly. There are three sources of such terms:
	\begin{enumerate}
	\item The terms involving $\rd\wht \phi$ or $\rd\wht \varpi$ multiplied by an $O(1)$ contribution from $\rd\phi_0$, $\rd\phi_1$, $\rd \varpi_0$ or $\rd\varpi_1$, i.e., 
	\begin{equation}\label{eq:error.in.g.dphi2.1}
	\| \rd \wht \phi \rd \phi_0 \|_{L^{\f 43}(\Sigma_t)}, \quad \| \rd \wht \phi \rd \phi_1 \|_{L^{\f 43}(\Sigma_t)},\quad \| \rd \wht \varpi \rd \varpi_0 \|_{L^{\f 43}(\Sigma_t)}, \quad \| \rd \wht \varpi \rd \varpi_1 \|_{L^{\f 43}(\Sigma_t)}.
	\end{equation}
	\item Contribution of $\wht \phi$ in $e^{-4\phi} - e^{-4\phi_0}$ multiplied by the $O(1)$ contribution in $(\rd \varpi)^2$. More precisely, we need to control
	\begin{equation}\label{eq:error.in.g.dphi2.2}
	\| e^{-4\phi_0} \widetilde \phi \rd (\varpi_0 + \varpi_1) \rd (\varpi_0 + \varpi_1)\|_{L^{\f 43}(\Sigma_t)}.
	\end{equation}
	\item Terms involving $\wht u_{\bA}$. More precisely, defining
	\begin{equation}
	\rd_\mu \phi_{1,\wht u} := \sum_{\bA} F_\bA^{\phi} \partial_\mu \wht u_\bA \sin(\tfrac{a_\bA u_\bA}{\lambda}),\quad \rd_\mu \varpi_{1,\wht u} := \sum_{\bA} F^{\varpi}_\bA \partial_\mu \wht u_\bA \sin(\tfrac{a_\bA u_\bA}{\lambda}),
	\end{equation}
	we need to control the terms
	\begin{equation}\label{eq:error.in.g.dphi2.3}
	\| \rd \phi_{1,\wht u} \rd \phi_0 \|_{L^{\f 43}(\Sigma_t)}, \quad \| \rd \phi_{1,\wht u} \rd \phi_1 \|_{L^{\f 43}(\Sigma_t)}, \quad\| \rd \varpi_{1,\wht u} \rd \varpi_0 \|_{L^{\f 43}(\Sigma_t)}, \quad \| \rd \varpi_{1,\wht u} \rd \varpi_1 \|_{L^{\f 43}(\Sigma_t)}.
	\end{equation}
	\end{enumerate}
	
	The key to handling these terms will be orthogonality. First, we have
	\begin{equation}\label{eq:orthogonality.again}
	\| \rd \phi_1 \|_{L^4(\Sigma_t)},\, \| \rd \varpi_1 \|_{L^4(\Sigma_t)} \ls \ep,
    	\end{equation}
	using part (2) of Lemma~\ref{lem:orthogonality}, Lemma~\ref{lemma.adapt}, and the background estimates in Proposition~\ref{prop:construct.dust}. We immediately obtain
	\begin{equation}
	\begin{split}
	\hbox{\eqref{eq:error.in.g.dphi2.1}} \ls &\: (\|\rd \phi_0 \|_{L^{4}(\Sigma_t)} + \|\rd \phi_1 \|_{L^{4}(\Sigma_t)} + \|\rd \varpi_0 \|_{L^{4}(\Sigma_t)} + \|\rd \varpi_1 \|_{L^{4}(\Sigma_t)})(\| \rd \wht \phi \|_{L^2(\Sigma_t)} + \| \rd \wht \varpi \|_{L^2(\Sigma_t)}) \\
	\ls &\: \ep^2 \lambda^2 C_b(N) e^{A(N)t}.
	\end{split}
	\end{equation}
	The term \eqref{eq:error.in.g.dphi2.2} can be estimated similarly, using also Sobolev embedding and Lemma~\ref{lem:wave.Poincare}:
	\begin{equation}
	\begin{split}
	\hbox{\eqref{eq:error.in.g.dphi2.2}} \ls &\: (\|\rd \varpi_0 \|_{L^{4}(\Sigma_t)} + \|\rd \varpi_1 \|_{L^{4}(\Sigma_t)}) (\|\rd \varpi_0 \|_{L^{4}(\Sigma_t)} + \|\rd \varpi_1 \|_{L^{4}(\Sigma_t)}) \| \wht \phi \|_{L^4(\Sigma_t)} \\
	\ls &\: \ep^3 \lambda^2 C_b(N) e^{A(N)t}.
	\end{split}
	\end{equation}

	For \eqref{eq:error.in.g.dphi2.3}, we first need an improved estimate for $\rd_\mu \phi_{1,\wht u}$. By part (1) of Lemma~\ref{lem:orthogonality}, we have
	\begin{align*}
		\| \rd_\mu \phi_{1,\wht u} \|_{L^2} 
		\lesssim &\: \Big( \sum_{\bA} \| F_{\bA}^\phi \rd \wht u_\bA \|_{L^2(\Sigma_t)}^2\Big)^{\f 12} + C(N) C_b(N) \ep^2 \lambda^{\f 52} \\
		\ls &\: \Big( \sum_{\bA} \| F_{\bA}^\phi \|_{L^4(\Sigma_t)}^2\Big)^{\f 12} \Big(\sup_{\bA} \| \rd \wht u_\bA \|_{L^4(\Sigma_t)} \Big) + C(N) C_b(N) \ep^2 \lambda^{\f 52} e^{A(N)t}\\
		\ls &\: C_b(N) \ep^{\f 52} \lambda^2 e^{A(N)t},
	\end{align*}
	where we have used the pointwise bound $\| \sum_{\bA} (F_\bA^\phi)^2\|_{L^\i(\Sigma_t)} \ls \ep^2$ (see Proposition~\ref{prop:construct.dust}) and the improved estimates of Propositions \ref{prpu} and \ref{prpdt} to estimate $ \|\partial\wht u_\bA\|_{L^4(K_t)}$ instead of the bootstrap assumptions. Using this together with \eqref{eq:orthogonality.again}, we obtain
	\begin{equation}
	\hbox{First two terms in \eqref{eq:error.in.g.dphi2.3}} \ls \| \rd \phi_{1,\wht u}\|_{L^2(\Sigma_t)} (\|\rd \phi_0 \|_{L^{4}(\Sigma_t)} + \|\rd \phi_1 \|_{L^{4}(\Sigma_t)}) \ls \ep^3 \lambda^2 C_b(N) e^{A(N)t},
	\end{equation} %
which is better than what we need. The last two terms are completely analogous. This concludes the proof of \eqref{eq:g.error.3}.

Finally, for \eqref{eq:g.error.4}, we repeat the same argument after taking a $\rd_t$ derivative. There are two types of contribution, namely when $\rd_t$ hits on the tilded quantity, or when $\rd_t$ hits on the other quantity. In the former case, the argument is the same as \eqref{eq:g.error.3}, where we again rely on the fact that \eqref{eq:orthogonality.again} is $N$-independent. In the latter case, the argument is even easier, as we at worst have terms bounded above by $C(N)C_b(N)\ep^2 \lambda^2e^{A(N)t}$, which is much better than we need since $C(N) C_b(N) \ll C_b'(N)$. \qedhere

\end{proof}

%

\begin{proposition}\label{g1.prop}
	Let $\mfg\in \{N,\gamma,\beta^i\}$. For $\mfg_2$ as in \eqref{eq:mfg2.form}, \eqref{eq:g2.def.1}, \eqref{eq:g2.def.2}, \eqref{eq:g2.def.3} and ${{\bf \Gamma}}_0(\mfg)$ as in \eqref{G.g}, \eqref{G.N} and \eqref{G.b} (with metric components replaced by their background values) and $\Theta^{(1)}$ as in \eqref{g.O1.expand}, we have that $\Delta \mfg_2 - \Gamma(\mfg_0)^{\mu\nu}\Theta^{(1)}_{\mu\nu}$ takes the following form	
	\begin{equation}\label{eq:g1.expansion}
		\begin{split}
			&\Delta \mfg_2- {{\bf \Gamma}}(\mfg_0)^{\mu\nu}\Theta^{(1)}_{\mu\nu}\\
			=&\lambda\sum_{\bA}\left(\mathfrak G_{\bA}^{(\Delta)}(\mfg) \cos(\tfrac{a_\bA u_\bA}{\lambda})+\mathfrak G_{2\bA}^{(\Delta)}(\mfg) \sin(\tfrac{2a_\bA u_\bA}{\lambda}) + \mathfrak G_{3\bA}^{(\Delta)}(\mfg) \cos(\tfrac{3a_\bA u_\bA}{\lambda})\right)\\
			&+\lambda \sum_{\pm, \bA, \bB: \bB \neq \bA} \Big(\mathfrak G_{\pm,\bA,\bB}^{(\Delta)}(\mfg)\sin(\tfrac{ a_\bA u_\bA \pm a_\bB u_{\bB}}{\lambda})+
			\mathfrak G_{\pm,2\bA,\bB}^{(\Delta)}(\mfg)\cos(\tfrac{ 2a_\bA u_\bA \pm a_\bB u_{\gra B}}{\lambda}) \Big)+ \cdots,
		\end{split}
	\end{equation}
	where 
	\begin{itemize}
		\item $\mathfrak G_{\bA}^{(\Delta)}(\mfg)$, $\mathfrak G_{2\bA}^{(\Delta)}(\mfg)$, $\mathfrak G_{3\bA}^{(\Delta)}(\mfg)$, $\mathfrak G_{\pm,\bA,\bB}^{(\Delta)}(\mfg)$ , $\mathfrak G_{\pm,2\bA,\bB}^{(\Delta)}(\mfg)$  are all compactly supported in $[0,1]\times B(0,2R+2)$ and obey the following estimates for all $t \in [0,T)$:
		$$\|\mathfrak G^{(\Delta)}\|_{C^{7}(\Sigma_t)}+\|\rd_t \mathfrak G^{(\Delta)}\|_{C^{6}(\Sigma_t)}\leq C(N)\ep^2,$$
		\item the error term $\cdots$ obeys \eqref{eq:g.error.1}--\eqref{eq:g.error.4}.
	\end{itemize}
	
\end{proposition}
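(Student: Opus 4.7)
The plan is a direct computation: expand $\Delta\mfg_2$ termwise from \eqref{eq:mfg2.form}, organize by powers of $\lambda$, verify that the $O(\lambda^{0})$ contribution cancels $\bfG(\mfg_0)^{\mu\nu}\Theta^{(1)}_{\mu\nu}$ by design of the amplitudes $\mathcal G_{2,\bA}$, $\mathcal G_{1,\bA}$, $\mathcal G_{\pm,\bA,\bB}$, read off the $O(\lambda)$ contributions with all listed phases, and absorb everything else into the error $\cdots$.

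For a generic amplitude $\mathcal G$, a trigonometric function $\psi\in\{\cos,\sin\}$, and a phase argument $v$, the flat Laplacian acts via
\begin{equation*}
\Delta\bigl(\lambda^{2}\mathcal G\,\psi(\tfrac{v}{\lambda})\bigr)
=-|\nabla v|^{2}\mathcal G\,\psi''(\tfrac{v}{\lambda})+\lambda\bigl(2\nabla \mathcal G\cdot\nabla v+\mathcal G\,\Delta v\bigr)\psi'(\tfrac{v}{\lambda})+\lambda^{2}(\Delta \mathcal G)\,\psi(\tfrac{v}{\lambda}).
\end{equation*}
The $O(\lambda^{0})$ pieces have the form $-|\nabla v|^{2}\mathcal G\,\psi''(\tfrac{v}{\lambda})$ with $v\in\{2a_\bA u_\bA,\,a_\bA u_\bA,\,a_\bA u_\bA\pm a_\bB u_\bB\}$. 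Using Lemma~\ref{lem:u2.well.defined} to write $|\nabla u_\bA|^{2}=|\nabla u_\bA^{0}|^{2}\bigl(1+2\lambda\mathfrak v_\bA\cos(\tfrac{a_\bA u_\bA}{\lambda})\bigr)+O(\lambda^{2})$, the leading part becomes $-|\nabla v^{0}|^{2}\mathcal G\,\psi''(\tfrac{v}{\lambda})$ with $v^{0}$ the same linear combination of the $u_\bA^{0}$. For the $\mathcal G_{2,\bA}$ term one obtains $-4a_\bA^{2}|\nabla u_\bA^{0}|^{2}\mathcal G_{2,\bA}\cos(\tfrac{2a_\bA u_\bA}{\lambda})$, which by \eqref{eq:g2.def.1} equals $\tfrac12\bfG_{0}(\mfg)^{\mu\nu}\bigl((F_\bA^{\phi})^{2}+\tfrac14 e^{-4\phi_{0}}(F_\bA^{\varpi})^{2}\bigr)(\partial_\mu u_\bA^{0})(\partial_\nu u_\bA^{0})\cos(\tfrac{2a_\bA u_\bA}{\lambda})$, cancelling exactly the matching contribution in $\bfG(\mfg_0)^{\mu\nu}\Theta^{(1)}_{\mu\nu}$ (modulo an $O(\lambda)$ correction coming from the replacement $\cos(\tfrac{2a_\bA u_\bA^{0}}{\lambda})\leadsto\cos(\tfrac{2a_\bA u_\bA}{\lambda})$). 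The cancellations for the $\mathcal G_{1,\bA}$ and $\mathcal G_{\pm,\bA,\bB}$ terms are analogous via \eqref{eq:g2.def.2}, \eqref{eq:g2.def.3}, where spatial adaptedness (Lemma~\ref{lemma.adapt}) justifies the denominators.

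The explicit $O(\lambda)$ contributions listed in \eqref{eq:g1.expansion} arise from three mechanisms: (i) the correction $2\lambda\mathfrak v_\bA\cos(\tfrac{a_\bA u_\bA}{\lambda})$ in $|\nabla u_\bA|^{2}$, which multiplies the $O(\lambda^{0})$ phases and, by the product-to-sum identities for $\cos\cdot\cos$ and $\cos\cdot\sin$, produces the higher harmonics $\cos(\tfrac{3a_\bA u_\bA}{\lambda})$ and $\cos(\tfrac{2a_\bA u_\bA\pm a_\bB u_\bB}{\lambda})$ as well as additional $\cos(\tfrac{a_\bA u_\bA}{\lambda})$ and $\sin(\tfrac{a_\bA u_\bA\pm a_\bB u_\bB}{\lambda})$ contributions; (ii) the ``one-phase-derivative'' terms $\lambda(2\nabla\mathcal G\cdot\nabla v^{0}+\mathcal G\,\Delta v^{0})\psi'(\tfrac{v}{\lambda})$, which retain the original phase type and feed $\mathfrak G^{(\Delta)}_{2\bA}$, $\mathfrak G^{(\Delta)}_{\bA}$, $\mathfrak G^{(\Delta)}_{\pm,\bA,\bB}$; and (iii) the $O(\lambda)$ residue from replacing $u_\bA^{0}$ by $u_\bA$ in the phases of $\Theta^{(1)}_{\mu\nu}$ (via Taylor expansion of $\sin$ and $\cos$, using $u_\bA-u_\bA^{0}=u_\bA^{2}+\wht u_\bA$ and the explicit form of $u_\bA^{2}$), which again generates the listed phases. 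The resulting amplitudes $\mathfrak G^{(\Delta)}$ are explicit smooth polynomial expressions in the compactly supported background quantities, so the support and regularity claims follow from \eqref{eq:v.est}, \eqref{eq:bkgd.bd:FA}, \eqref{eq:calG.est}, and the background estimates in Proposition~\ref{prop:construct.dust}.

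It remains to place all residual terms into $\cdots$. These are: the $O(\lambda^{2})$ amplitude-Laplacian contributions $\lambda^{2}(\Delta\mathcal G)\psi$; higher-order Taylor remainders from expanding the $u_\bA$-phases about $u_\bA^{0}$; all contributions involving $\wht u_\bA$; and the contributions of $\wht\mfg$ entering through $\bfG(\mfg)-\bfG(\mfg_0)$. Each is estimated using the bootstrap assumptions \eqref{BA:g.L4}--\eqref{BA:dtg.L2}, \eqref{BA:u.L4}--\eqref{BA:u.L2}, together with Lemma~\ref{lminfty} and \eqref{eq:v.est}. The main obstacle I anticipate is verifying the lowest-order bounds \eqref{eq:g.error.3}--\eqref{eq:g.error.4}: these must be obtained with only the bootstrap constants $C_b(N)$, $C_b'(N)$ and \emph{without} an extra factor of $C(N)$, in order to actually improve the bootstrap assumptions on $\wht\mfg$. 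As in Proposition~\ref{phi.phi.diff}, this is handled by exploiting the almost-orthogonality estimate (Lemma~\ref{lem:orthogonality}) for cross-terms that are linear in the tilded quantities, pairing an $L^{2}$-controlled tilded factor with an $L^{4}$-controlled $N$-independent background oscillatory factor via H\"older's inequality; terms that are at least quadratic in the tilded quantities carry enough extra powers of $\lambda$ (from the $L^{\infty}$ bounds in Lemma~\ref{lminfty}) to dominate any constant loss.
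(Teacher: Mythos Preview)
Your main computation---expanding $\Delta\mfg_2$, verifying the $O(\lambda^0)$ cancellation against $\bfG(\mfg_0)^{\mu\nu}\Theta^{(1)}_{\mu\nu}$ via the definitions \eqref{eq:g2.def.1}--\eqref{eq:g2.def.3}, and reading off the $O(\lambda)$ terms with the listed phases---is essentially the same as the paper's, and is correct.

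There are, however, two points of confusion in your treatment of the error.  First, your mechanism (iii) is spurious: both $\mfg_2$ and $\Theta^{(1)}$ are already written with phases $u_\bA$ (not $u_\bA^0$), so there is no ``phase replacement'' residue to account for in $\Theta^{(1)}$.  The only place where $u_\bA$ versus $u_\bA^0$ enters is through $|\nabla u_\bA|^2$ in the principal term of $\Delta\mfg_2$, which you already handle in mechanism (i).  Second, no $\wht\mfg$ contribution appears in this proposition: the quantity $\Delta\mfg_2-\bfG(\mfg_0)^{\mu\nu}\Theta^{(1)}_{\mu\nu}$ involves only the explicit parametrix $\mfg_2$, the background amplitudes, and $u_\bA$; the term $\bfG(\mfg)-\bfG(\mfg_0)$ belongs to a different piece of \eqref{eq:main.wht.g.prep} and is dealt with separately in Proposition~\ref{prop:whtg.eqn}.

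More importantly, your proposed route to the sharp $N$-dependence in \eqref{eq:g.error.3}--\eqref{eq:g.error.4} via almost-orthogonality is misdirected.  In Proposition~\ref{phi.phi.diff} that device is used because the dangerous error terms are linear in $\rd\wht\phi$ or $\rd\wht\varpi$, which pair in $L^{4/3}$ with an $L^4$-bounded oscillatory sum $\rd\phi_1$ whose $N$-independence comes from Lemma~\ref{lem:orthogonality}.  Here the only tilded quantity present is $\wht u_\bA$, and the error terms are not of the form $\sum_\bA \mathfrak F_\bA\sin(\tfrac{v_\bA}{\lambda})$ paired with another oscillatory sum.  The paper instead exploits that by this point in the argument Propositions~\ref{prpu} and~\ref{prpdt} have \emph{already} been proved, so one may invoke the improved bounds $\|\rd_i\wht u_\bA\|_{L^4(K_t)}\leq C(N)\ep^{3/2}\lambda^2 e^{A(N)t}$ and $\|\rd_t\wht u_\bA\|_{L^4(K_t)}\ls C_b(N)\ep^{3/2}\lambda^2 e^{A(N)t}$ in place of the bootstrap assumptions.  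Combined with the hierarchy \eqref{eq:N.vs.ep} (so that $C(N)\ep^{3/2}\ll C_b(N)$), this gives \eqref{eq:g.error.3}--\eqref{eq:g.error.4} directly, with no orthogonality argument needed.
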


\begin{proof}
	Notice that $\mfg_2$ in \eqref{eq:mfg2.form}, \eqref{eq:g2.def.1}, \eqref{eq:g2.def.2}, \eqref{eq:g2.def.3} so that the mains terms in $\Delta \mfg_2$ are exactly designed to cancel the term ${{\bf \Gamma}}(\mfg_0)^{\mu\nu}\Theta^{(1)}_{\mu\nu}$. There are two types of error terms. First there are terms when one derivative hits on $\mathcal G_{2,\bA}$, etc, and the other one hits on the phase, which gives terms as in Proposition~7.2 in \cite{HL.HF}. These terms take the acceptable form. 
	
	The new error terms comes from the difference between $u_\bA$ and $u_\bA^0$. The main such term is
	\begin{equation}\label{eq:Delta.g1.main}
	\begin{split}
	&\sum_{\bA} \Big( 4 a_\bA^2 \q G_{2,\bA}(\mfg) (|\nabla u_\bA|^2-|\nabla u_\bA^0|^2 )\cos(\tfrac{2a_\bA u_\bA}{\lambda}) +  a_\bA^2 \q G_{1,\bA} (\mfg) (|\nabla u_\bA|^2-|\nabla u_\bA^0|^2 )\sin(\tfrac{a_\bA u_\bA}{\lambda}) \Big) \\
		&	+\sum_{\pm,\bA,\bB: \bB\neq \bA} \q G_{\pm,\bA,\bB}(|\nabla(a_\bA u_\bA \pm a_\bB u_{\gra B})|^2-|\nabla(a_\bA u^0_\bA \pm a_\bB u^0_{\gra B})|^2) \cos(\tfrac{a_\bA u_\bA \pm a_\bB u_{\gra B}}{\lambda}).
	\end{split}
	\end{equation}
	Indeed, all the other terms satisfy the error estimates \eqref{eq:g.error.1}--\eqref{eq:g.error.4}. We in particular make two observations about the order-$\lambda^2$ error terms other than \eqref{eq:Delta.g1.main}:
	\begin{enumerate}
	\item Either the terms only depend on the background (so we only have $C(N)$ in the $N$-dependence), 
	\item or else they depend on up to two spatial derivatives of $\wht u_\bA$, for which we can use Proposition~\ref{prpu} and Proposition~\ref{prpdt} (instead of the bootstrap assumptions) to obtain lower order bounds with acceptable $N$-dependence.
	\end{enumerate}
	
	It thus remains to consider \eqref{eq:Delta.g1.main}. There are contributions from $u_\bA^2$ and $\wht u_\bA$ (and $u_\bB^2$, $\wht u_\bB$). For the contribution from $u_\bA^2$, we recall \eqref{u1.def} and note that in all the order-$\lambda$ terms, we have the following phases
	$$\cos(\tfrac{a_\bA u_\bA}{\lambda})\cos(\tfrac{2a_\bA u_\bA}{\lambda}), \quad \cos(\tfrac{a_\bA u_\bA}{\lambda})\sin(\tfrac{a_\bA u_\bA}{\lambda}),\quad \cos(\tfrac{a_\bA u_\bA}{\lambda})\cos(\tfrac{a_\bA u_\bA \pm a_\bB u_{\gra B}}{\lambda}),$$
	which can in turn be expanded as terms in \eqref{eq:g1.expansion}. For the $\wht u_\bA$ contribution, they obey \eqref{eq:g.error.1}--\eqref{eq:g.error.4} since as above, we can use Proposition~\ref{prpu} and Proposition~\ref{prpdt} (instead of the bootstrap assumptions) to obtain lower order bounds with acceptable $N$-dependence. \qedhere
%
\end{proof}

Finally, we handle the term $\Upsilon(\mfg) - \Upsilon(\mfg_0)$ in \eqref{eq:main.wht.g.prep}. 
\begin{proposition}\label{up1.prop}
Let $\mfg\in \{N,\gamma,\beta^i\}$. $\Upsilon(\mfg)$ admits the following decomposition
	\begin{equation*}
		\begin{split}
			\Upsilon(\mfg) =&\: \Upsilon(\mfg_0) + \lambda\sum_{\bA}\left(\mathfrak G_{\bA}^{(\Upsilon)}(\mfg) \cos(\tfrac{a_{\bA} u_\bA}{\lambda})+\mathfrak  G_{2\bA}^{(\Upsilon)}(\mfg) \sin(\tfrac{2a_{\bA}u_\bA}{\lambda})\right) \\
			&\: + \lambda \sum_{\pm,\bA,\bB: \bB \neq \bA} \mathfrak  G_{\pm,\bA,\bB}^{(\Upsilon)}(\mfg)\sin(\tfrac{ a_\bA u_\bA \pm a_\bB u_{\bB}}{\lambda}) + \cdots,
		\end{split}
	\end{equation*}
	where $\mathfrak G_{\bA}^{(\Upsilon)}(\mfg)$, $\mathfrak  G_{2\bA}^{(\Upsilon)}(\mfg)$ and $\mathfrak  G_{\pm,\bA,\bB}^{(\Upsilon)}(\mfg)$ are all compactly supported in $[0,1]\times B(0,2R+2)$ and obey the estimates
	$$\|\mathfrak G^{(\Upsilon)} \|_{C^{7}(\Sigma_t)}+\|\rd_t \mathfrak G^{(\Upsilon)} \|_{C^{6}(\Sigma_t)}\ls C\ep^2 $$
	for all $t \in [0,T)$, and the error term $\cdots$ obeys \eqref{eq:g.error.1}--\eqref{eq:g.error.4}.
\end{proposition}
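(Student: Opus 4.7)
The plan is to expand $\Upsilon(\mfg) - \Upsilon(\mfg_0)$ by Taylor expansion in the perturbation $\mfg - \mfg_0 = \mfg_2 + \wht\mfg$, using the explicit algebraic formulas \eqref{up1}--\eqref{up3}. In each case, $\Upsilon$ is a smooth (indeed polynomial-like) function of the metric components $\gamma, n$ and of $\rd\beta$ (for $\Upsilon(\gamma)$, $\Upsilon(n)$) or of $\gamma, n, \rd\gamma, \rd n, \rd\beta$ (for $\Upsilon(\beta^i)$). Writing schematically $\Upsilon(\mfg) - \Upsilon(\mfg_0) = D\Upsilon_0 \cdot (\mfg_2 + \wht\mfg) + Q(\mfg_2 + \wht\mfg)$, where $Q$ collects quadratic and higher terms, the main oscillatory contribution will come from the linear term $D\Upsilon_0\cdot \mfg_2$, with everything else being either a pure-background error or an acceptable nonlinear remainder.

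For the main term, I would compute $D\Upsilon_0\cdot\mfg_2$ explicitly. Since $\mfg_2 = O(\lambda^2)$ in $C^0$ but $\rd \mfg_2 = O(\lambda)$ (one derivative on the phase $\cos(\cdot/\lambda)$ or $\sin(\cdot/\lambda)$ costs $\lambda^{-1}$), the $O(\lambda)$ contribution comes exactly from the terms in which a spatial derivative hits a phase of $\mfg_2$. Looking at \eqref{eq:mfg2.form}, one derivative converts the phases
\[
\cos(\tfrac{2a_\bA u_\bA}{\lambda}), \quad \sin(\tfrac{a_\bA u_\bA}{\lambda}), \quad \cos(\tfrac{a_\bA u_\bA \pm a_\bB u_\bB}{\lambda})
\]
into, respectively, $\sin(\tfrac{2a_\bA u_\bA}{\lambda})$, $\cos(\tfrac{a_\bA u_\bA}{\lambda})$, $\sin(\tfrac{a_\bA u_\bA \pm a_\bB u_\bB}{\lambda})$ -- precisely the three families appearing in the statement. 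The coefficient of each phase is an algebraic product of background quantities (e.g.\ $e^{2\gamma_0}/n_0^2$, $\mathfrak L \beta_0$, $\rd \gamma_0$, $\rd n_0$, $\rd u_\bA^0$) with the amplitudes $\q G_{\star}(\mfg)$ of \eqref{eq:g2.def.1}--\eqref{eq:g2.def.3}. Using the $C^9$ bound \eqref{eq:calG.est} for $\q G_\star$ and the loss of one derivative when it is multiplied by $\rd u^0_\bA$, I obtain the claimed $C^7$ estimate, and compact support in $[0,1]\times B(0,2R+2)$ is inherited from that of $\q G_\star$. Terms where the derivative does not hit a phase contribute at order $\lambda^2$ and are purely background, hence acceptable.

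The remaining contributions fall into the error bucket $\cdots$, and my strategy to verify \eqref{eq:g.error.1}--\eqref{eq:g.error.4} mirrors the bookkeeping in the proof of Proposition~\ref{phi.phi.diff}: (i) terms depending only on the background and on $\mfg_2$ at order $\lambda^2$ or higher are bounded by $C(N)\ep^2\lambda^2$, which is acceptable since $C(N)\ll C_b(N)$; (ii) terms at least quadratic in $(\wht \mfg, \rd \wht \mfg)$ gain an extra $\lambda^{1/2}$ via the Sobolev $L^\infty$ bounds of Lemma~\ref{lminfty}, absorbing any factor of $C_b(N)e^{A(N)t}$; (iii) terms linear in $\wht\mfg$ or $\rd\wht\mfg$, multiplied by an $L^\infty$ background factor, are controlled using the bootstrap assumptions \eqref{BA:g.L43}--\eqref{BA:g.L2} directly for the $L^2$-based bound, and for the sharper $L^{4/3}$-based bound by pairing an $L^4$ background factor against the $W^{1,4/3}$ bootstrap control on $\wht{\mfg}$ with the appropriate weight $\langle x\rangle^{-\alp_0-1/2}$. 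The time-differentiated versions follow the same pattern after commuting $\rd_t$ through, at the expense of using the $C_b'(N)$ bootstrap constants.

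The main obstacle I anticipate is the bookkeeping of weights and $N$-dependence in item (iii): the linear-in-$\wht\mfg$ pieces must produce the weighted $L^{4/3}_{-\alp_0+5/2}$ norm bound with constant $C_b(N)$ (not $C(N)C_b(N)$), which rules out brute $L^\infty$-$L^{4/3}$ H\"older. The quadratic structure of $\Upsilon(\gamma)$ and $\Upsilon(n)$ in $\mathfrak L\beta$ (and the mixed $\rd\log n\cdot \mathfrak L\beta$, $\rd\gamma\cdot\mathfrak L\beta$ structure of $\Upsilon(\beta^i)$) is favorable, because one factor is always a background quantity with $N$-independent estimates. This is exactly analogous to why the corresponding step in Proposition~\ref{phi.phi.diff} only costs a factor $C_b(N)$, and I expect to run the same argument here, carefully matching weight exponents so that the product lands in $L^{4/3}_{-\alp_0+5/2}$.
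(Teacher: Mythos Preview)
Your proposal is correct and follows the same approach as the paper, which in fact defers all details to \cite[Proposition~7.5]{HL.HF} and only remarks on two points: the need for the additional $L^{4/3}$-based estimates \eqref{eq:g.error.3}--\eqref{eq:g.error.4}, and the weight subtleties that require the bootstrap assumptions \eqref{BA:g.L43}--\eqref{BA:dtg.L43}. Your sketch correctly identifies both of these as the main technical points, and your observation that the $N$-dependence is easier here than in Proposition~\ref{phi.phi.diff} (because the background factor $\mathfrak L\beta_0$, $\rd\gamma_0$, $\rd n_0$ has $N$-independent bounds, unlike $\rd\phi_1$) matches the paper's remark that ``$\mfg$ obeys better estimates than $\wht\phi$ and $\wht\varpi$, \dots\ much easier to ensure that the error terms have the needed $N$-dependence.''
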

\begin{proof}
This is essentially the same as Proposition 7.5 in \cite{HL.HF} and thus we omit the details. The only difference here is that we also need $L^{\f 43}$ estimates for \eqref{eq:g.error.3}--\eqref{eq:g.error.4} as opposed to only $L^2$-based estimates.

Since $\mfg$ obeys better estimates than $\wht \phi$ and $\wht \varpi$, it is much easier to ensure that the error terms have the needed $N$-dependence. We do note that there are some subtleties concerning the weights, specifically that there are some borderline terms, but they are identical as Proposition 7.5 in \cite{HL.HF}. We also remark that it is only to deal with the weights here that we use the bootstrap assumptions \eqref{BA:g.L43} and \eqref{BA:dtg.L43}. \qedhere
\end{proof}

Plugging Proposition~\ref{phi.phi.diff}, Proposition~\ref{g1.prop} and Proposition~\ref{up1.prop} into \eqref{eq:main.wht.g.prep}, we obtain
\begin{proposition}\label{prop:whtg.eqn}
Let $\mfg\in \{N,\gamma,\beta^i\}$.
\begin{equation}
\begin{split}
\Delta \wht \mfg =&\: \lambda\sum_{\bA}\left(\mathfrak G_{\bA}(\mfg) \cos(\tfrac{a_\bA u_\bA}{\lambda})+\mathfrak G_{2\bA}(\mfg) \sin(\tfrac{2a_\bA u_\bA}{\lambda}) + \mathfrak G_{3\bA}(\mfg) \cos(\tfrac{3a_\bA u_\bA}{\lambda})\right)\\
			&\:+\lambda \sum_{\pm, \bA, \bB: \bB \neq \bA} \Big(\mathfrak G_{\pm,\bA,\bB}(\mfg)\sin(\tfrac{ a_\bA u_\bA \pm a_\bB u_{\bB}}{\lambda})+
			\mathfrak G_{\pm,2\bA,\bB}(\mfg)\cos(\tfrac{ 2a_\bA u_\bA \pm a_\bB u_{\gra B}}{\lambda}) \Big)\\
			&\: +\lambda \sum_{\substack{\pm_1,\pm_2,\bA,\bB,\bC: \\
			\bB\neq \bA,\,\bC \neq \bA,\bB}}\mathfrak G_{\pm_1,\pm_2,\bA,\bB,\bC} \cos(\tfrac{a_\bA u_\bA \pm_1 a_\bB u_\bB\pm_1 a_\bC u_\bC }{\lambda})+ \cdots,
		\end{split}
	\end{equation}
	where $\mathfrak G_{\bA}(\mfg)$, $\mathfrak G_{2\bA}(\mfg)$, $\mathfrak G_{3\bA}(\mfg)$, $\mathfrak G_{\pm,\bA,\bB}(\mfg)$ , $\mathfrak G_{\pm,2\bA,\bB}(\mfg)$  are all compactly supported in $[0,1]\times B(0,2R+2)$ and obey the estimates
		$$\|\mathfrak G \|_{C^{7}(\Sigma_t)}+\|\rd_t \mathfrak G \|_{C^{6}(\Sigma_t)}\leq C(N)\ep^2,$$
		for all $t \in [0,T)$, and the error term $\cdots$ obeys \eqref{eq:g.error.1}--\eqref{eq:g.error.4}.
\end{proposition}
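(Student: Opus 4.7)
The plan is to assemble the statement by substituting the three preceding propositions into the identity \eqref{eq:main.wht.g.prep} and tracking which pieces cancel, which combine into the oscillatory terms with coefficients $\mathfrak G$, and which are absorbed into the error $\cdots$. The starting point is
\begin{equation*}
\Delta\wht\mfg = {\bf \Gamma}(\mfg)^{\mu\nu}\langle\rd_\mu U,\rd_\nu U\rangle-{\bf \Gamma}(\mfg_0)^{\mu\nu}\langle\rd_\mu U_0,\rd_\nu U_0\rangle-\tfrac12\sum_\bA{\bf \Gamma}(\mfg_0)^{\mu\nu}F_\bA^2(\rd_\mu u_\bA^0)(\rd_\nu u_\bA^0)+\Upsilon(\mfg)-\Upsilon(\mfg_0)-\Delta\mfg_2.
\end{equation*}
I would first split the leading quadratic term as ${\bf \Gamma}(\mfg)^{\mu\nu}\langle\rd_\mu U,\rd_\nu U\rangle = {\bf \Gamma}(\mfg_0)^{\mu\nu}\langle\rd_\mu U_0,\rd_\nu U_0\rangle + {\bf \Gamma}(\mfg_0)^{\mu\nu}(\Theta^{(main)}_{\mu\nu}+\Theta^{(1)}_{\mu\nu}+\Theta^{(2)}_{\mu\nu}) + ({\bf \Gamma}(\mfg)^{\mu\nu}-{\bf \Gamma}(\mfg_0)^{\mu\nu})\langle\rd_\mu U,\rd_\nu U\rangle + \cdots$, using Proposition~\ref{phi.phi.diff} for the expansion of the scalar products.

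The two clean cancellations are: (i) the term ${\bf \Gamma}(\mfg_0)^{\mu\nu}\Theta^{(main)}_{\mu\nu}$ exactly cancels the explicit dust contribution $\tfrac12\sum_\bA{\bf \Gamma}(\mfg_0)^{\mu\nu}F_\bA^2(\rd_\mu u_\bA^0)(\rd_\nu u_\bA^0)$ in view of \eqref{g.main.expand}; (ii) combining ${\bf \Gamma}(\mfg_0)^{\mu\nu}\Theta^{(1)}_{\mu\nu}$ with $-\Delta\mfg_2$ yields, by Proposition~\ref{g1.prop}, exactly an oscillatory expansion with phases $\cos(a_\bA u_\bA/\lambda)$, $\sin(2a_\bA u_\bA/\lambda)$, $\sin((a_\bA u_\bA\pm a_\bB u_\bB)/\lambda)$, $\cos((2a_\bA u_\bA\pm a_\bB u_\bB)/\lambda)$ modulo acceptable errors. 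The potential-type piece $\Upsilon(\mfg)-\Upsilon(\mfg_0)$ is dispatched directly by Proposition~\ref{up1.prop}, producing the same list of phases. Finally the ${\bf \Gamma}(\mfg_0)^{\mu\nu}\Theta^{(2)}_{\mu\nu}$ contribution already has the form claimed in the proposition, with background coefficients obeying the required $C^7$/$\rd_t C^6$ bounds.

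The step that requires attention is the cross term $({\bf \Gamma}(\mfg)^{\mu\nu}-{\bf \Gamma}(\mfg_0)^{\mu\nu})\langle\rd_\mu U,\rd_\nu U\rangle$, since ${\bf \Gamma}(\mfg)$ depends on all metric coefficients through $e^{2\gamma}/n^2$ and $\beta^i$ (see \eqref{G.g}--\eqref{G.b}). Writing the metric difference as $\mfg-\mfg_0=\mfg_2+\wht\mfg$, one sees that the $\wht\mfg$ contribution is at least $O(\lambda^2)$ by the bootstrap assumptions \eqref{BA:g.L43}--\eqref{BA:dtg.L2} and is at most linear in $\langle\rd U,\rd U\rangle$; together with the pointwise bounds from Lemma~\ref{lminfty} and the orthogonality bound $\|\partial U\|_{L^4}\lesssim\ep$ from Lemma~\ref{lem:orthogonality} (used exactly as in the proof of Proposition~\ref{phi.phi.diff} to avoid an $\ell^1$-in-$\bA$ loss), it fits into the $\cdots$ bounds \eqref{eq:g.error.1}--\eqref{eq:g.error.4}. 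The $\mfg_2$ contribution multiplied by $\langle\rd_\mu U_0,\rd_\nu U_0\rangle$ is a background quantity of size $O(\lambda^2)$ which is therefore an error, while $\mfg_2\cdot(\Theta^{(main)}+\Theta^{(1)})$ is $O(\lambda^2)$ with phases that upon trigonometric product expansion feed only into the list of phases already present (or into strictly higher order and hence errors). The only genuine subtlety is to confirm at each trigonometric identity that no term accidentally lands with phase $\sin(a_\bA u_\bA/\lambda)$ outside the error (since no such term appears in the conclusion) — this is automatic because the candidates come either from $\Delta\mfg_2-{\bf \Gamma}(\mfg_0)^{\mu\nu}\Theta^{(1)}_{\mu\nu}$ (already handled by Proposition~\ref{g1.prop}) or from $\Theta^{(2)}$ (whose $\cos(a_\bA u_\bA/\lambda)$, $\sin(2a_\bA u_\bA/\lambda)$, $\cos(3a_\bA u_\bA/\lambda)$ structure is compatible). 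Collecting all identified oscillatory contributions and relabelling the bounded background coefficients gives the functions $\mathfrak G_\bA(\mfg)$, $\mathfrak G_{2\bA}(\mfg)$, $\mathfrak G_{3\bA}(\mfg)$, $\mathfrak G_{\pm,\bA,\bB}(\mfg)$, $\mathfrak G_{\pm,2\bA,\bB}(\mfg)$, $\mathfrak G_{\pm_1,\pm_2,\bA,\bB,\bC}$ with the stated support and $C(N)\ep^2$ bounds, completing the proof.
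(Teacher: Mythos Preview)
Your proof is correct and follows essentially the same approach as the paper: substitute Propositions~\ref{phi.phi.diff}, \ref{g1.prop}, \ref{up1.prop} into \eqref{eq:main.wht.g.prep}, let $\Theta^{(main)}$ cancel the dust term, pair $\Theta^{(1)}$ with $\Delta\mfg_2$, and read off the $\mathfrak G$ coefficients from $\Theta^{(2)}$, $\mathfrak G^{(\Delta)}$, $\mathfrak G^{(\Upsilon)}$. The one organizational difference is in the cross term $({\bf \Gamma}(\mfg)-{\bf \Gamma}(\mfg_0))\langle\rd U,\rd U\rangle$: the paper splits it as $({\bf \Gamma}-{\bf \Gamma}_0)\langle\rd U_0,\rd U_0\rangle + ({\bf \Gamma}-{\bf \Gamma}_0)(\langle\rd U,\rd U\rangle-\langle\rd U_0,\rd U_0\rangle)$ and dispatches both pieces directly as errors via \eqref{eq:calG.est}, \eqref{BA:g.L4}--\eqref{BA:dtg.L2} and Proposition~\ref{phi.phi.diff}, whereas you split by $\mfg_2+\wht\mfg$ and then further analyze the $\mfg_2$ piece's phase products. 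Your extra bookkeeping of the $\mfg_2\cdot(\Theta^{(main)}+\Theta^{(1)})$ phases is unnecessary --- the entire cross term is already $O(\ep^4\lambda^2 C(N))$ (resp.\ $O(C_b(N)\ep^{7/2}\lambda^2 e^{A(N)t})$ for the $\wht\mfg$ part) and hence an error without any phase expansion --- but this is harmless overwork rather than a gap.
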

\begin{proof}
We start with \eqref{eq:main.wht.g.prep} and compute that 
\begin{equation*}
\begin{split}
&\: \Gamma(\mfg)^{\mu\nu} \langle \rd_\mu  U, \rd_\nu U \rangle - \Gamma(\mfg_0)^{\mu\nu} \langle \rd_\mu  U_0, \rd_\nu U_0 \rangle \\
= &\: (\Gamma(\mfg)^{\mu\nu} - \Gamma(\mfg_0)^{\mu\nu}) \langle \rd_\mu  U_0, \rd_\nu U_0 \rangle + (\Gamma(\mfg)^{\mu\nu} - \Gamma(\mfg_0)^{\mu\nu})(\langle \rd_\mu  U, \rd_\nu U \rangle - \langle \rd_\mu  U_0, \rd_\nu U_0 \rangle) \\
&\: + \Gamma(\mfg_0)^{\mu\nu} (\langle \rd_\mu  U, \rd_\nu U \rangle - \langle \rd_\mu  U_0, \rd_\nu U_0 \rangle).
\end{split}
\end{equation*}
Notice that $(\Gamma(\mfg)^{\mu\nu} - \Gamma(\mfg_0)^{\mu\nu}) \langle \rd_\mu  U_0, \rd_\nu U_0 \rangle$ and $(\Gamma(\mfg)^{\mu\nu} - \Gamma(\mfg_0)^{\mu\nu})(\langle \rd_\mu  U, \rd_\nu U \rangle - \langle \rd_\mu  U_0, \rd_\nu U_0 \rangle)$ obey the bounds \eqref{eq:g.error.1}--\eqref{eq:g.error.4} using \eqref{eq:calG.est}, \eqref{BA:g.L4}--\eqref{BA:dtg.L2}, where for the latter term we also used Proposition~\ref{phi.phi.diff}. Thus, using Proposition~\ref{phi.phi.diff}, Proposition~\ref{g1.prop} and Proposition~\ref{up1.prop}, we obtain the desired conclusion after setting
\begin{align*}
\mathfrak G_{\bA}(\mfg) = &\: \Gamma(\mfg_0)^{\mu\nu}(\Theta^{(2)}_{\mu\nu})_{\bA} + \mathfrak G_{\bA}^{(\Upsilon)}(\mfg) - \mathfrak G_{\bA}^{(\Delta)}(\mfg)\\
\mathfrak G_{2\bA}(\mfg) = &\: \Gamma(\mfg_0)^{\mu\nu}(\Theta^{(2)}_{\mu\nu})_{2\bA}+ \mathfrak G_{2\bA}^{(\Upsilon)}(\mfg) - \mathfrak G_{2\bA}^{(\Delta)}(\mfg)\\
\mathfrak G_{3\bA}(\mfg) = &\: \Gamma(\mfg_0)^{\mu\nu}(\Theta^{(2)}_{\mu\nu})_{3\bA} - \mathfrak G_{3\bA}^{(\Delta)}(\mfg)\\
\mathfrak G_{\pm,\bA,\bB}(\mfg) = &\: \Gamma(\mfg_0)^{\mu\nu}(\Theta^{(2)}_{\mu\nu})_{\pm,\bA,\bB}+ \mathfrak G_{\pm,\bA,\bB}^{(\Upsilon)}(\mfg) - \mathfrak G_{\pm,\bA,\bB}^{(\Delta)}(\mfg)\\
\mathfrak G_{\pm,2\bA,\bB}(\mfg) = &\: \Gamma(\mfg_0)^{\mu\nu}(\Theta^{(2)}_{\mu\nu})_{\pm,2\bA,\bB} - \mathfrak G_{\pm,2\bA,\bB}^{(\Delta)}(\mfg)\\
\mathfrak G_{\pm_1,\pm_2,\bA,\bB,\bC} = &\: \Gamma(\mfg_0)^{\mu\nu}(\Theta^{(2)}_{\mu\nu})_{\pm_1,\pm_2,\bA,\bB,\bC}.
\end{align*}
\end{proof}

We are now ready to solve the elliptic equations. We need two analytic results. The first is a general result on solving elliptic equations in weighted Sobolev spaces (Theorem~\ref{thm:elliptic}), following \cite{McOwen}. The second is a Gagliardo--Nirenberg-type bound for weighted Sobolev spaces. The latter is standard but we could not locate a precise reference and thus give a proof for completeness.

We start with the following result\footnote{Note that \cite{McOwen} technically only gives the $k = 0$ case, but the $k \geq 1$ case can be derived from the $k=0$ case inductively using an argument in \cite[Corollary~2.8]{Huneau.constraint}.} in \cite{McOwen}.
\begin{thm}\label{laplacien}(Theorem 0 in \cite{McOwen})
	Let $m,k\in \mathbb{Z}_{\geq 0}$, $1<p<\infty$ and $-\frac{2}{p}+m<\delta<m+1-\frac{2}{p}$. The Laplace operator $\Delta :W^{k+2,p}_{\delta} \rightarrow W^{k,p}_{\delta+2}$ is an injection with closed range 
	$$\left \{h \in W^{k,p}_{\delta+2}\; | \;\int h q =0 \quad \forall q \in \cup_{i=0}^m \mathcal{H}_i \right \},$$
	where $\mathcal{H}_i$ is the set of harmonic polynomials of degree $i$.
	Moreover, any $r \in W^{k+2,p}_{\delta}(\mathbb R^2)$ obeys the estimate
	$$\|r \|_{W^{k+2,p}_{\delta}(\mathbb R^2)} \leq C(\delta,k,m,p)\|\Delta r\|_{W^{k,p}_{\delta+2}(\mathbb R^2)},$$
	where $C(\delta,k,m,p) > 0$ is a constant depending on $\delta$, $k$, $m$ and $p$.
\end{thm}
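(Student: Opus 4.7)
The plan is to reduce everything to the coercive a priori estimate $\|r\|_{W^{k+2,p}_{\delta}(\mathbb R^2)} \leq C\,\|\Delta r\|_{W^{k,p}_{\delta+2}(\mathbb R^2)}$; once this is in hand, injectivity is immediate, closed range follows from a standard Banach-space argument, and the orthogonality characterization of the image is obtained by duality. The forward inclusion $\mathrm{Range}(\Delta) \subseteq \{f : \int f q = 0 \ \forall q \in \cup_{i=0}^m \mathcal H_i\}$ is nothing but integration by parts against harmonic polynomials (the boundary terms at infinity vanish thanks to the decay built into the weights), while the reverse inclusion identifies the annihilator of the range in the dual space $(W^{k,p}_{\delta+2})^* \cong W^{-k,p'}_{-\delta-2}$ (with $p'$ the Hölder conjugate of $p$) with the kernel of the formal adjoint $\Delta^* = \Delta$ on that dual space.

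The heart of the argument is the a priori estimate for $k=0$; the case $k \geq 1$ then follows by induction, commuting constant-coefficient derivatives with $\Delta$ (the weighted spaces $W^{k,p}_\delta$ are defined precisely so that differentiation shifts both indices by one). For $k=0$, I would decompose $\mathbb R^2$ into dyadic annuli $A_j = \{2^{j-1}\leq |x| \leq 2^{j+1}\}$ plus a bounded piece near the origin, and rescale each $A_j$ to unit size via $y = 2^{-j}x$. On each rescaled annulus the standard interior Calderón--Zygmund estimate $\|D^2 u\|_{L^p} \lesssim \|\Delta u\|_{L^p} + \|u\|_{L^p}$ applies, and reassembling with the weights $\langle x\rangle \sim 2^j$ on $A_j$ converts the local $L^p$ bounds into the desired $W^{2,p}_\delta$ bound. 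The lower-order $\|u\|_{L^p}$ contributions must be absorbed through a Hardy-type telescoping between adjacent annuli; this is the step that forces the restriction $-\tfrac{2}{p}+m < \delta < m+1-\tfrac{2}{p}$, since these are exactly the values of $\delta$ for which the Mellin symbol of $\Delta$, obtained by expanding in spherical harmonics and Mellin-transforming in $|x|$, is invertible on the critical line. The excluded integer values of $\delta+\tfrac{2}{p}$ are the indicial roots corresponding to homogeneous harmonic functions $r^{\pm j} e^{\pm ij\theta}$.

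With the a priori estimate available, injectivity is a quantitative Liouville argument: any $r\in W^{k+2,p}_\delta$ with $\Delta r=0$ is smooth, harmonic, and has just enough decay (enforced by $\delta > -\tfrac{2}{p}+m$) that its spherical harmonic expansion cannot include any of the polynomially growing modes of degree $\leq m$, while local regularity at the origin rules out negative-power modes, so $r\equiv 0$. For the range, the closed-range theorem reduces the problem to identifying the kernel of $\Delta^*=\Delta$ on $W^{-k,p'}_{-\delta-2}$; the condition $\delta < m+1-\tfrac{2}{p}$ translates into $-\delta - 2 < -m-1+\tfrac{2}{p'}$, which places us in the dual Liouville regime where harmonic distributions reduce precisely to $\bigcup_{i=0}^m \mathcal H_i$.

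The main technical obstacle is the weighted Calderón--Zygmund step together with the careful bookkeeping at the endpoint values of $\delta$. The logarithmic nature of the 2D fundamental solution $\tfrac{1}{2\pi}\log|x|$ is responsible for the offset by $m$ in the admissible range of $\delta$; making sure that the dyadic reassembly converges uniformly up to (but not including) the endpoints $\delta = -\tfrac{2}{p}+m$ and $\delta = m+1-\tfrac{2}{p}$ is where the strict inequalities enter in an essential way, and where the telescoping tail estimate must be performed with care to avoid log-divergent sums. A cleaner but less hands-on alternative would be to carry out the entire argument on the Mellin-transform side, where the claim reduces to a Fredholm analysis of a family of ODEs on $S^1$ parameterized by a complex Mellin variable; the excluded values of $\delta$ are then seen directly as the horizontal lines passing through the spectrum.
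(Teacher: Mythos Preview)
The paper does not prove this statement: it is quoted verbatim from McOwen's paper (with the footnote that the $k\ge 1$ case is deduced from $k=0$ by an inductive argument as in \cite[Corollary~2.8]{Huneau.constraint}), and is used as a black box to establish the elliptic estimates in Theorem~\ref{thm:elliptic} and Proposition~\ref{prop:g.est}. So there is no ``paper's own proof'' to compare against.

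That said, your outline is a faithful reconstruction of the standard argument underlying McOwen's theorem: the dyadic-annulus rescaling to import interior Calder\'on--Zygmund estimates, the identification of the excluded weight values with indicial roots of the Mellin-transformed operator, and the duality argument $(W^{k,p}_{\delta+2})^* \cong W^{-k,p'}_{-\delta-2}$ to characterize the range via the kernel of the adjoint. One small point worth tightening: in your Liouville step for injectivity you say the condition $\delta > -\tfrac{2}{p}+m$ rules out polynomially growing modes of degree $\le m$, but in fact membership in $W^{2,p}_\delta$ for any $\delta > -\tfrac{2}{p}$ already forces decay, so the role of $m$ there is only to track which harmonic polynomials lie in the \emph{dual} space (i.e.\ it governs the cokernel, not the kernel). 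This does not affect the correctness of your overall scheme, only the attribution of which inequality in the range of $\delta$ is doing which job.
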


The following is a corollary to Theorem~\ref{laplacien}, which can be deduced in a similar manner as Corollary~2.7 in \cite{Huneau.constraint}: 
\begin{theorem}\label{thm:elliptic}
For any $k\in \mathbb Z_{\geq 0}$, $p \in (1,\infty)$, $\de \in (-\f 2p, 1-\f 2p)$, there is a $C = C(\de,k,p) >0$ such that the following holds. 

For every $h \in W^{k}_{-\alp+2}(\mathbb R^2)$, there exists a solution $w$ of 
	$$\Delta w =h$$
	which can be written 
	$$w(x)=\frac{1}{2\pi}\left(\int_{\mathbb R^2} h \right)\zeta(|x|){\log}(|x|) +w_r(x),$$
	where $\zeta$ is the cutoff function in Definition~\ref{def:cutoff}, and $w_r \in W^{k+2,p}_{\de}(\mathbb R^2)$ satisfies
	$\|w_r\|_{W^{k+2,p}_{\de}(\mathbb R^2)} \leq C \|h\|_ {W^{k,p}_{ \de+2}(\mathbb R^2)}$. Moreover, denoting the constant $w_a = \f 1{2\pi} \left(\int_{\mathbb R^2} h \right)$, it also holds that $|w_a| \leq C  \|h\|_ {L^{p}_{ \de+2}(\mathbb R^2)}$.
\end{theorem}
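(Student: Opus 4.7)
The plan is to reduce the statement to Theorem~\ref{laplacien} (McOwen) by subtracting off an explicit log term that absorbs the non-vanishing mean of $h$. Under the hypothesis $\delta \in (-\tfrac{2}{p}, 1-\tfrac{2}{p})$, the range condition in McOwen's theorem with $m=0$ amounts to $\int_{\mathbb R^2} h = 0$; the new input here is to handle a general $h$ whose integral may not vanish.

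First, I would set $\psi(x) := \zeta(|x|)\log |x|$ and $\phi := \Delta \psi$. Because $\zeta \equiv 1$ on $\{|x|\geq 2\}$ and $\log|x|$ is harmonic away from the origin, $\phi$ is smooth and compactly supported in the annulus $\{1 \leq |x| \leq 2\}$. A direct application of the divergence theorem on a large ball gives $\int_{\mathbb R^2} \phi = \int_{\partial B_R} \partial_\nu \log|x|\,\ud\sigma = 2\pi$. Next I would define $w_a := \frac{1}{2\pi}\int_{\mathbb R^2} h$ and look for $w$ of the form $w = w_a \psi + w_r$, so that $w_r$ must satisfy
\begin{equation*}
\Delta w_r = h - w_a \phi.
\end{equation*}
By the choice of $w_a$, the right-hand side has zero integral, which is precisely the orthogonality condition against $\mathcal H_0$ required by Theorem~\ref{laplacien} when $\delta \in (-\tfrac{2}{p}, 1-\tfrac{2}{p})$ (i.e., $m=0$). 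Applying that theorem then yields the existence of $w_r \in W^{k+2,p}_\delta(\mathbb R^2)$ together with the estimate $\|w_r\|_{W^{k+2,p}_\delta} \leq C\|h - w_a \phi\|_{W^{k,p}_{\delta+2}}$.

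To finish, I would bound $w_a$ by weighted Hölder's inequality: writing $|w_a| \leq \frac{1}{2\pi}\int |h| \langle x\rangle^{\delta+2}\cdot \langle x\rangle^{-(\delta+2)}\,\ud x$ and applying Hölder with conjugate exponents $p, p'$, we obtain
\begin{equation*}
|w_a| \leq C\,\|h\|_{L^p_{\delta+2}}\, \bigl\|\langle x\rangle^{-(\delta+2)}\bigr\|_{L^{p'}(\mathbb R^2)}.
\end{equation*}
The $L^{p'}$ factor is finite precisely when $(\delta+2)p' > 2$, i.e., $\delta > -\tfrac{2}{p}$, which is guaranteed by the hypothesis. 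Since $\phi$ is smooth and compactly supported, $\|w_a \phi\|_{W^{k,p}_{\delta+2}} \leq C|w_a| \leq C\|h\|_{L^p_{\delta+2}}$, and the triangle inequality then delivers $\|w_r\|_{W^{k+2,p}_\delta} \leq C\|h\|_{W^{k,p}_{\delta+2}}$.

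The substantive step is lining up the weight range $\delta \in (-\tfrac{2}{p}, 1-\tfrac{2}{p})$ with both the McOwen admissibility window at level $m=0$ and the integrability requirement $(\delta+2)p' > 2$ needed to control $|w_a|$; once those are verified, the remaining manipulations are routine. I do not anticipate any genuine obstacle, since the construction follows the same scheme as \cite[Corollary~2.7]{Huneau.constraint}; the only care needed is in the bookkeeping of the weight indices and in checking that the compactly supported correction $w_a\phi$ contributes an acceptable term in the $W^{k,p}_{\delta+2}$ norm.
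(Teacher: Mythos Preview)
Your proposal is correct and follows exactly the approach the paper has in mind: the paper does not give a proof but simply notes that the result ``can be deduced in a similar manner as Corollary~2.7 in \cite{Huneau.constraint},'' and your argument---subtract off $w_a\zeta(|x|)\log|x|$ to reduce to the mean-zero case, then invoke McOwen's Theorem~\ref{laplacien} at level $m=0$---is precisely that scheme. Your bookkeeping of the weight exponents (the match between $\delta\in(-\tfrac2p,1-\tfrac2p)$, the $m=0$ window in McOwen, and the integrability condition $(\delta+2)p'>2$) is correct.
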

%

\begin{lemma}\label{lem:GN}
The following Gagliardo--Nirenberg-type bound holds\footnote{We remark that this can be further improved. Even using the same ideas of proof, one could replace the first term on the right-hand side by $\| h\|_{H^{k}_{-\alp_0}(\mathbb R^2)}^{\bt} \| h\|_{H^{k+1}_{-\alp_0}(\mathbb R^2)}^{1-\bt}$ for any $\bt \in [\f 13, \f 12)$, but we will not need the stronger statement.} whenever $k \geq 1$:
\begin{equation*}
\begin{split}
 \| h\|_{W^{k,4}_{-\alp_0+\f 12}(\mathbb R^2)} 
\ls &\: \| h\|_{H^{k}_{-\alp_0}(\mathbb R^2)}^{\f 13} \| h\|_{H^{k+1}_{-\alp_0}(\mathbb R^2)}^{\f 23} +  \| h\|_{W^{k-1,4}_{-\alp_0+\f 12}(\mathbb R^2)}.
\end{split}
\end{equation*}
\end{lemma}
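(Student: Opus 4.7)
The plan is to reduce the estimate to the top-order derivative $\partial^\alpha h$ with $|\alpha|=k$, and then prove the top-order bound by a dyadic decomposition combined with the scale-invariant Ladyzhenskaya inequality on a two-dimensional annulus. All the lower-order contributions $\sum_{|\alpha|\leq k-1}\|\langle x\rangle^{-\alp_0+1/2+|\alpha|}\rd_x^\alpha h\|_{L^4}$ are absorbed directly into the term $\|h\|_{W^{k-1,4}_{-\alp_0+1/2}(\mathbb R^2)}$ appearing on the right-hand side, so it suffices to control $\|\langle x\rangle^{-\alp_0+1/2+k}\rd_x^\alpha h\|_{L^4(\mathbb R^2)}$ for fixed $|\alpha|=k$.

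Write $g=\rd_x^\alpha h$, fix a partition of unity $\{\chi_j\}_{j\ge -1}$ adapted to the annuli $A_j=\{2^{j-1}\leq |x|\leq 2^{j+1}\}$ for $j\geq 0$ (and $A_{-1}=B(0,2)$). On each $A_j$, the weight $\langle x\rangle$ is comparable to $2^j$. The two-dimensional Ladyzhenskaya / Gagliardo--Nirenberg inequality, after rescaling $A_j$ to a fixed reference annulus, gives
\begin{equation*}
\|g\|_{L^4(A_j)}^2\;\lesssim\;\|g\|_{L^2(A_j)}\,\|\nabla g\|_{L^2(A_j)}+2^{-j}\|g\|_{L^2(A_j)}^2 .
\end{equation*}
The point of the extra weight $\langle x\rangle^{1/2}$ on the $L^4$ side is that, after multiplying through by $2^{2j(-\alp_0+1/2+k)}$, the scaling $2^{-j}$ coming from the annulus exactly combines with the weight $2^{j(-\alp_0+k+1)}$ natural to $\nabla g$ and the weight $2^{j(-\alp_0+k)}$ natural to $g$; in other words, the dyadic inequality becomes
\begin{equation*}
\big\|\langle x\rangle^{-\alp_0+\tfrac12+k} g\big\|_{L^4(A_j)}^2\;\lesssim\;B_j C_j+B_j^2,
\end{equation*}
with $B_j:=\|\langle x\rangle^{-\alp_0+k}g\|_{L^2(A_j)}$ and $C_j:=\|\langle x\rangle^{-\alp_0+k+1}\nabla g\|_{L^2(A_j)}$.

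Summing over $j$ using $\|F\|_{L^4(\mathbb R^2)}^4\lesssim\sum_j\|F\|_{L^4(A_j)}^4$ and the elementary bounds $\sum_j B_j^4\leq\bigl(\sum_j B_j^2\bigr)^2$ and $\sum_j B_j^2C_j^2\leq\bigl(\sum_j B_j^2\bigr)\bigl(\sum_j C_j^2\bigr)$ yields
\begin{equation*}
\big\|\langle x\rangle^{-\alp_0+\tfrac12+k} g\big\|_{L^4(\mathbb R^2)}^4\;\lesssim\;\|h\|_{H^k_{-\alp_0}}^4+\|h\|_{H^k_{-\alp_0}}^2\|h\|_{H^{k+1}_{-\alp_0}}^2,
\end{equation*}
since $g=\rd_x^\alpha h$ with $|\alpha|=k$ makes $\nabla g$ a $(k{+}1)$-th derivative of $h$. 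Taking fourth roots gives
\begin{equation*}
\big\|\langle x\rangle^{-\alp_0+\tfrac12+k} g\big\|_{L^4(\mathbb R^2)}\;\lesssim\;\|h\|_{H^k_{-\alp_0}}^{\f12}\|h\|_{H^{k+1}_{-\alp_0}}^{\f12},
\end{equation*}
where the pure $\|h\|_{H^k_{-\alp_0}}$ term is absorbed using $\|h\|_{H^k_{-\alp_0}}\leq \|h\|_{H^{k+1}_{-\alp_0}}$. The claim then follows by interpolating once more to weaken the exponents from $(\tfrac12,\tfrac12)$ to $(\tfrac13,\tfrac23)$, which is trivially allowed because $\|h\|_{H^k_{-\alp_0}}\leq\|h\|_{H^{k+1}_{-\alp_0}}$. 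This also explains the range $\beta\in[\tfrac13,\tfrac12)$ noted in the footnote.

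The main (minor) obstacle will simply be the careful bookkeeping of powers of $2^j$ so that the extra half-power of $\langle x\rangle$ on the $L^4$ side exactly accounts for the 2D Ladyzhenskaya scaling loss. Once that is verified, the dyadic sum and the elementary Young-type bookkeeping close the argument without any essentially new analytic input.
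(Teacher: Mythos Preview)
Your argument is correct and takes a genuinely different route from the paper. The paper proceeds by integrating by parts in $\int |\rd_x^k h|^4 \langle x\rangle^{4(-\alp_0+\f12+k)}\,\ud x$ to produce a term of the form $|\rd_x^k h|^2\,|\rd_x^{k-1}h|\,|\rd_x^{k+1}h|$, applies H\"older so that $\rd_x^{k-1}h$ lands in $L^\infty$, and then controls that $L^\infty$ norm by the weighted embedding $L^\infty\hookleftarrow W^{1,8/3}$ followed by interpolation between $L^4$ and $L^2$. After absorbing the resulting powers of $\|\rd_x^k h\|_{L^4}$ back to the left-hand side, one arrives at the exponents $(\tfrac13,\tfrac23)$. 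Your dyadic-plus-Ladyzhenskaya argument is more direct: the scaling bookkeeping on annuli is exactly right, and the summation gives the sharper exponents $(\tfrac12,\tfrac12)$ immediately, which you then relax to $(\tfrac13,\tfrac23)$. In fact your method reaches the endpoint $\beta=\tfrac12$, slightly better than the half-open range $[\tfrac13,\tfrac12)$ the footnote claims for the paper's method. The trade-off is that the paper's argument is self-contained (a single integration by parts and one embedding), whereas yours invokes the Ladyzhenskaya inequality on a domain and a dyadic cover; both handle the weights cleanly.
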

\begin{proof}
Since $k \geq 1$, we integrate by parts to obtain
\begin{equation}\label{eq:GN.IBP}
\begin{split}
&\: \|\rd_x^{k} h\|_{L^{4}_{-\alp_0+ \f 12+k}(\mathbb R^2)}^4 \\
\ls &\: \int_{\mathbb R^2} |\rd_x^k h|^2 |\rd_x^{k-1} h| |\rd_x^{k+1} h| \langle x\rangle^{(-\alp_0+\f 12+k)4} \, \ud x + \int_{\mathbb R^2} |\rd_x^k h|^3 |\rd_x^{k-1} h| \langle x\rangle^{(-\alp_0+\f 12 + k)4-1} \, \ud x \\
\ls &\: \|\rd_x^{k} h\|_{L^4_{-\alp_0+ \f 12+k}(\mathbb R^2)}^2 \|\rd_x^{k-1} h\|_{L^{\infty}_{-\alp_0+1+(k-1)}(\mathbb R^2)} \|\rd_x^{k+1} h\|_{L^2_{-\alp_0+(k+1)}(\mathbb R^2)}  \\
&\: + \|\rd_x^{k} h\|_{L^{4}_{-\alp_0+ \f 12+k}(\mathbb R^2)}^3 \|\rd_x^{k-1} h\|_{L^{4}_{-\alp_0+ \f 12+(k-1)}(\mathbb R^2)}
\end{split}
\end{equation}
First using the weighted Sobolev embedding theorem in \cite[Appendix~I Theorem~3.4.1.(a)]{CB.book} and then using H\"older's inequality, we obtain
\begin{equation}\label{eq:GN.Li}
\begin{split}
\|\rd_x^{k-1} h\|_{L^{\infty}_{-\alp_0+1+(k-1)}(\mathbb R^2)} \ls &\: \| \rd_x^{k} h \|_{L^{\f 83}_{-\alp_0+\f 14+k}(\mathbb R^2)} +  \| \rd_x^{k-1} h \|_{L^{\f 83}_{-\alp_0+\f 14+(k-1)}(\mathbb R^2)} \\
\ls &\: \| \rd_x^{k} h \|_{L^{4}_{-\alp_0+\f 12+k}(\mathbb R^2)}^{\f 12}\| \rd_x^{k} h \|_{L^{2}_{-\alp_0+k}(\mathbb R^2)}^{\f 12} + \| h \|_{W^{k-1,4}_{-\alp_0+\f 12}(\mathbb R^2)}^{\f 12} \| h\|_{W^{k-1,2}_{-\alp_0}(\mathbb R^2)}^{\f 12}.
\end{split}
\end{equation}
Therefore, plugging \eqref{eq:GN.Li} into \eqref{eq:GN.IBP} and absorbing suitable factors of $\|\rd_x^{k} h\|_{L^{4}_{-\alp_0+ \f 12+k}(\mathbb R^2)}$ to the left-hand side, we thus obtain
\begin{equation*}
\begin{split}
&\: \| h\|_{W^{k,4}_{-\alp_0+\f 12}(\mathbb R^2)} \ls \|\rd_x^{k} h\|_{L^{4}_{-\alp_0+ \f 12+k}(\mathbb R^2)} + \| h\|_{W^{k-1,4}_{-\alp_0+\f 12}(\mathbb R^2)}\\
\ls &\: \| h\|_{W^{k,2}_{-\alp_0}(\mathbb R^2)}^{\f 13} \| h\|_{W^{k+1,2}_{-\alp_0}(\mathbb R^2)}^{\f 23} +  \| h\|_{W^{k-1,4}_{-\alp_0+\f 12}(\mathbb R^2)},
\end{split}
\end{equation*}
as desired. \qedhere
\end{proof}

We now use the equation in \eqref{prop:whtg.eqn} and the above lemmas to obtain the desired estimates for $\wht \mfg$.
\begin{proposition} \label{prop:g.est}
Let $\mfg\in \{N,\gamma,\beta^i\}$. Then the following estimates hold:
	\begin{align}
|\wht \mfg_{a}|(t)+\|\wht{\mfg}_r\|_{W^{1,\f 43}_{-\alp_0 - \f 12}(\Sigma_t)}\ls &\: C_{b}(N)\ep^2 \lambda^2 e^{A(N)t}, \label{eq:g.L43.recovered}\\
|\rd_t \wht \mfg_{a}|(t)+\|\rd_t\wht{\mfg}_r\|_{W^{1,\f 43}_{-\alp_0 - \f 12}(\Sigma_t)}\ls &\: C'_{b}(N)\ep^2 \lambda e^{A(N)t},\label{eq:dtg.L43.recovered} \\
\sum_{k\leq 3} \lambda^k \|\wht{\mfg}_r\|_{W^{k+1,4}_{-\alp_0 + \f 12}(\Sigma_t)}\ls &\:  C_{b}(N)\ep^2 \lambda^2e^{A(N)t} \label{eq:g.L4.recovered}\\
\sum_{k\leq 2} \lambda^k \|\rd_t\wht{\mfg}_r\|_{W^{k+1,4}_{-\alp_0 + \f 12}(\Sigma_t)}\ls &\: C'_{b}(N)\ep^2 \lambda e^{A(N)t},\label{eq:dtg.L4.recovered}\\
	\sum_{k\leq 3} \lambda^k \|\wht{\mfg}_r\|_{H^{k+2}_{-\alp_0}(\Sigma_t)}\ls &\: C_{b}(N)\ep^2 \lambda e^{A(N)t}, \label{eq:g.L2.recovered}\\
	\sum_{k\leq 2} \lambda^k\|\rd_t\wht{\mfg}_r\|_{H^{k+2}_{-\alp_0}(\Sigma_t)}\ls &\: C'_{b}(N)\ep^2 e^{A(N)t}. \label{eq:dtg.L2.recovered}
\end{align}
\end{proposition}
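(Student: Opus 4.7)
The starting point is the elliptic equation for $\wht{\mfg}$ derived in Proposition~\ref{prop:whtg.eqn}, which has the form
\begin{equation*}
\Delta \wht{\mfg} = \lambda \sum (\text{background amplitude}) \cdot (\text{oscillating phase}) + \cdots,
\end{equation*}
where the explicit sum ranges over phases of type $\tfrac{a_\bA u_\bA}{\lambda}$, $\tfrac{2a_\bA u_\bA}{\lambda}$, $\tfrac{3a_\bA u_\bA}{\lambda}$, $\tfrac{a_\bA u_\bA\pm a_\bB u_\bB}{\lambda}$, $\tfrac{2a_\bA u_\bA\pm a_\bB u_\bB}{\lambda}$, $\tfrac{a_\bA u_\bA \pm_1 a_\bB u_\bB \pm_2 a_\bC u_\bC}{\lambda}$, and $\cdots$ obeys \eqref{eq:g.error.1}--\eqref{eq:g.error.4}. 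Because each of these phases is \emph{null adapted and spatially adapted} (Lemma~\ref{lemma.adapt}), its gradient is bounded below, so we may define an explicit ``oscillatory main part'' $\wht{\mfg}^{main}$ by dividing each of the main oscillatory terms by $-|\nabla(\cdot)|^2/\lambda^2$; for example the $\cos(\tfrac{a_\bA u_\bA}{\lambda})$ term contributes
\begin{equation*}
-\lambda^3 \frac{\mathfrak G_{\bA}(\mfg)}{|\nabla(a_\bA u_\bA)|^2}\cos(\tfrac{a_\bA u_\bA}{\lambda}),
\end{equation*}
and analogously for the other phases. By construction $\Delta \wht{\mfg}^{main}$ matches the leading oscillations up to terms where a derivative falls on an amplitude or phase, and such terms are $O(\lambda^2\cdot C(N)\epsilon^2)$ in all relevant norms, hence absorbable into $\cdots$. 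Thus
\begin{equation*}
\Delta(\wht{\mfg}-\wht{\mfg}^{main}) = \cdots,
\end{equation*}
with $\cdots$ still obeying \eqref{eq:g.error.1}--\eqref{eq:g.error.4}.

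The key estimate for $\wht{\mfg}^{main}$ is a direct computation: since every derivative on a phase costs $\lambda^{-1}$ and the amplitudes are compactly supported in $[0,1]\times B(0,2R+2)$ with $C^{7}$-norms bounded by $C(N)\epsilon^2$, we get $\|\wht{\mfg}^{main}\|_{H^{k+2}_{-\alpha_0}(\Sigma_t)}\ls C(N)\epsilon^2\lambda^{1-k}$ for $k\leq 3$, and analogously in $W^{k+1,4}_{-\alpha_0+1/2}$ and $W^{1,4/3}_{-\alpha_0-1/2}$; the resulting bounds are of order $C(N)\epsilon^2\lambda$ (or $\lambda^2$) with no exponential factor, which are \emph{much better} than the right-hand sides in \eqref{eq:g.L43.recovered}--\eqref{eq:dtg.L2.recovered} because $C(N)\ll C_b(N)$. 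For the remainder $\wht{\mfg}-\wht{\mfg}^{main}$, I invoke Theorem~\ref{thm:elliptic} with the two choices $(p,\delta)=(2,-\alpha_0)$ and $(p,\delta)=(4/3,-\alpha_0-\tfrac12)$, both of which satisfy the admissibility $-2/p<\delta<1-2/p$. For $p=2$ I obtain at every $k\leq 3$ and every $t\in[0,T)$
\begin{equation*}
\|\wht{\mfg}_r-\wht{\mfg}^{main}_r\|_{H^{k+2}_{-\alpha_0}(\Sigma_t)} + |\wht{\mfg}_a-\wht{\mfg}_a^{main}|(t) \ls \|\cdots\|_{W^{k,2}_{-\alpha_0+2}(\Sigma_t)},
\end{equation*}
and summing over $k\leq 3$ with the prefactor $\lambda^k$ gives the right-hand side of \eqref{eq:g.L2.recovered} via \eqref{eq:g.error.1}. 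For $p=4/3$ the analogous statement combined with the embedding $L^{4/3}_{-\alpha_0+5/2}\hookrightarrow L^{4/3}_{-\alpha_0+3/2}$ and \eqref{eq:g.error.3} gives \eqref{eq:g.L43.recovered}. The $L^4$-based bounds \eqref{eq:g.L4.recovered} follow by interpolating between the $L^{4/3}$ bound and the $L^2$ bound: I iterate Lemma~\ref{lem:GN} (which reduces $W^{k,4}_{-\alpha_0+1/2}$ to $H^{k,\,k+1}_{-\alpha_0}$ plus one order lower in $L^4$), with the base case $k=0$ supplied by the standard weighted Sobolev embedding $W^{1,4/3}_{-\alpha_0-1/2}\hookrightarrow L^{4}_{-\alpha_0+1/2}$ of \cite[Appendix I]{CB.book}.

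The time-derivative estimates \eqref{eq:dtg.L43.recovered}, \eqref{eq:dtg.L4.recovered}, \eqref{eq:dtg.L2.recovered} proceed identically after differentiating in $t$: since $\rd_t$ commutes with $\Delta$, the same elliptic estimates apply to $\rd_t\wht{\mfg}_r$ and the source is controlled by the $\rd_t$ error estimates \eqref{eq:g.error.2} and \eqref{eq:g.error.4}; the one-power-of-$\lambda$ loss is exactly what is allowed by using the larger bootstrap constant $C_b'(N)$. Similarly $|\rd_t\wht{\mfg}_a|$ is recovered from the second part of Theorem~\ref{thm:elliptic}. Finally, I also need to compute $\rd_t\wht{\mfg}^{main}$ directly; the only worry is that a $t$-derivative landing on a phase produces a factor $\rd_t u_\bA/\lambda$, but by Proposition~\ref{prpdt} (in particular the improved lowest-order bound \eqref{eq:dtu.improved}) the $\rd_t u_\bA$ remains bounded by $C\epsilon$, so $\rd_t\wht{\mfg}^{main}$ enjoys the analogue of the bounds above with one less factor of $\lambda$ and with $C_b'(N)$ in place of $C_b(N)$.

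The main obstacle is checking that the $L^{4/3}$-based estimate actually closes with the correct $N$-dependence. The weaker elliptic regularization $\Delta^{-1}:L^{4/3}\to W^{2,4/3}$ (as compared to the $W^{-1,2}\to W^{1,2}$ mapping available for wave equations) is precisely what buys us the extra integrability ($L^4$ control of first derivatives), and this was already used crucially in Proposition~\ref{phi.phi.diff} to extract the $N$-independent bound $\sum_\bA \|\partial\phi\|_{L^4},\ \|\partial\varpi\|_{L^4}\ls\epsilon$ via near-orthogonality (Lemma~\ref{lem:orthogonality}). Therefore the error term $\cdots$ in \eqref{eq:g.error.3}--\eqref{eq:g.error.4} has prefactor $C_b(N)$ (respectively $C_b'(N)$) \emph{without} an extra $C(N)$, and the bootstrap closes, whereas at the $L^2$ top order we only need the weaker estimate with $C(N)C_b(N)$ (respectively $C(N)C_b'(N)$), matching \eqref{eq:g.error.1}--\eqref{eq:g.error.2}. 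The care needed in separating these two regimes, rather than the elliptic theory itself, is the genuine point.
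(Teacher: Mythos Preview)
Your proposal is correct and follows essentially the same strategy as the paper: define an explicit oscillatory $\wht{\mfg}^{main}$ to absorb the $O(\lambda)$ phases in Proposition~\ref{prop:whtg.eqn}, apply Theorem~\ref{thm:elliptic} to the remainder at $p=\tfrac43$ and $p=2$, and bridge to the $L^4$ scale via the weighted Sobolev embedding $W^{2,4/3}_{-\alpha_0-1/2}\hookrightarrow W^{1,4}_{-\alpha_0+1/2}$ at lowest order together with Lemma~\ref{lem:GN} inductively for $k\ge 1$. The only cosmetic differences are that the paper puts the \emph{background} gradients $|\nabla u_\bA^0|^2$ in the denominators of $\wht{\mfg}^{main}$ (you use the actual $|\nabla u_\bA|^2$; either works), and a small slip in your last paragraph: $\partial_t u_\bA$ is $O(1)$, not $C\epsilon$ (it is $\partial_t \wht u_\bA$ that is small), but this is harmless since $\partial_t\wht{\mfg}^{main}$ is still $O(C(N)\epsilon^2\lambda^2)$, far below the $C_b'(N)\epsilon^2\lambda$ threshold.
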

\begin{proof}
	Define $\wht \mfg^{main}$ by
	\begin{align*}
	\wht \mfg^{main}=&\: - \lambda^3\sum_{\bA}\left(\f{\mathfrak G_{\bA}(\mfg)}{a_\bA^2|\nab u_\bA^0|^2} \cos(\tfrac{a_\bA u_\bA}{\lambda})+ \f{\mathfrak G_{2\bA}(\mfg)}{4a_\bA^2|\nab u_\bA^0|^2} \sin(\tfrac{2a_\bA u_\bA}{\lambda}) + \f{\mathfrak G_{3\bA}(\mfg)}{9 a_\bA^2|\nab u_\bA^0|^2} \cos(\tfrac{3a_\bA u_\bA}{\lambda})\right)\\
	&\: - \lambda^3 \sum_{\pm, \bA, \bB: \bB \neq \bA} \Big(\f{\mathfrak G_{\pm,\bA,\bB}(\mfg)}{|\nab (a_\bA u_\bA^0 \pm a_\bB u_{\bB}^0)|^2}\sin(\tfrac{ a_\bA u_\bA \pm a_\bB u_{\bB}}{\lambda})+
			\f{\mathfrak G_{\pm,2\bA,\bB}(\mfg)}{|\nab (2a_\bA u_\bA^0 \pm a_\bB u_{\gra B}^0)|^2} \cos(\tfrac{ 2a_\bA u_\bA \pm a_\bB u_{\gra B}}{\lambda}) \Big) \\
	&\:- \lambda^3 \sum_{\substack{\pm_1,\pm_2,\bA,\bB,\bC:\\
			\bB\neq \bA,\,\bC \neq \bA,\bB}} \f{\mathfrak G_{\pm_1,\pm_2,\bA,\bB,\bC}}{|\nab (a_\bA u_\bA \pm_1 a_\bB u_\bB\pm_2 a_\bC u_\bC)|^2} \cos(\tfrac{a_\bA u_\bA \pm_1 a_\bB u_\bB\pm_2 a_\bC u_\bC }{\lambda}),
		\end{align*}
		where the $\mathfrak G$ terms are as in Proposition~\ref{prop:whtg.eqn}.
		
		It is straightforward to check that $\wht \mfg^{main}$ is compactly supported and obeys all the desired estimates for $\wht \mfg_r$. Thus we only need to bound $\wht \mfg - \wht \mfg^{main}$. For this we observe that $\Delta \wht \mfg^{main}$ takes away the main terms in Proposition~\ref{prop:whtg.eqn} so that
\begin{align}
	\label{estdelta1}\|\Delta (\wht \mfg	-\wht \mfg^{main} )\|_{W^{0,\frac{4}{3}}_{-\alp_0+\f 32}(\Sigma_t)}&\lesssim \ep^2 \lambda^2 C_b(N)e^{A(N)t},\\
		\label{estdelta2}\|\partial_t \Delta (\wht \mfg	-\wht \mfg^{main}  )\|_{W^{0,\frac{4}{3}}_{-\alp_0+\f 32}(\Sigma_t)}&\lesssim \ep^2 \lambda
			C'_b(N)e^{A(N)t},\\
	\label{estdelta3}\sum_{k\leq 3}\lambda^k\|\Delta (\wht \mfg	-\wht \mfg^{main}  )\|_{H^{k}_{-\alp_0+2}(\Sigma_t)}&\leq \ep^2 \lambda^2 C(N)C_b(N)e^{A(N)t}, \\
	\label{estdelta4}\sum_{k\leq 2}\lambda^k\|\partial_t \Delta (\wht \mfg	-\wht \mfg^{main}  )\|_{H^{k}_{-\alp_0+2}(\Sigma_t)}&\leq
	\ep^2 \lambda C(N)C'_b(N)e^{A(N)t}.  
	\end{align}
	
	By Theorem~\ref{thm:elliptic}, we thus obtain 
	\begin{align}
	\label{estdelta1.inverted}|\wht \mfg_{a}|(t)+\|\wht \mfg_r	-\wht \mfg^{main}\|_{W^{2,\frac{4}{3}}_{-\alp_0-\f 12}(\Sigma_t)}&\lesssim \ep^2 \lambda^2 C_b(N)e^{A(N)t},\\
		\label{estdelta2.inverted}|\rd_t \wht \mfg_{a}|(t)+\|\partial_t (\wht \mfg_r	-\wht \mfg^{main}  )\|_{W^{2,\frac{4}{3}}_{-\alp_0-\f 12}(\Sigma_t)}&\lesssim \ep^2 \lambda
			C'_b(N)e^{A(N)t},\\
	\label{estdelta3.inverted}\sum_{k\leq 3}\lambda^k\|\wht \mfg_r	-\wht \mfg^{main} \|_{H^{k+2}_{-\alp_0}(\Sigma_t)}&\leq \ep^2 \lambda^2 C(N)C_b(N)e^{A(N)t}, \\
	\label{estdelta4.inverted}\sum_{k\leq 2}\lambda^k\|\partial_t  (\wht \mfg_r	-\wht \mfg^{main}) \|_{H^{k+2}_{-\alp_0}(\Sigma_t)}&\leq
	\ep^2 \lambda C(N)C'_b(N)e^{A(N)t}.  
	\end{align}
	
	Therefore, \eqref{estdelta1.inverted} and \eqref{estdelta2.inverted} give the desired estimates \eqref{eq:g.L43.recovered} and \eqref{eq:dtg.L43.recovered}, while \eqref{estdelta3.inverted} and \eqref{estdelta4.inverted} give the desired estimates \eqref{eq:g.L2.recovered} and \eqref{eq:dtg.L2.recovered}. (In the case of the $H^{k+2}_{-\alp_0}$ bounds, there are extra $\lambda$ powers to spare.)
	
	Finally, for the $L^4$ based norms, first note that by Theorem~3.4.1.(b) in Appendix~I of \cite{CB.book}, we have
\begin{equation}
\| h \|_{W^{k,4}_{-\alp_0+\f 12}(\mathbb R^2)} \ls \| h \|_{W^{k+1,\f 43}_{-\alp_0-\f 12}(\mathbb R^2)}.
\end{equation}
Using this, we thus get the $k=0$ case of \eqref{eq:g.L4.recovered} and \eqref{eq:dtg.L4.recovered} from \eqref{estdelta1.inverted} and \eqref{estdelta2.inverted}. For the $k \geq 1$ cases in \eqref{eq:g.L4.recovered} and \eqref{eq:dtg.L4.recovered}, we use Lemma~\ref{lem:GN} and \eqref{estdelta3.inverted}, \eqref{estdelta4.inverted} to obtain inductively that 
\begin{equation*}
\begin{split}
\sum_{k=1}^3 \lambda^k \| \wht \mfg_r - \wht \mfg^{main} \|_{W^{k+1,4}_{-\alp_0+\f 12}(\Sigma_t)} \ls &\:  \ep^2 \lambda^{\f 73} C(N) C_b(N) e^{A(N)t}, \\
\sum_{k=1}^2 \lambda^k \| \rd_t(\wht \mfg_r - \wht \mfg^{main}) \|_{W^{k+1,4}_{-\alp_0+\f 12}(\Sigma_t)} \ls &\: \ep^2 \lambda^{\f 43} C(N) C_b(N) e^{A(N)t},
\end{split}
\end{equation*}
which is better than what we need (with extra $\lambda^{\f 13}$ power to absorb constants $C(N)$). \qedhere

%
%
%
\end{proof}
\section{Estimates for the solution to the Raychaudhuri equation}\label{sec:chi}

In this section, we prove estimates for $\chi_\bA$. 

Recall that $\chi_\bA= \Box_g u_\bA$ and that it satisfies the equation (see \eqref{eq:Raychaudhuri})
\begin{equation}\label{eq:Raychaudhuri.2}
L_\bA(\chi_\bA)+\chi_\bA^2 =-\mathrm{Ric}(L_\bA, L_\bA).
\end{equation}
The Ricci curvature can be computed using \eqref{sys}:
\begin{equation}\label{eq:Ric.LL}
	\f 12 \mathrm{Ric}(L_\bA, L_\bA)=(L_\bA\phi)^2 +\frac{1}{4}  e^{-4\phi} (L_\bA\varpi)^2.
\end{equation}
We will use \eqref{eq:Raychaudhuri.2}--\eqref{eq:Ric.LL} to derive the estimates for $\chi_\bA$.

\begin{lemma}\label{lm:rll}
	The following transport equation holds for $\chi_{\bA} - \chi_{\bA}^0$:
\begin{equation}\label{eq:Ray.diff}
\begin{split}
			&\: L_\bA(\chi_{\bA} - \chi_{\bA}^0) + 2 \chi_\bA^0 (\chi_{\bA} - \chi_{\bA}^0)\\
			= 
&\:			\sum_{\bB:\bB\neq \bA} \Big( \q H^{0,(\bA)}_{1,\bB} \sin(\tfrac{a_\bB u_\bB}{\lambda})+\q H^{0,(\bA)}_{2,\bB} \cos(\tfrac{2a_\bB u_\bB}{\lambda})\Big) +\sum_{\substack{\pm,\bB,\bC:\\\bB\neq \bA \\ \bC \neq \bB, \bA}} \q H^{0,(\bA)}_{\pm,\bB,\bC}\cos(\tfrac{a_\bB u_\bB\pm a_\bC u_\bC}{\lambda})\\
		&\:	+\lambda \sum_{\bB} \Big( \q H^{1,(\bA)}_{1,\bB} \cos(\tfrac{a_\bB u_\bB}{\lambda})+\q H^{1,(\bA)}_{2,\bB} \sin(\tfrac{2a_\bB u_\bB}{\lambda}) \Big)+\lambda\sum_{\bB: \bB\neq \bA}\q H^{1,(\bA)}_{3,\bB} \cos(\tfrac{3a_\bB u_\bB}{\lambda}) \\
			&\:	+	\lambda \sum_{\pm,\bB,\bC: \bB\neq \bC} \Big(\q H^{1,(\bA)}_{\pm,2\bB,\bC}\cos(\tfrac{2a_\bB u_\bB\pm a_\bC u_\bC}{\lambda})+\q H^{1,(\bA)}_{\pm,\bB,\bC}\sin(\tfrac{a_\bB u_\bB\pm a_\bC u_\bC}{\lambda}) \Big)\\
			&\:+	\lambda \sum_{\substack{\pm_1,\pm_2,\bB,\bC,\bD:\\
			\bC\neq \bB\\
			\bD \neq \bB,\bC}} \q H^{1,(\bA)}_{\pm_1,\pm_2,\bB,\bC,\bD}\cos(\tfrac{a_\bB u_\bB\pm_1 a_\bC u_\bC\pm_2 a_\bD u_\bD}{\lambda})+\cdots
\end{split}
\end{equation}
where $\q H$ are background quantities supported in $[0,1]\times B(0,2R+2)$ satisfying
\begin{equation}\label{eq:H.est.bad}
\|\q H\|_{C^7(K_t)} \leq C(N)\ep^2
\end{equation}
and 
\begin{equation}\label{eq:chi.error}
\sum_{k\leq 3}\lambda^k \| \cdots\|_{H^k(K_t)} \leq \ep^2 \lambda^2C(N)C_b(N) e^{A(N)t}.
\end{equation}
Moreover, $\q H^{0,(\bA)}_{1,\bB}$, $\q H^{0,(\bA)}_{2,\bB}$, $\q H^{0,(\bA)}_{\pm,\bB,\bC}$, $\q H^{1,(\bA)}_{1,\bA}$, $\q H^{1,(\bA)}_{1,\bA}$ depend only on $F^\phi_{\cdot}$, $F^\varpi_{\cdot}$, $\phi_0$, $\varpi_0$, $g_0$ and $u^0_{\cdot}$, and obey the improved estimate (as compared to \eqref{eq:H.est.bad})
\begin{equation}
\label{eq:H.est.good}
\|\q H\|_{C^8(K_t)}  \leq C(N)\ep^2.
\end{equation}
\end{lemma}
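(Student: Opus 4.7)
The plan is to subtract the Raychaudhuri equation satisfied by $\chi_\bA^0$ on the background from \eqref{eq:Raychaudhuri.2}--\eqref{eq:Ric.LL}, then expand each term on the right-hand side in powers of $\lambda$ using the parametrices \eqref{eq:wave.para}, \eqref{eq:g.para}, \eqref{eq:u.para}, \eqref{eq:chi.para}, collecting terms by their high-frequency phases. First, tracing the background Einstein equation \eqref{back.without.omega.1} against $L_\bA^0\otimes L_\bA^0$ and using the Raychaudhuri identity, I obtain
\[
L_\bA^0(\chi_\bA^0) + (\chi_\bA^0)^2 = -(L_\bA^0\phi_0)^2 - e^{-4\phi_0}(L_\bA^0\varpi_0)^2 - \tfrac{1}{2}\sum_\bB F_\bB^2 (L_\bA^0 u_\bB^0)^2.
\]
Subtracting from \eqref{eq:Raychaudhuri.2}--\eqref{eq:Ric.LL}, and using the identities $L_\bA\chi_\bA - L_\bA^0\chi_\bA^0 = L_\bA(\chi_\bA - \chi_\bA^0) + (L_\bA - L_\bA^0)\chi_\bA^0$ and $\chi_\bA^2 - (\chi_\bA^0)^2 = 2\chi_\bA^0(\chi_\bA - \chi_\bA^0) + (\chi_\bA - \chi_\bA^0)^2$, produces a transport equation for $\chi_\bA - \chi_\bA^0$ whose source is a sum of the differences $(L_\bA\phi)^2-(L_\bA^0\phi_0)^2$ and $e^{-4\phi}(L_\bA\varpi)^2-e^{-4\phi_0}(L_\bA^0\varpi_0)^2$, the dust compensator $\tfrac{1}{2}\sum_\bB F_\bB^2 (L_\bA^0 u_\bB^0)^2$, the geometric correction $(L_\bA - L_\bA^0)\chi_\bA^0$, and the quadratic remainder $(\chi_\bA-\chi_\bA^0)^2$.

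Next I would expand each difference using the parametrices. Invoking the nonlinear eikonal identity $L_\bA u_\bA = 0$, one finds
\[
L_\bA \phi_1 = -\sum_{\bB\neq\bA} F_\bB^\phi (L_\bA u_\bB)\sin(\tfrac{a_\bB u_\bB}{\lambda}) + \lambda\sum_\bB a_\bB^{-1}(L_\bA F_\bB^\phi)\cos(\tfrac{a_\bB u_\bB}{\lambda}) + O(\lambda^2),
\]
and similarly for $L_\bA \varpi_1$ and for $L_\bA\phi_2$, $L_\bA\varpi_2$. Squaring, via $\sin a\sin b = \tfrac{1}{2}(\cos(a-b)-\cos(a+b))$ and friends, produces the phases listed in \eqref{eq:Ray.diff}. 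The essential $O(1)$ cancellation is in the non-oscillatory sector: the combined contribution of $(L_\bA\phi_1)^2$ and its $\varpi$-counterpart, after using $\sin^2=\tfrac{1}{2}(1-\cos(2\cdot))$, yields a sum of the form $\sum_{\bB\neq\bA}F_\bB^2(L_\bA u_\bB)^2$ (times a universal constant) which cancels against the $\bB\neq\bA$ part of the dust compensator $\tfrac{1}{2}\sum_\bB F_\bB^2(L_\bA^0 u_\bB^0)^2$; the $\bB=\bA$ summand of the compensator vanishes because $L_\bA^0 u_\bA^0=0$, and the differences $L_\bA u_\bB - L_\bA^0 u_\bB^0$ and $e^{-4\phi}-e^{-4\phi_0}$ are absorbed into the error. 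The surviving $O(1)$ oscillatory terms produce exactly the coefficients $\q H^{0,(\bA)}_{1,\bB}$, $\q H^{0,(\bA)}_{2,\bB}$, $\q H^{0,(\bA)}_{\pm,\bB,\bC}$ with the indicated phase structure; these depend only on the background data $F_\bB^{\phi,\varpi}$, $\phi_0$, $\varpi_0$, $g_0$, $u_\bB^0$, and hence inherit the $C^8$ bound \eqref{eq:H.est.good} from Proposition~\ref{prop:construct.dust} and \eqref{eq:bkgd.bd:FA}--\eqref{eq:bkgd.bd:FAB}.

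Collecting the $O(\lambda)$ oscillatory terms similarly produces the $\q H^{1,(\bA)}$ coefficients, which arise from three sources: (a) cross products between the $O(1)$ and $O(\lambda)$ pieces of $L_\bA\phi_1$ and $L_\bA\varpi_1$, and products of $L_\bA\phi_0$, $L_\bA\varpi_0$ with $L_\bA\phi_2$, $L_\bA\varpi_2$; (b) the contribution of $\mfg_2$ to $L_\bA - L_\bA^0$ acting on $\chi_\bA^0$ and on background wave quantities; (c) the contribution of $u_\bA^2$ to $L_\bA$ via \eqref{eq:dalp.u2}. Contributions from (b) and (c) involve one spatial derivative of $\mathcal G_{\cdot}(\mfg)$ or $\mathfrak v_\bA$, which are controlled only in $C^9$ by \eqref{eq:calG.est}, \eqref{eq:v.est}, so after one amplitude derivative is spent they give only the weaker $C^7$ bound \eqref{eq:H.est.bad}. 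However, for the distinguished $\bB=\bA$ phases $\cos(\tfrac{a_\bA u_\bA}{\lambda})$ and $\sin(\tfrac{2a_\bA u_\bA}{\lambda})$, a direct check shows no contribution of type (b) or (c) arises at $O(\lambda)$: producing either such phase requires the vector field $L_\bA$ to differentiate $\tfrac{a_\bA u_\bA}{\lambda}$, which vanishes. Thus $\q H^{1,(\bA)}_{1,\bA}$ and $\q H^{1,(\bA)}_{2,\bA}$ reduce to type-(a) contributions involving only background wave-map fields together with the transport-defined $F_\bA^{1,U}$, $F_\bA^{2,U}$ (which have $C^8$ bounds by \eqref{eq:bkgd.bd:FA}), giving the improved estimate \eqref{eq:H.est.good}.

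Finally, all remaining contributions are placed in $\cdots$. These include the quadratic term $(\chi_\bA-\chi_\bA^0)^2 = (\chi_\bA^1+\wht\chi_\bA)^2$, bounded using the $O_N(\lambda)$ size of $\chi_\bA^1$ and the bootstrap \eqref{BA:chi}; any product containing at least one tilded error quantity $\wht\phi$, $\wht\varpi$, $\wht u_\bA$, $\wht\mfg$, or $\wht\chi_\bA$, bounded using Lemma~\ref{lminfty}, Propositions~\ref{prpu}--\ref{prpdt}, \ref{prop:g.est}, and Section~\ref{sec:bootstrap.assumptions}; and the explicit $O(\lambda^2)$ tails in the expansions of $L_\bA\phi$ and $L_\bA\varpi$. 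A routine derivative count in $\sum_{k\leq 3}\lambda^k\|\cdot\|_{H^k(K_t)}$, using that each tilded factor contributes an $O_N(\lambda)$ pointwise smallness via Sobolev embedding, then yields the error bound \eqref{eq:chi.error}. The main technical difficulty is the phase bookkeeping: verifying that the only phases produced are those listed in \eqref{eq:Ray.diff} (in particular, $\sin(\tfrac{a_\bA u_\bA}{\lambda})$ and $\cos(\tfrac{2a_\bA u_\bA}{\lambda})$ do not appear at $O(1)$, thanks again to $L_\bA u_\bA=0$), and correctly separating those coefficients that admit the improved $C^8$ bound from those for which only the $C^7$ bound is available.
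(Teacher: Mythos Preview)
Your approach is the same as the paper's: subtract the background Raychaudhuri equation, expand the Ricci term $(L_\bA\phi)^2+\tfrac14 e^{-4\phi}(L_\bA\varpi)^2$ via the parametrices, and sort by phases. The structure and error handling are correct.

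There is, however, a genuine slip in your justification of the improved $C^8$ bound for $\q H^{1,(\bA)}_{1,\bA}$ and $\q H^{1,(\bA)}_{2,\bA}$. You assert that type-(c) contributions (those coming from $u_\bA^2$ in $L_\bA-L_\bA^0$) cannot produce the phases $\cos(\tfrac{a_\bA u_\bA}{\lambda})$ or $\sin(\tfrac{2a_\bA u_\bA}{\lambda})$ because ``producing either such phase requires $L_\bA$ to differentiate $\tfrac{a_\bA u_\bA}{\lambda}$, which vanishes.'' This is not the mechanism: the phase comes from $\partial_\alpha u_\bA^2 = \lambda\,\mathfrak v_\bA\,\partial_\alpha u_\bA\cos(\tfrac{a_\bA u_\bA}{\lambda})+O(\lambda^2)$ sitting inside the \emph{coefficient} of $L_\bA$, not from $L_\bA$ acting on a phase. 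Concretely, $(g_0^{-1})^{\alpha\beta}\partial_\alpha u_\bA^2\,\partial_\beta\phi_0 = \lambda\,\mathfrak v_\bA\,(L_\bA^0\phi_0)\cos(\tfrac{a_\bA u_\bA}{\lambda})+\cdots$, and crossing with $2L_\bA^0\phi_0$ in $(L_\bA\phi)^2$ produces a $\cos(\tfrac{a_\bA u_\bA}{\lambda})$ term at order $\lambda$ whose amplitude involves $\mathfrak v_\bA$. Likewise $(g_0^{-1})^{\alpha\beta}\partial_\alpha u_\bA^2\,\partial_\beta\chi_\bA^0$ contributes. So type-(c) \emph{does} feed the $\bB=\bA$ phases.

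The conclusion is nevertheless correct, but for a different reason: $\mathfrak v_\bA$ is itself determined by the background via the transport equation \eqref{defu1} and carries a $C^9$ bound \eqref{eq:v.est}, so its presence does not spoil $C^8$. Separately, your claim that $F_\bA^{1,U}$ has a $C^8$ bound is not supported by \eqref{eq:bkgd.bd:FA} (only $C^7$ is stated); fortunately $F_\bA^{1,U}$ does not actually enter $\q H^{1,(\bA)}_{1,\bA}$ or $\q H^{1,(\bA)}_{2,\bA}$, since the relevant $\phi_2$-contribution $\lambda F_\bA^{1,\phi}(L_\bA u_\bA)\cos(\tfrac{a_\bA u_\bA}{\lambda})$ vanishes by the eikonal equation --- your reasoning there \emph{is} valid for the type-(a) piece coming from $\phi_2$. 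Finally, the factor $\tfrac12$ in your background dust compensator should be $1$ (tracing $R_{\mu\nu}=\cdots+\sum_\bB F_\bB^2\partial_\mu u_\bB^0\partial_\nu u_\bB^0$ against $L_\bA^0\otimes L_\bA^0$), and the matching non-oscillatory piece from $2(L_\bA\phi_1)^2+\tfrac12 e^{-4\phi_0}(L_\bA\varpi_1)^2$ is also $\sum_{\bB\neq\bA}F_\bB^2(L_\bA^0 u_\bB^0)^2$, so the cancellation is exact without any unspecified ``universal constant.''
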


\begin{proof}
We will say that a term is an ``acceptable high-frequency term'' if it can be written in the form of one of the highly oscillatory terms on the right-hand side of \eqref{eq:Ray.diff}, and that it is an ``acceptable error'' if it can be written as one of the $\cdots$ terms.

\pfstep{Step~1: Computing the Ricci curvature} We first show that 
\begin{align*}
			\mathrm{Ric}(L_\bA, L_\bA)= &\:2(L^0_\bA \phi_0)^2 +\frac{1}{2}  e^{4\phi_0} (L^0_\bA \varpi_0)^2 
			+\sum_{\bB: \bB \neq \bA}(L^0_\bA(u^0_\bB))^2 \Big( (F^\phi_\bB)^2+\frac{1}{4}  e^{4\phi_0} (F^\varpi_\bB)^2 \Big)\\
&\:			+\hbox{acceptable high-frequency terms} + \hbox{acceptable error}.
			\end{align*}
We start with the $(L_\bA \phi)^2$ term in \eqref{eq:Ric.LL}. We compute
\begin{equation}\label{eq:Ric.LL.Lphi}
\begin{split}
		L_\bA(\phi)= &(g_0^{-1})^{\alpha\beta}\partial_\alpha (u_\bA^0 + u_\bA^2)\partial_\beta\phi_0
			-\sum_{\bB:\bB \neq \bA}  F^\phi_\bB  (g_0^{-1})^{\alpha\beta}(\partial_\alpha u_\bA^0\partial_\beta u_\bB^0 + \partial_\alpha u_\bA^0\partial_\beta u_\bB^2 + \partial_\alpha u_\bA^2\partial_\beta u_\bB^0)\sin(\tfrac{a_\bB u_\bB}{\lambda}) 
		\\
		&\: -\lambda\sum_{\bB } a_\bB^{-1}(g_0^{-1})^{\alpha\beta}\partial_\alpha u_\bA^0\partial_\beta F^\phi_\bB \cos(\tfrac{a_\bB u_\bB}{\lambda}) \\
		&+	\lambda \sum_{\bB: \bB\neq \bA}(g_0^{-1})^{\alpha\beta}\partial_\alpha u_\bA^0\partial_\beta u_\bB^0 \Big(F^{1,\phi}_\bB \cos(\tfrac{a_\bB u_\bB}{\lambda})-2 F_\bB^{2,\phi} \sin(\tfrac{2a_\bB u_\bB}{\lambda})\Big) \\
	&+ \lambda \sum_{\pm,\bB,\bC:\bB\neq \bC}   (\pm 1) F^\phi_{\bB\bC}(g_0^{-1})^{\alpha\beta}\partial_\alpha u_\bA^0\partial_\beta(a_\bB u_\bB^0 \pm a_\bC u_\bC^0) \sin(\tfrac{a_\bB u_\bB \pm a_\bC u_\bC}{\lambda }) +\cdots,
	\end{split}
	\end{equation}
where we used the eikonal equation for $u_\bA$. We note that
\begin{subequations}
\begin{empheq}{align}
(g_0^{-1})^{\alpha\beta}\partial_\alpha u_\bA^2\partial_\beta\phi_0
= &\:  \lambda \mathfrak v_\bA (g_0^{-1})^{\alpha\beta}\partial_\alpha u_\bA^0 \partial_\beta\phi_0\cos(\tfrac{a_\bA u_\bA}{\lambda})+\cdots, \label{eq:Ric.LL.u2.1}\\
(g_0^{-1})^{\alpha\beta}\partial_\alpha u_\bA^0\partial_\beta u_\bB^2 = &\: \lambda \mathfrak v_\bB (g_0^{-1})^{\alpha\beta}\partial_\alpha u_\bA^0 \partial_\beta u_\bB^0 \cos(\tfrac{a_\bB u_\bB}{\lambda})+\cdots,\\
(g_0^{-1})^{\alpha\beta}\partial_\alpha u_\bA^2\partial_\beta u_\bB^0	 = &\: \lambda \mathfrak v_\bA (g_0^{-1})^{\alpha\beta}\partial_\alpha u_\bB^0 \partial_\beta  u_\bA^0 \cos(\tfrac{a_\bA u_\bA}{\lambda})+ \cdots.\label{eq:Ric.LL.u2.3}
		\end{empheq}
		\end{subequations}
Plugging \eqref{eq:Ric.LL.u2.1}--\eqref{eq:Ric.LL.u2.3} into \eqref{eq:Ric.LL.Lphi} and squaring, we see that the contribution from $(L_\bA\phi)^2$ takes the acceptable form.

The contribution from $\f 14 e^{4\phi} (L_\bA \varpi)^2$ can be treated in a similar manner, except that we also expand
$$e^{-4\phi} = e^{-4\phi_0} - 4 \lambda e^{-4\phi_0} \sum_{\bA} a_{\bA}^{-1} F_\bA^\phi \cos(\tfrac{a_\bA u_\bA}\lambda)+ \cdots.$$
There are terms coming from multiplying $4 \lambda e^{-4\phi_0} \sum_{\bA} a_{\bA}^{-1} F_\bA^\phi \cos(\tfrac{a_\bA u_\bA}\lambda)$ with the $O(\lambda^0)$ part of $(L_\bA \varpi)^2$, but it is again straightforward to check that they take the acceptable form. 

\pfstep{Step~2: Completing the argument} First note that the Raychaudhuri equation of the background gives
$$L_\bA^0 \chi_\bA^0 + (\chi_\bA^0)^2 = - 2(L^0_\bA \phi_0)^2 - \frac{1}{2}  e^{4\phi_0} (L^0_\bA \varpi_0)^2 
			- \sum_{\bB: \bB \neq \bA}(L^0_\bA(u^0_\bB))^2 \Big( (F^\phi_\bB)^2+\frac{1}{4}  e^{4\phi_0} (F^\varpi_\bB)^2 \Big),$$
where we used the eikonal equation for the background to eliminate the $\bB = \bA$ term in the last sum. We then write
\begin{align}
&\: L_\bA (\chi_\bA - \chi^0_{\bA}) + 2\chi_\bA^0 (\chi_\bA - \chi^0_{\bA}) \notag \\
=&\:  L_\bA \chi_\bA - L_{\bA}^0 \chi^0_{\bA} + (\chi_\bA)^2 - (\chi_\bA^0)^2 \notag \\
&\: - (\chi_\bA - \chi^0_{\bA})^2 - (g^{-1} - g_0^{-1})^{\alp\bt} \rd_\alp u_\bA \rd_\bt \chi^0_{\bA} - (g_0^{-1})^{\alp\bt} \rd_\alp (u_\bA-u_\bA^0) \rd_\bt \chi^0_{\bA} \notag \\
= &\:  - \mathrm{Ric}(L_\bA,L_\bA) + 2(L^0_\bA \phi_0)^2 + \frac{1}{2}  e^{4\phi_0} (L^0_\bA \varpi_0)^2 
			+ \sum_{\bB: \bB \neq \bA}(L^0_\bA(u^0_\bB))^2 \Big( (F^\phi_\bB)^2+\frac{1}{4}  e^{4\phi_0} (F^\varpi_\bB)^2 \Big) \label{eq:L.chi.diff.final.1}\\
			&\: - (\chi_\bA - \chi^0_{\bA})^2 - (g^{-1} - g_0^{-1})^{\alp\bt} \rd_\alp u_\bA \rd_\bt \chi^0_{\bA} - (g_0^{-1})^{\alp\bt} \rd_\alp (u_\bA-u_\bA^0) \rd_\bt \chi^0_{\bA}, \label{eq:L.chi.diff.final.2}
\end{align}
where we have also used \eqref{eq:Raychaudhuri}. We finally observe that \eqref{eq:L.chi.diff.final.1} is of the acceptable form thanks to Step~1, while \eqref{eq:L.chi.diff.final.2} is of the acceptable form thanks to the parametrices for $u_\bA$, $g$ and $\chi_\bA$. \qedhere
\end{proof}

In Lemma~\ref{lm:rll}, the following terms cannot be directly estimated as they give an $O(\lambda)$ (instead of $O(\lambda^2)$) contribution.
\begin{itemize}
\item all the $O(\lambda^0)$ terms, and
\item the parallel $O(\lambda^1)$ terms, i.e., those with phases $\cos(\tfrac{a_\bB u_\bB}\lambda)$ and $\sin(\tfrac{2 a_\bB u_{\bB}}\lambda)$ when $\bB = \bA$.
\end{itemize}
This is captured by the $\chi_\bA^1$ terms in the parametrix (see \eqref{eq:chi.para} and \eqref{chi1.def}). The following lemma summarizes how $\chi_\bA^1$ removes the main terms.
\begin{lemma}\label{lem:Ray.diff.tilde}
Define $\wht \chi_\bA$ by \eqref{eq:chi.para}, \eqref{chi1.def}, \eqref{eq:chi1.def.1.1}--\eqref{eq:chi1.def.1.3} and \eqref{eq:chi1.def.2.1}--\eqref{eq:chi1.def.2.2}. Then the following transport equation holds for $\wht \chi_{\bA}$:
\begin{equation*}
\begin{split}
			&\: L_\bA \wht \chi_{\bA} + 2 \chi_\bA^0  \wht \chi_{\bA} \\
			= 
		&\:	\lambda \sum_{\bB:\bB\neq \bA} \Big( \wht{\q H}^{1,(\bA)}_{1,\bB} \cos(\tfrac{a_\bB u_\bB}{\lambda})+\wht{\q H}^{1,(\bA)}_{2,\bB} \sin(\tfrac{2a_\bB u_\bB}{\lambda}) + \wht{\q H}^{1,(\bA)}_{3,\bB} \cos(\tfrac{3a_\bB u_\bB}{\lambda}) \Big)\\
			&\:	+	\lambda \sum_{\pm,\bB,\bC: \bB\neq \bC} \Big(\wht{\q H}^{1,(\bA)}_{\pm,2\bB,\bC}\cos(\tfrac{2a_\bB u_\bB\pm a_\bC u_\bC}{\lambda})+\wht{\q H}^{1,(\bA)}_{\pm,\bB,\bC}\sin(\tfrac{a_\bB u_\bB\pm a_\bC u_\bC}{\lambda} \Big) \\
			&\:+	\lambda \sum_{\substack{\pm_1,\pm_2,\bB,\bC,\bD:\\
			\bC\neq \bB\\
			\bD \neq \bB,\bC}} \wht{\q H}^{1,(\bA)}_{\pm_1,\pm_2,\bB,\bC,\bD}\cos(\tfrac{a_\bB u_\bB\pm_1 a_\bC u_\bC\pm_2 a_\bD u_\bD}{\lambda})+\cdots,
\end{split}
\end{equation*}
where $\wht{\q H}$ are background quantities supported in $[0,1]\times B(0,2R+2)$, 
$\|\wht{\q H}\|_{C^{7}(K_t)}  \leq C(N) \ep^2$ for $t \in [0,T)$,
and the error term $\cdots$ satisfies \eqref{eq:chi.error}. 
\end{lemma}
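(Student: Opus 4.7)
The plan is to substitute $\wht\chi_\bA = (\chi_\bA - \chi_\bA^0) - \chi_\bA^1$ into the transport operator $L_\bA + 2\chi_\bA^0$ and to verify that the algebraic choices \eqref{eq:chi1.def.1.1}--\eqref{eq:chi1.def.1.3} together with the transport equations \eqref{eq:chi1.def.2.1}--\eqref{eq:chi1.def.2.2} precisely remove the obstructive $O(\lambda^0)$ terms of Lemma~\ref{lm:rll} and the parallel $O(\lambda^1)$ terms whose oscillation is trivial along $L_\bA$. Then the remaining inhomogeneity consists of genuinely transversal $O(\lambda^1)$ oscillations and a collection of cubic/higher-order remainders of size governed by \eqref{eq:chi.error}.

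First, I would compute $L_\bA \chi_\bA^1 + 2\chi_\bA^0\chi_\bA^1$ by letting $L_\bA$ act term by term on the expression \eqref{chi1.def}. The dominant contribution is produced whenever $L_\bA$ hits a phase: for the phases involving $u_\bB$ with $\bB\neq \bA$, the chain rule gives a factor $\lambda^{-1}\cdot L_\bA u_\bB$, which at leading order equals $L_\bA^0 u_\bB^0$ and is bounded away from zero since $\{a_\bA u_\bA^0\}$ is null adapted (Lemma~\ref{lemma.adapt}). Multiplied by the $\lambda$ in front of $\chi_\bA^1$, this yields an $O(\lambda^0)$ oscillatory term; the definitions \eqref{eq:chi1.def.1.1}--\eqref{eq:chi1.def.1.3} of $\mathfrak X^{(\bA)}_{2,\bB}$, $\mathfrak X^{(\bA)}_{1,\bB}$ (for $\bB\neq\bA$) and $\mathfrak X^{(\bA)}_{\pm,\bB,\bC}$ are tuned so that these $O(\lambda^0)$ contributions cancel $\mathcal H^{0,(\bA)}_{2,\bB}\cos(\tfrac{2a_\bB u_\bB}\lambda)$, $\mathcal H^{0,(\bA)}_{1,\bB}\sin(\tfrac{a_\bB u_\bB}\lambda)$ and $\mathcal H^{0,(\bA)}_{\pm,\bB,\bC}\cos(\tfrac{a_\bB u_\bB\pm a_\bC u_\bC}\lambda)$ from Lemma~\ref{lm:rll} respectively. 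Crucially, the sums in Lemma~\ref{lm:rll} at order $\lambda^0$ exclude $\bB=\bA$ (since $g^{-1}(\ud u_\bA, \ud u_\bA)=0$), so no parallel $O(\lambda^0)$ obstruction arises.

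Next, I would collect the $O(\lambda^1)$ terms. These come from three sources: (i) $L_\bA$ acting on the amplitudes $\mathfrak X^{(\bA)}$; (ii) the term $2\chi_\bA^0\chi_\bA^1$; and (iii) the correction between $L_\bA u_\bB$ and $L_\bA^0 u_\bB^0$ (which involves $u_\bB^2$, $\wht u_\bB$ and $g-g_0$), as well as higher-order expansion in the denominators in \eqref{eq:chi1.def.1.1}--\eqref{eq:chi1.def.1.3}. For transversal phases ($\bB\neq\bA$, or triple/quadruple-index combinations) these contributions are absorbed into the $\wht{\q H}^{1,(\bA)}$ terms on the right-hand side. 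For the parallel $\bB=\bA$ case, $L_\bA u_\bA=0$ kills the chain-rule contribution, so the surviving $O(\lambda^1)$ pieces with phases $\cos(\tfrac{a_\bA u_\bA}\lambda)$ and $\sin(\tfrac{2a_\bA u_\bA}\lambda)$ reduce at leading order to $\lambda\bigl(L_\bA^0\mathfrak X^{(\bA)}_{1,\bA}+2\chi_\bA^0\mathfrak X^{(\bA)}_{1,\bA}\bigr)\cos(\tfrac{a_\bA u_\bA}\lambda)$ and the analogous expression for $\mathfrak X^{(\bA)}_{2,\bA}$; these are exactly cancelled by the choice of transport equations \eqref{eq:chi1.def.2.1}--\eqref{eq:chi1.def.2.2}, confirming the absence of parallel $O(\lambda^1)$ obstructions on the right-hand side.

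Finally, I would verify that all remaining error terms fit into the $\cdots$ bound \eqref{eq:chi.error}. The three classes of remainder are: nonlinear contributions $(\chi_\bA-\chi_\bA^0)^2$ and $(g^{-1}-g_0^{-1})\partial u_\bA \partial\chi_\bA^0$ from \eqref{eq:L.chi.diff.final.2} in the proof of Lemma~\ref{lm:rll} when expanded beyond leading order; the sub-leading pieces produced when $L_\bA$ in $L_\bA\chi_\bA^1$ is replaced by $L_\bA^0$ plus a correction involving $u_\bA^2$, $\wht u_\bA$, $g_2$ and $\wht g$; and discrepancies from rewriting phases of $u_\bA^0$ in terms of $u_\bA$. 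Each of these is quadratic in bootstrapped quantities or comes with an extra factor of $\lambda$, and the bootstrap assumptions \eqref{BA:wave}, \eqref{BA:g.L2}, \eqref{BA:u.L2}, \eqref{BA:chi} together with \eqref{eq:H.est.good} and \eqref{eq:v.est} show they satisfy \eqref{eq:chi.error}. The main technical nuisance will be bookkeeping for the interaction phases (products of two $\chi_\bA^1$ phases from $L_\bA\chi_\bA^1$ giving triple-index terms); the crucial structural point that makes the proof work is the vanishing of $L_\bA u_\bA$, which forces the choice of the transport equations rather than an algebraic formula for the parallel amplitudes $\mathfrak X^{(\bA)}_{1,\bA}$ and $\mathfrak X^{(\bA)}_{2,\bA}$.
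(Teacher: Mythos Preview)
Your proposal is correct and follows essentially the same approach as the paper's proof: compute $L_\bA\chi_\bA^1$ term by term, use the algebraic definitions \eqref{eq:chi1.def.1.1}--\eqref{eq:chi1.def.1.3} to cancel the $O(\lambda^0)$ transversal terms from Lemma~\ref{lm:rll}, and use the transport equations \eqref{eq:chi1.def.2.1}--\eqref{eq:chi1.def.2.2} to cancel the $O(\lambda^1)$ parallel terms. Your treatment is in fact somewhat more detailed about the error sources than the paper's brief argument.
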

\begin{proof}

%
%
%
%
%

We recall \eqref{chi1.def} and compute
\begin{align*}
		L_\bA(\chi_\bA^1)=&\:\sum_{\bB: \bB \neq \bA}2a_\bB \mathfrak X^{(\bA)}_{2,\bB} g^{\alpha\beta}\partial_\alpha u_\bA \partial_\beta u_\bB\cos ( \tfrac{2a_\bB u_\bB}{\lambda})\\
		&\:-\sum_{\bB:\bB \neq \bA} a_\bB \mathfrak X^{(\bA)}_{1,\bB} g^{\alpha\beta}\partial_\alpha  u_\bA \partial_\beta u_\bB \sin( \tfrac{a_\bB u_\bB}{\lambda})\\
		&\:+\sum_{\substack{\pm,\bB,\bC: \\ \bB \neq \bB, \bC \neq \bA,\bB}} \mathfrak X^{(\bA)}_{\pm,\bB,\bC} g^{\alpha\beta}\partial_\alpha u_\bA \partial_\beta(a_\bB u_\bB\pm a_\bC u_\bC)\cos( \tfrac{a_\bB u_\bB\pm a_\bC u_\bC}{\lambda}) \\
	&\:+  \lambda \sum_{\bB } L_\bA(\mathfrak X^{(\bA)}_{2,\bB}) \sin ( \tfrac{2a_\bB u_\bB}{\lambda})
	+\lambda \sum_{\bB }L_\bA( \mathfrak X^{(\bA)}_{1,\bB} )\cos( \tfrac{a_\bB u_\bB}{\lambda})\\
&\:	+\lambda \sum_{\substack{\pm,\bB,\bC: \\ \bB \neq \bB, \bC \neq \bA,\bB}} L_\bA(\mathfrak X^{(\bA)}_{\pm,\bB,\bC} )\sin ( \tfrac{a_\bB u_\bB\pm a_\bC u_\bC}{\lambda}), 
	\end{align*}
	where we have the eikonal equation for $u_\bA$ to get $\bB \neq \bA$ in some terms above. 
	
	First observe that $ \mathfrak X^{(\bA)}_{2,\bB}$, $\mathfrak X^{(\bA)}_{1,\bB}$ (in the case $\bB \neq \bA$) and $\mathfrak X^{(\bA)}_{\pm,\bB,\bC}$ terms (see \eqref{eq:chi1.def.1.1}--\eqref{eq:chi1.def.1.3}) are chosen so as to remove the $O(1)$ terms with phases  $\cos(\tfrac{2a_\bB u_\bB}{\lambda})$, $\sin(\tfrac{a_\bB u_\bB}{\lambda})$ (with $\bB \neq \bA$) and $\cos(\tfrac{a_\bB u_\bB\pm a_\bC u_\bC}{\lambda})$ respectively. Thus, for terms with those phases we gain a power of $\lambda$.
	
	Similarly, $\mathfrak X^{(\bA)}_{1,\bA}$ and $\mathfrak X^{(\bA)}_{2,\bA}$ (see \eqref{eq:chi1.def.2.1}--\eqref{eq:chi1.def.2.2}) are chosen to remove the $O(\lambda)$ terms with phases $\cos(\tfrac{a_\bA u_\bA}{\lambda})$, $\sin(\tfrac{2a_\bA u_\bA}{\lambda})$. Thus, for terms with those phases, instead of a sum over all $\bB$, we now have a sum over all $\bB$ such that $\bB \neq \bA$. \qedhere

	\end{proof}
	
\begin{prp}\label{prop:chi}
	The following estimates hold for $\wht \chi_\bA$ for $t \in [0,T)$:
$$\sum_{k \leq 3}
\lambda^k\|\wht \chi_\bA\|_{H^{k}(K_t)}\leq  \lambda^2C_b(N) \ep e^{A(N)t}.$$
	\end{prp}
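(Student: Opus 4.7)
The plan is to construct an explicit parametrix-like term $\wht\chi_\bA^{\mathrm{main}}$ that absorbs the leading highly oscillatory source in the transport equation of Lemma~\ref{lem:Ray.diff.tilde}, then to control $\wht\chi_\bA - \wht\chi_\bA^{\mathrm{main}}$ by a direct transport estimate. Each source term in Lemma~\ref{lem:Ray.diff.tilde} has the form (background amplitude)$\times$(oscillating phase in $u_\bB$, $2u_\bB$, $3u_\bB$, $a_\bB u_\bB \pm a_\bC u_\bC$, $2a_\bB u_\bB \pm a_\bC u_\bC$, or $a_\bB u_\bB \pm_1 a_\bC u_\bC \pm_2 a_\bD u_\bD$ with the stated index restrictions), and all these phases are null adapted in the sense of Definition~\ref{def.null} and Lemma~\ref{lemma.adapt}, so $L_\bA^0$ applied to each phase stays bounded away from zero on $K$.

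Concretely, for each source $\lambda \wht{\mathcal H}^{1,(\bA)}_\star \mathrm{osc}_\star$, define the corresponding piece of $\wht\chi_\bA^{\mathrm{main}}$ by dividing $\wht{\mathcal H}^{1,(\bA)}_\star$ by $a_\star L_\bA^0(\text{phase})$ (or the appropriate null‑adapted denominator) and replacing $\mathrm{osc}_\star$ by its $90^\circ$-rotated counterpart (e.g.\ $\cos(\tfrac{a_\bB u_\bB}{\lambda}) \rightsquigarrow -\tfrac{\wht{\mathcal H}^{1,(\bA)}_{1,\bB}}{a_\bB L_\bA^0 u_\bB^0}\lambda^2 \sin(\tfrac{a_\bB u_\bB}{\lambda})$, and similarly for the triple-phase terms). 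By Lemma~\ref{lemma.adapt} and the background estimates, these denominators are $\ge c/C(N)$, while $\|\wht{\mathcal H}\|_{C^7(K_t)} \le C(N)\ep^2$. A routine count then yields
\[
\sum_{k\leq 3}\lambda^{k}\|\wht\chi_\bA^{\mathrm{main}}\|_{H^k(K_t)} \leq C(N)\ep^{2}\lambda^{2},
\]
which is much better than the target (since $C(N)\ep \ll C_b(N)$).

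Next, applying $L_\bA$ to $\wht\chi_\bA^{\mathrm{main}}$ and using $L_\bA - L_\bA^0 = O(\lambda^{0}\widetilde{g}+\lambda^{0}\partial\widetilde u_\bA)$ controlled by Section~\ref{sec:metric} and Section~\ref{sec:eikonal}, one checks that $\wht\chi_\bA - \wht\chi_\bA^{\mathrm{main}}$ solves
\[
L_\bA(\wht\chi_\bA-\wht\chi_\bA^{\mathrm{main}}) + 2\chi_\bA^{0}(\wht\chi_\bA-\wht\chi_\bA^{\mathrm{main}}) = \mathcal R,
\]
where the source $\mathcal R$ consists of: (i)~the $\cdots$ from Lemma~\ref{lem:Ray.diff.tilde}; (ii)~terms where derivatives fall on the $\wht{\mathcal H}$ amplitudes rather than on phases (these give an extra $\lambda$); (iii)~terms $(L_\bA - L_\bA^0)\wht\chi_\bA^{\mathrm{main}}$ and $\chi_\bA^{0}\cdot\wht\chi_\bA^{\mathrm{main}}$ of size $C(N)\ep^{3}\lambda^{2}$ (or $C_b(N)\ep^{3}\lambda^{2}e^{A(N)t}$ after using bootstrap bounds for $\widetilde g$ and $\partial\widetilde u_\bA$). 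In all cases,
\[
\sum_{k\leq 3}\lambda^{k}\|\mathcal R\|_{H^k(K_s)} \leq \ep^{2}\lambda^{2}\, C(N)C_b(N)\, e^{A(N)s}.
\]

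To close, we rescale $L_\bA$ to satisfy the hypotheses of Lemma~\ref{lmtransport} (using $e_0 u_\bA \approx n e^{-\gamma}|\nabla u_\bA|\approx 1$ and the past-set geometry of $K$), commute with spatial derivatives $\nabla^{k}$ for $k\leq 3$, and absorb commutator terms $[\nabla^k, L_\bA]$ and $[\nabla^k, 2\chi_\bA^0]$ into lower-order analogues using the bootstrap assumptions \eqref{BA:u.L2}, \eqref{BA:u.L4} and the bounds of Section~\ref{sec:metric}; these commutators always lose at worst one factor of $\lambda^{-1}$, matched by the $\lambda^{k}$ weights in the norm. Combining with the initial data bound $\sum_{k\leq 3}\lambda^k\|\wht\chi_\bA\|_{H^k(\Sigma_0)}\leq C(N)\ep\lambda^{2}$ from Lemma~\ref{lem:chi.data} and Gr\"onwall, we obtain
\[
\sum_{k\leq 3}\lambda^{k}\|\wht\chi_\bA-\wht\chi_\bA^{\mathrm{main}}\|_{H^k(K_t)} \leq C(N)\ep\lambda^{2} + \int_{0}^{t} C(N)C_b(N)\ep^{2}\lambda^{2}e^{A(N)s}\,ds \leq \tfrac{C(N)C_b(N)\ep^{2}}{A(N)}\lambda^{2}e^{A(N)t},
\]
which, by the hierarchy \eqref{eq;constant.hierarchy} and \eqref{eq:N.vs.ep}, is bounded by $\tfrac12\lambda^{2}C_b(N)\ep\, e^{A(N)t}$. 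Adding back the direct bound for $\wht\chi_\bA^{\mathrm{main}}$ gives the stated estimate.

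The main obstacle is the careful bookkeeping in step (iii): ensuring that the $N$-dependent constants $C(N)$ appearing from $1/(a_\bB L_\bA^0 u_\bB^0)$ and its analogues never combine with a bare $C_b(N)$ or $(C_b(N))^2$ without being compensated by an extra $\lambda^{1/2}$ or by the $1/A(N)$ gained from time integration. This is exactly the philosophy already used in Section~\ref{sec:wave} and Section~\ref{sec:metric}, and the verification is routine given the background bounds on $\wht{\mathcal H}$ and the null-adaptedness in Lemma~\ref{lemma.adapt}.
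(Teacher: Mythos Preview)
Your proposal is correct and follows essentially the same approach as the paper: construct an explicit $\wht\chi_\bA^{\mathrm{main}}$ by dividing each $\wht{\mathcal H}$ amplitude by the null-adapted denominator $L_\bA^0(\text{phase})$ and rotating the oscillation, then control $\wht\chi_\bA-\wht\chi_\bA^{\mathrm{main}}$ via the transport Lemma~\ref{lmtransport} and the hierarchy $C(N)C_b(N)\ll A(N)$. Your writeup is in fact more explicit than the paper's about the remainder bookkeeping and the commutation with $\nabla^k$, but the underlying argument is identical.
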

\begin{proof}
	Notice that the main terms on the right-hand side of the equation in Lemma~\ref{lem:Ray.diff.tilde} are $O(\lambda)$ and none of them oscillates in the $u_\bA$ direction. Hence $L_\bA^0$ of the phases are bounded away from $0$ since all the $u_{\bB}^0$'s are null adapted (see Lemma~\ref{lemma.adapt}). Thus, we can define $\wht \chi_\bA^{main}$ as follows to remove all the terms in Lemma~\ref{lem:Ray.diff.tilde}:
%
	\begin{equation*}
	\begin{split}
	&\: \lambda^{-2} \wht \chi_\bA^{main} \\
	= 
&\:			\sum_{\bB:\bB\neq \bA} ((g_0^{\alpha \beta})^{-1}\partial_\alpha u_\bA^0 \partial_\beta u_\bB^0)^{-1} \Big( \wht{\q H}^{1,(\bA)}_{1,\bB} \sin(\tfrac{a_\bB u_\bB}{\lambda})-\f 12\wht{\q H}^{1,(\bA)}_{2,\bB} \cos(\tfrac{2a_\bB u_\bB}{\lambda}) +\f 13 \wht{\q H}^{1,(\bA)}_{3,\bB} \sin(\tfrac{3a_\bB u_\bB}{\lambda}) \Big)\\
		&\: -  \sum_{\substack{\pm,\bB,\bC:\bB \neq \bC}} ((g_0^{\alpha \beta})^{-1}\partial_\alpha u_\bA^0 \partial_\beta (a_\bB u^0_\bB\pm a_\bC u^0_\bC))^{-1} \Big(  \wht{\q H}^{1,(\bA)}_{\pm,\bB,\bC}\cos(\tfrac{a_\bB u_\bB\pm a_\bC u_\bC}{\lambda}) \Big) \\
			&\:	+ \sum_{\pm,\bB,\bC: \bB\neq \bC} ((g_0^{\alpha \beta})^{-1}\partial_\alpha u_\bA^0 \partial_\beta (2a_\bB u^0_\bB\pm a_\bC u^0_\bC))^{-1} \wht{\q H}^{1,(\bA)}_{\pm,2\bB,\bC}\sin(\tfrac{2a_\bB u_\bB\pm a_\bC u_\bC}{\lambda}) \\
			&\:+	\sum_{\substack{\pm_1,\pm_2,\bB,\bC,\bD:\\
			\bC\neq \bB\\
			\bD \neq \bB,\bC}} ((g_0^{\alpha \beta})^{-1}\partial_\alpha u_\bA^0 \partial_\beta (a_\bB u^0_\bB\pm a_\bC u^0_\bC\pm a_\bD u^0_\bD))^{-1}  \wht{\q H}^{1,(\bA)}_{\pm_1,\pm_2,\bB,\bC,\bD}\sin(\tfrac{a_\bB u_\bB\pm_1 a_\bC u_\bC\pm_2 a_\bD u_\bD}{\lambda}).
\end{split}
\end{equation*}

By the estimates in Lemma~\ref{lem:Ray.diff.tilde}, $\| \wht \chi_\bA^{main} \|_{H^8(K_t)} \ls C(N) \ep^2 \lambda^2$. Moreover, $\wht \chi_\bA^{main}$ is chosen so that by Lemma~\ref{lem:Ray.diff.tilde} we have
\begin{equation}
\sum_{k\leq 3} \lambda^k \| L_\bA (\wht \chi_\bA - \wht \chi_\bA^{main}) + 2 \chi_\bA^0 (\wht \chi_\bA -\wht \chi_\bA^{main} )\|_{H^k(K_t)} \leq \ep^2 \lambda^2C(N)C_b(N) e^{A(N)t}.
\end{equation}

	Applying Lemma~\ref{lmtransport}, we obtain the desired bound for $\wht \chi_\bA - \wht \chi_\bA^{main}$ (where we have used the hierarchy of constants \eqref{eq;constant.hierarchy} and that 
	$$\int_0^t C(N) C_b(N) e^{A(N)s} \, \ud s \leq \f{C(N) C_b(N)}{A(N)} e^{A(N)t}.$$ Combining this with the bound for $\wht \chi_\bA^{main}$, we deduce the desired conclusion. \qedhere
	\end{proof}

\section{Putting everything together}\label{sec:together}
	\begin{proof}[Proof of Theorem~\ref{thm:dust.by.vacuum}]
	We now conclude the proof of Theorem~\ref{thm:dust.by.vacuum} by showing that all the bootstrap assumptions in Section~\ref{sec:bootstrap.assumptions} are improved. A routine continuity argument implies that the solution exists up to time $t = 1$. Since the bounds in the bootstrap assumptions are stronger than those stated in Theorem~\ref{thm:dust.by.vacuum} (after choosing $B(N) = 2C_b'(N)e^{A(N)}$), this finishes the argument.
	
	To check that all the bootstrap assumptions are improved, note that \eqref{BA:wave} is improved in Proposition~\ref{prop:wave.est.final}, \eqref{BA:g.L43}--\eqref{BA:dtg.L2} are improved in Proposition~\ref{prop:g.est}, \eqref{BA:u.L4}--\eqref{BA:u.L2} are improved in Proposition~\ref{prpu} and Proposition~\ref{prpdt}, and finally, \eqref{BA:chi} is improved in Proposition~\ref{prop:chi}. \qedhere
	
	\end{proof}

\bibliographystyle{DLplain}
\bibliography{HFlimit}

\begin{thebibliography}{10}

\bibitem{sAntC2024}
S.~Alexakis and N.~T. Carruth.
\newblock Squeezing a fixed amount of gravitational energy to arbitrarily small
  scales, in ${U}(1)$ symmetry.
\newblock {\em arXiv:2205.05526, preprint}, 2022.

\bibitem{Ang20}
Y.~Angelopoulos.
\newblock Semi-global constructions of spacetimes containing curvature
  singularities.
\newblock {\em arXiv:2010.05876, preprint}, 2020.

\bibitem{bHjyC99b}
H.~Bahouri and J.-Y. Chemin.
\newblock \'{E}quations d'ondes quasilin\'{e}aires et effet dispersif.
\newblock {\em Internat. Math. Res. Notices}, (21):1141--1178, 1999.

\bibitem{bHjyC99a}
H.~Bahouri and J.-Y. Chemin.
\newblock \'{E}quations d'ondes quasilin\'{e}aires et estimations de
  {S}trichartz.
\newblock {\em Amer. J. Math.}, 121(6):1337--1377, 1999.

\bibitem{Burnett}
G.~A. Burnett.
\newblock The high-frequency limit in general relativity.
\newblock {\em J. Math. Phys.}, 30(1):90--96, 1989.

\bibitem{CB.HF}
Y.~Choquet-Bruhat.
\newblock Construction de solutions radiatives approch\'{e}es des \'{e}quations
  d'{E}instein.
\newblock {\em Comm. Math. Phys.}, 12:16--35, 1969.

\bibitem{CB.book}
Y.~Choquet-Bruhat.
\newblock {\em General relativity and the {E}instein equations}.
\newblock Oxford Mathematical Monographs. Oxford University Press, Oxford,
  2009.

\bibitem{GW2}
S.~R. Green and R.~M. Wald.
\newblock Examples of backreaction of small-scale inhomogeneities in cosmology.
\newblock {\em Phys. Rev. D}, 87:124037, Jun 2013.

\bibitem{GuerraTeixeira}
A.~Guerra and R.~Teixeira~da Costa.
\newblock Oscillations in wave map systems and homogenization of the {E}instein
  equations in symmetry.
\newblock {\em arXiv:2107.00942, preprint}, 2021.

\bibitem{pHtF93}
P.~A. Hogan and T.~Futamase.
\newblock Some high-frequency spherical gravity waves.
\newblock {\em J. Math. Phys.}, 34(1):154--169, 1993.

\bibitem{Huneau.mastersthesis}
C.~Huneau.
\newblock {Un mod\`ele d'univers $\mathbb S^1$ invariant}.
\newblock Master's thesis, 2010.

\bibitem{Huneau.constraint}
C.~Huneau.
\newblock Constraint equations for {$3+1$} vacuum {E}instein equations with a
  translational space-like {K}illing field in the asymptotically flat case.
\newblock {\em Ann. Henri Poincar\'{e}}, 17(2):271--299, 2016.

\bibitem{HL.elliptic}
C.~Huneau and J.~Luk.
\newblock Einstein equations under polarized {$\Bbb{U}(1)$} symmetry in an
  elliptic gauge.
\newblock {\em Comm. Math. Phys.}, 361(3):873--949, 2018.

\bibitem{HL.HF}
C.~Huneau and J.~Luk.
\newblock High-frequency backreaction for the {E}instein equations under
  polarized {$\Bbb{U}(1)$}-symmetry.
\newblock {\em Duke Math. J.}, 167(18):3315--3402, 2018.

\bibitem{HL.Burnett}
C.~Huneau and J.~Luk.
\newblock Trilinear compensated compactness and {B}urnett's conjecture in
  general relativity.
\newblock {\em arXiv:1907.10743, preprint}, 2019.

\bibitem{HL.wave}
C.~Huneau and J.~Luk.
\newblock Burnett's conjecture in generalized wave coordinates.
\newblock {\em arXiv:2403.03470, preprint}, 2024.

\bibitem{HL.survey}
C.~Huneau and J.~Luk.
\newblock High-frequency solutions to the {E}instein equations.
\newblock {\em Class. Quantum Grav.}, 41(14):143002, 48, 2024.

\bibitem{Isaacson1}
R.~A. Isaacson.
\newblock Gravitational radiation in the limit of high frequency. {I}. the
  linear approximation and geometrical optics.
\newblock {\em Phys. Rev.}, 166:1263--1271, Feb 1968.

\bibitem{Isaacson2}
R.~A. Isaacson.
\newblock Gravitational radiation in the limit of high frequency. {II}.
  nonlinear terms and the effective stress tensor.
\newblock {\em Phys. Rev.}, 166:1272--1280, Feb 1968.

\bibitem{sKiR2003}
S.~Klainerman and I.~Rodnianski.
\newblock Improved local well-posedness for quasilinear wave equations in
  dimension three.
\newblock {\em Duke Math. J.}, 117(1):1--124, 2003.

\bibitem{sKiR2005d}
S.~Klainerman and I.~Rodnianski.
\newblock Rough solutions of the {E}instein-vacuum equations.
\newblock {\em Ann. of Math. (2)}, 161(3):1143--1193, 2005.

\bibitem{L21}
S.~Klainerman, I.~Rodnianski, and J.~Szeftel.
\newblock The bounded {$L^2$} curvature conjecture.
\newblock {\em Invent. Math.}, 202(1):91--216, 2015.

\bibitem{LeFLeF2020}
B.~Le~Floch and P.~G. LeFloch.
\newblock Compensated compactness and corrector stress tensor for the
  {E}instein equations in {$\Bbb T^2$} symmetry.
\newblock {\em Port. Math.}, 77(3-4):409--421, 2020.

\bibitem{Lott1}
J.~Lott.
\newblock Backreaction in the future behavior of an expanding vacuum spacetime.
\newblock {\em Classical Quantum Gravity}, 35(3):035010, 10, 2018.

\bibitem{Lott3}
J.~Lott.
\newblock Collapsing in the {E}instein flow.
\newblock {\em Ann. Henri Poincar\'{e}}, 19(8):2245--2296, 2018.

\bibitem{Lott2}
J.~Lott.
\newblock Corrigendum: {B}ackreaction in the future behavior of an expanding
  vacuum spacetime (2018 {\it {c}lass. {q}uantum {g}rav.} 35 035010) [
  {MR}3755966].
\newblock {\em Classical Quantum Gravity}, 35(8):089501, 1, 2018.

\bibitem{LR}
J.~Luk and I.~Rodnianski.
\newblock Local propagation of impulsive gravitational waves.
\newblock {\em Comm. Pure Appl. Math.}, 68(4):511--624, 2015.

\bibitem{LR2}
J.~Luk and I.~Rodnianski.
\newblock Nonlinear interaction of impulsive gravitational waves for the vacuum
  {E}instein equations.
\newblock {\em Camb. J. Math.}, 5(4):435--570, 2017.

\bibitem{LR.HF}
J.~Luk and I.~Rodnianski.
\newblock High-frequency limits and null dust shell solutions in general
  relativity.
\newblock {\em arXiv:2009.08968, preprint}, 2020.

\bibitem{LVdM1}
J.~Luk and M.~Van~de Moortel.
\newblock Nonlinear interaction of three impulsive gravitational waves {I}:
  Main result and the geometric estimates.
\newblock {\em arXiv:2101.08353, to appear in Amer. Jour. Math.}, 2020.

\bibitem{LVdM2}
J.~Luk and M.~{Van de Moortel}.
\newblock Nonlinear interaction of three impulsive gravitational waves {II}:
  The wave estimates.
\newblock {\em arXiv:2106.05479, preprint}, 2020.

\bibitem{McOwen}
R.~C. McOwen.
\newblock The behavior of the {L}aplacian on weighted {S}obolev spaces.
\newblock {\em Comm. Pure Appl. Math.}, 32(6):783--795, 1979.

\bibitem{SmTa}
H.~F. Smith and D.~Tataru.
\newblock Sharp local well-posedness results for the nonlinear wave equation.
\newblock {\em Ann. of Math. (2)}, 162(1):291--366, 2005.

\bibitem{SC.standing}
S.~J. Szybka and A.~Cie{\'s}lik.
\newblock Standing waves in general relativity.
\newblock {\em Phys. Rev. D}, 100:064025, Sep 2019.

\bibitem{SGWK}
S.~J. Szybka, K.~G{\l}{\'o}d, M.~J. Wyr{\c e}bowski, and A.~Konieczny.
\newblock Inhomogeneity effect in {W}ainwright-{M}arshman space-times.
\newblock {\em Phys. Rev. D}, 89:044033, Feb 2014.

\bibitem{dT2002}
D.~Tataru.
\newblock Nonlinear wave equations.
\newblock In {\em Proceedings of the {I}nternational {C}ongress of
  {M}athematicians, {V}ol. {III} ({B}eijing, 2002)}, pages 209--220, Beijing,
  2002. Higher Ed. Press.

\bibitem{Touati.local}
A.~Touati.
\newblock Einstein vacuum equations with {$\Bbb U(1)$} symmetry in an elliptic
  gauge: local well-posedness and blow-up criterium.
\newblock {\em J. Hyperbolic Differ. Equ.}, 19(4):635--715, 2022.

\bibitem{Touati2}
A.~Touati.
\newblock Geometric optics approximation for the {E}instein vacuum equations.
\newblock {\em Comm. Math. Phys.}, 402(3):3109--3200, 2023.

\bibitem{touati2024reverse}
A.~Touati.
\newblock The reverse {B}urnett conjecture for null dusts.
\newblock {\em arXiv:2402.17530, preprint}, 2024.

\end{thebibliography}

\end{document}